\documentclass[11pt]{article}
\usepackage[utf8]{inputenc}
\usepackage[backend=biber, 
style=alphabetic, useprefix=true,
minalphanames=4,
maxalphanames=5,
maxbibnames=99]{biblatex}

\usepackage{amsmath}
\usepackage{amsfonts}
\usepackage{amssymb}
\usepackage{amsthm}
\usepackage{stmaryrd}
\usepackage{graphicx}
\usepackage{mathtools}
\usepackage[shortlabels]{enumitem}
\usepackage{xspace}
\usepackage{proof}
\usepackage{xcolor}
\usepackage[vlined,ruled, linesnumbered]{algorithm2e}
\SetArgSty{textnormal}
\usepackage{color}
\usepackage[colorlinks=true, allcolors=blue]{hyperref}

\usepackage[letterpaper, top=1in, left=1in, right=1in, bottom=1in]{geometry}

\setlist[itemize]{noitemsep}
\setlist[enumerate]{noitemsep}

\DeclarePairedDelimiter\ceil{\lceil}{\rceil}
\DeclarePairedDelimiter\floor{\lfloor}{\rfloor}

\newtheorem*{claim*}{Claim}

\newtheorem{theorem}{Theorem}[section]
\newtheorem{lemma}[theorem]{Lemma}

\newtheorem{observation}[theorem]{Observation}

\newtheorem{cor}[theorem]{Corollary}

\theoremstyle{definition}

\theoremstyle{remark}
\newtheorem{remark}[theorem]{Remark}

\newcommand{\cD}{\mathcal{D}}

\newcommand{\cR}{\mathcal{R}}
\newcommand{\cL}{\mathcal{L}}
\newcommand{\cO}{\mathcal{O}}
\newcommand{\cP}{\mathcal{P}}

\newcommand{\bF}{\mathbb{F}}

\newcommand{\cS}{\mathcal{S}}

\newcommand{\eps}{\epsilon}
\newcommand{\tr}{\texttt{r}}
\newcommand{\tb}{\texttt{b}}

\newcommand{\rpdt}[1][]{\textsf{R}^{\oplus\textsf{-dt}}}

\newcommand{\rpdtsize}[1][]{\textsf{RSize}^{\oplus \textsf{-dt}}}

\newcommand{\rpsdt}[1][]{\textsf{R}^{\oplus, *\textsf{-dt}}}

\newcommand{\rodt}[1][]{\textsf{R}^{1\textsf{-dt}}}

\newcommand{\free}{\mathrm{Free}}
\newcommand{\fix}{\mathrm{Fix}}

\newcommand{\Bern}{\textup{Bernoulli}}

\newcommand{\poly}{\text{poly}}

\newcommand{\ind}{\text{IND}}
\newcommand{\bind}{\text{bIND}}
\newcommand{\maj}{\text{MAJ}}

\newcommand{\ip}{\text{IP}}

\newcommand{\lift}{\mathrm{Lift}}

\newcommand{\res}{\mathrm{Res}}
\newcommand{\php}{\mathrm{PHP}}
\newcommand{\bphp}{\mathrm{BPHP}}
\newcommand{\coll}{\mathrm{Coll}}
\newcommand{\ppf}{\mathrm{PI}}
\newcommand{\fphp}{\mathrm{FPHP}}

\DeclareMathOperator{\B}{\{0,1\}}
\DeclareMathOperator{\T}{\{0,1,*\}}
\DeclareMathOperator{\E}{\mathbb{E}}
\DeclareMathOperator{\supp}{supp}

\DeclareMathOperator{\rk}{rk}
\DeclareMathOperator{\cl}{Cl}

\DeclareMathOperator{\proj}{\text{proj}}
\DeclareMathOperator{\sdim}{sdim}

\bibliography{refs}

\title{Lower Bounds for Bit Pigeonhole Principles in \\ Bounded-Depth Resolution over Parities}
\author{
Farzan Byramji \thanks{University of California, San Diego. \href{mailto:fbyramji@ucsd.edu}{fbyramji@ucsd.edu}. Supported by NSF AF:Medium 2212136.} \and
Russell Impagliazzo \thanks{University of California, San Diego. \href{mailto:rimpagliazzo@ucsd.edu}{rimpagliazzo@ucsd.edu}. Supported by NSF AF:Medium 2212136.}} 
\date{November 24, 2025}

\pagenumbering{gobble}
\begin{document}

\maketitle

\begin{abstract}
We prove lower bounds for proofs of the bit pigeonhole principle (BPHP) and its generalizations in bounded-depth resolution over parities ($\res(\oplus)$).
   For weak $\bphp_n^m$ with $m = cn$ pigeons (for any constant $c > 1$) and $n$ holes, for all $\eps > 0$, we prove that any depth $N^{1.5 - \eps}$ proof in $\res(\oplus)$ must have size $\exp(\widetilde{\Omega}(N^{2\eps}))$,
     where $N = cn\log n$ is the number of variables. 
    
    Inspired by recent work in TFNP on multicollision-finding, we consider a generalization of the bit pigeonhole principle, denoted $t$-$\bphp_n^m$, asserting that there is a map from $[m]$ to $[n]$ ($m > (t-1)n$) such that each $i \in [n]$ has fewer than $t$ preimages. We prove that any depth $N^{2-1/t-\eps}$ proof in $\res(\oplus)$ of $t$-$\bphp_n^{ctn}$ (for any constant $c \geq 1$) must have  size $\exp(\widetilde{\Omega}(N^{\eps t/(t-1)}))$.
    
    For the usual bit pigeonhole principle, we show that any depth $N^{2-\eps}$ $\res(\oplus)$ proof of $\bphp_n^{n+1}$ must have  size  $\exp(\widetilde{\Omega}(N^{\eps}))$. As a byproduct of our proof, we obtain that any randomized parity decision tree for the collision-finding problem with $n+1$ pigeons and $n$ holes must have depth $\Omega(n)$, which matches the upper bound coming from a deterministic decision tree.

We also prove a lifting theorem for bounded-depth $\res(\oplus)$ with a constant size  gadget which lifts from $(p, q)$-DT-hardness, recently defined by Bhattacharya and Chattopadhyay (2025). By combining our lifting theorem with the $(\Omega(n), \Omega(n))$-DT-hardness of the $n$-variate Tseitin contradiction over a suitable expander, proved by Bhattacharya and Chattopadhyay, we obtain an $N$-variate constant-width unsatisfiable CNF formula with $O(N)$ clauses for which any depth $N^{2-\eps}$ $\res(\oplus)$ proof requires size $\exp(\Omega(N^\eps))$. Previously no superpolynomial lower bounds were known for $\res(\oplus)$ proofs when the depth is superlinear in the size of the formula  (instead of the number of variables).

\end{abstract}
\vspace{\fill}
\pagebreak

\tableofcontents

\pagebreak
\pagenumbering{arabic}
\section{Introduction}

The resolution proof system is perhaps the most well-studied proof system in propositional proof complexity. Lines in resolution are clauses (disjunctions of literals). The resolution rule allows one to derive from clauses $A \vee x$ and $B \vee \neg x$, the clause $A \vee B$ (where $A$ and $B$ are arbitrary clauses). A resolution refutation of a CNF formula $\varphi$ is a sequence of clauses deriving the empty clause (which is clearly unsatisfiable) from the clauses of $\varphi$ thereby proving that $\varphi$ cannot have a satisfying assignment. The first superpolynomial lower bound for resolution was proved by Haken \cite{haken85} for the unary CNF encoding of the pigeonhole principle. Many more lower bounds for resolution proofs of several other natural formulas have been proved in the following decades \cite{urquhart1987hard, chvatal1988many, ben1999short, beame2002efficiency, razborov2004resolution, beame2005resolution, beame2007resolution}. 

Going beyond reasoning with clauses, it is natural to consider proof systems where the lines are more powerful circuits like $AC^0$-circuits or $AC^0[p]$-circuits. The first superpolynomial lower bound against $AC^0$-Frege was shown by Ajtai  \cite{ajtai1994complexity} for the pigeonhole principle which was later strengthened to an exponential lower bound \cite{pitassi1993exponential, krajicek1995exponential}. On the other hand, proving any superpolynomial lower bound for $AC^0[p]$-Frege is a longstanding challenge. 

A natural subsystem of $AC^0[2]$-Frege is the proof system resolution over parities, $\res(\oplus)$, introduced by Itsykson and Sokolov \cite{itsykson2014lower, itsykson2020resolution} a decade ago.
$\res(\oplus)$ extends the power of resolution to allow linear algebra over $\bF_2$. Lines in $\res(\oplus)$ are linear clauses, which are disjunctions of linear equations over $\bF_2$. In $\res(\oplus)$, a resolution step can derive from linear clauses $A$ and $B$, any linear clause $C$ such that every $x \in \B^n$ satisfying both $A$ and $B$ also satisfies $C$. In particular, for any linear form $v$, we can derive $A \vee B$ from $A \vee (v = 0)$ and $B \vee (v = 1)$.

Proving lower bounds for this seemingly simple strengthening of resolution has turned out to also be quite challenging in general, though there has been progress on restricted subsystems of $\res(\oplus)$. Several works like \cite{itsykson2020resolution, garlik2018some, itsykson2021proof, gryaznov2024resolution, cmss23, beame2023, bi25} have shown lower bounds for tree-like $\res(\oplus)$ via various techniques. In the last couple of years, there has been exciting progress on restricted DAG-like subsystems of $\res(\oplus)$  starting with the work of Efremenko, Garlik and Itsykson \cite{egi24} who showed an exponential lower bound for bottom-regular $\res(\oplus)$ proofs of the bit pigeonhole principle. Bhattacharya, Chattopadhyay and Dvo\v{r}\'{a}k \cite{bhat2024} and later Alekseev and Itsykson \cite{alek2024} gave such exponential lower bounds for formulas that have polynomial size proofs in ordinary resolution. 

Going beyond bottom-regular $\res(\oplus)$, \cite{alek2024} gave an exponential lower bound for $\res(\oplus)$ proofs whose depth is at most $O(N \log \log N)$ where $N$ is the number of variables of the formula. The depth limit was further pushed to $O(N \log N)$ by Efremenko and Itsykson \cite{ei25}. These lower bounds were proved for the Tseitin formula on a logarithmic degree expander lifted with 2-stifling gadgets. Bhattacharya and Chattopadhyay \cite{bc25} have recently proved an exponential lower bound for any depth $N^{2 - \eps}$ $\res(\oplus)$ proof of the Tseitin formula on a constant degree expander lifted with the inner product gadget of sufficiently large logarithmic size. More generally, they show that any gadget with sufficiently small correlation will parities suffices for their lower bound.

Also recently, Itsykson and Knop \cite{ik25} gave a formula for which any depth $O(N \log N)$ $\res(\oplus)$ proof has exponential size, and the formula also has a resolution refutation of quasipolynomial size, thereby establishing a supercritical tradeoff for resolution over parities. They also showed how one can obtain a size-depth lower bound in $\res(\oplus)$ by lifting resolution width. Their lifting theorem gives several other examples of formulas with $\poly(N)$ clauses for which any $\res(\oplus)$ refutation must have depth $\Omega(N \log N)$ or size $2^{\Omega(N/\log N)}$. This lifting theorem also gives superpolynomial lower bounds for larger depths, up to depth $O(N^{1.5-\eps})$. However, one shortcoming of these formulas is that they have exponentially many clauses and so the size lower bound is only quasipolynomial in terms of the number of clauses. 

\paragraph{Our contributions.}

In this work, we prove size-depth lower bounds in $\res(\oplus)$ for the bit pigeonhole principle (BPHP) and some related formulas. The bit pigeonhole principle (or binary pigeonhole principle) CNF formula asserts the contradictory claim that there are $m$ bitstrings of length $\log n$ (where $m > n$) such that no two of them are equal. We have $m \log n$ variables $x_{i, j}$ for $i \in [m], j \in [\log n]$. For every pair of distinct indices $i_1$ and  $i_2$ in $[m]$ and every $z \in \{0,1\}^{\log n}$, there is a clause stating that $x_{i_1} \neq z$ or $x_{i_2} \neq z$.

We prove an exponential lower bound on the size of $\res(\oplus)$ proofs of the weak bit pigeonhole principle with $n$ holes when the depth is at most $O(n^{1.5 - \eps})$ for any constant $\eps > 0$.

\begin{theorem} \label{thm:wbphp}
    Suppose there exists a $\res(\oplus)$ proof of $\bphp_n^{m}$ $(m > n)$ whose size is $S$ and depth is $D$. Then $D \sqrt{\log S} = \Omega(n^{1.5})$.
\end{theorem}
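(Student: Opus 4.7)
The plan is to combine a random restriction with a depth lower bound for ``rank-bounded'' refutations of the restricted instance, in the spirit of recent $\res(\oplus)$ size--depth tradeoffs \cite{alek2024, ei25, ik25}. Let $\Pi$ be a supposed $\res(\oplus)$ refutation of $\bphp_n^m$ of size $S$ and depth $D$; I will sketch how to derive $D \sqrt{\log S} = \Omega(n^{1.5})$ from these assumptions.

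The first step will be to design a random restriction $\rho$ that samples a uniformly random injective partial assignment $\pi : P \hookrightarrow [n]$ from a random subset $P \subseteq [m]$ of size $n - t$ (with a parameter $t = \Theta(n)$), substituting each pigeon $i \in P$ by the $\log n$-bit binary encoding of $\pi(i)$. The residual formula is $\bphp_t^{m - (n - t)}$, still unsatisfiable since $m > n$, and the substitution injects enough entropy into the remaining bit-variables for a rank-trivialization argument to go through. Second, I plan to prove a rank-trivialization lemma: defining the rank $r(C)$ of a linear clause $C$ as the codimension of its falsifying affine subspace, the probability that $C|_\rho$ becomes tautological is at least $1 - 2^{-\Omega(r(C))}$. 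Setting $r = \Theta(\log S)$ and union-bounding over the $S$ lines of $\Pi$, with positive probability every line of $\Pi|_\rho$ has rank less than $r$, reducing the problem to showing that $\bphp_t^{m - (n - t)}$ does not admit a depth-$D$ $\res(\oplus)$ refutation in which every line has rank at most $r$.

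The core of the argument is then a depth lower bound for such rank-bounded refutations. I would prove, via a Prover--Delayer game, that any depth-$D$ $\res(\oplus)$ refutation of $\bphp_t^{m'}$ (with $m' > t$) in which every line has rank at most $r$ satisfies $D \geq \Omega(t^{1.5}/\sqrt{r})$. The delayer's strategy will maintain a distribution over injective partial assignments of the $t$ residual pigeons, weighted by a potential function that combines the support size of the distribution with a birthday-style ``collision pressure'' term. A rank-$r$ derivation step can decrease the potential by at most $O(\sqrt{r})$, while reaching any falsified axiom requires total decrement $\Omega(t^{1.5})$. Plugging in $r = O(\log S)$ and $t = \Theta(n)$ gives $D \geq \Omega(n^{1.5}/\sqrt{\log S})$, equivalently $D \sqrt{\log S} = \Omega(n^{1.5})$, which is the claim.

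The main technical obstacle will be this last step: calibrating the potential so that rank-$r$ queries amortize to $\sqrt{r}$ progress rather than the naive $r$, and so that driving the delayer into a violated axiom genuinely requires $\Omega(t^{1.5})$ total progress rather than $O(t)$. The square-root exponents should ultimately come from the birthday-paradox structure of the bit pigeonhole principle: witnessing a specific collision forces the delayer to resolve the $\log t$-bit names of $\Omega(\sqrt{t})$ distinct pigeons, and rank-$r$ linear queries can only convert this birthday pressure into progress at a $\sqrt{r}$ rate. Implementing this amortization consistently across the DAG structure of $\Pi|_\rho$---so that the delayer's distribution remains well-defined at nodes where multiple derivation paths merge---is where most of the technical work will lie.
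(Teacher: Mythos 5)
Your overall strategy --- a single random restriction to kill all high-rank lines, followed by a depth lower bound for rank-bounded refutations --- is a genuinely different route from the paper, which uses the iterated ``random walk with restarts'' of Alekseev--Itsykson (applying an \emph{affine} restriction after each walk to shrink the remaining DAG, rather than one bit-fixing restriction at the outset). Unfortunately, the first step of your plan has a concrete failure.

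The rank-trivialization lemma as you state it is false for $\res(\oplus)$. Take the linear clause $C$ whose falsifying system is $\Phi = \{x_{1,j} = 0 : j \in [\log n]\}$, i.e.~``pigeon~1 lands in hole $0^{\log n}$.'' Here $r(C) = \log n$, but $\Phi$ is supported on a \emph{single block}. Under your restriction (a uniformly random injective assignment to $n-t$ of the $m$ pigeons, with $t = \Theta(n)$), pigeon~1 falls outside $P$ with constant probability, and in that event $\Phi|_\rho = \Phi$ is still consistent, so $C|_\rho$ is not a tautology. Hence $\Pr[C|_\rho \text{ not tautological}] = \Omega(1)$ rather than $2^{-\Omega(\log n)}$, and your union bound over the $S$ lines breaks already for $\log S = \omega(1)$. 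More generally, a rank-$r$ linear clause in $\res(\oplus)$ can be supported on as few as $\lceil r / \log n \rceil$ blocks, and the probability that all of those blocks evade $P$ is only exponentially small in the number of blocks, not in $r$; this is exactly the phenomenon that makes $\res(\oplus)$ resistant to the resolution-style random-restriction argument you are porting. Note that the paper's rank-probability bound (Lemma~\ref{lem:rank_prob}) is for a \emph{product} distribution uniform on $A^m$ and it \emph{does} decay as $(3/4)^r$ regardless of support size --- in your counterexample a block-uniform $x$ satisfies $\Phi$ with probability only $1/|A|$ --- but that bound is applied at a single node in a bottleneck argument, not union-bounded over all lines of a global restriction, and the surviving problem is handled with affine rather than bit-fixing substitutions.

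Even granting the first step, the second step is the bulk of the theorem and is only gestured at. A Prover--Delayer potential in which rank-$r$ queries amortize to $\sqrt{r}$ progress and the DAG merges are handled consistently is, to my knowledge, not a known technique; the paper instead obtains the $\sqrt{n}$-per-phase rate from a birthday bound inside a localization-based PDT simulation (Algorithm~\ref{alg:sim_multicoll} and Lemma~\ref{lem:multi_balls_prob}), which is a quite different mechanism. If you want to pursue a static ``rank-bounded'' route, you would at minimum need to replace ``rank'' with ``safe dimension'' (or amortized closure), which is insensitive to the single-block pathology above; but then the trivialization step again needs affine, not bit-fixing, restrictions, and you are effectively reconstructing the iterated approach.
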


This strengthens the bottom regular $\res(\oplus)$  lower bound for $\bphp_n^{n+1}$ \cite{egi24} in multiple ways.
When $m = cn$ for any constant $c > 1$, the number of variables $N$ is $cn \log n$. So Theorem \ref{thm:wbphp} implies that for any $\eps > 0$, any depth $O(N^{1.5 - \eps})$ proof of $\bphp_n^{cn}$ must have size $\exp(\tilde{\Omega}(N^{2\eps}))$.

Another natural setting of parameters is when the depth is bounded by the number of variables $m \log n$, which is true for any natural notion of regularity. Theorem~\ref{thm:wbphp} implies that any such $\res(\oplus)$ proof of $\bphp_n^{m}$ must have size at least $\exp\left(\Omega\left(\frac{n^3}{m^2 \log^2 n}\right)\right)$. In particular, we get an exponential lower bound if $m = n^{1.5 - \Omega(1)}$. 

It is known that every $\res(\oplus)$ proof of $\bphp_n^m$ must have depth $\Omega(n)$ \cite{egi24}, and there exists a tree-like resolution proof of depth $O(n)$ (and, hence, size $2^{O(n)}$) \cite{dantchev2024proof}. This implies that the bound $\Omega(n^{1.5})$ in Theorem~\ref{thm:wbphp} cannot be improved in general. But there can still be other incomparable size-depth lower bounds for $\bphp_n^m$. 

There are two natural ways of making weak BPHP potentially harder. The more common way is to keep $m$ small with respect to $n$. Another way would be to consider the following generalized pigeonhole principle. For a positive integer $t$, if $m$ pigeons are assigned to $n$ holes and $m> (t-1)n$, then there must be some $t$ pigeons which are assigned the same hole. We will consider both these ways, starting with the latter since the proof of that lower bound is a relatively simple extension of Theorem~\ref{thm:wbphp}.

We use $t$-$\bphp_n^m$ for $m > (t-1)n$ to denote the following CNF formula, which is a generalization of $\bphp_n^m$. We have $m \log n$ variables $x_{i, j}$ for $i \in [m], j \in [\log n]$. For every size $t$ subset $\{i_1, i_2, \dots, i_t\} \subseteq [m]$ and every $z \in \{0,1\}^{\log n}$, there is a clause stating that for some $k \in [t]$, $x_{i_k} \neq z$. So $t$-$\bphp_n^m$ consists of $\binom{m}{t}n$ clauses. The corresponding search problem of finding $t$-collisions was recently studied from a TFNP perspective in the independent works of Jain, Li, Robere, Xun \cite{jlrx24}, and Bennett, Ghentiyala, Stephens-Davidowitz \cite{bgs25}, which inspired us to consider the proof complexity of the corresponding CNF formulas. 

We generalize Theorem~\ref{thm:wbphp} to obtain exponential lower bounds for depth $O(n^{2-1/t-\eps})$ $\res(\oplus)$ proofs of $t$-$\bphp_n^m$.

\begin{theorem}\label{thm:tbphp} 
    For all $t \coloneqq t(n) \geq 2$, all $m > (t-1)n$, if there exists a $\res(\oplus)$ proof of $t$-$\bphp_n^m$ whose size is $S$ and depth is $D$, then $D (\log S)^{1-1/t} = \Omega(tn^{2-1/t})$.
    
        In particular, for $t = \log n$, all $m > (t-1)n$, if there exists a $\res(\oplus)$ proof of $t$-$\bphp_n^m$ whose size is $S$ and depth is $D$, then $D \log S = \Omega(n^{2}\log n)$.
\end{theorem}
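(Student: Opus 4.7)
The plan is to generalize the proof of Theorem~\ref{thm:wbphp}, which is the $t = 2$ case, to arbitrary $t$. The target bound $D(\log S)^{1-1/t} = \Omega(tn^{2-1/t})$ is the natural $t$-wise generalization: the exponent $1-1/2 = 1/2$ on $\log S$ becomes $1 - 1/t$, and $n^{3/2} = n^{2 - 1/2}$ becomes $n^{2-1/t}$. We follow the same three-step recipe---random restriction, rank reduction, adversary game---with every pairwise birthday calculation replaced by its $t$-wise analogue.

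First, I would apply a random restriction leaving $k = \Theta(n^{1-1/t})$ pigeons free and committing the remaining $m-k$ pigeons to uniformly random holes, conditioned on no hole receiving $t$ or more pigeons. The scale $k = \Theta(n^{1-1/t})$ (versus $\sqrt{n}$ for $t = 2$) is determined by the $t$-wise birthday bound: the expected number of $t$-collisions among $k$ uniform draws from $[n]$ is $\binom{k}{t}n^{1-t}$, which is $O(1)$ precisely when $k = O(n^{1-1/t})$. Hence at this scale the conditioning is non-degenerate and the restriction extends to a no-$t$-collision completion with constant probability. Second, I would show (as in the proof of Theorem~\ref{thm:wbphp}) that with positive probability over $\rho$, every linear clause in the given refutation $\pi$ either becomes tautological or has rank at most $r = O(\log S)$ on the free variables; this is a union bound over the $S$ clauses, combined with an exponential-in-rank upper bound on the single-clause survival probability.

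Third, I would run an adversary argument on the restricted refutation $\pi|_\rho$: walk from the empty clause down to an axiom, maintaining a partial assignment to the free pigeons that admits a no-$t$-collision completion. The crucial $t$-wise estimate is that each step of the walk, corresponding to a rank-$\leq r$ query, forces the adversary to commit at most $O(r^{1-1/t})$ bits of its partial assignment: the $t-1$ units of slack at each hole allow the adversary to decompose a rank-$r$ linear form into $O(r^{1-1/t})$ independent blocks, each needing only one fresh commitment. Combined with the $\Omega(tn^{2-1/t})$ total budget that the no-$t$-collision invariant supports before a collision is forced, this gives $D \cdot r^{1-1/t} = \Omega(tn^{2-1/t})$. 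Substituting $r = O(\log S)$ yields the theorem, including its $t = \log n$ specialization.

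The main obstacle is the per-step commitment estimate $r^{1-1/t}$. For $t = 2$ this is the $\sqrt{r}$ bound implicit in Theorem~\ref{thm:wbphp}, which uses the single unit of slack at each hole. For general $t$, the analogue requires decomposing a rank-$r$ linear form on the hole bits of $k$ free pigeons into $O(r^{1-1/t})$ blocks that the adversary can absorb individually using the $t-1$ pre-collision slack at each hole. Establishing this decomposition, and checking that it fits together cleanly with the restriction and rank-reduction steps above, is where the substantive $t$-wise analysis lies.
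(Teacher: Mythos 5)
There is a genuine gap, and the overall structure you propose does not match what the theorem requires. The clearest problem is the budget. You restrict to $k = \Theta(n^{1-1/t})$ free pigeons and then claim the adversary has a ``total budget of $\Omega(tn^{2-1/t})$'' commitments before a $t$-collision is forced. But after your restriction the entire instance has only $k\log n = \Theta(n^{1-1/t}\log n)$ free bits, and even without any restriction the whole formula has only $m\log n = O(tn\log n)$ variables. For any fixed $t$ and large $n$, $tn^{2-1/t}$ exceeds both by a polynomial factor, so there is no adversary game on this instance whose progress measure can reach that value. The $t$-wise birthday scale $n^{1-1/t}$ is real, but you have put it in the wrong place: it is not the number of pigeons to leave free after a one-shot restriction.

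The paper's proof is an iterative random-walk-with-restarts, and this is what lets the factor $n^{2-1/t}$ appear despite the input having only $\tilde O(n)$ pigeons. Each phase performs a random walk of length $d = \Theta\bigl(tn^{1-1/t}(\log S)^{1/t}\bigr)$ on the affine DAG, using a parity-decision-tree simulation (Algorithm~\ref{alg:sim_multicoll}). The $t$-wise birthday calculation (Lemma~\ref{lem:multi_balls_prob}) shows the simulation reaches depth $d$ without revealing a potential $t$-collision with probability at least $S^{-O(1)}$; the extra $(\log S)^{1/t}$ in $d$, which your proposal does not account for, comes precisely from this relaxation to polynomially-small rather than constant success probability. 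Bottleneck counting (Lemma~\ref{lem:rank_prob}) then finds a node at depth $d$ whose linear system has rank $O(\log S)$, and Lemma~\ref{lem:aff_restrict} converts that into an affine restriction fixing only $O(\log S)$ pigeon blocks and using up $O(\log S)$ holes. This can be repeated for $\Omega(n/\log S)$ phases before too many holes are forbidden, and each phase strictly decreases the DAG depth by $d$, giving $D = \Omega\bigl((n/\log S)\cdot d\bigr) = \Omega\bigl(tn^{2-1/t}/(\log S)^{1-1/t}\bigr)$. There is no ``per-step commitment of $O(r^{1-1/t})$ bits'' anywhere in this argument, and no decomposition of a rank-$r$ form into $O(r^{1-1/t})$ blocks; you correctly flag that this is the unproven heart of your proposal, but it is not just unproven --- it is a lemma the actual proof never needs and that, combined with your budget, yields an inconsistent count. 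To fix the proposal you would need to replace the single-shot restriction and per-step commitment analysis with the phase structure above, keeping $\Theta(tn^{1-1/t}(\log S)^{1/t})$ as the per-phase walk depth rather than as a size of the restricted instance.
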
  

For any positive real $\eps$, by taking $t$ to be large enough in terms of $\eps$, Theorem~\ref{thm:tbphp} gives a family of polynomial size CNF formulas which is hard for depth $N^{2-\eps}$ $\res(\oplus)$. For instance, taking $t = \ceil{2/\eps}$ and $m = \Theta(tn) = \Theta(n)$ we have that $t$-$\bphp_n^m$ has $N = \Theta(n \log n)$ variables,  $N^{O(1)}$ clauses and requires depth $N^{2-\eps}$ $\res(\oplus)$ refutations of size $\exp(\widetilde{\Omega}(N^{\eps/2+\eps^2/4}))$. 

By considering $t = \log n$ and $m = \Theta(tn) = \Theta(n \log n)$, we get a family of formulas with $N = \Theta(n \log^2 n)$ variables for which any depth $D$ $\res(\oplus)$ proof must have size $\exp\left(\Omega\left(\frac{N^2}{D \log^3 N}\right)\right)$. Up to a $\log N$ factor, this matches the size-depth lower bound obtained in \cite{bc25} (which was for a different formula) in terms of the number of variables. However, one undesirable aspect is that for $t = \log n$, $t$-$\bphp_n^{tn}$ has quasipolynomially many clauses ($\binom{tn}{t}n = n^{t + \Theta(1)}$). The bit pigeonhole principle which we discuss next does not have this limitation.

The bit pigeonhole principle with $n+1$ pigeons and $n$ holes has $N = (n+1)\log n$ variables and $\binom{n+1}{2}n = \Theta(n^3)$ clauses. We prove that any $\res(\oplus)$ proof of $\bphp_n^{n+1}$ with depth $D$ must have size $\exp\left(\Omega\left(\frac{n^2}{D}\right)\right) = \exp\left(\Omega\left(\frac{N^2}{D \log^2 N}\right)\right)$. More generally, we prove the following lower bound when the number of pigeons is $n + o(n)$.

\begin{theorem}\label{thm:s_bphp} There exists a constant $\delta > 0$ such that the following holds. Let $k \leq \delta n$ and $m \coloneqq n+k$. If there exists a $\res(\oplus)$ proof of $\bphp_n^m$ whose size is $S$ and depth is $D$, then 
\[
\log S = \Omega \left(\min\left( \frac{n^2}{D}, \frac{n^4}{kD^2} \right)\right).
\]
\end{theorem}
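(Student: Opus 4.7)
The plan is to prove Theorem~\ref{thm:s_bphp} via two complementary arguments corresponding to the two terms in the min, extending the techniques used for Theorems~\ref{thm:wbphp} and~\ref{thm:tbphp} to obtain a sharper dependence on the excess pigeon count $k = m - n$.

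For the second term $n^4/(kD^2)$, I would refine the random restriction argument behind Theorem~\ref{thm:wbphp}. Applied directly to $\bphp_n^{n+k}$, that argument yields $D\sqrt{\log S} = \Omega(n^{3/2})$, i.e.\ $\log S = \Omega(n^3/D^2)$, which matches $n^4/(kD^2)$ only when $k = \Theta(n)$. To sharpen this for $k = o(n)$, I would track the parameter $k$ explicitly through the analysis: a random partial injection that restricts $r$ of the $n+k$ pigeons to $r$ distinct holes produces a sub-instance $\bphp_{n-r}^{(n-r)+k}$ with the \emph{same} excess $k$; since the effective hardness of such a sub-instance depends on both $n-r$ and $k$ (not merely on their ratio), carefully accounting for this dependence should deliver the extra $\sqrt{n/k}$ factor. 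The parameter $r$ for the restriction would be tuned based on $D$ and $k$ to balance how much the random restriction simplifies the parity DTs of the proof against how much of the BPHP hardness survives.

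For the first term $n^2/D$, I would aim for a width-style size--depth tradeoff for $\res(\oplus)$. Concretely, the plan is to show that any $\res(\oplus)$ refutation of $\bphp_n^{n+k}$ of size $S$ and depth $D$ can be converted, via a random restriction that kills all sufficiently ``wide'' linear clauses, into a refutation whose linear clauses are all narrow, with width of order $O(\sqrt{D\log S})$. A width lower bound of $\Omega(n)$ for $\res(\oplus)$ refutations of $\bphp_n^{n+k}$ — proved by an expansion/critical-assignment argument adapted to linear queries, in the spirit of the Ben-Sasson--Wigderson method — then forces $\sqrt{D\log S} = \Omega(n)$, giving $\log S = \Omega(n^2/D)$. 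As a byproduct, the width lower bound combined with the standard Prover--Adversary game yields the $\Omega(n)$ randomized parity decision tree lower bound for collision-finding advertised in the introduction.

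The main obstacle is the width lower bound itself. For ordinary resolution, width lower bounds rest on combinatorial boundary-expansion arguments, and there is no direct analogue for $\res(\oplus)$ because the derivation rule allows combining linear equations rather than flipping individual bits; one needs an adapted notion of ``linear width'' that measures the rank of linear forms touched by a clause, together with an expansion-like property showing that BPHP axioms cannot be combined into a narrow linear clause without introducing many linearly independent equations. A secondary difficulty is the size--depth--width conversion itself: the random restriction killing wide clauses must be compatible with the parity-DT structure and must preserve unsatisfiability of the residual BPHP instance, which in turn constrains how aggressively one can restrict when $k$ is small.
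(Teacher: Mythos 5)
Your decomposition of the bound into two separately-proved terms does not match the paper, and each branch of your plan has a concrete gap.

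\paragraph{The $n^2/D$ branch.} The paper does \emph{not} go through a width lower bound for $\res(\oplus)$. Both terms of the min come from the same random-walk-with-restarts argument (Lemma~\ref{lem:coll_dag_restrict} applied iteratively); the two terms simply correspond to two settings of the walk length $d$ in Theorem~\ref{thm:s_bphp}'s proof ($d = \Theta(n)$ when $S$ is large, $d = \Theta(n\sqrt{\log S / k})$ when $S$ is small). Your proposed route rests on two ingredients that are both unestablished: (i) a conversion of any size-$S$, depth-$D$ $\res(\oplus)$ refutation into one of ``linear width'' $O(\sqrt{D\log S})$, and (ii) an $\Omega(n)$ ``linear width'' lower bound for BPHP in $\res(\oplus)$. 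You flag (ii) as the main obstacle but treat (i) as secondary; in fact (i) is equally problematic. The resolution-style argument fails because a random bit-fixing restriction does not kill a linear clause whose disjuncts are dense parities — each surviving disjunct just simplifies to another parity — and the block structure of BPHP means wide linear clauses can have small rank and small block-support, so ``rank width'' does not decay under restriction the way literal width does under random restriction in resolution. Nothing in the paper or the cited works supplies such a conversion.

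\paragraph{The $n^4/(kD^2)$ branch.} The crucial point you are missing is that the hard distribution must change. The argument behind Theorem~\ref{thm:wbphp} runs a $\sqrt{n}$-step random walk against the \emph{uniform} distribution on $A^m$. This cannot be pushed past $\sqrt{n}$ steps no matter how carefully you track $k$, because the uniform distribution on $(\B^l)^m$ with $m = n + o(n)$ admits an $\widetilde{O}(\sqrt{n})$-depth \emph{ordinary} decision tree finding a collision (birthday paradox). The paper's key move, spelled out in Section~\ref{subsec:s_bphp_overview}, is to switch to the uniform distribution on \emph{permutations of a multiset} of size $m$ with only $k$ repeated elements (essentially the permutation-inversion problem $\ppf_{n,k}$). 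That distribution makes collision-finding require $\Omega(n/k)$ queries, which is what buys the $\Omega(n)$-step walk when $k = O(1)$. Your sketch, which keeps the restriction scheme of Theorem~\ref{thm:wbphp} and ``tracks $k$,'' would still be analyzing the uniform product distribution and would therefore top out at $d = O(\sqrt{n})$ per phase. Beyond the change of distribution, the paper also needs new machinery you don't anticipate: a PDT simulation that respects the permutation structure (Algorithm~\ref{alg:2coll_sim}), a bound on the probability a safe linear system is satisfied by a random permutation (Lemma~\ref{lem:perm_prob}), and a structural lemma (Lemma~\ref{lem:find_pivots}) for finding affine restrictions that pair up fixed blocks so that exactly $2r$ strings are used by $2r$ fixed blocks — rather than $2r$ strings per $r$ blocks, which was affordable in the uniform setting but is not here.
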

For $k = \delta n$, the above bound becomes $\Omega(\frac{n^3}{D^2})$ since we always have $D = \Omega(n)$ \cite{egi24}. This coincides with the bound in Theorem~\ref{thm:wbphp}.

In the course of proving Theorem~\ref{thm:s_bphp}, we also obtain a nearly tight lower bound on the depth of zero-error randomized parity decision trees solving collision-finding $\coll_n^m$ for all $m$ and all abort probabilities which are at most a constant. In the collision-finding search problem $\coll_n^m$, we are given an assignment $x \in (\{0,1\}^{\log n})^m$ for $\bphp_n^m$ and the goal is to find two pigeons (distinct $i_1, i_2 \in [m]$) sent to the same hole, i.e.~$x_{i_1} = x_{i_2}$ (this is a slight relaxation of the false clause search problem for $\bphp_n^m$).  Here the randomized parity decision tree (RPDT) must be correct whenever it outputs a pair ${i_1, i_2} \subseteq [m]$, but it is allowed to abort (output a special symbol $\bot$) with some probability $\eps$. 

\begin{theorem} \label{thm:rpdt_coll}
    There exists a constant $\delta > 0$ such that the following holds. Let $k \leq \delta n$ and $m \coloneqq n+k$. Let $\eps \in [2^{-k}, 1/2]$. Any randomized parity decision tree solving $\coll_n^m$ with probability $1-\eps$ must have depth $\Omega(n\sqrt{\log(1/\eps)/k})$.
\end{theorem}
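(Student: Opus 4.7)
The plan is to apply Yao's minimax principle: I would exhibit a hard distribution $\cD$ supported on $\coll_n^m$ inputs (all of which have at least one collision, so the search problem is well-defined) and show that no deterministic parity decision tree of depth $D$ outputs a correct collision pair on inputs from $\cD$ with probability greater than $1-\eps$ unless $D = \Omega(n\sqrt{\log(1/\eps)/k})$.

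For $\cD$ I would take the uniform distribution over those $x\in [n]^{[m]}$ whose ``collision graph'' on $[m]$ is a perfect matching of size exactly $k$: draw a uniformly random unordered partition of $[m]$ into $k$ pairs and $n-k$ singletons, and then a uniformly random bijection from these $n$ parts into $[n]$. Under $\cD$ every input has exactly $k$ collision pairs (the $k$ matched pairs), and $\cD$ enjoys a joint pigeon/hole permutation symmetry that makes the later combinatorial/Fourier estimates tractable. Any deterministic parity decision tree of depth $D$ partitions $\bF_2^{m\log n}$ into affine subspaces (its leaves), each of codimension at most $D$. For correctness, a non-abort leaf labeled $(i,j)$ must have its affine subspace $A_\ell$ contained in $\{x:x_i=x_j\}$, forcing $\mathrm{codim}(A_\ell)\geq \log n$. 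Writing $U_{ij}:=\bigcup_{\ell\text{ labeled }(i,j)} A_\ell$, the success probability is $\sum_{(i,j)}\Pr_\cD[x\in U_{ij}]$, which we wish to upper bound by $1-\eps$.

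The technical heart of the argument is a pseudorandomness estimate for $\cD$: for each pair $(i,j)$, show that the conditional distribution of $\cD$ on $\{x:x_i=x_j\}$ is close to uniform, in the sense that any affine subspace $A\subseteq \{x:x_i=x_j\}$ of codimension $\log n + t$ carries $\cD$-mass at most $\Pr_\cD[x_i=x_j]\cdot 2^{-t}\cdot n^{O(1)}$. Combined with a Kraft-style count of the leaves of each codimension labeled $(i,j)$, with $\Pr_\cD[x_i=x_j]=k/\binom{m}{2}$, and with summation over the $\binom{m}{2}$ pairs, this should yield an overall success bound of the shape $\Pr_\cD[\text{success}]\leq 1 - \exp(-\Omega(D^2 k/n^2))$, from which the depth bound $D=\Omega(n\sqrt{\log(1/\eps)/k})$ follows upon solving for $D$. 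The restriction $\eps\geq 2^{-k}$ is precisely the range in which this bound is non-trivially stronger than the $D=\Omega(n)$ lower bound already known from \cite{egi24}.

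The main obstacle I foresee is establishing the pseudorandomness estimate for $\cD$: because $\supp(\cD)$ is a comparatively small and highly structured subset of $\bF_2^{m\log n}$, controlling how much of its mass can concentrate on an arbitrary affine subspace requires a careful combinatorial/character-sum estimate. I would attempt this either via direct Fourier analysis, bounding $\widehat{\cD}(v)$ for $v$ outside the ``alignment subgroup'' spanned by the $x_i + x_j$ directions using the permutation symmetries of $\cD$, or via a random-restriction argument that peels off matched pairs one at a time to reduce to (and recursively use) the $k=1$ base case. The quadratic factor $D^2$ in the exponent --- which is where the $\sqrt{\cdot}$ in the final bound originates --- should emerge from a second-moment/birthday-paradox-style combination of the per-pair bounds across the $\binom{m}{2}$ pairs, since a single query typically reveals at most a $1/n$-fraction of alignment information but two correlated queries can reveal a $1/n^2$-fraction.
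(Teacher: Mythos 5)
Your choice of hard distribution is correct and essentially identical to the paper's (which works with uniform permutations of a multiset $\cS$ of $m=n+k$ strings from $\B^l$ with $k$ strings doubled), and the opening reduction via Yao's principle is also right. The gap is in the technical heart.

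The pseudorandomness estimate you propose --- that an affine subspace $A\subseteq\{x_i=x_j\}$ of codimension $\log n + t$ has $\cD$-mass at most $\Pr_\cD[x_i=x_j]\cdot 2^{-t}\cdot n^{O(1)}$ --- is false. Since every $x$ in $\supp(\cD)$ is a permutation of a fixed multiset, the $\log n$ parities $\sum_{i\in[m]} x_{i,h}$ for $h\in[l]$ are all \emph{constant} on $\cD$. Adjoining them to $\{x_i=x_j\}$ raises codimension by another $\log n$ without lowering $\cD$-mass at all, so the claimed inequality is off by a factor of $n$, and this is only the first layer of the problem. More generally, the $\cD$-mass of an affine subspace is not governed by its codimension but by what the paper calls its \emph{safe dimension} (equivalently, the amortized-closure size from \cite{ei25}): the paper's Lemma~\ref{lem:perm_prob} bounds $\Pr_\cD[\Psi]\le (7/8)^{\Omega(r)}$ only when $\Psi$ is a \emph{safe} system of rank $l+r$ containing the permutation equations. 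A leaf of a PDT can have very high rank but tiny safe dimension (e.g.~$l$ constraints all inside a single pigeon-block), in which case $\Pr_\cD$ of the leaf decays far slower than $2^{-\text{codim}}$ predicts. At that point the Kraft-style bookkeeping $\sum_\ell 2^{-c_\ell}\le 1$ no longer combines with the available probability bound: $\sum_\ell(\text{const})^{-\sdim(\Phi_\ell)}$ is not bounded, so the sum-over-leaves argument does not close. There is also a structural mismatch: the Kraft inequality uses only the codimensions of the leaves and is insensitive to the depth of the tree, so it is not at all clear where the quadratic $D^2k/n^2$ dependence would come from in that framework; your remark about ``second-moment/birthday-paradox'' combination is a heuristic but not an argument.

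The paper avoids all of this by replacing the per-leaf counting with a random-walk simulation (Algorithm~\ref{alg:2coll_sim}) that tracks a system $L$, a shrinking set $A$ of available colored holes, and the ``free'' parities $L_F$ corresponding to the permutation constraints, localizing each nontrivial parity query to a single bit of a single pigeon-block. The success probability of the simulation is lower-bounded via a balls-in-bins / birthday-paradox lemma (Lemma~\ref{lem:coll_balls_bins_prob}), giving exactly the $\exp(-O(d^2k/n^2))$ term; Lemma~\ref{lem:coll_rpdt_dist} ensures the simulation samples the true walk of $T$ on $\mu$; and Lemma~\ref{lem:coll_not_sink} shows that whenever the simulation succeeds the reached leaf cannot certify a collision, hence must be an abort leaf. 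This produces a lower bound on the abort probability of a single random walk rather than a sum over all leaves, and it is what makes the $D^2$-in-the-exponent shape actually appear.
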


The above bound is tight up to an $O(\log n)$ factor even for randomized ordinary decision trees by a birthday-paradox-like calculation (where the additional $\log n$ factor comes from the size of each block). This shows that parity queries do not help significantly for the problem of collision-finding. In the case where $k$ and $\eps$ are constants (or more generally, when $\log(1/\eps) = \Omega(k)$), the additional $O(\log n)$ factor is not required in the upper bound since there is even a deterministic decision tree of depth $O(n)$ which solves collision-finding \cite{dantchev2024proof}.  For any $m$, \cite{egi24, bi25} had earlier implicitly shown a lower bound of $\Omega(\sqrt{n \log(1/\eps)})$ for RPDTs solving $\coll_n^m$ and that bound is nearly tight for $m = n+\Omega(n)$. 
The lower bound in Theorem~\ref{thm:rpdt_coll} also holds for randomized parity decision trees that are allowed to err with probability $\eps$ since any RPDT making errors can be turned into a zero-error RPDT by making $\log n$ additional parity queries to verify the output. 

\begin{remark}
    In contrast to parity decision trees, deterministic two-party communication protocols can solve  collision-finding under the natural bipartition for $m = n+1$ with $\widetilde{O}(\sqrt{n})$ bits of communication and this is nearly tight even for randomized protocols \cite{itsykson2021proof}. When $m = n+\Omega(n)$, the randomized communication complexity further reduces to $\widetilde{O}(n^{1/4})$ \cite{gj22} which is also known to be almost tight \cite{yz24}. 
    
    In the multiparty number-in-hand setting, stronger lower bounds were shown by Beame and Whitmeyer \cite{beame2025multiparty}  though those bounds also do not imply the RPDT lower bounds in Theorem~\ref{thm:rpdt_coll} for constant error.
    \cite[Theorem 1.1]{beame2025multiparty} implies an RPDT lower bound of $\widetilde{\Omega}(n/2^{2\sqrt{\log n}})$ for $k \leq 2^{2\sqrt{\log n}-2}$ and $\widetilde{\Omega}(n/k)$ for larger $k$ (where $k = m - n$). For $k$ larger than $\sqrt{n}$, the better bound $\widetilde{\Omega}(\sqrt{n})$ is implied by \cite[Theorem 1.3]{beame2025multiparty}. Beame and Whitmeyer conjecture that their bounds for small $k$ can be improved. In particular, it seems quite plausible that for $m = n+1$, the randomized number-in-hand complexity with $\log n$ players should be $\widetilde{\Omega}(n)$. \hfill $\blacktriangle$
\end{remark}

Finally, we prove a lifting theorem for bounded-depth $\res(\oplus)$ using a constant size gadget which, informally, lifts lower bounds for bounded-depth resolution that can be proved using the random walk with restarts approach of Alekseev and Itsykson \cite{alek2024}. Bhattacharya and Chattopadhyay \cite{bc25} formalized the notion of $(p, q)$-DT-hardness which captures some of the requirements for such a bounded-depth resolution lower bound. To state the definition of $(p, q)$-DT-hardness, the following notation will be convenient. For any decision tree $T$ on $\B^n$ and distribution $\mu$ over $\B^n$, we use $\sigma_{T, \mu}$ to denote the partial assignment corresponding to the random leaf reached when $T$ is run on the distribution $\mu$.

For a CNF formula $\varphi$ on $n$ variables, a set $\cP \subseteq \T^n$ is called $(p, q)$-DT-hard for $\varphi$ if it has the following properties:
\begin{itemize}
    \item No partial assignment in $\cP$ falsifies a clause of $\varphi$.
    \item $\cP$ contains the empty partial assignment $*^n$.
    \item $\cP$ is downward closed: if $\rho'$ can be obtained from $\rho$ by forgetting the value of some fixed variable and $\rho \in \cP$, then $\rho' \in \cP$.
    \item For every partial assignment $\rho \in \cP$ which fixes at most $p$ variables, there is some distribution $\nu_\rho$ over assignments to the variables not fixed by $\rho$ such that the following holds. If $T$ is any decision tree of depth $q$ which only queries variables not fixed by $\rho$, then $\Pr[\rho \; \cup \; \sigma_{T, \nu_\rho} \in \cP ] \geq 1/3$.
\end{itemize}

The choice of the constant $1/3$ in the above definition is not important. In fact, it can be meaningful to consider much smaller probabilities, though we only consider the above definition in this section for simplicity.

For formulas where the variables can be naturally partitioned into blocks, we can consider a variant of the above definition, where we only deal with complete blocks (or, equivalently, treat the input as consisting of symbols from a large alphabet). More precisely, for a formula $\varphi$ over $n$ blocks of $l$ variables each, a $(p, q)$-block-DT-hard set $\cP$ consists of only block-respecting partial assignments, where each block of the partial assignment is either completely fixed or completely unfixed. In the decision tree requirement above, we now consider decision trees that query an entire block in a single query. All the other requirements stay the same.

For such formulas $\varphi$ with a block structure, instead of lifting via standard composition which replaces each variable by a fresh block of variables for some gadget $g$, we can consider block-lifting, where we replace each block by a copy of a multi-output gadget $g' : \B^{l'} \rightarrow \B^l$. This can be  useful for formulas that have large width but small block-width, where the block-width of a formula $\varphi$ is the maximum number of distinct blocks involved in any clause of $\varphi$. The gadget we consider here is block-indexing $\bind_l : \B \times \B^l \times \B^l \rightarrow \B^l$ from \cite{de2021automating}, which is defined by $\bind_l(s, x^0, x^1) = x^s$. We will use $\lift(\varphi)$ to denote the block-lift of $\varphi$ with block-indexing of appropriate size (see Section \ref{sec:lifting} for details about the clauses of $\lift(\varphi)$).

We can now state our lifting theorem.
\begin{theorem} \label{thm:lift}
    Let $\varphi$ be a formula over $nl$ variables, partitioned into $n$ blocks of $l$ variables each. Let $m$ be the number of clauses of $\varphi$, $w$ its width and $b$ its block-width. Suppose there exists a $(p, q)$-block-DT-hard set for $\varphi$. 
    Then the formula $\lift(\varphi)$ over $n(2l+1)$ variables contains $2^bm$ clauses of width $w + b$ and every depth $D$ $\res(\oplus)$ proof of $\lift(\varphi)$ must have size $\exp(\Omega(pq/D))$.
\end{theorem}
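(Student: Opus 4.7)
The plan is to adapt the random walk with restarts approach of Alekseev and Itsykson \cite{alek2024}, situated in the decision-DAG framework of Bhattacharya and Chattopadhyay \cite{bc25}. Suppose, for contradiction, we have a $\res(\oplus)$ refutation $\Pi$ of $\lift(\varphi)$ with size $S$ and depth $D$ such that $\log S = o(pq/D)$. First extract from $\Pi$ a decision DAG (internal nodes labelled by $\bF_2$-linear queries over the lifted variables, leaves labelled by falsified axioms of $\lift(\varphi)$) of the same size and depth; then simulate a walk down this DAG by a randomized procedure driven by the $\nu_\rho$ distributions from the $(p,q)$-block-DT-hardness of $\cP$.

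Throughout the walk we maintain a block-respecting partial assignment $\rho \in \cP$ to the variables of $\varphi$, together with the implicit distribution $\mu_\rho$ over assignments to $\lift(\varphi)$ obtained by first sampling $y \sim \nu_\rho$ and then, for each base block $i$ independently, choosing the selector $s_i \in \{0,1\}$ uniformly at random, setting $x_i^{s_i} = y_i$, and setting the shadow block $x_i^{1-s_i}$ uniformly at random. At an internal node of the DAG querying a linear form $v$, decompose $v = \sum_i v_i$ by base blocks and sample $v$'s value by summing the $v_i$. For any unfixed block $i$ whose contribution $v_i$ has a nonzero coefficient on some output bit, the stifling property of $\bind_l$ (flipping $s_i$ swaps active and shadow blocks) makes $v_i$'s value a uniform random bit under $\mu_\rho$ that is independent of $y_i$, so the walk incurs no block query on block $i$. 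Only ``exceptional'' blocks whose contribution $v_i$ is supported entirely on the selector $s_i$, or the blocks needed to evaluate an axiom at a leaf, force an actual block query on $\rho$, carried out via a short decision tree consistent with $\nu_\rho$.

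For the analysis, run the walk in phases, each consisting of $O(q)$ block queries to $\rho$, and restart the walk from the root of $\Pi$ with $\rho = *^n$ whenever a phase would push $\rho$ outside $\cP$ or exceed the per-phase budget. The $(p,q)$-block-DT-hardness guarantees each phase succeeds with probability $\geq 1/3$, so $O(\log S)$ total restarts suffice in expectation to cover a leaf of $\Pi$. At a covered leaf, the base clause of block-width $\leq b$ corresponding to the axiom of $\lift(\varphi)$ would be falsified by $\rho$ extended by $\leq b$ further block queries, contradicting the first condition in the definition of a $(p,q)$-block-DT-hard set. Comparing the total block-query budget $p$ against the $O(q \log S)$ queries made across all restarts, and accounting for the depth $D$ of each walk (which controls how many ``exceptional'' steps may occur), yields the size-depth tradeoff $\log S = \Omega(pq / D)$.

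The main obstacle is quantifying how many steps on a root-to-leaf path of $\Pi$ are exceptional, and relating this count to the depth $D$. The stifling of $\bind_l$ guarantees that a generic linear form $v$ carries no information about $y$ under $\mu_\rho$, but a $\res(\oplus)$ proof can adversarially choose $v$ to be supported on selectors; we must argue, via a potential-function or entropy/rank argument in the spirit of \cite{bc25}, that only $O(q\log S/D)$ such selector-supported queries can occur per walk on average, so that the total over $O(\log S)$ walks stays within the budget $p$. Once that is established, the remaining bookkeeping—combining the $\leq b$ final queries at the axiom leaf with the count of exceptional queries along the walk—is largely routine, and the resulting strategy contradicts $(p,q)$-block-DT-hardness.
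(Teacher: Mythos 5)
The proposal has a concrete error in its fooling claim, and a structural problem in how the phases are organized; the self-identified ``main obstacle'' is real and is not resolved.

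On the fooling claim: you assert that if $v_i$ (block $i$'s contribution to a query $v$) has a nonzero coefficient on some non-selector bit, then $v_i$ is uniform and independent of $y_i$ under $\mu_\rho$. This fails whenever $v_i$ touches exactly one of $x_i^0, x_i^1$. Take $v_i = \langle c, x_i^0\rangle$ with $c \neq 0$: conditioned on $y_i$, with probability $1/2$ we have $s_i = 0$ and $v_i = \langle c, y_i\rangle$ is deterministic, and with probability $1/2$ we have $s_i = 1$ and $v_i$ is uniform, so $\Pr[v_i = 0 \mid y_i] \in \{1/4, 3/4\}$ depending on $y_i$. Thus $v_i$ is biased and correlated with $y_i$, and the ``exceptional'' set is far larger than selector-only queries. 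Worse, the distribution you define (select $s_i$ uniformly, set $x_i^{s_i} = y_i$, set the shadow block uniformly) does not even witness that $\bind_l$ is $\delta$-balanced for any constant $\delta$ at the selector position: the only joint values $(x_i^0, x_i^1)$ that appear with both $s_i = 0$ and $s_i = 1$ in your $\mu_\rho$ are $(y_i, y_i)$, which forces $\delta \leq 2^{-l}$. The paper's $\mu_y$ in Subsection \ref{subsec:lift_prel} includes a third component $\frac{1}{3}U_1 \times \{y\} \times \{y\}$ exactly so that the selector is a free uniform bit with constant probability, giving $\bind_l$ being $1/3$-balanced. More broadly, the paper deliberately avoids arguing that a lifted distribution directly fools parity queries (that is the route of \cite{bc25}, which the paper flags as not known to work with a constant-size gadget). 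Instead, Algorithm \ref{alg:sim_lift} always queries one base block $z_i$ per nontrivial parity, using the $\delta$-balanced decomposition $\mu_{y} = (1-\delta)A_y^j + \delta B_y^j \times U_1$: with probability $\delta$ the remaining bit of block $i$ is genuinely uniform and independent, so one adds $P = b$ to the tracked system $L$ and steps down the PDT; with probability $1-\delta$ the whole block is fixed without advancing. Lemma \ref{lem:lift_fail_prob} then bounds the overshoot by a Chernoff bound. Your proposal tracks no analogue of $L$, so it cannot maintain consistency across queries.

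On the phase structure: you describe restarting the walk from the root of $\Pi$ with $\rho = *^n$, and speak of ``covering a leaf.'' Resetting $\rho$ discards progress, so block queries do not accumulate against the budget $p$; and leaf-covering language is appropriate for tree-like lower bounds, not DAG size-depth tradeoffs. What actually drives the bound in the paper (Lemma \ref{lem:lift_dag_restrict}) is quite different: after the $d$-step walk locates a node $w$ whose linear system $\Phi$ has small safe dimension (via Lemma \ref{lem:lift_safe_prob}), an affine restriction $\tau$ fixing $O(\log S)$ blocks is constructed (Lemma \ref{lem:lift_aff_restrict}) so that $\Phi|_\tau$ is trivial and $\tau$ induces, under the gadget, a block-respecting extension $\rho'$ with $\rho \cup \rho' \in \cP$. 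Applying $\tau$ yields a new DAG rooted at $w$ with depth at most $D - d$, and one iterates with the accumulated $\rho$. This gives $\Omega(p/\log S)$ phases, each reducing depth by $d = \Theta(q)$, hence $D = \Omega(pq/\log S)$. The affine-restriction step is absent from your proposal, and without it the depth cannot be made to decrease across phases. The ``potential function / entropy argument'' you hope will bound exceptional steps is precisely the conditional-fooling technology of \cite{bc25} that the paper goes out of its way to avoid; the localization-based simulation sidesteps the obstacle entirely.
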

If we did not care about block-lifting, we could use standard composition with any constant size stifling gadget \cite{cmss23}  (like indexing $\ind_{1+2}$, inner product $\ip_{2+2}$, majority $\maj_3$) in the above statement to get the lower bound $\exp(\Omega(pq/D))$ on the size of any depth $D$ proof. Theorem \ref{thm:lift} can be strengthened (Theorem \ref{thm:lift_gen}) to also give size-depth lower bounds when we only have small probabilities in the definition of DT-hardness. This generalization recovers (and slightly improves) the size-depth lifting theorem from \cite{alek2024, ei25} based on the advanced Prover-Delayer game defined in \cite{alek2024}.

The way we lift to bounded-depth $\res(\oplus)$ is quite different from how Bhattacharya and Chattopadhyay \cite{bc25} do it (though at a high level both proofs follow the strategy from \cite{alek2024}). In particular, we do not use their notion of $(p, q)$-PDT-hardness, and, we do not know if a constant size gadget suffices to lift DT-hardness to PDT-hardness as they define it. 

Instead, the overall proof strategy for Theorem \ref{thm:lift} is closer to how we prove a lower bound for weak $\bphp$, Theorem \ref{thm:wbphp}, with the core components now being (a slight modification of) a lifting theorem for randomized parity decision trees \cite{bi25} and some ideas from prior work on $\res(\oplus)$ \cite{bhat2024, alek2024} about stifling gadgets. 

Let us consider some formulas which are DT-hard. Bhattacharya and Chattopadhyay \cite{bc25} proved that the Tseitin formula on $n$ variables over a suitably expanding constant degree graph has an $(\Omega(n), \Omega(n))$-DT-hard set. Since this is a constant width formula, we do not need to use the block-lift above and any stifling gadget suffices.
\begin{cor}
    Let $\varphi$ be the Tseitin contradiction with $n$ variables on a $(2n/d, d, \lambda)$ expander where $2n/d$ is odd, and $d$ and $\lambda$ are constants with $\lambda$ sufficiently small. Let $g$ be any constant size stifling gadget. Then the formula $\varphi \circ g$ has $O(n)$ variables, constant width, $O(n)$ clauses, and any depth $D$ $\res(\oplus)$ proof of $\varphi \circ g$ must have size $\exp(\Omega(n^2/D))$.
\end{cor}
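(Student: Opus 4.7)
The plan is to invoke the lifting theorem with the Bhattacharya--Chattopadhyay DT-hardness result as a black box. First I would record the elementary combinatorial facts about $\varphi$: on a $(2n/d, d, \lambda)$-expander with $d = O(1)$, the Tseitin contradiction has $n$ variables (one per edge), $2n/d$ parity clauses (one per vertex), each of width $d = O(1)$ and involving $d$ distinct variables. After composing with a stifling gadget $g$ of constant size, each variable is replaced by a constant-size block, so $\varphi \circ g$ has $O(n)$ variables, $2^{O(d)} \cdot 2n/d = O(n)$ clauses, and width $O(d) = O(1)$. Since the width and block-width of $\varphi$ are both $O(1)$, the distinction between standard and block-lifting collapses here.

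Next I would cite the result of Bhattacharya and Chattopadhyay that $\varphi$, for an appropriate constant $\lambda$ small enough and $d$ large enough (both constants), admits an $(\Omega(n), \Omega(n))$-DT-hard set $\cP$ in the block sense (each block has a single variable). I would then feed $\varphi$, $\cP$ and $g$ into Theorem \ref{thm:lift}, using the variant noted right after its statement: for a constant-width formula, standard composition with any constant-size stifling gadget yields the same lower bound $\exp(\Omega(pq/D))$ on the size of any depth-$D$ $\res(\oplus)$ proof. Substituting $p = \Omega(n)$ and $q = \Omega(n)$ produces the desired bound $\exp(\Omega(n^2/D))$.

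The only step requiring a touch of care is the justification that Theorem \ref{thm:lift} applies with an arbitrary constant-size stifling gadget in place of block-indexing. This is asserted in the paragraph following the theorem, and its proof presumably only uses the stifling property of $g$ to preserve the DT-hardness structure under composition; since this is stated as a remark rather than a separate theorem, I would simply refer to that discussion (and to the relevant arguments in \cite{bhat2024, alek2024, bi25} that the authors cite for handling stifling gadgets). No new combinatorial work on the expander or on the decision tree model is needed beyond what already appears in \cite{bc25} and Theorem \ref{thm:lift}.
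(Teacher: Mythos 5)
Your plan matches the paper's intended derivation: cite the $(\Omega(n), \Omega(n))$-DT-hardness of the Tseitin contradiction over a suitable constant-degree expander from \cite{bc25}, observe that for a constant-width formula the block structure is trivial (each block is a single variable, so block-DT-hardness is DT-hardness and block-lifting is standard composition), and plug $p = q = \Omega(n)$ into Theorem \ref{thm:lift} via the noted variant for constant-size stifling gadgets, obtaining $\exp(\Omega(pq/D)) = \exp(\Omega(n^2/D))$; the paper treats this as immediate for precisely the same reasons.

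Two small remarks. First, the Tseitin constraint at each of the $2n/d$ vertices expands to $2^{d-1}$ CNF clauses of width $d$, so $\varphi$ has $(2n/d)\,2^{d-1}$ clauses rather than $2n/d$; since $d$ is a constant this is still $O(n)$ and does not affect your conclusion, but the clause count you wrote is for the number of parity constraints, not CNF clauses. Second, your guess about why the stifling-gadget variant of Theorem \ref{thm:lift} holds is slightly off: the proof does not invoke stifling to ``preserve DT-hardness under composition'' but rather uses that any stifling gadget on a constant number of bits is $\delta$-balanced for $\delta = \Omega(1)$ (as noted in Subsection \ref{subsec:lift_prel}), which is the hypothesis consumed by Lemma \ref{lem:lift_safe_prob} and by the simulation; stifling is additionally used inside Lemma \ref{lem:lift_aff_restrict} to force gadget outputs when constructing the affine restriction. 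Neither point affects the correctness of the route you took.
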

Note that the above gives superpolynomial lower bounds on size when the depth is up to nearly quadratic even in the formula size, instead of just the number of variables. Earlier lower bounds \cite{alek2024, ei25, bc25, ik25} do not apply to depth superlinear in the number of clauses. 

Instead of a constant size stifling gadget, Bhattacharya and Chattopadhyay \cite{bc25} used a gadget which has sufficiently small correlation with all parities (a concrete example being inner product on $500 \log n$ bits where $n$ is the number of variables of the base formula). Their lower bound was also $\exp(\Omega(n^2/D))$, but because of the gadget size, the number of variables in the resulting formula was $N = 500n \log n$ and the number of clauses was a large polynomial. Very recently, in a later version \cite{bc25b}, they have improved the quantitative requirement for the gadget in their lifting theorem. Specifically, the inner product gadget with $(4 + \eta)\log n$ bits (for any constant $\eta > 0$) satisfies the condition required by their lifting theorem, which they use to improve the number of clauses in their resulting lifted formula to $O(n^{13+\eps})$ for any constant $\eps > 0$. 

Another natural formula which is $(\Omega(n), \Omega(n))$-block-DT-hard is, of course, the bit pigeonhole principle $\bphp_n^{n+1}$ where we have a block of $\log n$ bits for each of the $n+1$ pigeons. The block-width of $\bphp_n^{n+1}$ is $2$. So $\lift(\bphp_n^{n+1})$ has the same number of clauses as $\bphp_n^{n+1}$ up to a constant factor and any depth $D$ $\res(\oplus)$ proof of $\lift(\bphp_n^{n+1})$ must have size $\exp(\Omega(n^2/D))$. Even though this is already implied by Theorem \ref{thm:s_bphp}, we point this out because proving Theorem \ref{thm:lift} and then giving an $(\Omega(n), \Omega(n))$-block-DT-hard set for $\bphp_n^{n+1}$ is much simpler than the proof of Theorem \ref{thm:s_bphp}. 

Let us now consider an example of a formula which has large width but small block-width, so that the use of the block-lift above becomes essential to get a polynomial size formula. The functional pigeonhole principle $\fphp_n^{n+1}$ is the unary encoding of $\bphp_n^{n+1}$. We have $n+1$ blocks of $n$ variables, each block corresponding to a pigeon and indicating which holes it flies to. The clauses encode that each pigeon flies to exactly one hole and each hole contains at most one pigeon. The block-width of $\fphp_n^{n+1}$ is $2$. 

We obtain an $(\Omega(n), \Omega(n))$-block-DT-hard set for $\fphp_n^{n+1}$ by simply considering the unary version of the $(\Omega(n), \Omega(n))$-block-DT-hard set for $\bphp_n^{n+1}$. This implies that any depth $D$ $\res(\oplus)$ proof of $\lift(\fphp_n^{n+1})$ must have size $\exp(\Omega(n^2/D))$.

At first glance, this may seem trivial. One might think that since $\fphp_n^{n+1}$ requires resolution proofs of depth $\Omega(n^2)$ \cite[Lemma 2]{dantchev2001tree}, the lift of $\fphp_n^{n+1}$ should require $\res(\oplus)$ proofs of depth $\Omega(n^2)$. However, this is not true, since we are only applying a block-lift. In fact, there is a $\res(\oplus)$ proof for $\lift(\fphp_n^{n+1})$ of depth $O(n \log n)$. So we get exponential lower bounds on size as long as the allowed depth is up to almost quadratic in the minimum depth needed to refute $\lift(\fphp_n^{n+1})$. Though the allowed depth is not even linear in the number of variables $(n+1)(2n+1)$, we feel this is a meaningful comparison given that the lifted Tseitin formulas considered in prior work \cite{alek2024, ei25, bc25} require $\res(\oplus)$ proofs of depth linear in the number of variables. 

\paragraph{Pigeonhole principles in proof complexity.} The pigeonhole principle tautology and its variations have played a central role in proof complexity (see \cite{razborov2001proof} for a survey), especially in connections between propositional proofs and bounded arithmetic. As noted earlier, the first superpolynomial lower bounds for resolution \cite{haken85} and $AC^0$-Frege \cite{ajtai1994complexity, pitassi1993exponential, krajicek1995exponential} were for the pigeonhole principle. 

On the other hand, Buss gave a polynomial upper bound for PHP in Frege \cite{buss1987polynomial}, and Paris, Wilkie and Woods \cite{paris1988provability} gave a quasipolynomial upper bound for the weak pigeonhole principle in constant depth Frege, which they used to show that the existence of arbitrarily large primes was provable in $I\Delta_0$. (See also \cite{maciel2002new}.) The dual weak
pigeonhole principle was used by Je{\v{r}}{\'a}bek \cite{jerabek2004dual} to define a bounded arithmetic theory that captures probabilistic polynomial time reasoning.  

Because of the variety of surprising proofs of these tautologies that have been found, which often make fine distinctions between variations, and because of their use in formalizing counting arguments, precisely characterizing the proofs of the pigeonhole principle and its variants is an ongoing theme of work in proof complexity.

The complexity of refuting the bit pigeonhole principle ($\bphp_n^m$) has recently been studied in various proof systems including resolution and DNF-resolution \cite{atserias2015lower, dantchev2024proof}, cutting planes and its strengthenings \cite{hrubes2017random, itsykson2021proof, beame2025multiparty, de2025lifting}, Sherali Adams and Sum of Squares \cite{dantchev2024proof}. Out of these, since resolution is the most relevant proof system for our discussion, let us point out that Dantchev, Galesi, Ghani and Martin \cite{dantchev2024proof} showed that any resolution proof of $\bphp_n^m$ for any $m > n$ must have size $2^{\Omega(n/\log n)}$, which almost matches the tree-like resolution upper bound $2^{O(n)}$ for $\bphp_n^{n+1}$.

\paragraph{Organization.} Section \ref{sec:overview} contains a detailed overview of our proofs and a comparison of our techniques with prior work. In Section \ref{sec:prel}, we collect definitions and facts about them, mostly from prior work, that we use. In Section \ref{sec:tbphp}, we prove the lower bound for $t$-$\bphp$. In Section \ref{sec:sbphp}, we prove the lower bound for $\bphp_n^{n+k}$ where $k = o(n)$ and the randomized parity decision tree lower bound for $\coll_n^{n+k}$. In Section \ref{sec:lifting}, we prove the lifting theorem for bounded-depth $\res(\oplus)$. We end with some concluding remarks in Section \ref{sec:conc}. Appendix \ref{app:dt_hard} contains a proof of the DT-hardness of $\bphp$ and $\fphp$ and related discussion. Appendix \ref{app:pd_games} shows how our lifting theorem implies the lifting theorem from \cite{alek2024, ei25} based on the advanced Prover-Delayer game and discusses its limitations.
\pagebreak
\section{Proof overview}
\label{sec:overview}

In this section, we give an overview of the proofs of our lower bounds. For simplicity, in this section, we will mostly focus on showing bounds that are somewhat weaker than those stated in the introduction, but these weaker bounds on size will still be exponential for the same range of depths allowed by the stronger bounds. 

The proofs of all our lower bounds at a high level rely on the random walk with restarts approach of Alekseev and Itsykson \cite{alek2024} but with some key differences. To build intuition, in all cases, we will start by sketching how their approach can be used to prove lower bounds for ordinary resolution (without parities) and then show how those ideas can be ported to $\res(\oplus)$.

Instead of working directly with resolution and $\res(\oplus)$ proofs, it will be more convenient to work with the corresponding standard query models, subcube DAGs and affine DAGs respectively (see Section \ref{sec:aff_dags} for definitions). 
It is well known that any resolution refutation of a formula $\varphi$ can be transformed into a subcube DAG solving the corresponding false clause search problem for $\varphi$ without increasing the size or depth \cite{drgr22}. Similarly, a $\res(\oplus)$ refutation of $\varphi$ can be transformed into an affine DAG solving the corresponding false clause search problem \cite{egi24}.

\subsection{Weak $t$-$\bphp$}
We will first illustrate the random walk with restarts approach \cite{alek2024} to show that any size $S$ resolution proof of $\bphp_n^m$ (for any $m > n$) must have depth $\Omega(n^{1.5}/\log S)$. It suffices to show that any size $S$ subcube DAG for collision-finding $\coll_n^m$ must have depth $D = \Omega(n^{1.5}/\log S)$.

We wish to find a sequence of nodes $v_0, v_1, \dots, v_k$ in the DAG $C$ solving $\coll_n^m$, where $v_0$ is the root, such that for each $i \in [k]$, there is a path of length $d \coloneqq \sqrt{n}$  from $v_{i-1}$ to $v_i$, and $k = \Omega(n/\log S)$. To ensure that $k$ is large, we will want the width  (the number of fixed coordinates) of the corresponding partial assignment to only increase by $O(\log S)$ when going from node $v_{i-1}$ to $v_{i}$. We will maintain the property that for each node $v_i$ (for which the partial assignment $\rho_{v_i}$ has width $O(i \log S)$), there is a partial  assignment $\sigma_i$ fixing $O(i \log S)$ blocks of the input to distinct holes which satisfies the partial assignment $\rho_{v_i}$. 

Suppose we have found a node $v_i$ which is at depth $i\sqrt{n}$ from the root. We find the next node $v_{i+1}$ in the following way. Let $U_i \subseteq \B^l$ (where $l = \log n$) denote the collection of holes which have been assigned to some pigeon by $\sigma_i$ and let $\fix_i \subseteq [m]$ denote the blocks that are fixed by $\sigma_i$. Suppose $|U_i| \leq n/3$. Consider a random assignment $x \in (\B^l)^m$ which is picked uniformly from all assignments consistent with $\sigma_i$ and for which all of the pigeons in $[m]\setminus \fix_i$ are sent to holes in $\B^l \setminus U_i$. Note that the only collisions possible in such an assignment are between two pigeons from $[m] \setminus \fix_i$. 

Consider the depth $d$ decision tree obtained from the DAG rooted at $v_i$ and only considering nodes up to distance $d$, repeating nodes as necessary. By the standard birthday problem calculation, with probability at least $1 - d^2/2(n - |U_i|) \geq 1/4$, this decision tree has not found a collision under the above distribution. Since the DAG only has $S$ nodes, there is some node $w$ for which the corresponding leaves in the decision tree are reached with probability at least $1/(4S)$ without having found a collision. If $\rho_w$ mentions $r$ pigeons outside $\fix_i$, then the probability that a random assignment picked above satisfies $\rho_w$ is at most $(n/2(n-|U_i|))^r \leq (3/4)^r$. Together these imply that $r \leq O(\log S)$. So we can take $v_{i+1} \coloneqq w$ and there is some extension $\sigma_{i+1}$ of $\sigma_i$ fixing only $O(\log S)$ additional blocks to distinct holes in $\B^l \setminus U_i$. This argument can be repeated as long as $|U_{i+1}| \leq n/3$ implying that we can take $k = \Omega(n/\log S)$ as desired.

While not essential, we find it useful to rephrase the above argument in terms of restrictions that simplify the DAG. Having found a node $v_1$ at depth $d$ and some partial assignment $\sigma_1$ implying $\rho_{v_1}$, we can simplify $C$ under the restriction $\sigma_1$ to obtain another DAG $C_1$ where we can consider only the nodes reachable from $v_1$ since $\rho_{v_1}$ simplifies to the empty partial assignment after applying $\sigma_1$. While this new DAG $C_1$ does not quite solve $\coll_n^{m'}$ (where $m' = m - O(\log S)$ is the number of blocks not fixed by $\sigma_1$), it solves $\coll_n^{m'}$ when we only consider inputs which do not mention any of the used holes $U_1$. So in the next iteration, instead of using a distributional decision tree lower bound for general $\coll_n^{m'}$, we consider such a lower bound for a promise version of $\coll_n^{m'}$ where we only mention holes from $\B^l \setminus U_1$. This argument can be repeated as long as the number of forbidden holes is not too large. So if $D$ was the original depth of the DAG, the final DAG that we end up with after $\Omega(n/\log S)$ iterations has depth at most $D -\Omega(nd/\log S)$ which gives the desired lower bound.

\paragraph{Restrictions for affine DAGs.}

In trying to directly adapt the above argument involving restrictions, one difficulty is that standard restrictions which only fix few coordinates to bits do not suffice to fix large parities. But we can instead use affine restrictions (or substitutions) where we fix some variables as affine functions of other free variables. To illustrate the basic idea, consider a linear system only containing the equation $x_{i, j} + w = b$ where $b \in \bF_2$ and $w$ is some linear form not involving $x_{i, j}$. Then we can rewrite this equation as fixing $x_{i, j} = w+b$ and so we view this equation as simply affecting the variable $x_{i, j}$ while leaving all other variables completely free. 

At this point, we think of having lost control of the bit $x_{i, j}$ (which is determined by all the other variables) though all the other $(l-1)$ bits in block $x_i$ are still free. Efremenko, Garl\'{i}k and Itsykson \cite{egi24} observed that having control of $(l -1)$ bits in a block is almost as good as having control of all $l$ bits. Specifically, once we fix the  $(l-1)$ free bits, even though we do not know exactly which hole the pigeon flies to, we know that there are just two possible holes where the pigeon could go. So if we have not used up too many holes earlier, we can find some pair of holes to send this pigeon to without causing any collisions and moreover, only two additional holes are forbidden from being used in the future. 

More generally, \cite{egi24} showed that this argument works for any linear system which is \emph{safe}. A linear system of rank $r$ is said to be safe if it can be rewritten as fixing $r$ variables from distinct blocks, say $I \subseteq [m]$, as affine functions of the other variables. Again, as long as the rank $r$ is not too large and we have not used too many holes earlier, \cite{egi24} showed that we can fix the free bits in $I$ to send these $r$ pigeons to $2r$ holes in a way that avoids collisions. So there is an affine restriction implying a rank $r$ safe system such that only $r$ blocks are fixed and only $2r$ holes are used up by the restriction.

To handle general linear systems, \cite{egi24} introduce the notion of closure, which has several useful properties. For our discussion here, we will only need to know the following which was shown in \cite{egi24}. For any linear system $\Phi$, there exists a set $\cl(\Phi) \subseteq [m]$, called the closure of $\Phi$, such that after applying suitable row operations, $\Phi$ can be viewed as consisting of two linear systems $\Phi_1$ and $\Phi_2$, such that $\Phi_1$ captures all equations implied by $\Phi$ which only depend on variables in $\cl(\Phi)$, and $\Phi_2$ is safe even when we treat all the variables in $\cl(\Phi)$ as fixed (or equivalently, remove variables in $\cl(\Phi)$ from $\Phi_2$). The main property that makes this decomposition useful is the additional bound $\rk(\Phi_1) \geq |\cl(\Phi)|$, where $\rk(\Phi_1)$ denotes the rank of $\Phi_1$.

This implies that to find an affine restriction $\rho$ which implies $\Phi$ and for which the fixed blocks are sent to distinct holes, it is enough to first fix all the blocks in $\cl(\Phi)$ to bits according to some bit-fixing restriction $\sigma$ which implies $\Phi_1$ and sends all pigeons in $\cl(\Phi)$ to distinct holes, and then find an affine restriction $\rho'$ implying $\Phi_2|_\sigma$ by using that it is safe as above. The desired affine restriction $\rho$, which implies $\Phi$, is just the union of $\sigma$ and $\rho'$. Since $\rk(\Phi_1) \geq |\cl(\Phi)|$, the number of blocks fixed by $\rho$ is $|\cl(\Phi)| + \rk(\Phi_2) \leq \rk(\Phi_1) + \rk(\Phi_2) = \rk(\Phi)$. 

Of course, there are some linear systems $\Phi$ such that there is no bit-fixing restriction $\sigma$ which implies $\Phi_1$ and also sends all the pigeons in $\cl(\Phi)$ to distinct holes.
So we need to separately ensure that such a restriction exists for the linear system $\Phi$ being considered. \cite{egi24} call linear systems satisfying this condition locally consistent. With this in mind, our goal becomes to find some node $v$ in the affine DAG at depth $d = c\sqrt{n}$ (for some constant $c$) such that the corresponding system $\Phi_v$ has small rank $O(\log S)$ and is locally consistent.

\paragraph{Finding a locally consistent node when all holes are available.}

When all holes are available, how to find such a node was first shown by Efremenko, Garl\'{i}k and Itsykson \cite{egi24} as part of their lower bound on bottom-regular $\res(\oplus)$ proofs of $\bphp_n^{n+1}$. Their argument is essentially a parity decision tree lower bound for $\coll_n^m$ on the uniform distribution, where they directly analyze the probability that a uniform random input sends the pigeons in the closure to distinct holes. Note that this is somewhat stronger than just requiring a random input to reach a node which is locally consistent. When some holes $U \subseteq \B^l$ are forbidden, we do not know how to directly extend their argument. Instead, we extend an alternative proof of the same lower bound on the uniform distribution which was given by Byramji and Impagliazzo \cite{bi25}. The argument in \cite{bi25}, while still relying on ideas from \cite{egi24}, additionally uses ideas from lifting theorems for parity decision trees \cite{cmss23, beame2023, bi25}. 

We now sketch the argument from \cite{bi25} for the uniform distribution. Their proof does not directly keep track of the closure while analyzing the PDT, but at the end it still allows one to find an injective partial assignment on the closure. 
For now, the reader may wish to view the argument as proving an $\Omega(\sqrt{n})$ lower bound on the depth of any PDT which solves $\coll_n^m$ with constant probability on the uniform distribution, by mimicking the standard argument for ordinary decision trees. Suppose the first parity query is $x_{i, j} + w$ for some linear form $w$. Condition on the value $b \in \bF_2$ of this parity. We would like to view the constraint $x_{i, j} + w =b$ as just affecting $x_{i, j}$, since it can be rewritten as $x_{i, j} = w + b$. Moreover, observe that since the bit $x_{i, j}$ is a uniform random bit which is independent of all the other bits, even after conditioning on $x_{i, j} = w + b$, the distribution of all the other bits remains uniform. Informally, this means that the PDT has not learned anything about the blocks $i' \neq i$. 

At this point, since the bit $x_{i,j}$ has been affected, we also reveal the other bits in block $i$. Following \cite{egi24}, we interpret the revealed bits as giving us a pair of potential holes where pigeon $i$ can go. For later queries, we do the same except that we first substitute any variables whose values  we have already conditioned on, so that the resulting equivalent parity query only depends on the free blocks. After $d$ such queries, the only way the PDT could have found a collision with certainty is if there exist two revealed blocks for which the revealed bits agree, i.e.~the corresponding pairs of holes intersect. By the union bound, the probability that this happens is at most $n \cdot \binom{d}{2}(2/n)^2 \leq 2d^2/n$ since in any iteration, we reveal at most one block and the probability that the revealed bits for a block are consistent with any fixed hole in $\B^l$ is $2/n$. 

Going back to our original goal, we need that if none of the revealed blocks collide above, then the linear system $\Phi_v$ is locally consistent, where $v$ is the node in the DAG corresponding to the leaf reached in the above PDT. As long as $|\cl(\Phi_v)|$ is small (which is implied by $\rk(\Phi_v)$ being small), this can be done by finding a total assignment  which is injective on $\cl(\Phi_v)$ and is consistent with the system $L$ containing all the parity constraints that we conditioned on during the PDT analysis, since $L$ implies $\Phi_v$. Let $I \subseteq \cl(\Phi_v)$ be the set of all blocks in the closure which are not fixed by $L$. Now to find such an assignment we only need to ensure that the blocks in $I$ are sent to distinct holes, which are also different from the holes already assigned to $\cl(\Phi) \setminus I$ by $L$. The other free blocks can be set arbitrarily and we then extend to a full assignment by using the constraints in $L$. 

For the uniform distribution, the fact that there exists such a node whose rank is $O(\log S)$ is now implied by the same bottleneck counting argument that was used for subcube DAGs, since any rank $r$ system has probability $2^{-r}$ of being satisfied under the uniform distribution.

\paragraph{Finding a locally consistent node when some holes are forbidden.}
We now explain how to extend the above argument when we can only use holes from some subset $A \subseteq \B^l$. This will use ideas from the randomized PDT lifting theorem in \cite{bi25}, which builds on \cite{cmss23, beame2023, bhat2024}. 

We assume that $|A| \geq 2n/3$. Consider the distribution $\mu$ which is uniform on $A^m$. For the uniform distribution on $(\B^l)^m$ discussed previously, to localize a parity $x_{i, j} + w$ to $x_{i, j}$, we used that $x_{i, j}$ is distributed uniformly and independently of all the other bits. This is no longer true for $\mu$ in general, but conditioning on all $x_{i, j'} \; (j' \neq j)$ still makes $x_{i, j}$ a uniform random bit with decent probability. Specifically, since $|A| \geq 2n/3$, there are only at most $n/3$ strings $y$ in $A$ such that $y^{\oplus j} \notin A$ (where $y^{\oplus j}$ is $y$ with the $j^{th}$ bit flipped). So the probability that the bits $x_{i, j'} \; (j' \neq j)$ together uniquely determine $x_{i, j}$ is at most $(n/3)/(2n/3) = 1/2$ and with the remaining probability, $x_{i, j}$ is uniformly distributed. In the case that $x_{i, j}$ is uniformly distributed, we can repeat the argument above to condition on the value of $x_{i, j} + w$ without affecting the distribution on all blocks $i' \neq i$, since we still have independence across different blocks. In the other case where $x_{i, j}$ is determined by the other bits in $x_i$, we simplify the current parity and repeat with some other variable which appears in it.

So in expectation, when simulating a parity query, we reveal at most two blocks before we can move down to the next parity query. By linearity, when analyzing a depth $d$ PDT, we reveal at most $2d$ blocks in expectation. By Markov's inequality, the probability we reveal more than $8d$ blocks is at most $1/4$. To give an upper bound on the probability that there is a potential collision among the revealed blocks, it now suffices to upper bound the probability that there is a potential collision among at most $8d$ revealed blocks. It is easy to see that the union bound argument we gave earlier for the uniform distribution also works for the distribution $\mu$. This probability is at most $|A|\binom{8d}{2}(2/|A|)^2 \leq O(d^2/n)$ since $|A| \geq 2n/3$. So for a small enough constant $c$, setting $d = c\sqrt{n}$, the probability that there is a potential collision among the revealed blocks is at most $1/2$. 

Finally, for the bottleneck counting argument to work with $\mu$, we also need that the probability that a rank $r$ system is satisfied under $\mu$ is $2^{-\Omega(r)}$. Such a bound was shown by Bhattacharya, Chattopadhyay and Dvo\v{r}\'{a}k \cite{bhat2024} for any distribution that is a product of balanced distributions (with some loss depending on the block size). However, since $\mu$ is just uniform on $A^m$, we can also use a simple counting argument to give the bound $(n/2|A|))^r \leq (3/4)^r$. 

Putting everything together, we obtain an $\Omega(nd/\log S) = \Omega(n^{1.5}/\log S)$ lower bound on the depth of any size $S$ affine DAG solving $\coll_n^m$.

\paragraph{Extending to $t$-$\bphp$.}

The above arguments easily generalize to $t$-$\bphp_n^m$ to give a depth lower bound of $\Omega(tn^{2-1/t}/\log S)$ for any size $S$ affine DAG for $t$-$\coll_n^m$. The only change is in the probability estimate for there being a potential $t$-collision among the revealed blocks during the PDT analysis. Now a union bound over all $t$-sets among $d$ revealed blocks gives that this probability is at most $|A|\binom{d}{t}(2/|A|)^t \leq \left(\frac{3ed}{t}\right)^t \frac{1}{n^{t-1}}$ which can be made less than any desired constant by choosing $d = ctn^{1-1/t}$ for a small enough constant $c$. Note that for $t = \log n$, this argument already gives the bound $D \log S = \Omega(n^2 \log n)$ stated in Theorem \ref{thm:tbphp}.

The stronger bound $D(\log S)^{1-1/t} = \Omega(tn^{2-1/t})$ in Theorem \ref{thm:tbphp} follows from allowing $d$ to be larger. Specifically, assuming $S$ is not too large, we can take $d = ctn^{1-1/t}(\log S)^{1/t}$ while still ensuring (by a different argument) that the probability of reaching a locally $t$-consistent node is at least $S^{-O(1)}$. 
Having a lower bound of $S^{-O(1)}$ on the probability of being good on the closure is sufficient for the random walk with restarts approach to work as noted by Alekseev and Itsykson \cite{alek2024}. 

\subsection{$\bphp$ with $n+o(n)$ pigeons and $n$ holes}
\label{subsec:s_bphp_overview}
Let us start by considering just $\bphp_n^{n+1}$. To go beyond depth $n^{1.5}$, we need to consider a hard distribution for $\coll_n^{n+1}$ which is not uniform, because for the uniform distribution, there is an ordinary decision tree of depth $\widetilde{O}(\sqrt{n})$ solving $\coll_n^{n+1}$. 

A natural hard distribution for $\coll_n^{n+1}$ is the uniform distribution over strings $x \in (\B^l)^{n+1}$ such that two blocks in $x$ are $0^l$ and all other blocks are distinct strings from $\B^l \setminus \{0^l\}$. Under this distribution, any ordinary decision tree solving $\coll_n^{n+1}$ with constant probability must make $\Omega(n)$ queries. It is also easy to see that this lower bound continues to hold if we only consider such strings where $x_{n+1}$ is fixed to $0^l$. In other words, under this simpler distribution, the problem is essentially to find the unique block $i \in [n]$ such that $x_i = 0^l$ when the blocks of $x \in (\B^l)^n$ form a permutation of $\B^l$.

We will use $\ppf_n$ to denote this problem of permutation inversion. Formally, $\ppf_n = \{(x, i) \mid x \text{ is a permutation of } \B^l \text{ and } x_i = 0^l\} \cup (\{x \mid \text{there exist } i \neq p \text{ such that } x_i = x_p \} \times [n])$. The latter set is simply to allow all outputs when the input is not a permutation. We now quickly sketch how the above distribution can be used to show that any size $S$ subcube DAG for $\ppf_n$ must have depth $D = \Omega(n^2/\log S)$. The overall approach stays the same as in the previous subsection. At any stage, there is some collection of available strings $A \subseteq \B^l$ which contains $0^l$ and is large, $|A| \geq 5n/6$. The current DAG $C$, which is obtained from the original DAG after simplifications in previous stages, finds $0^l$ when the input string is a permutation of $A$. 

We now find a node $v$ in $C$ at depth $d = n/6$ such that the corresponding partial assignment $\rho_v$ involves only $O(\log S)$ blocks and there is some partial assignment $\sigma_v$ implying $\rho_v$ which assigns distinct strings from $A$ to these blocks none of which is $0^l$. Consider the uniform distribution $\mu$ on permutations of $A$. The depth $d$ decision tree $T$ obtained by starting at the root of $C$ finds $0^l$ in such a random permutation with probability at most $d/|A| \leq 1/5$. So there is some node $v$ in $C$ such that the corresponding leaves in $T$ are reached with probability at least $(4/5)/S$ without $0^l$ having been found. 

To see that $\rho_v$ must involve only $O(\log S)$ blocks, let us give an upper bound on the probability that a partial assignment touching $r$ blocks is satisfied under $\mu$. For any block $i \in [|A|]$ which appears in $\rho_v$, there can only be at most $n/2$ consistent assignments for that block. Using this, we can argue inductively that the probability that $\rho_v$ which involves $r$ blocks is satisfied is at most $\prod_{i = 1}^r \frac{n/2}{|A| - i+1} \leq (n/2(|A| - r))^r$. Since $v$ is at depth $d$ in $C$, we also know $r \leq d$. So this quantity can be bounded by $\left(\frac{n/2}{2n/3}\right)^r$ where we also used the assumption $|A| \geq 5n/6$. Combining this with the lower bound $4/5S$, we obtain $r \leq O(\log S)$. 

After applying an appropriate restriction to $C$ which only uses up $O(\log S)$ strings from $A$ and decreases the depth of the DAG by $d$, we move on to the next stage if $A$ still satisfies $|A| \geq 5n/6$. So we can repeat this for  $\Omega(n/\log S)$ stages to end up with a DAG of depth $D - \Omega(dn/\log S)$, which gives the desired bound.

\paragraph{Affine DAGs for Permutation Inversion.} We would now like to adapt the above argument to make it work for affine DAGs. To get started, consider only the first stage when $A = \B^l$, so that the distribution $\mu$ is just the uniform distribution over permutations of $\B^l$. One reason why we cannot immediately apply ideas from the previous subsection to parity decision trees for $\mu$ is that the blocks are not independent. In the previous subsection, when trying to localize a parity $x_{i, j} + w$ to $x_{i, j}$, simply conditioning on all $x_{i, j'} (j' \neq j)$ was enough to make the bit $x_{i, j}$ independent of all variables that appear in the form $w$. 

Another part of the proof where independence was used previously was for estimating the probability that a rank $r$ system is satisfied. In the case of permutations, we cannot naively first condition on all non-pivot blocks and then give a small universal upper bound on the conditional probability for the simplified linear system. So we would like to show an affine version of the above argument used for subcube DAGs which iteratively conditions on at most $r$ blocks to estimate the probability that a partial assignment of width $r$ is satisfied.

The key observation which is used to handle both the above issues is a suitable conditioning that lets us find a uniform random bit which is independent of all other unfixed blocks with decent probability. As warm-up, let us try to show that any non-zero linear form $v$ is roughly uniform on the distribution $\mu$, in the sense that $\Pr[v = 1] \in [1/2 - \delta, 1/2 + \delta]$ for some constant $\delta < 1/2$. First, let us note that this cannot actually be true for all parities. Fix $j \in [l]$. For any permutation $x$ of $\B^l$, exactly half of the bits $x_{i,j}$ are $1$ and in particular, $\sum_{i \in [n]} x_{i, j}$ is fixed over $\mu$. These are essentially the only fixed parities however. 

Suppose we have a parity $v$ which does not lie in the span of the above $l$ parities which are always fixed for permutations. Let us assume  that $x_{1,1}$ appears in $v$ and there are at most $n/2$ many $i \in [n]$ such that $x_{i,1}$ appears in $v$. Condition on the values of $x_{1, h}, h \neq 1$. Additionally condition on the unique $p \in [n]\setminus \{1\}$ for which we have $x_{1, h} = x_{p, h}$ for all $h \in [l] \setminus \{1\}$. The index $p$ is uniformly distributed among $[n]\setminus \{1\}$ which implies that with probability at least $1/2$, $x_{p,1}$ does not appear in $v$. Conditioned on $x_{p,1}$ not appearing in $v$, we now see that the parity $v$ is uniformly distributed since $x_{i, 1}$ is a uniform random bit appearing in $v$ which is independent of all the other variables in $v$. So for each $b \in \B$,  the overall probability that $v = b$ is at least $1/4$.

\paragraph{PDTs for Permutation Inversion.} We now explain how the above conditioning can be used to prove an $\Omega(n)$ lower bound on the depth of deterministic PDTs that solve $\ppf_n$ with constant probability. This would imply that any randomized PDT for $\ppf_n$ must have depth $\Omega(n)$, which also gives the same lower bound for $\coll_n^{n+1}$. 

We will give a simulation similar to what was done for the uniform distribution, where we reveal a small number of blocks in expectation when analyzing a parity query. Since the distribution here is over permutations, we will keep track of a set $A$ of available strings such that conditioned on the events so far, all unrevealed blocks form a uniform random permutation of $A$. In the beginning, we have $A = \B^l$. We also keep a set $L$ containing linear equations describing the parities assigned to the revealed/fixed blocks.

Suppose the first query is $v$. If $v$ lies in the span of the above $l$ parities which are fixed for every permutation, then we do not reveal any block and can simply move to the appropriate child. On the other hand, if $v$ does not lie in their span, we use the above conditioning. Suppose $j \in [l]$ is such that the projection of $v$ onto the variables $x_{i, j}, i \in [n]$ is not in $\{0, \sum_{i \in [m]}x_{i, j}\}$. If this projection has more than $n/2$ variables, by adding $\sum_{i = 1}^n x_{i, j} + \sum_{y \in A} y_j$ to this parity, we get an equivalent parity for which at most $n/2$ variables among $x_{i, j}, i \in [n]$ appear in the parity. 

Let $i \in [n]$ be such that $x_{i, j}$ appears in $v$, say, $v = x_{i, j} + w$. Now sample $x_{i, h}, h \neq 1$ uniformly according to $A$ (which in the beginning is just $\B^l$). Then pick $p \in [n]\setminus \{i\}$ uniformly which will be the unique partner of $i$ such that $x_{i, h} = x_{p, h}$ for all $h \neq j$. We also have $x_{i, j} = x_{p, j} + 1$. With probability at least $1/2$, $x_{p, j}$ does not appear in $v$. In this case, we can simply condition on the value $b$ of $x_{i, j} + w$ and localize the constraint $x_{i, j} = w + b$. Observe that in this case we have fixed the blocks $x_i, x_p$ as affine functions of the other blocks. Moreover, the distribution on the other blocks is uniform on all strings in $A \setminus \{x_i, x_p\}$, where we know which two strings lie in $\{x_i, x_p\}$ even though we do not know which is assigned to $x_i$.
In the unlucky case where $x_{p, j}$ appears in $v$, we just sample $x_{i, j}$ uniformly which also determines $x_{p, j}$ and then simplify the parity query without having made progress in this iteration.

Later iterations work in essentially the same way. We first simplify the parity according to constraints in $L$ and check if that parity is implied by the free blocks being a permutation of $A$. If not, then we do the sampling as above, now also taking care to check if $x_{i, j}$ is already determined by $x_{i, h}, h \neq j$. If $x_{i, j}$ is already determined, there is no partner and we simply move on to the next iteration (like what we did in the previous subsection). 

In each iteration, the size of $A$ decreases by at most $2$. So during the first few iterations, $A$ stays large and the probability that $x_{i, j}$ is determined by other bits of $x_i$ is not too large. Specifically, after $\hat{d}$ iterations, $|A| \geq n - 2\hat{d}$, so this probability is at most $2\hat{d}/(n-2\hat{d})$. Conditioned on this not happening, the probability that the partner of $i$ lies in the parity being simulated is at most $1/2$. So with probability at least $1/2 - \hat{d}/(n-2\hat{d})$, we move down the parity decision tree in that iteration.

Let us now upper bound the probability that we reveal $0^l$ during the above simulation, which is also an upper bound on the probability that the PDT successfully solves $\ppf_n$ with certainty. Suppose the PDT being simulated has depth $d = n/100$. It suffices to bound the probability that $0^l$ is revealed in the first $\hat{d}$ iterations or the simulation runs for more than $\hat{d}$ iterations, where we set $\hat{d} = 8d$. The probability that $0^l$ is not revealed in the first $\hat{d}$ iterations is at least $\prod_{i = 1}^{\hat{d}}(1-\frac{2}{n-2(i-1)}) = 1 - 2\hat{d}/n$. 

To estimate the probability that the simulation runs for more than $\hat{d}$ iterations, note that when this happens, there must be at least $\hat{d}-d$ iterations where the simulation did not move down the PDT. In each of the first $\hat{d}$ iterations, the probability that we do not go down in that iteration is at most $1/2 + \hat{d}/(n-2\hat{d})$. So by Markov's inequality, the probability that there are $\hat{d}-d$ such bad iterations is at most $\left(\frac{1}{2} + \frac{\hat{d}}{n-2\hat{d}}\right)\frac{\hat{d}}{\hat{d}-d}$. Combining the two bounds, we get that the probability of revealing $0^l$ is at most $9/10$. 

\paragraph{Probability of linear system being satisfied by a random permutation.} We would now like to prove some version of the statement that for every rank $r$ linear system $\Phi$, where $r \leq cn$ for some small constant $c > 0$, the probability that $\Phi$ is satisfied over the distribution $\mu$ is $2^{-\Omega(r)}$. 

Using the conditioning above, it is not hard to give an inductive argument to prove the bound $2^{-\Omega((r-l)/l)}$ for small $r$. Specifically, after picking any equation $x_{i, j} + w = b$ in $\Phi$ which does not hold with probability $1$, we apply the above conditioning to make $x_{i, j}$ a uniform random bit which is independent of $w$ with decent probability. In all cases, we consider a system $\Phi'$ which is implied by $\Phi$ but which only depends on the free blocks. Since the number of fixed blocks is at most $2$, we can find such a $\Phi'$ whose rank is at least $r - 2l$. The probability that $\Phi'$ is satisfied can now be estimated by induction. When estimating the overall probability that $\Phi$ is satisfied, we do better by a factor of $1/2$ compared to what is given by induction in the case where $x_{i, j}$ is a uniform random bit, which itself happens with constant probability. This gives the bound $2^{-\Omega((r-l)/l)}$.

To avoid the $\log n$ multiplicative factor loss in the exponent, we recall that for our final goal of finding a small affine restriction, we only really need an upper bound on $|\cl(\Phi)| + \rk(\Phi[\setminus \cl(\Phi)])$. It follows from arguments in \cite{egi24, ei25} that for any system $\Phi$, there is a safe system $\Psi$ implied by $\Phi$ whose rank is $|\cl(\Phi)| + \rk(\Phi[\setminus \cl(\Phi)])$. Hence, we call this quantity the safe dimension of $\Phi$. So it suffices to give an upper bound just for safe systems. For the argument, it is convenient to assume that the safe system actually contains all the $l$ equations implied by being a permutation. We observe that for such a safe system $\Psi$, when doing the above inductive argument, the safe dimension only decreases by a constant because of the fixed blocks (instead of $\log n$ above). So we get a $2^{-\Omega(r)}$ upper bound for such a safe system whose rank is $r + l$ when $r \leq cn$ for some small constant $c$.

\paragraph{Finding an affine restriction.} Using the above arguments, we can find a node $v$ at depth $\Omega(n)$ which is reached without revealing $0^l$ in the above simulation and for which $\Phi_v$ has safe dimension $O(\log S)$. We now explain how to find an affine restriction $\rho$ implying $\Phi_v$ for which the problem $\ppf_n$ does not become easy after applying $\rho$. Finding a bit-fixing restriction satisfying the system supported on the closure of $\Phi_v$ is again easy. 

With the remaining safe system, we need to be more careful. Since we are dealing with permutations, we can now no longer afford forbidding two strings for each fixed block. The affine restriction we seek must use up exactly the same number of strings as the number of blocks fixed.

Suppose the safe system $\Phi$ for which we want to find an affine restriction has rank $r$. Roughly speaking, we show that there is an affine restriction $\rho$ implying $\Phi$ with the following properties. The restriction $\rho$ fixes $2r$ blocks in such a way that the $2r$ blocks are paired up, and for each such pair of blocks, we assign a pair of strings that differ on a single bit. The constructed affine restriction ensures that for each such pair of blocks, the two blocks always form a permutation of the strings assigned. Moreover, any two such assigned pairs of strings are disjoint to ensure that all the fixed blocks are always assigned distinct strings by this affine restriction.  Finding suitable blocks for this affine restriction is done via arguments that are similar to those in \cite{egi24, ei25} for proving structural properties of safe systems, closure and amortized closure.


The above ingredients can now be combined in the same way as we did previously to obtain the desired $\Omega(n^2/\log S)$ bound on the depth of any size $S$ DAG solving $\ppf_n$. 

\paragraph{Extending to $\bphp_n^{n+k}$.} 

To extend the above argument to $\bphp_n^{n+k}$, $k = o(n)$, we can consider the variant of $\ppf_n$ where on a permutation $x$, the goal is to find any $i \in [n]$ such that $x_i$ belongs to some fixed collection $B$ of $k$ strings from $\B^l$. 
Let $\ppf_{n, k}$ denote this problem. The above arguments generalize to show that any zero-error randomized PDT for $\ppf_{n, k}$ which succeeds with constant probability requires depth $\Omega(n/k)$ and that any depth $D$ affine DAG for $\ppf_{n, k}$ must have size $\exp(\Omega(n^2/kD))$.

By considering PDTs of larger depth, $d = \eps n$ for some small constant $\eps$ (independent of $k$), the bound can be strengthened to show that for all $D \leq \gamma n^2/k$ (where $\gamma$ is some constant), any depth $D$ affine DAG for $\ppf_{n, k}$ must have size $\exp(\Omega(n^2/D))$. Note that this gives the bound stated in Theorem \ref{thm:s_bphp} for depth at most $\gamma n^2/k$. 

To also handle larger depths, we work with $\coll_n^{n+k}$ directly. While we do not discuss permutation inversion in our actual proofs, the core ideas remain those that we discussed above. 

\subsection{Lifting to bounded-depth $\res(\oplus)$}
\label{subsec:lift_overview}
Here we will give a sketch of Theorem \ref{thm:lift} when each block contains just one variable. So $\varphi$ is a formula on $n$ variables and we assume that there is some $(p, q)$-DT-hard set $\cP$ for $\varphi$. Let $\cR^\varphi$ denote the false clause search problem corresponding to $\varphi$. The proof works for any constant size stifling gadget \cite{cmss23} but for concreteness we will consider the $3$-bit Majority $\maj_3$ gadget. We want to show that any size $S$ affine DAG $C$ for $\cR^\varphi \circ g$ (where $g$ is $\maj_3$) must have depth $\Omega(pq/\log S)$.

The strategy stays the same at a high level. We keep track of a partial assignment $\rho \in \cP$ such that $|\fix(\rho)| \leq p$ and an affine DAG solving $\cR^\varphi|_{\rho} \circ g$. In each phase, we find a node $w$ in the DAG at depth $d = q/48$ such that there is some affine restriction $\tau$ which implies the linear system at $w$. Moreover, $\tau$ fixes only $O(\log S)$ blocks, say $I \subseteq \free(\rho)$ and there is some restriction $\rho'$ on the variables of $\varphi$ also fixing bits in $I$, such that $\tau$ induces the restriction $\rho'$ when we apply $g$ to each of the blocks in $I$. Additionally $\rho \cup \rho' \in \cP$. So after applying the affine restriction $\tau$ to the DAG, we get a DAG which solves $\cR|_{\rho \cup \rho'} \circ g$. We can repeat this until the restriction $\rho$ has fixed more than $p$ bits. For this to happen, there must have been $p/O(\log S)$ iterations. This gives the desired bound $\Omega(pd/\log S) = \Omega(pq/\log S)$.

\paragraph{Finding a suitable node $w$ at depth $d$.} To find $w$, we will lift the hard distribution $\nu_{\rho}$ for decision trees to a hard distribution for parity decision trees. 

For $b \in \B$, define $\mu_b$ to be the distribution which is uniform on $g^{-1}(b)$. These distributions have the following useful property noted by Bhattacharya, Chattopadhyay and Dvo\v{r}\'{a}k \cite{bhat2024}. For any $j \in [3]$, if we sample $y \sim \mu_b$, then the probability that after revealing all bits in $y$ except $y_j$, the bit $y_j$ is uniformly distributed is $1/2$. They use this to show that for any product distribution $\mu_z \coloneqq \mu_{z_1} \times \mu_{z_2} \times \dots \times \mu_{z_n}$, for any linear system $\Phi$ of rank $r$, the probability that $x \sim \mu_z$ satisfies $\Phi$ is at most $2^{-\Omega(r)}$. In particular, this holds for all mixtures of such product distributions.
The hard distribution for parity decision trees will be $\nu_{\rho} \circ (\mu_0, \mu_1)$ which is defined in the following way. Sample $z \sim \nu_{\rho}$ and then return $x \sim \mu_z$.

Byramji and Impagliazzo \cite{bi25} showed that the above property also leads to a lifting theorem for randomized parity decision trees. They show that for any PDT $T$ of depth $d$ on $(\B^3)^n$, there is a relatively efficient randomized decision tree $T'$ which on input $z \in \B^n$, returns the leaf in $T$ reached by $x \sim \mu_z$. For each $z$, this decision tree only makes $2d$ queries in expectation. By combining this lifting theorem with the assumption on $\nu_{\rho}$, it can be shown that when we consider the $d$ step random walk on the affine DAG starting at the root according to $\nu_{\rho} \circ (\mu_0, \mu_1)$, with probability at least $1/6$ we end at a node where there is some closure assignment such that its induced partial assignment $\sigma$ satisfies $\rho \cup \sigma \in \cP$.

Now combining the two bounds above, we find a node $w$ such that the corresponding linear system $\Phi$ has rank $r \leq O(\log S)$. Moreover, there is some bit-fixing restriction for $\cl(\Phi)$ such that the corresponding induced partial assignment $\sigma$ satisfies $\rho \cup \sigma \in \cP$.

\paragraph{Finding an affine restriction.} To finish the argument, we also need an affine restriction which implies $\Phi$ after having substituted for the closure assignment. The resulting system, say $\Phi'$, is safe. Let $(i_1, j_1), (i_2, j_2), \dots, (i_d, j_d)$ be from distinct blocks such that $\Phi'$ can be expressed as fixing variables $x_{i_t, j_t} ,\; t \in [d]$ in terms of other variables. Let $I = \{i_t \mid t \in [r]\}$.
Assuming $|\fix(\rho \cup \sigma)| \leq p$, we can use the distribution $\nu_{\rho \cup \sigma}$ to find a partial assignment $\sigma'$ on the unlifted variables fixing $I$ such that $\rho \cup \sigma \cup \sigma' \in \cP$. We now fix all the bits $(i_t, j), \; t \in [r], j \neq j_t$ to force  $g(x_{i_t}) = \sigma'(z_{i_t})$. The desired affine restriction now combines the above bit-fixing restrictions with the rewritten $\Phi'$.

\subsection{Comparison with prior work}
\label{subsec:comp}

\paragraph{Restarting the random walk.}
A basic difference between our proofs and earlier proofs comes from how we use affine restrictions to simplify the affine DAG before picking our hard distributions. 

Let us briefly recall how \cite{alek2024} restart their random walks. Having found a node whose (relative) rank is small (or in \cite{ei25}, its amortized closure) and for which there is a good closure assignment, they pick their hard distribution to be simply uniform over all inputs which are consistent with this good closure assignment and which satisfy the system $\Phi$ at that node (which after applying the closure assignment is safe). In other words, their distribution keeps some blocks fixed and the remaining blocks are uniform but conditioned on satisfying a particular safe system. 
This is quite natural in their case since the proof of their random walk probability lower bound already involves analyzing the uniform distribution conditioned on satisfying a safe system.

\cite{bc25} extend this approach to lifted distributions by proving a conditional fooling lemma for gadgets with exponentially small Fourier coefficients. Informally, their key technical lemma lets them show that a hard lifted distribution remains relatively hard even when conditioned on the input satisfying any given safe system, and the probability of a linear system being satisfied under such a conditional distribution also remains as small as one would hope.

Our approach sidesteps trying to understand distributions after conditioning on satisfying a safe system. We do this by fixing more blocks than in prior work\footnote{Readers familiar with amortized closure may wish to think of this as fixing all of the amortized closure instead of just closure, though our actual proof also allows for other choices of blocks.}, though this fixing involves parities instead of just bits to also take care of the linear system. After that, we are free to pick any hard distribution on the free blocks. One can phrase our hard distribution as coming from a hard distribution on some subset of blocks and then deterministically extending to the other blocks by means of some known linear system which is almost bit-fixing in the sense that for each of the fixed blocks, there is at most one variable which depends on the free blocks. We find this alternative perspective to be less intuitive  compared to using affine restrictions to first simplify the DAG in each phase before picking the distribution. 

One advantage of simplifying the DAG using an affine restriction is that giving an upper bound on the probability of a linear system being satisfied under the new distribution is considerably simplified\footnote{For $\bphp_n^{n+o(n)}$, this still requires some work, but for a different reason.}. Specifically, we no longer need to explicitly track which equations in the system are already implied by the fixed system (thereby holding with probability $1$) and which equations contribute to the probability being small.

\paragraph{Random walk analysis.}
Another place where earlier work requires understanding distributions conditioned on satisfying some safe system is for analyzing the actual random walk to see how the probability of being good on the closure changes as the closure grows. We get around this by instead relying on ideas from lifting theorems for parity decision trees \cite{cmss23, beame2023, pod25, bi25}. Even though this is not what first led us to our approach,  this is similar in spirit to recent work in communication complexity where ideas from lifting theorems have been useful to prove lower bounds for problems that do not have a composed form. We will discuss this in more detail later in this subsection.

In the worst case, our random walk analysis can reveal all blocks. This is unlike closure whose size is always bounded by the sum of the number of queries and the rank of the starting linear system. Still, with high probability, in our analysis we do not reveal too many blocks. Let us also point out that the only property of closure we directly use to find a good closure assignment from a successful run of the simulation is an upper bound on the size of the closure.

\paragraph{Amortized closure.}
Though our work does not explicitly use the notion of amortized closure \cite{ei25}, there are some aspects of our proofs which are closely related to it. Specifically, we define the notion of safe dimension which is the largest dimension of a safe collection contained in the span of the linear forms under consideration. This quantity can be shown to be equal to the size of the amortized closure \cite{ei25}, but we think that the term safe dimension provides a  simpler description of how it is used in our proofs. We do not rely on any other properties of the amortized closure, such as the canonical choice of blocks from which the linearly independent columns are picked or that its size only grows by at most one on adding a linear form.

\paragraph{Structure-vs-randomness in communication complexity.}
Many recent works in communication complexity \cite{wang2023communication, yz24, mao2025gadgetless, goos2025quantum, beame2025multiparty, huang2025min, riazanov2025searching} rely on ideas from query-to-communication lifting theorems \cite{raz1997separation,  goos2016rectangles, gjpw18, goos2020query, lovett2022} to prove lower bounds for problems not having a composed form. Our distributional PDT lower bounds for collision-finding can be viewed as being in the same vein combining ideas from \cite{cmss23, bi25} with \cite{egi24}. It is worth noting that one of the problems for which this structure-vs-randomness framework in communication complexity has been useful is the collision-finding problem \cite{yz24, beame2025multiparty} and these communication complexity bounds have led to better lower bounds in proof complexity.

We should also point out that, in retrospect, the proof of the distributional PDT lower bound from \cite{egi24} can be phrased in terms of min-entropy in a way that is at a very high level similar to (but simpler than) \cite{yz24, beame2025multiparty}. Indeed, a linear system $\Phi$ is safe if the uniform distribution on inputs from $\Phi$ has blockwise min-entropy rate $l-1$ (where $l$ is the block size) and the analysis in \cite{egi24} can be viewed as revealing some relevant blocks right before the closure is about to change, i.e.~the blockwise min-entropy rate on the unfixed part is about to drop below $l-1$.

In contrast to the above block min-entropy approach which also underlies \cite{alek2024, bc25} in some form, the lifting theorems for parity decision trees which we adapt here \cite{cmss23, bi25}\footnote{The simulations in \cite{beame2023, pod25} use both localization and entropy-based arguments.} that are based on localizing parities to a block seem to have more in common with the earlier thickness-based approach to query-to-communication lifting theorems \cite{raz1997separation, goos2018deterministic, chattopadhyay2019simulation, wyy17}. Very roughly, these thickness-based lifting theorems consider some form of conditional min-entropy of a block with respect to all other unfixed blocks. When this quantity becomes small for some block, instead of fixing that block completely, it is fixed relative to all the other blocks in a suitable way.
This is similar to the approach in \cite{cmss23} which can be interpreted as keeping track of conditional entropy of a block with respect to all later blocks. When this is about to become too small for some block, we consider the intersection with a suitable subcube on that block ensuring that the output of the gadget when applied to that block is consistent with the actual input.
\section{Preliminaries}
\label{sec:prel}

For a search problem $\cR \subseteq \B^N \times \cO$, for any $x \in \B^N$, we use $\cR(x)$ to denote $\{o \in \cO \mid (x, o) \in \cR\}$.
For a string $y \in \B^l$ and $j \in [l]$, we use $y^{\oplus j}$ to denote the string which is the same as $y$ but with the $j^{th}$ bit flipped ($(y^{\oplus j})_j \neq y_j$).

We will mostly consider inputs $x$ from $(\B^l)^m$ viewed as $m$ blocks of $l$ bits (though we should warn the reader that our notation for the block size and the number of blocks will change in Section \ref{sec:lifting}).

It will be convenient to allow parities to also contain a constant term. So a parity is just a linear polynomial on $(\B^l)^m$ over $\bF_2$, $b + \sum_{i \in [m], j \in [l]} c_{i, j} x_{i, j}$ where $b \in \bF_2$ and $c_{i, j} \in \bF_2$ for all $i \in [m], j \in [l]$. 
We will also use the term affine function to refer to such a linear polynomial when discussing affine restrictions (defined later in Subsection \ref{sec:aff_dags}).
The support of a parity $P$, denoted $\supp(P)$, is the subset of $[m]$ containing all blocks which appear in $P$. For $B \subseteq [m]$, we use $\cL_{B}$ to denote the collection of all parities $P$ whose support is contained in $B$, $\supp(P) \subseteq B$.

A linear form is a homogeneous polynomial of degree $1$. For a set of linear forms $V$, we use $\langle V \rangle$ to denote the span of $V$ and $\dim(V)$ to denote the dimension of $\langle V\rangle$.

If $v$ is a linear form and $b \in \bF_2$, $v = b$ is a linear equation. Sometimes we will write $P = b$ where $P$ is a parity, with the intended linear equation being $v = b + c$ if $P = v + c$ where $v$ is a linear form and $c \in \bF_2$. A linear system is a collection of linear equations. For any satisfiable linear system, we will assume that it is represented by a collection of linearly independent equations. 

For a linear system $\Phi$, we use $L(\Phi)$ to denote the set of linear forms appearing in $\Phi$. We use $\rk(\Phi)$ to denote the rank of $\Phi$. We use $\langle \Phi \rangle$ to denote the collection of all equations obtained by taking linear combinations of some equations in $\Phi$. In other words, $\langle \Phi \rangle$ contains all equations implied by $\Phi$. Even though we will  allow for inconsistent linear systems in the definition of affine DAGs below, all the linear systems $\Phi$ for which we will actually use the notation $\rk(\Phi)$ or $\langle \Phi \rangle$ will be satisfiable.


\subsection{Safe collections and closure}
    
We use the notions of closure and safe collections of linear forms, introduced by \cite{egi24}. 

A set  $V$ of linear forms is said to be safe if there is no subset $W \subseteq \langle V \rangle$ such that the linear forms in $W$ are linearly independent and the support of $W$ ($\cup_{P \in W} \supp(P)$) has size less than $|W|$. 

A set $D$ of linearly independent forms is said to be dangerous if $\supp(D) < |D|$. So the above definition of safe can be rephrased as saying that $V$ is safe if its span contains no dangerous sets.

For $I \subseteq [m]$, $V[\setminus I]$ denotes the linear forms obtained from $V$ by setting all variables in blocks in $I$ to $0$. The closure of $V$, denoted $\cl(V)$, is the minimal set $I \subseteq [m]$ such that $V[\setminus I]$ is safe. (\cite{egi24} showed that there is a unique minimal set satisfying the condition defining closure.)
Abusing notation, for a linear system $\Phi$, we will use $\cl(\Phi)$ to denote $\cl(L(\Phi))$. Similarly we say that $\Phi$ is safe if $L(\Phi)$ is safe.
The fact that safety and closure do not depend on the choice of basis \cite{egi24} allows us to freely apply invertible operations to the rows of a linear system or a collection of linear forms.

\begin{theorem}[\cite{egi24}] \label{lem:safe}
    Let $V$ be a collection of $k$ independent linear forms and $M$ the corresponding coefficient matrix. $V$ is safe if and only if we can pick $k$ variables, no two from the same block, such that the corresponding columns in $M$ are linearly independent.
\end{theorem}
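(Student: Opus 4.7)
My plan is to prove the two directions of the equivalence separately, relying on the reformulation: $V$ is safe iff $|B| \geq \dim(\langle V \rangle \cap \cL_B)$ for every $B \subseteq [m]$. If $V$ is unsafe with witness $W$, then $B = \supp(W)$ violates the inequality since $\dim(\langle V \rangle \cap \cL_B) \geq |W| > |B|$; conversely, any $B$ violating it admits a basis of $\langle V \rangle \cap \cL_B$ whose size exceeds $|B| \geq |\supp(\cdot)|$, which is itself a dangerous set. This reformulation replaces the existential quantifier over $W$ with one over $B$, which is easier to combine with rank computations.

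For the easy direction ($\Leftarrow$), suppose $x_{i_1,j_1},\dots,x_{i_k,j_k}$ with distinct blocks have linearly independent columns in $M$. Row operations on $V$ preserve $\langle V\rangle$ (hence safety), so I may change basis to make the $k\times k$ submatrix on these columns equal to the identity: each $v_t$ has coefficient $1$ on $x_{i_t,j_t}$ and $0$ on $x_{i_s,j_s}$ for $s\neq t$. Then for any linearly independent $W=\{w_1,\dots,w_s\}\subseteq\langle V\rangle$, writing $w_q=\sum_t c_{q,t}v_t$, the coefficient of $x_{i_t,j_t}$ in $w_q$ equals $c_{q,t}$; if $i_t\notin\supp(W)$ this must vanish for every $q$. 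Hence each $w_q$ lies in the span of $\{v_t:i_t\in\supp(W)\}$, forcing $|W|\leq|\supp(W)|$, which is safety.

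For the harder direction ($\Rightarrow$), I would invoke matroid intersection. On the set $E$ of columns of $M$, place the $\bF_2$-linear matroid $\cM$ and the partition matroid $\cP$ whose independent sets are partial transversals of the blocks; a common independent set of size $k$ is exactly the kind of variable choice the theorem asks for. By Edmonds' theorem the maximum common independent set has size $\min_{A\subseteq E}[r_\cM(A)+r_\cP(E\setminus A)]$, so it suffices to show this minimum is $\geq k$ under safety. Given any $A$, let $B=\{i:$ block $i$ meets $E\setminus A\}$, so $r_\cP(E\setminus A)=|B|$. Writing $A_B$ for the columns outside blocks in $B$ (so $A\supseteq A_B$), rank-nullity identifies the left nullspace of $M|_{A_B}$ with $\langle V\rangle\cap\cL_B$ via the bijection $(c_t)\mapsto\sum c_tv_t$ (using that $V$ is independent, so a coefficient vector $c$ kills all columns outside blocks in $B$ iff $\sum c_t v_t\in\cL_B$), yielding $r_\cM(A_B)=k-\dim(\langle V\rangle\cap\cL_B)$. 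Combining, $r_\cM(A)+r_\cP(E\setminus A)\geq k-\dim(\langle V\rangle\cap\cL_B)+|B|\geq k$ by the reformulated safety condition.

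The main obstacle is the rank-nullity identification, which is not deep but requires careful bookkeeping of the map between coefficient vectors and forms. A more hands-on alternative avoiding the matroid intersection black box would be an augmenting-path argument: starting from a maximum partial transversal $T$ of size $r<k$ that is $\cM$-independent, row-reduce so $T$ gives an identity block; the remaining $k-r$ forms must then all have support contained in the blocks of $T$ (else one could augment directly), and applying Hall's theorem to their supports either produces an augmenting swap or extracts a subfamily $\{v_s:s\in S\}$ whose joint support has size $<|S|$, i.e.~a dangerous set contradicting safety. I would prefer the matroid intersection route since it cleanly separates the algebraic rank computation from the combinatorial transversal structure, but the augmenting-path version may be preferable if one wants a self-contained proof.
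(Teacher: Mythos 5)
The paper imports this statement from \cite{egi24} without giving a proof, so there is no in-paper argument to compare against; the proposal must be judged on its own. Your matroid-intersection proof is correct. The reformulation of safety as the Hall-type rank condition $|B| \geq \dim(\langle V \rangle \cap \cL_B)$ for all $B \subseteq [m]$ is an exact restatement of the definition, the forward direction via a change of basis to put an identity on the chosen transversal columns is sound, and the hard direction correctly applies Edmonds' min-max formula for the linear matroid of columns and the partition matroid of blocks. The key rank-nullity identification $r_\cM(A_B) = k - \dim(\langle V\rangle \cap \cL_B)$ uses that $V$ is a basis of $\langle V\rangle$ so $c \mapsto \sum_t c_t v_t$ is a bijection, and you correctly reduce $r_\cM(A) \geq r_\cM(A_B)$ by monotonicity; combined with $r_\cP(E \setminus A) = |B|$ and the safety inequality this gives the min is at least $k$, and since $r_\cM(E)=k$ the maximum common independent set has size exactly $k$, i.e.~a linearly independent transversal exists. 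This is essentially Rado's theorem on independent transversals specialized to the $\bF_2$-linear matroid, which is the standard route. The only weak point is the sketched augmenting-path alternative: the step ``applying Hall's theorem to their supports either produces an augmenting swap or extracts a subfamily'' is underspecified (if a matching saturating $\{v_{r+1},\dots,v_k\}$ exists, it is not immediate how to convert it into an actual transversal exchange since the matched columns may already lie in $T$, and if it doesn't, the Hall violator need not obviously witness danger without also pulling in some of $v_1,\dots,v_r$), and making that version rigorous amounts to reproving matroid intersection in this special case. Since you already flag the matroid-intersection argument as the intended one and it is complete, the proposal stands.
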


Noting that the definition of safety really depends on just the span of $V$, we can also use the above theorem for sets $V$ of not necessarily independent forms if we replace $k$ in the second statement above by $\dim(V)$.


\begin{lemma}[\cite{egi24}] \label{lem:size_bound}
If $V$ is a collection of linear forms, then
    $|\cl(V)| + \dim(V[\setminus \cl(V)]) \leq \dim(V)$.
\end{lemma}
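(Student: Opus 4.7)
The plan is to reduce the inequality to a statement about a supermodular set function and then derive it from the minimality of the closure. Write $C \coloneqq \cl(V)$, and for each $S \subseteq [m]$ let $V_S$ denote the subspace of forms in $\langle V \rangle$ whose support is contained in $S$. The restriction map $\pi : \langle V \rangle \to \langle V[\setminus C] \rangle$ that zeros out variables in blocks of $C$ is surjective with kernel exactly $V_C$, so
\[\dim(V) \;=\; \dim(V_C) + \dim(V[\setminus C]).\]
Hence the lemma is equivalent to $|C| \leq \dim(V_C)$, which in turn is $F(C) \geq 0$ for the set function $F(S) \coloneqq \dim(V_S) - |S|$.

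I would then establish two structural facts about $F$. First, $F$ is supermodular: the inclusions $V_A + V_B \subseteq V_{A \cup B}$ and $V_A \cap V_B = V_{A \cap B}$ yield $\dim(V_{A \cup B}) + \dim(V_{A \cap B}) \geq \dim(V_A) + \dim(V_B)$, and set cardinality is modular. Second, safety of $V[\setminus C]$ translates into $F(T) \leq F(C)$ for every $T \supseteq C$: the same restriction argument applied with $V_T$ in place of $\langle V \rangle$ identifies $(V[\setminus C])_{T \setminus C}$ with $V_T / V_C$, so $\dim((V[\setminus C])_{T \setminus C}) = \dim(V_T) - \dim(V_C)$, and the safety bound $\dim((V[\setminus C])_{T \setminus C}) \leq |T \setminus C|$ rearranges into $F(T) \leq F(C)$.

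The core of the argument is an induction on $|C \setminus C'|$ proving that $F(C') \leq F(C)$ for every $C' \subseteq C$. When $C' \subsetneq C$, minimality of the closure supplies some $T' \supseteq C'$ witnessing non-safety of $V[\setminus C']$, i.e.\ with $F(T') > F(C')$. Applying supermodularity to $T'$ and $C$ together with $F(T' \cup C) \leq F(C)$ (from safety of $V[\setminus C]$, since $T' \cup C \supseteq C$) gives $F(T' \cap C) \geq F(T') > F(C')$. Since $T' \cap C \supseteq C'$, this rules out $T' \cap C = C'$ (which would yield $F(C') \geq F(T') > F(C')$), so $T' \cap C \supsetneq C'$; the inductive hypothesis applied to $T' \cap C$ then closes the step. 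Instantiating at $C' = \emptyset$ produces $F(C) \geq F(\emptyset) = 0$, which is exactly the needed inequality.

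The delicate point I expect to require the most care is the inductive step, specifically confirming that intersecting the non-safety witness $T'$ with $C$ produces a set strictly larger than $C'$; the contradiction sketched above, combining the strict inequality $F(T') > F(C')$ with the supermodularity-derived $F(T' \cap C) \geq F(T')$, is what makes the induction actually terminate rather than merely pass from $C'$ to another subset of equal size.
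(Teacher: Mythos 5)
Your proof is correct, and it takes a genuinely different route from \cite{egi24} and from how the paper itself handles the nearby strengthening (Lemma~\ref{lem:sdim_bound}); the paper cites this particular lemma without reproving it. You reduce the inequality $|\cl(V)| \leq \dim(V_{\cl(V)})$ (via the rank-nullity identity $\dim(V) = \dim(V_C) + \dim(V[\setminus C])$, which is valid since the projection $v \mapsto v[\setminus C]$ has kernel $V_C$) to non-negativity of the deficiency $F(S) = \dim(V_S) - |S|$, then observe that $F$ is supermodular, that safety of $V[\setminus C]$ is exactly the statement that $F$ does not increase above $C$, and that non-safety of $V[\setminus C']$ for a proper subset $C' \subsetneq C$ yields a witness $T' \supseteq C'$ with $F(T') > F(C')$. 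The uncrossing step — intersecting $T'$ with $C$, using supermodularity plus $F(T' \cup C) \leq F(C)$ to conclude $F(T' \cap C) \geq F(T')$ and hence $T' \cap C \supsetneq C'$ — is exactly what makes the induction terminate, and I checked it is airtight (the strict inequality $F(T') > F(C')$ is what rules out $T' \cap C = C'$). By contrast, the closure-algorithm proof used by \cite{egi24} and by Lemma~\ref{lem:sdim_bound} accounts for the dimension block-by-block while iteratively removing minimal dangerous sets; that route is less slick but is what the paper needs later, because it also produces an explicit safe collection $W$ extending a prescribed set $U$, structure your abstract argument does not directly supply. For the bare inequality stated here, your supermodularity argument is a clean and self-contained alternative, and it has the advantage of not relying on the full closure algorithm, only on minimality of $\cl(V)$.
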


For a collection $V$ of linear forms over $(\B^l)^m$, define the safe dimension of $V$, denoted $\sdim(V)$, to be the maximum dimension of a safe collection $W \subseteq \langle V \rangle$.

The inequality in the following lemma sharpens Lemma \ref{lem:size_bound}. It is implicit in \cite{egi24} and can also be obtained from the lemmas in \cite{ei25}. We will need a somewhat stronger statement, so we provide a proof. 

\begin{lemma} \label{lem:sdim_bound}
    For any collection $V$ of linear forms, $\sdim(V) \geq |\cl(V)| + \dim(V[\setminus \cl(V)])$. Moreover, for any set $U$ of linearly independent forms such that $U \subseteq \langle V \rangle$ and $\dim(U[\setminus \cl(V)]) = |U|$, there exists a safe collection $W \subseteq \langle V \rangle $ such that $U \subseteq W$ and $\dim(W) = |\cl(V)| + \dim(V[\setminus \cl(V)])$.
\end{lemma}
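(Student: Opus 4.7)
The plan is to prove the second statement, since the first inequality then follows by setting $U = \emptyset$. Write $C = \cl(V)$, $c = |C|$, $d = \dim V[\setminus C]$, and let $\pi$ be the projection that kills all $C$-block variables. First I would extend $U$ on the non-closure side: by hypothesis $\pi$ is injective on $U$, so $\pi(U)$ is linearly independent inside the $d$-dimensional space $\langle V[\setminus C]\rangle$; extending $\pi(U)$ to a basis of this space and pulling back yields $U' \supseteq U$ with $|U'| = d$ and $\pi$ still injective on $U'$. By the safety of $V[\setminus C]$ and Theorem~\ref{lem:safe}, there are $d$ variables from $d$ distinct non-$C$ blocks whose columns are independent in the coefficient matrix of $V[\setminus C]$, and hence also in that of $U'$ (since $\pi(U')$ spans $\langle V[\setminus C]\rangle$).

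The heart of the proof is to exhibit a $c$-dimensional safe subspace $W_C$ of $\langle V\rangle \cap \cL_C$; note that $\dim(\langle V\rangle \cap \cL_C) = \dim V - d \geq c$ by Lemma~\ref{lem:size_bound}. By Theorem~\ref{lem:safe}, finding such a $W_C$ amounts to picking $c$ linearly independent columns of the coefficient matrix of $\langle V\rangle \cap \cL_C$, one from each block of $C$. Rado's theorem for linear matroids reduces this to the Hall-type condition that for every $T \subseteq C$, the rank of the columns lying in $T$-blocks is at least $|T|$. Using $\dim(\langle V\rangle \cap \cL_B) = \dim V - \dim V[\setminus B]$, this unravels to $\dim V[\setminus J] \geq d + (c - |J|)$ for every $J \subseteq C$ (setting $J = C \setminus T$). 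To verify it, I would argue that $\cl(V[\setminus J]) = C \setminus J$: the inclusion $\subseteq$ follows from $V[\setminus J][\setminus (C \setminus J)] = V[\setminus C]$ being safe, while any $T' \subsetneq C\setminus J$ with $V[\setminus (J \cup T')]$ safe would give $J \cup T' \subsetneq C$, contradicting minimality of $\cl(V) = C$. Applying Lemma~\ref{lem:size_bound} to $V[\setminus J]$ then gives exactly $\dim V[\setminus J] \geq |C \setminus J| + d$.

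Finally, taking $W = U' \cup W_C$, the coefficient matrix of $W$ at the $c$ transversal positions in $C$-blocks together with the $d$ positions in $[m]\setminus C$ is block lower-triangular --- the non-$C$ columns vanish on $W_C \subseteq \cL_C$ --- with both diagonal blocks of full rank, so Theorem~\ref{lem:safe} gives the safety of $W$; $U' \cap \cL_C = 0$ yields $\dim W = c + d$, and $U \subseteq W$ is immediate. The main obstacle will be Step~2: recognizing that safety of a subset of $\langle V\rangle \cap \cL_C$ corresponds to an independent-transversal condition and then checking Hall's condition from the minimality of $\cl(V)$. The remaining steps are routine linear algebra once this reduction is in place.
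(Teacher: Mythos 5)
Your proof is correct, but it takes a genuinely different route from the paper's. The paper proves the lemma by modifying the iterative closure algorithm of \cite{egi24}: it walks through the while loop that peels off minimal dangerous sets, and at each step it records both the lifted linear forms for $W$ and the transversal columns witnessing safety; the extension by $U$ happens at the end, after the loop has exhausted the closure. Your proof is non-iterative: you split $\langle V\rangle$ into a $\cl(V)$-supported part and a complementary part, extend $U$ to $U'$ on the complementary side, and then invoke Rado's independent-transversal theorem for linear matroids to produce a $c$-dimensional safe $W_C$ inside $\langle V\rangle\cap\cL_{\cl(V)}$, with the required Hall condition $\dim V[\setminus J]\geq d+(c-|J|)$ falling out of the identity $\cl(V[\setminus J])=\cl(V)\setminus J$ and Lemma~\ref{lem:size_bound} applied to $V[\setminus J]$. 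Both proofs finish with the same block-lower-triangular rank argument. What your approach buys: it isolates the purely matroid-theoretic heart of the claim and avoids re-running the closure procedure, yielding a cleaner and more modular argument. What it costs: it imports Rado's theorem as a black box, whereas the paper's proof, while more procedural, stays within the toolkit already set up by \cite{egi24} and is self-contained modulo that reference. Two small points you gloss over as routine: (i) from $c$ independent columns in distinct $C$-blocks in the (possibly $>c$-dimensional) space $\langle V\rangle\cap\cL_{\cl(V)}$ one still has to exhibit a $c$-dimensional subspace on which those columns remain independent before Theorem~\ref{lem:safe} gives safety; (ii) the contradiction in your $\supseteq$ direction really uses the \emph{unique} minimality of the closure, i.e.\ that every $I$ with $V[\setminus I]$ safe contains $\cl(V)$, not merely that $\cl(V)$ is one inclusion-minimal witness. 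Both points are easy, but worth spelling out.
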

\begin{proof}
Note that showing the latter statement also implies $\sdim(V) \geq |\cl(V)| + \dim(V[\setminus \cl(V)])$ since we can take $U = \emptyset$. 

So consider any $U \subseteq \langle V \rangle $ satisfying $\dim(U[\setminus \cl(V)]) = |U|$. 
 We will modify the closure algorithm from \cite{egi24} to find $W$. To prove the safety of $W$, we will find linearly independent $\dim(W) = |\cl(V)| + \dim(V[\setminus \cl(V)])$ columns  of the matrix for $W$ which lie in distinct blocks. This suffices by Theorem \ref{lem:safe}.

 In the algorithm below, we make use of some facts shown by \cite{egi24} to which the reader is referred for more details.

 \begin{enumerate}
     \item Set $S = \emptyset$, $W = \emptyset$, $I = \emptyset$.
     \item While $\langle V[\setminus S]\rangle$ contains some dangerous set:
     \begin{enumerate}
         \item Let $D = \{v_1, v_2, \dots, v_{h+1}\}$ be a minimal dangerous set in $\langle V[\setminus S] \rangle$. Set $T = \supp(D)$ where $|T| = h$ since otherwise $D$ would not be minimal.
         \item Set $X = \{v_1, v_2, \dots, v_h\}$ which is safe.
         \item By Theorem \ref{lem:safe}, there exist indices $(i_1, j_1), (i_2, j_2), \dots, (i_h, j_h) \in T \times [l]$ such that the corresponding columns in the matrix for $X$ are linearly independent and $\{i_1, i_2, \dots, i_h\} = T$. Add these $h$ column indices to $I$.
         \item For each $v_i \in X$, add to $W$ a linear form $w \in \langle V \rangle$ such that $w[\setminus S] = v_i$.
         \item $S \gets S \cup T$.
     \end{enumerate}
     \item $V[\setminus S]$ is now safe. Set $d = \dim( V[\setminus S])$.
     \item By Theorem \ref{lem:safe}, there exist indices $(i_1, j_1), (i_2, j_2), \dots, (i_d, j_d) \in ([m]\setminus S) \times [l]$ such that the corresponding columns in the coefficient matrix for $V[\setminus S]$ are linearly independent and the indices lie in $d$ distinct blocks. Add these to $I$.
     \item Find a basis $B$ of $V[\setminus S]$ which extends $U[\setminus S]$.
     \item Add all forms in $U$ to $W$.
     \item For each $v \in B \setminus U[\setminus S]$, add to $W$ a linear form $w \in \langle V \rangle$ such that $w[\setminus S] = v$.
 \end{enumerate}
 \cite{egi24} showed that when the while loop terminates, the set $S$ is the closure of $V$.  By assumption, we know that $U[\setminus \cl(V)]$ is linearly independent and $U \subseteq V$. So it makes sense to find a basis $B$ of $V[\setminus \cl(V)] = V[\setminus S]$ which extends $U[\setminus \cl(V)]$. We have $U \subseteq W$ and $W \subseteq \langle V \rangle$ by construction. 
 
 We now show $|I| = |W| = |\cl(V)| + \dim (V[\setminus \cl(V)])$. In each iteration of the loop, we add as many forms to $W$ as number of blocks added to $S$. The same number of elements is also added to $I$. At the end of the loop, we have $S = \cl(V)$ and so up to that point $W$ contains $|\cl(V)|$ forms. After that, we add $d = \dim( V[\setminus S]) = \dim(V[\setminus \cl(V)])$ more forms to $W$ and $d$ indices to $I$, where we are using that $B$ is a basis of $V[\setminus S]$ which extends $U[\setminus S]$. So $|B| = d$ and $|B \setminus U[\setminus S]| = |B| - |U[\setminus S]| = |B| - |U|$.

 Finally, to check that $W$ is safe, we verify that the columns of the coefficient matrix for $W$ which are in $I$ are linearly independent. Here we assume that the rows are ordered according to when they were added by the algorithm, so the forms that were added in the first iteration appear at the top. Consider the submatrix $M$ containing only the columns of the $W$ matrix which lie in $I$ and reorder the columns according to when they were added to $I$ by the above procedure. $M$ can be seen to have a block lower triangular structure since whenever some $w$ is added to $W$ during the loop, $w[\setminus S]$ is only supported in the set $T$ being considered in that iteration. Moreover each such block on the diagonal is precisely the matrix for the safe forms $\{v_1, v_2, \dots, v_h\} \subseteq V[\setminus S]$ considered in that iteration except we are only considering the columns that are indexed by $T$. By the choice of $T$, each such block has full rank. This is also true for the last block (possibly empty) which contains $B$ restricted to the $d$ columns added to $I$. It is easy to see by induction on the blocks, that any such block lower triangular matrix, where each block on the diagonal has full rank, must also have full rank. This finishes the proof.
\end{proof}

Though we do not really need this, it can also be verified that $\sdim(V) \leq |\cl(V)| + \dim(V[\setminus \cl(V)])$. This implies $\sdim(V) = |\cl(V)| + \dim(V[\setminus \cl(V)])$. Thus, safe dimension is the same as the size of the amortized closure \cite{ei25}, but since we do not actually use the notion of amortized closure in our proofs, we have given it a name which better describes how it is used.

\subsection{Affine DAGs and affine restrictions} \label{sec:aff_dags}

We will use affine DAGs to reason about $\res(\oplus)$ proofs in a top-down way as in prior work. These are essentially the same as $\res(\oplus)$ refutation graphs \cite{egi24, alek2024}. We prefer the term affine DAGs so that we can discuss search problems that are not necessarily false clause search problems. 

An affine DAG for a search problem $\cR \subseteq \B^N \times \cO$ is a directed acyclic graph with one source where every internal node has outdegree $2$ and it is labeled in the following way:
\begin{itemize}
    \item At each internal node $v$, we have a linear system $\Phi_v$ on $\B^N$ and a parity query $P_v$ on $\B^N$.
    \item At each sink $w$, we have a linear system $\Phi_w$ on $\B^N$ and an output label $o_w \in \cO$.
\end{itemize}
The linear system at the source node (also called the root) must be the empty system (which is satisfied by all $x \in \B^N$).
For every internal node $v$, one of the outgoing edges is labeled $P_v = 0$ and the other is labeled $P_v = 1$. We have the consistency requirement that if the edge $(v, w)$ is labeled by $P_v = b$, then the system $\Phi_w$ is implied by the system $\Phi_v \cup \{P_v = b\}$ (this is the linear system containing equations from $\Phi_v$ \emph{and} the equation $P_v = b$). For the DAG to correctly solve $\cR$, we require that for every sink $w$, every $x \in \B^N$ satisfying $\Phi_w$ must also satisfy $(x, o_w) \in \cR$.

Semantically, each node in an affine DAG corresponds to an affine subspace of $\bF_2^N$ and for an internal node $v$, if its immediate successors are $w_1$ and $w_2$, then the affine subspace corresponding to node $v$ is contained in the union of the affine subspaces at nodes $w_1$ and $w_2$. It will be convenient to allow inconsistent systems in an affine DAG. Similarly, an empty set will be considered an affine subspace. An inconsistent system implies every system.

A subcube DAG is an affine DAG where we only allow bit queries and each linear equation appearing in the DAG only involves one variable. In other words, instead of a linear system at each node, we have a partial assignment fixing some subset of the variables to constants.

Every $\res(\oplus)$ refutation of a CNF formula $\varphi$ gives an affine DAG solving the search problem $\cR^\varphi$ associated with $\varphi$ \cite{egi24} whose size and depth are no larger than the size and depth, respectively, of the $\res(\oplus)$ refutation. With this in mind, we will mainly discuss affine DAGs from now on.

The following lemma is essentially Lemma 2.3 in \cite[]{egi24}.
\begin{lemma}[\cite{egi24}] \label{lem:path_implies}
Consider nodes $u$ and $v$ in an affine DAG with systems $\Phi_u$ and $\Phi_v$ respectively such that there is some path $p$ from $u$ to $v$. Let $\Psi$ be the system consisting of all equations labeling the edges of $p$. Then $\Phi_u \cup \Psi$ implies $\Phi_v$.
\end{lemma}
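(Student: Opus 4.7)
The plan is to proceed by induction on the length $k$ of the path $p$ from $u$ to $v$. The base case $k = 0$ is immediate: we have $u = v$, so $\Phi_u = \Phi_v$ and $\Psi = \emptyset$, whence $\Phi_u \cup \Psi = \Phi_u$ trivially implies $\Phi_v$.

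For the inductive step, suppose $p$ has length $k + 1$ and let $w$ be the node immediately preceding $v$ on $p$. Then $p$ decomposes as a length-$k$ path $p'$ from $u$ to $w$ followed by the edge $(w, v)$, which by definition is labeled with some equation $P_w = b$ for $b \in \bF_2$. Writing $\Psi'$ for the system of equations collected along $p'$, we have $\Psi = \Psi' \cup \{P_w = b\}$. By the induction hypothesis applied to $p'$, the system $\Phi_u \cup \Psi'$ implies $\Phi_w$. The local consistency requirement in the definition of an affine DAG states exactly that, for the labeled edge $(w, v)$, the system $\Phi_w \cup \{P_w = b\}$ implies $\Phi_v$.

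Chaining these two implications: any $x \in \B^N$ satisfying $\Phi_u \cup \Psi = (\Phi_u \cup \Psi') \cup \{P_w = b\}$ in particular satisfies $\Phi_u \cup \Psi'$, hence satisfies $\Phi_w$ by the inductive hypothesis, and hence satisfies $\Phi_v$ by local consistency. This completes the induction. There is no real obstacle; the lemma is essentially a transitivity statement for implication together with the one-step consistency built into the definition of affine DAGs. The only mild subtlety is that the definition allows inconsistent linear systems, but an inconsistent system vacuously implies every system, so the conclusion holds trivially in any such degenerate case.
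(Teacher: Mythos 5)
Your induction on path length is correct and is the natural argument for this lemma. The paper itself does not reprove it but cites it from \cite{egi24} (noting it is essentially their Lemma~2.3); your proof matches the expected one-step chaining of the local consistency condition in the affine DAG definition, and your remark about inconsistent systems being handled vacuously is a valid observation.
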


We now consider inputs in $(\B^l)^m$. We will make use of affine restrictions in our proof. Most of our affine restrictions will be block-respecting in the sense that each block will be either completely fixed (determined by some affine functions of unfixed blocks) or completely unfixed. 
 Recall that $\cL_{B}$ denotes the collection of all affine functions supported on $B$.
    Let $A \subseteq [m]$. We say that $\rho : \{x_{i, j} \mid i \in A, j \in [l]\} \rightarrow \cL_{[m] \setminus A}$ is an affine restriction fixing $A$.

For one of the proofs, as an intermediate step towards a block-respecting affine restriction, we will also need affine restrictions that are not necessarily block-respecting but simply fix some variables as affine functions of all other variables (which we call free). We write $\rho(x_{i, j}) = *$ if $x_{i, j}$ is not in the domain of $\rho$.
A bit-fixing restriction $\sigma$ is an affine restriction where each of the fixed variables is assigned a bit $0$ or $1$. For affine restrictions $\rho_1$ and $\rho_2$ defined on disjoint sets of variables, $\rho_1 \cup \rho_2$ is the affine restriction whose domain is the union of the domains of $\rho_1$ and $\rho_2$ and it fixes each variable in its domain according to $\rho_1$ or $\rho_2$. This is well-defined since $\rho_1$ and $\rho_2$ fix disjoint sets of variables.

For a parity $P$ on $(\B^l)^m$ and an affine restriction $\rho$ fixing $A \subseteq [m]$, we use $P|_\rho$ to denote the parity obtained by substituting for all $x_{i, j}, i \in A, j \in [l]$ according to $\rho$. This results in a parity in $\cL_{[m] \setminus A}$. Similarly, we use $\Phi|_\rho$ to denote the linear system obtained by substituting in $\Phi$ all $x_{i, j}, i \in A, j \in [l]$ according to $\rho$. This substitution could make the system inconsistent in which case we simply represent the resulting system by $1 = 0$. Otherwise, we tacitly assume that all equations in $\Phi|_\rho$ are linearly independent by removing any equations which are implied by others. Again the exact choice of which redundant equations to remove is not important for what follows. 

Let us give an equivalent way of describing $\Phi|_\rho$ when it is satisfiable. 
We can rewrite $\rho$ as a linear system $\Psi(\rho)$ as follows. For each $x_{i, j}$ fixed by $\rho$, say $\rho(x_{i, j}) = P$ for some parity $P$, we add the equation $x_{i, j} + P = 0$ to $\Psi(\rho)$. Now abusing notation, we will write $\rho$ implies a linear system $\Phi$ to mean $\Psi(\rho)$ implies $\Phi$. An equivalent way to define $\rho$ implies $\Phi$ is if each equation in $\Phi$ becomes equivalent to $0 = 0$ after substituting according to $\rho$.
Similarly, for a linear system $\Phi$, we write $\Phi \cup \rho$ to denote the linear system $\Phi \cup \Psi(\rho)$. (Let us emphasize here that this is the union of the linear systems and so corresponds to the intersection of the corresponding affine subspaces.) 
The collection of all equations implied by $\Phi|_\rho$ is the same as the collection of equations implied by $\Phi \cup \rho$ whose support does not contain any variable from $A$.

If $V \subseteq (\bF_2^l)^m$ is the affine subspace defined by $\Phi$ and $W$ is the set of inputs consistent with $\rho$,
 the affine subspace defined by $\Phi|_\rho$ is obtained by considering the projection of $V \cap W$ onto the blocks not fixed by $\rho$.

For a given affine DAG $\cD$ and affine restriction $\rho$ fixing $A$, we use $\cD|_\rho$ to denote the affine DAG obtained by applying $\rho$ to each parity and linear system appearing in $\cD$. Observe that the consistency condition still holds for the DAG obtained after applying the restriction. This is perhaps easiest to see from the semantic view above of how a restriction affects an affine subspace. 
$\cD|_\rho$ does not mention any variable from $A$. If $\cD$ solves a search problem $\cR \subseteq (\B^l)^m \times \cO$, then $\cD|_\rho$ solves $\cR|_\rho$ where $\cR|_\rho \subseteq (\B^l)^{[m] \setminus A} \times \cO$ is defined by $\cR|_\rho(y) = \cR(x)$ where $x$ is the unique extension of $y$ according to $\rho$. 

\subsection{Bit pigeonhole principle and collision-finding}

The bit pigeonhole principle $\bphp_n^m$ on $m$ pigeons and $n=2^l$ holes ($m > n$) encodes the unsatisfiable statement that $m$ pigeons can be placed into $n$ holes such that no two pigeons are in the same hole. For each pigeon $i \in [m]$, we have variables $x_{i, j}, j \in [l]$ encoding the hole it flies to. For all distinct $i, k \in [m]$ and all $z \in \B^l$, we have a clause encoding $\neg (\llbracket x_i = z \rrbracket \wedge \llbracket x_k = z \rrbracket)$. Each such clause contains $2l$ literals since $\llbracket x_i = z \rrbracket$ is a conjunction of $l$ literals. So $\bphp_n^m$ is a CNF formula containing $\binom{m}{2}n$ clauses of width $2 \log n$.
In the associated false clause search problem, given such an assignment $x \in (\B^l)^n$, the goal is to find distinct $i, k \in [m]$ and $z \in \B^l$ such that $x_i = x_k = z$.

We will consider the closely related search problem of collision-finding $\coll_n^m \subseteq (\B^l)^m \times \binom{[m]}{2}$ where we only need to find distinct $i, k$ such that $x_i = x_k$. It is easy to see that any affine DAG solving the false clause search problem associated with $\bphp_n^m$ also solves the collision-finding problem once we change the output labels suitably.

The $t$-collision-finding problem is defined similarly. In $t$-$\coll_n^m \subseteq (\B^l)^m \times \binom{[m]}{t}$, the goal is to find a set $I \in \binom{[m]}{t}$ such that for all $i, k \in I$, we have $x_i = x_k$. If $m > (t-1)n$, $t$-$\coll_n^m$ is total.

The CNF formula corresponding to $t$-$\coll_n^m$, denoted $t$-$\bphp_n^m$ is defined as follows over the same variables $x_{i, j}\; (i \in [m], j \in [l])$ as for $\bphp_n^m$. For each set $I \subseteq [m]$ of size $t$ and each $z \in \B^l$, there is a clause $\neg (\wedge_{i \in I} \llbracket x_i = z\rrbracket)$. So $t$-$\bphp_n^m$ contains $\binom{m}{t}n$ clauses of width $t \log n$.
\section{Size-depth lower bound for multicollision-finding}
\label{sec:tbphp}
In this section, we prove Theorem \ref{thm:tbphp} (which implies Theorem \ref{thm:wbphp}) giving a size-depth lower bound on $\res(\oplus)$ proofs of $t$-$\bphp$. We will do this by proving the corresponding lower bound for affine DAGs solving the $t$-collision-finding problem. 

Let $n = 2^l$ denote the total number of holes. 
For our proof, we will need to consider a promise version of $t$-collision-finding where each pigeon is promised to only fly into a collection of available/allowed holes $A \subseteq \B^l$. We use $t$-$\coll_{A}^m \subseteq (\B^l)^m \times \binom{[m]}{t}$ to denote this problem. Formally, $t$-$\coll_{A}^m = \{(x, I) \mid x_i = x_k; \; \forall i, k \in I \} \cup \{(x, I) \mid x_k \notin A \text{ for some } k \in [m]\}$. The second set is to simply allow all outputs for any input violating the promise.


\subsection{PDT lower bound for multicollision-finding}
\label{subsec:pdt_tcoll}

In this subsection, we prove a distributional lower bound (of a structured kind) for deterministic PDTs solving $t$-$\coll_A^m$, where $A \subseteq \B^l$ is a set of available holes.
Let $\mu$ be the uniform distribution on $A^m$.
Let $T$ be a deterministic parity decision tree on $(\B^l)^m$ of depth $d$.

We describe a procedure, Algorithm \ref{alg:sim_multicoll},  which will let us estimate a lower bound on the probability that $T$ has not solved $t$-$\coll_A^m$ when run on the distribution $\mu$. We call it a simulation since the procedure is similar to recent simulations for randomized parity decision trees \cite{pod25, bi25, bess24}, primarily relying on ideas from \cite{bi25}.

\begin{algorithm} \label{alg:sim_multicoll}
\caption{$t$-$\coll_A^m$ simulation on uniform distribution}

\KwIn{ $A \subseteq \B^l$, PDT $T$}
$C \gets []$ \tcp*{List of holes used during simulation}
$F \gets [m]$ \tcp*{Free blocks} 
$L \gets \emptyset$ \tcp*{Collection of equations}
$v \gets$ root of $T$\;
$iter \gets 0$ \;

\While{$v$ is not a leaf}{
$P' \gets$ query at $v$ \;
$P \gets P'$ after substituting according to $L$\; \label{algline:cleanup}
\tcp{$P$ now depends only on blocks in $F$ and is equivalent to $P'$ under $L$}
\eIf{$P$ is a constant, $b \in \bF_2$}{ \label{algline:trivial}
Update $v$ according to $b$\;
}
{ \label{algline:nontrivial}
$(i, j) \gets \min\{(i, j) \in F \times [l] \mid x_{i, j }\text{ appears in } P\}\}$ \; 
Pick $y$ uniformly at random from $A$\;
$L \gets L \cup \{x_{i, h} = y_h \mid h \in [l]\setminus \{j\}\}$\;
\eIf{$y^{\oplus j} \notin A$}{ \label{algline:notinA}
$L \gets L \cup \{x_{i, j} = y_j\}$\;
Append $\{y\}$ to $C$\;
}
{\label{algline:notinPj}
Pick $b \in \bF_2$ uniformly at random\;
$L \gets L \cup \{P = b\}$\;
Append $\{y, y^{\oplus j}\}$ to $C$\;
Update $v$ according to $b$\;
}
$F \gets F \setminus \{i\}$\;
}

$iter \gets iter+1$\; \label{algline:iter}
}
\eIf{there exist distinct $i_1, i_2, \dots, i_t$ such that $\cap_{k \in [t]} C_{i_k}  \neq \emptyset$}{
\KwRet FAIL \tcp*{There is a potential $t$-collision}
}
{
\KwRet $v, C, L, F$
}
\end{algorithm}

Let us make some clarifying remarks about Algorithm \ref{alg:sim_multicoll}. We will show below that at the beginning of each iteration, after applying suitable row operations, $L$ can be viewed as fixing all the variables in the fixed blocks $[m] \setminus F$ as affine functions of the free blocks $F$. With this in mind, in Line \ref{algline:cleanup}, we substitute for all fixed variables appearing in $P'$ according to $L$. A more formal description of this can be found in the row reduction subroutine described in \cite{cmss23, beame2023}, though there are some minor differences between what is done in \cite{cmss23} and here. 

Also note that since in each iteration of Algorithm \ref{alg:sim_multicoll}, $|F|$ decreases or we move down the parity decision tree, the simulation always terminates.

\begin{lemma} \label{lem:inv}
    The simulation maintains the following invariants in the beginning of each iteration of the while loop and at termination:
    \begin{enumerate}
        \item The collection of equations $L$ uniquely determines $x_i\; (i \in [m] \setminus F)$ as affine functions of $x_{i'} \;(i' \in F)$. 
        \item For each assignment to all blocks in $F$, if we assign values to $x_i \; (i \in [m] \setminus F)$ according to $L$, we have $x_i \in A$ for each  $i \in [m] \setminus F$.
        \item $L$ implies all the parity constraints on the path from the root to the current node $v$.    
    \end{enumerate}
\end{lemma}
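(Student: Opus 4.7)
The plan is to prove the three invariants simultaneously by induction on the iteration counter $iter$. The base case is immediate: initially $F = [m]$ and $L = \emptyset$, so there are no fixed blocks to express as affine functions, no promise conditions to verify, and no path constraints to imply. For the inductive step, I assume the invariants hold at the start of some iteration. By invariant 1 from the previous iteration, the substitution in Line \ref{algline:cleanup} replaces $P'$ by a parity $P \in \cL_F$ which is equivalent to $P'$ modulo $L$; this structural fact is what enables the rest of the analysis.

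The inductive argument then splits along the branches of the algorithm. In the trivial branch (Line \ref{algline:trivial}), $P$ collapses to a constant $b$; $L$ and $F$ are unchanged, invariants 1 and 2 persist for free, and invariant 3 follows because the edge $P' = b$ is already implied by $L$. In Case A (Line \ref{algline:notinA}), where $y^{\oplus j} \notin A$, block $i$ is pinned to the constant $y \in A$, giving a trivial affine extension for invariant 1 and pointwise membership in $A$ for invariant 2; no new PDT edge is traversed, so invariant 3 is vacuous at this step. The substantive case is Case B (Line \ref{algline:notinPj}). Here I would substitute the new bit-fixings $x_{i,h} = y_h$ ($h \neq j$) into $P$ and observe that the result has the form $x_{i,j} + w = b$ with $w \in \cL_{F \setminus \{i\}}$, using that $P$ had no dependence on $[m] \setminus F$ and that $x_{i,j}$ appears in $P$; this yields the affine extension needed for invariant 1. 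For invariant 2 in Case B, the bits $x_{i,h}$ are fixed to $y_h$ for $h \neq j$ while $x_{i,j}$ takes some bit value, so $x_i \in \{y, y^{\oplus j}\} \subseteq A$. For invariant 3, the edge $P' = b$ is implied because the updated $L$ contains $P = b$ and previously implied $P' = P$.

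The step most in need of care is confirming, in Case B, that the rewriting $P \leadsto x_{i,j} + w$ after substituting the new bit-fixings actually produces a $w$ supported on $F \setminus \{i\}$ with the coefficient of $x_{i,j}$ equal to $1$. Both facts rest on the Line \ref{algline:cleanup} substitution having already eliminated every $[m] \setminus F$ dependence from $P$, together with the choice of $(i,j)$ as an index where $x_{i,j}$ appears in $P$ (and parities over $\bF_2$ have coefficients in $\{0, 1\}$). Once these points are nailed down, the remaining verifications amount to a routine merging of the freshly added equations with the inductive hypothesis, and the "at termination" clause is obtained by noting that the loop exits only when $v$ is a leaf, at which point the invariants hold by the induction already performed.
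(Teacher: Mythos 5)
Your proposal is correct and follows essentially the same induction-on-$iter$ argument as the paper, splitting into the identical three cases (trivial branch at Line~\ref{algline:trivial}, the pinning case at Line~\ref{algline:notinA}, and the localization case at Line~\ref{algline:notinPj}). The one small difference is presentational: where you argue invariant~1 within each case and appeal to a "routine merging" with the inductive hypothesis, the paper consolidates this into a single observation that $L$ has a block upper triangular structure (ordering fixed blocks by when they were fixed), which makes the back-substitution manifestly well-posed; you may want to state that structure explicitly to make the merging step airtight, but the underlying reasoning is the same.
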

\begin{proof}
    The proof is by induction on the number of iterations, $iter$. At the beginning of the first iteration, $iter = 0$ and all the statements are seen to be trivially true.

    So suppose the statement holds at the beginning of the iteration when $iter = i$, $i \geq 0$. We want to show that all the conditions also hold at the end of the iteration (at which point we have $iter = i+1$). 

    In the case that the condition in Line \ref{algline:trivial} is satisfied, we do not change $L$ and the parity constraint labeling the edge traversed in this iteration is implied by the condition in Line \ref{algline:trivial}. So all the invariants are satisfied at the end of the iteration.

    Next let us consider the case where the else clause on Line \ref{algline:nontrivial} is executed.
    The case $y^{\oplus j} \notin A$ is clear since in this case we explicitly set $x_{i, j} = y_j$  ensuring that $x_i = y \in A$. We stay at the same node in this case.

    In the other case $y^{\oplus j} \in A$, we know that no matter what bit we set $x_{i, j}$ to we will have $x_i \in A$. In this case, we fix $x_{i, j}$ as determined by the constraint $P = b$. Since we ensured that under $L$, the constraint $P=b$ is equivalent to the original parity constraint at the edge just crossed, we have preserved the invariant that each parity constraint on the path from the root to the current node is implied by $L$.

    In all these cases, to see the first point, note that $L$ has a block upper triangular structure if we rearrange the columns so that all the fixed blocks appear in the order in which they were fixed before the free blocks.
\end{proof}

\begin{lemma} \label{lem:rpdt_dist} Let $W(v)$ be the event that node $v$ of $T$ is visited by Algorithm \ref{alg:sim_multicoll}. Let $V(v)$ be the event that for a random $x \sim \mu$, running $T$ on $x$ reaches $v$.
Then for every $v \in T$, we have $\Pr[W(v)] = \Pr[V(v)]$.
\end{lemma}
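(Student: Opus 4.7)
The plan is to couple the simulation's internal randomness with a sample $x \sim \mu$ through a deterministic map $f \colon A^m \to \{\text{simulation trajectories}\}$, and then match the two distributions state-by-state. Given $x \in A^m$, define $f(x)$ by always setting $y := x_i$ whenever Algorithm~\ref{alg:sim_multicoll} samples $y$ for block $i$, and $b := P(x)$ whenever it samples a bit (where $P$ is the current parity after substitution according to $L$). By construction every equation added to $L$ along $f(x)$ is satisfied by $x$, so invariant~3 of Lemma~\ref{lem:inv} gives that $x$ satisfies every parity labeling the path traced by $f(x)$ in $T$; hence $T$ on input $x$ reaches exactly the node visited by $f(x)$.

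\textbf{Main induction.} For each reachable simulation state $\mathcal{S} = (v, L, F, C)$, let $E(\mathcal{S}) := \{x \in A^m : x \text{ satisfies } L\}$; by invariants~(1) and~(2) of Lemma~\ref{lem:inv}, $|E(\mathcal{S})| = |A|^{|F|}$. I would prove by induction on the iteration count the two coupled statements $E(\mathcal{S}) = f^{-1}(\mathcal{S})$ and $\Pr_\sigma[\text{sim reaches } \mathcal{S}] = |E(\mathcal{S})|/|A|^m$. The inductive step splits by algorithm case. In a trivial iteration the successor is forced and $E$ is unchanged. In the branch $y^{\oplus j} \notin A$ (Line~\ref{algline:notinA}), the simulation samples the specific $y_{\mathcal{S}'}$ with probability $1/|A|$, matching $|E(\mathcal{S}')|/|E(\mathcal{S})| = 1/|A|$ since $x_i$ becomes fully fixed to $y_{\mathcal{S}'}$. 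In the branch $y^{\oplus j} \in A$ (Line~\ref{algline:notinPj}), the two $y$-samples $y_{\mathcal{S}'}$ and $y_{\mathcal{S}'}^{\oplus j}$ together with the chosen $b$ lead to the \emph{same} updated state, contributing total transition probability $(2/|A|) \cdot (1/2) = 1/|A|$; this matches $|E(\mathcal{S}')|/|E(\mathcal{S})| = 1/|A|$ because among the $2|A|^{|F|-1}$ inputs $x \in E(\mathcal{S})$ with $x_i \in \{y_{\mathcal{S}'}, y_{\mathcal{S}'}^{\oplus j}\}$ the bit $x_{i,j}$ takes both values equally often, so the parity $P = x_{i,j} + w$ equals $b_{\mathcal{S}'}$ on exactly half of them.

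\textbf{Wrap-up and main obstacle.} The set $\{x : T\text{ reaches } v\} = \{x : f(x)\text{ reaches } v\}$ decomposes as a disjoint union $\bigsqcup_{\mathcal{S} \in \mathcal{A}(v)} f^{-1}(\mathcal{S})$ over the ``arrival'' states at $v$ (the states visited right after the transition that lands the simulation on $v$, or the initial state if $v$ is the root), since each $x$ has a single root-to-leaf trajectory $f(x)$ passing through $v$ at most once. Summing the state-wise identity over $\mathcal{A}(v)$ yields $\Pr[V(v)] = \sum_{\mathcal{S} \in \mathcal{A}(v)} |E(\mathcal{S})|/|A|^m = \Pr[W(v)]$. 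The main point to verify carefully is the $y^{\oplus j} \in A$ case: one must check that the two choices $y_{\mathcal{S}'}$ and $y_{\mathcal{S}'}^{\oplus j}$ produce literally identical state updates (the same additions to $L$, since $y_h = y^{\oplus j}_h$ for $h \neq j$ and the equation $P = b$ is symmetric in these two values; the same pair $\{y, y^{\oplus j}\}$ appended to $C$; and the same edge in $T$ determined by $b$). This consolidation of the two samples into a single successor state is exactly what makes both the probability calculation $(2/|A|)(1/2) = 1/|A|$ and the bijection $E(\mathcal{S}') = f^{-1}(\mathcal{S}')$ go through.
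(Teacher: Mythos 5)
Your proposal is correct and takes essentially the same inductive approach as the paper's proof: both arguments proceed iteration by iteration, using Lemma~\ref{lem:inv}, and hinge on the same two transition-probability calculations (the $y^{\oplus j} \notin A$ case fixes $x_i$ with probability $1/|A|$; the $y^{\oplus j} \in A$ case merges the two $y$-choices and uses that $x_{i,j}$ is a fair coin independent of the remaining free blocks). The paper phrases the invariant distributionally — conditioned on $L$, the free blocks are uniform on $A^F$ — while you phrase it as a preimage-counting identity $E(\mathcal{S}) = f^{-1}(\mathcal{S})$ via an explicit coupling $f$, but these are two notations for the same induction.
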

\begin{proof}
    Suppose at the beginning of an iteration, we have a linear system $L$ which fixes some subset $[m] \setminus F$ of blocks as affine functions of the blocks $F$ such that conditioned on $x \sim \mu$ satisfying $L$, the distribution on $x$ projected to the free blocks $F$ is uniform on $A^{F}$. We will show that the simulation adds equations to $L$ with the correct probability in this iteration and the resulting system $L$ also satisfies the above condition. The claim then follows by induction and Lemma \ref{lem:inv}.

    In the case that the parity we are trying to simulate is already determined by $L$, $L$ stays unchanged and so the distribution on the free blocks remains the uniform distribution on $A^F$.

Consider the case where the parity query is not determined by $L$. Let $P$ denote the equivalent parity query under $L$ which only depends on the blocks in $F$.
    Fix $i \in F, j \in [l]$ occurring in $P$. We condition on all $x_{i, h}, h \neq j$ (Lines 15-16). If these uniquely determine $x_{i, j}$ also according to $A$, then we fix $x_{i, j}$ to the unique possible value. This happens in the if clause at Line \ref{algline:notinA}. Since block $i$ is independent of the other blocks, the blocks in $F \setminus \{i\}$ continue to be distributed according to the uniform distribution on $A^{F \setminus \{i\}}$.

    In the case that $x_{i, j}$ is not uniquely determined by $x_{i, h}, h \notin [l]$, $x_{i, j}$ is a uniform random bit. Since the parity $P$ (which only depends on blocks in $F$) contains $x_{i, j}$, this parity is independent of $x_{i', h'}$ for $i' \in F \setminus \{i\} $ and is uniformly distributed. The independence implies that conditioned on $P = b$ for any $b$, the distribution on all blocks in $F \setminus \{i\}$ remains the uniform distribution on $A^{F \setminus \{i\}}$.
\end{proof}

We will use the following lemma to estimate the probability that the simulation succeeds (which corresponds to the PDT not having found a $t$-collision). 

\begin{lemma} \label{lem:multi_balls_prob}
    Let $G = (V = X \cup Y, E)$ be a bipartite graph on $r$ vertices where $r \geq 1$. Let $t \geq 2$ be an integer. Consider the following random process for throwing balls into $r$ bins. Each bin corresponds to a vertex of $G$. In each iteration, choose a matching $M$ in $G$ (depending on past choices and throws), where the matching does not need to be perfect but we always think of it as a spanning subgraph of $G$. Pick a vertex $v \in V$ uniformly at random. Throw a ball into each vertex in the connected component of $M$ containing $v$. 

    Consider $d$ iterations of the above process where $d \leq tr/e^6$. For every strategy of picking matchings in this process, the probability that no bin contains at least $t$ balls is at least $(1-1/e)\exp(-cd(3e^2d/tr)^{t-1})$ for some constant $c$.
\end{lemma}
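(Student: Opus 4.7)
The key observation from the matching structure is that, given the history up to iteration $i$, the probability that a fixed bin $b$ receives a ball in iteration $i$ is at most $2/r$: the connected component of $b$ in that iteration's matching has size at most two, and the chosen vertex lies in that component with probability at most $2/r$. A $t$-collision at iteration $i+1$ can occur only if some thrown ball lands in
\[
  B_{t-1}(i) \coloneqq \{b : b \text{ holds at least } t-1 \text{ balls after iteration } i\},
\]
so controlling $|B_{t-1}(\cdot)|$---which is monotone nondecreasing in $i$---suffices.

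Iterated conditioning across any $t-1$ specified iterations gives $\Pr[b \text{ receives a ball in each of them}] \le (2/r)^{t-1}$, so a union bound yields
\[
  \E[|B_{t-1}(d)|] \;\le\; \lambda \coloneqq r\binom{d}{t-1}(2/r)^{t-1} \;\le\; \bigl(2ed/(t-1)\bigr)^{t-1}/r^{t-2}.
\]
Define the stopping time $\tau \coloneqq \min\{i : |B_{t-1}(i)| \ge e\lambda\}$; by Markov, $\Pr[\tau \le d] \le 1/e$. On the event $\{\tau > i-1\}$ the conditional probability that iteration $i$ creates a $t$-collision is at most $2e\lambda/r$, and telescoping over $i = 1, \dots, d$ gives $\Pr[\exists\, t\text{-collision},\ \tau > d] \le 2de\lambda/r$. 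Hence
\[
  \Pr[\text{no } t\text{-collision}] \;\ge\; (1-1/e) - 2de\lambda/r.
\]
A short calculation verifies $\bigl(2e/(t-1)\bigr)^{t-1} \le \bigl(3e^2/t\bigr)^{t-1}$ for every $t \ge 2$ (the ratio equals $\bigl(2t/(3e(t-1))\bigr)^{t-1} < 1$), so $d\lambda/r \le d\bigl(3e^2 d/(tr)\bigr)^{t-1}$.

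When $d\bigl(3e^2 d/(tr)\bigr)^{t-1}$ is bounded by a sufficiently small absolute constant, the inequality $1 - x \ge e^{-2x}$ valid on $x \in [0, 1/2]$ upgrades the additive estimate above to the multiplicative form $(1-1/e)\exp\!\bigl(-c\, d(3e^2 d/(tr))^{t-1}\bigr)$ for a universal constant $c$. The main technical obstacle is the complementary regime, where this quantity is large but the hypothesis $d \le tr/e^6$ still holds: the single Markov step becomes too crude. I plan to close the gap by partitioning the $d$ iterations into $k$ consecutive mini-blocks of length $d/k$, choosing $k$ so that each per-block parameter $(d/k)\bigl(3e^2(d/k)/(tr)\bigr)^{t-1}$ falls into the useful regime above; applying the Markov estimate conditionally inside each block and multiplying the per-block survival probabilities then yields the full exponential bound. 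The delicate point will be avoiding an extra $(1-1/e)^k$ prefactor, which should be achievable by invoking the initial Markov step only once and using exponential per-block estimates on subsequent blocks based on the state inherited from the previous block.
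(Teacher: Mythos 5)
Your argument is clean and correct up to the additive bound $\Pr[\text{no }t\text{-collision}] \ge (1-1/e) - 2de\lambda/r$, and the conversion $1-x \ge e^{-2x}$ does turn this into the claimed multiplicative form when $d(3e^2d/tr)^{t-1}$ is bounded by a small constant. But the lemma's hypothesis $d \le tr/e^6$ allows $d(3e^2d/tr)^{t-1}$ to be \emph{unboundedly large} (for fixed $t$ and growing $r$, one can take $d$ up to $\Theta(r)$, whereas the additive bound only survives up to $d = O(r^{1-1/t})$). In that regime, $2de\lambda/r > 1$, the additive bound is vacuous, and the target lower bound $(1-1/e)\exp(-cd(3e^2d/tr)^{t-1})$ is genuinely exponentially small --- so you do need a real argument there, not just a different algebraic rearrangement.

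The blocking plan you sketch does not close this gap as stated. The difficulty is not merely the $(1-1/e)^k$ prefactor (which, as you note, is $\exp(-\Theta(k))$ and could in principle be absorbed); it is that the set $B_{t-1}$ of bins holding $\ge t-1$ balls accumulates monotonically across blocks, and Markov's inequality only controls it up to a constant multiple of its mean with probability $1-1/e$ \emph{unconditionally}. Once you condition on surviving block $1$ (no collision) and on the $(1-1/e)$-probability event of a small bad set, the conditional distribution entering block $2$ is distorted, and you no longer have a clean per-block Markov bound on the \emph{inherited plus new} contribution to $|B_{t-1}|$. Concretely, the worst-case state at the start of block $i$ could carry $\Theta(e\lambda)$ near-full bins from earlier blocks, and a fresh Markov step inside block $i$ only controls the new increment, not the accumulated total that actually governs the collision probability.

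The paper avoids Markov entirely here: it defines the bad event $B$ that $\ge 2D-1$ bins each have $\ge t-1$ balls (with $D$ chosen so $\Pr[B]$ is exponentially small), and crucially uses the bipartite structure to bound $\Pr[B]$. Splitting into $B^X$ and $B^Y$, the fact that no edge lies within $X$ means at most one ball per iteration lands in $X$, so the ``partial histories'' assigning $t-1$ iterations to each of $D$ bins in $X$ range over \emph{disjoint} sets of iterations, giving the multiplicative union bound $\binom{|X|}{D}\binom{d}{t-1}^D(2/r)^{(t-1)D}$. Your proposal never uses bipartiteness (the $2/r$ per-bin-per-iteration bound only needs components of a matching to have size $\le 2$), and this is exactly what is lost: without some substitute for that union bound, you cannot get the exponential decay on the bad set that the large-$d$ regime requires. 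To make the blocking go through you would need, within each block, an exponential (not just constant-probability) tail bound on the number of newly created near-full bins, robust to conditioning on the past --- at which point you would essentially be re-deriving the paper's union-bound step in disguise.
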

\begin{proof}
    We consider the case $t = 2$ separately since it is a simple variant of the birthday paradox. In any iteration, conditioned on the events of the previous iterations, the probability that any particular bin receives a ball is at most $2/r$ since each component of a matching has size at most $2$. This implies that in the $i^{th}$ iteration, conditioned on the previous iterations, the probability that a ball is sent into a non-empty bin is at most $2(i-1) \cdot 2/r = 4(i-1)/r$ since at most $2(i-1)$ balls could have been thrown in the first $i-1$ iterations. Therefore the probability that no bin has at least $2$ balls at the end is at least
    \begin{align*}
        \prod_{i = 1}^d \left(1 - \frac{4(i-1)}{r}\right) 
        &\geq \prod_{i = 1}^d \exp\left(-2 \cdot \frac{4(i-1)}{r}\right) 
        \geq \exp\left(-\frac{8}{r}\binom{d}{2}\right) 
        \geq \exp\left(-\frac{4d^2}{r}\right) .
    \end{align*}
    The first inequality holds if $\frac{4(d-1)}{r} \leq 0.75$ which is implied by the assumption $d \leq tr/e^6$. This finishes the case $t = 2$.

    Suppose $t \geq 3$. We first consider the case $d \geq tr^{1-\frac{1}{t-1}}/3e$.
    Let $E$ denote the event we are interested in: no bin contains at least $t$ balls at the end. Let $B$ denote the bad event that there are at least $2D-1$ bins such that each of them contains at least $t-1$ balls, where $D$ is a positive integer to be fixed later. By the union bound, $\Pr[E] \geq \Pr[E \vee B] - \Pr[B]$ and so it suffices to lower bound $\Pr[E \vee B]$ and upper bound $\Pr[B]$.

    For $i \in [d]$, let $E_i$ denote the event that at the end of the $i^{th}$ iteration, no bin contains at least $t$ balls and similarly let $B_i$ denote the event corresponding to $B$ at the end of the $i^{th}$ iteration. Our goal is to estimate $\Pr[E \vee B] = \Pr[E_d \vee B_d]$. The event $E_1$ holds with probability $1$. We claim that for all $i \geq 2$, $\Pr[E_i \vee B_i] \geq (1 - 2(2D-2)/r)\Pr[E_{i-1} \vee B_{i-1}]$.

    Condition on a configuration of balls in bins after the $(i-1)^{th}$ iteration which satisfies $E_{i-1} \vee B_{i-1}$. Fix any matching $M$ in $G$ and throw balls according to $M$. There are two cases to consider:
    \begin{itemize}
        \item If $B_{i-1}$ holds, $B_i$ holds with probability $1$ since the number of balls in each bin never decreases.
        \item If $B_{i-1}$ does not hold, then $E_{i-1}$ holds. Since there are at most $2D-2$ bins with $t-1$ balls, the probability that we land into any of these bins (thereby creating a bin with $t$ balls) is at most $(2D-2) \cdot 2/r$, where we used that for each fixed bin, the probability that it receives a ball in this iteration is at most $2/r$. So the probability that $E_i$ holds in this case is at least $1- 2(2D-2)/r$.
    \end{itemize}
    So we have $\Pr[E_i \vee B_i] \geq (1 - 2(2D-2)/r)\Pr[E_{i-1} \vee B_{i-1}]$.
    By induction, this implies $\Pr[E \vee B] \geq (1-2(2D-2)/r)^d$. If $4(D-1)/r \leq 0.75$ (we will verify this later after picking $D$), then $1-4(D-1)/r \geq \exp(-2 \cdot 4(D-1)/r) \geq \exp(-8D/r)$. So $\Pr[E \vee B] \geq \exp(-8dD/r)$

    We now upper bound $\Pr[B]$. Let $B^X$ denote the event that at least $D$ bins in $X$ have at least $t-1$ balls. Define $B^Y$ analogously. By the union bound, $\Pr[B] \leq \Pr[B^X] + \Pr[B^Y]$.

    To upper bound $\Pr[B^X]$, note that the process restricted to bins in $X$ has the following properties. Since there are no edges in $X$, in each iteration, at most one bin in $X$ receives a ball. Moreover, the probability that in any given iteration, any fixed bin receives a ball is at most $2/r$ even conditioned on the balls thrown in previous iterations. So we can estimate $\Pr[B^X]$ by the union bound as follows. There are $\binom{|X|}{D}$ subsets of $X$ of size $D$. For a bin to have received at least $t-1$ balls, there must be some size $t-1$ subset of the $d$ iterations where the ball went into that bin. So for any collection of $D$ bins, there are at most $\binom{d}{t-1}^D$ such partial histories indicating for each of the $D$ bins, $t-1$ iterations where the ball went into that bin. Finally, since in each iteration, at most one ball is thrown and this ball has probability at most $2/r$ of going in any particular bin, any such partial history occurs with probability at most $(2/r)^{(t-1)D}$.
    Hence, by the union bound, we get
    \begin{align*}
        \Pr[B^X] \leq \binom{|X|}{D} \binom{d}{t-1}^D \left(\frac{2}{r}\right)^{(t-1)D}.
    \end{align*}
    By combining this with the analogous bound for $\Pr[B^Y]$, we obtain
    \begin{align*}
        \Pr[B] &\leq \binom{|X|}{D} \binom{d}{t-1}^D \left(\frac{2}{r}\right)^{(t-1)D} + \binom{|Y|}{D} \binom{d}{t-1}^D \left(\frac{2}{r}\right)^{(t-1)D} \\
        &\leq \binom{r}{D} \binom{d}{t-1}^D \left(\frac{2}{r}\right)^{(t-1)D} \tag{$\binom{a}{k}+\binom{b}{k} \leq \binom{a+b}{k}$} \\
        &\leq \left(\frac{er}{D}\right)^D \left(\frac{ed}{t-1}\right)^{(t-1)D}\left(\frac{2}{r}\right)^{(t-1)D} \tag{$\binom{n}{k} \leq \left(\frac{en}{k}\right)^k$}\\
        &\leq \left(\frac{er}{D}\left(\frac{2ed}{(t-1)r}\right)^{t-1}\right)^D \\
        &\leq \left(\frac{er}{D}\left(\frac{3ed}{tr}\right)^{t-1}\right)^D \tag{$t \geq 3$}
    \end{align*}
    
    Set $D = \ceil{e^3 r (3e^2d/tr)^{t-1}}$ so that the term above becomes sufficiently small. Plugging this into the above bound, we obtain
    \begin{align*}
        \Pr[B] &\leq \left(\frac{er}{\ceil{e^3 r (3e^2d/tr)^{t-1}}}\left(\frac{3ed}{tr}\right)^{t-1}\right)^D
        \leq \exp(-(t+1) D) \leq \exp(-(8d/r+1)D).
    \end{align*}
    The last inequality uses the assumption $d \leq tr/e^6$.

    Finally,
    \begin{align*}
        \Pr[E] &\geq \Pr[E \vee B] - \Pr[B] \\
        &\geq \exp(-8dD/r) - \exp(-(8d/r+1)D) \\        
        &\geq (1-\exp(-D))\exp(-8dD/r) \\
        &\geq (1-1/e)\exp\left(-\frac{8d}{r} \cdot 2e^3 r  \left(\frac{3e^2d}{tr}\right)^{t-1}\right) \\
        &\geq (1-1/e)\exp\left(-16e^3d \left(\frac{3e^2d}{tr}\right)^{t-1}\right).
    \end{align*}
    In the inequality where we substituted for $D$, we used that $D = \ceil{e^3 r (3e^2d/tr)^{t-1}} \leq 2e^3 r(3e^2d/tr)^{t-1}$ which follows from $d \geq tr^{1-\frac{1}{t-1}}/3e$.

    The only thing left to verify is that this choice of $D$ satisfies $4(D-1)/r \leq 0.75$ which was used to lower bound $\Pr[E \vee B]$.
    \begin{align*}
        \frac{4(D-1)}{r} \leq \frac{4}{r}\cdot e^3 r\left(\frac{3e^2d}{tr}\right)^{t-1} \leq 4e^3 \left(\frac{3e^2}{e^6}\right)^{t-1} \leq 4e^3 (3e^{-4})^2 \leq 0.75.
    \end{align*}
    The second inequality above uses $d \leq tr/e^6$ and the third inequality uses $t \geq 3$ and $3e^{-4} < 1$. This finishes the proof of the case where $d \geq tr^{1-\frac{1}{t-1}}/3e$.

    The other case $d < tr^{1-\frac{1}{t-1}}/3e$ follows from a simple union bound. For a fixed bin $v$, the probability that it contains at least $t$ balls at the end is at most $\binom{d}{t}\left(\frac{2}{r}\right)^t \leq \left(\frac{2ed}{tr}\right)^t$ where $\binom{d}{t}$ counts the choices for $t$ iterations in which $v$ received a ball and $\frac{2}{r}$ is an upper bound on the probability that $v$ receives a ball in a given iteration conditioned on previous iterations. So the probability that no bin has at least $t$ balls is at least
    \begin{align*}
        1-r\left(\frac{2ed}{tr}\right)^t 
        \geq \exp\left(-2r\left(\frac{2ed}{tr}\right)^t\right) \geq \exp\left(-\frac{4ed}{t}\left(\frac{2ed}{tr}\right)^{t-1}\right) 
        \geq \exp\left(-\frac{4ed}{3}\left(\frac{3ed}{tr}\right)^{t-1}\right).
    \end{align*}
    In the last inequality, we used $t \geq 3$. For the first inequality, we need $r\left(\frac{2ed}{tr}\right)^t \leq 0.75$ which holds since $d < tr^{1-\frac{1}{t-1}}/3e$ as verified below
    \[
    r\left(\frac{2ed}{tr}\right)^t < r\left(\frac{2r^{-\frac{1}{t-1}}}{3}\right)^t \leq \left(\frac{2}{3}\right)^t r^{-\frac{1}{t-1}} \leq \left(\frac{2}{3}\right)^3 \leq 0.75. \qedhere
    \]
\end{proof}

\begin{lemma} \label{lem:multicoll_fail_prob}
There exist constants $c_1, \delta > 0$ such that the following holds. Let $T$ be a PDT of depth $d$. If $|A| \geq 2n/3$ and $4 \leq d \leq \delta tn$, then the probability that Algorithm \ref{alg:sim_multicoll} when run on $T$ does not return FAIL is at least $(1-2/e)\exp(-c_1d(18e^2d/tn)^{t-1})$. 
\end{lemma}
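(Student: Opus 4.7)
The plan is to view each nontrivial iteration of Algorithm \ref{alg:sim_multicoll} as one round of the bipartite matching-ball process from Lemma \ref{lem:multi_balls_prob}. Take $G_A$ to be the bipartite graph on vertex set $A$ with parts determined by Hamming-weight parity; an iteration that selects coordinate $j$ corresponds to choosing the matching $M_j = \{\{y, y^{\oplus j}\} : y, y^{\oplus j} \in A\}$ (whose isolated vertices are $A_j^{\mathrm{out}} = \{y \in A : y^{\oplus j} \notin A\}$), sampling $y \sim \mathrm{Unif}(A)$, and throwing a ball into each element of the connected component of $y$ in $M_j$. This is exactly what the else branch at Line \ref{algline:nontrivial} of the algorithm does, and the FAIL event is precisely ``some bin received $\geq t$ balls''. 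The adaptive choice of $j$ (depending on the PDT and past history) matches the adversarial matching choice allowed by Lemma \ref{lem:multi_balls_prob}.

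The first step is to control the number $N_b$ of nontrivial iterations. Because $|A^c| \leq n/3$, we have $|A_j^{\mathrm{out}}| \leq n/3$ and hence $\Pr[\text{stay} \mid \text{nontrivial iter}] \leq (n/3)/(2n/3) = 1/2$ uniformly in $j$ and history. Thus each nontrivial iteration is a down-move with conditional probability at least $1/2$, and since the PDT has depth $d$, the simulation must terminate after at most $d$ such down-moves. Applying Azuma--Hoeffding to the submartingale formed by the centered down-move indicators over $4d$ iterations yields $\Pr[N_b > 4d] \leq \exp(-d/2) \leq 1/e$ whenever $d \geq 4$.

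The second step extends the simulation to exactly $4d$ rounds (using arbitrary matchings after the simulation halts) and couples the two processes so they agree on the first $\min(N_b, 4d)$ iterations. On $\{N_b \leq 4d\}$ the actual collection $C$ is a prefix of the extended one, so ``no $t$-collision in the extended process'' implies ``no FAIL''. Lemma \ref{lem:multi_balls_prob} applied with $d' = 4d$ and $r = |A| \geq 2n/3$ (the hypothesis $4d \leq t|A|/e^6$ is guaranteed by choosing $\delta$ small) gives
\[
\Pr[\text{no } t\text{-collision in extended}] \geq (1 - 1/e)\exp\!\bigl(-c_1 d (18 e^2 d /(tn))^{t-1}\bigr),
\]
with $c_1 = 4c$, where $c$ is the constant from Lemma \ref{lem:multi_balls_prob}. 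Combining via $\Pr[\text{no FAIL}] \geq \Pr[N_b \leq 4d] \cdot \Pr[\text{no } t\text{-col in ext} \mid N_b \leq 4d] \geq (1 - 1/e)^2 \exp(\cdots) \geq (1 - 2/e) \exp(\cdots)$ then yields the claim.

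The main obstacle is lower bounding the conditional probability $\Pr[\text{no } t\text{-col in ext} \mid N_b \leq 4d]$ by essentially the unconditional bound $(1 - 1/e)\exp(\cdots)$: the two events depend on overlapping randomness (both use the sampled $y$'s), so a direct product bound is not automatic. I will address this by rerunning Lemma \ref{lem:multi_balls_prob}'s union-bound argument under the ball distribution conditional on $\{N_b \leq 4d\}$, using that both $|A_j^{\mathrm{in}}|$ and $|A_j^{\mathrm{out}}|$ remain within a constant factor of $|A|$ (each is at least $|A|/2 - o(|A|)$ once $|A| \geq 2n/3$), so the per-iteration per-bin probability bound $2/r$ used in that proof changes by at most a constant factor under conditioning; this slack is absorbable into $c_1$. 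Alternatively, one can sample the down/stay label sequence first (independent of ball locations given the label) and then the ball identities conditionally, making the two randomness sources separate.
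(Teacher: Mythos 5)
Your structural setup matches the paper's: you correctly translate the nontrivial iterations of Algorithm~\ref{alg:sim_multicoll} into the ball process of Lemma~\ref{lem:multi_balls_prob} on the subgraph of the hypercube induced by $A$, you correctly observe that $|A_j^{\mathrm{out}}|/|A| \le (n/3)/(2n/3) = 1/2$ so each nontrivial iteration is a down-move with conditional probability at least $1/2$, and you correctly run Lemma~\ref{lem:multi_balls_prob} for $4d$ rounds. Where you part ways with the paper is the final combining step, and this is a genuine gap that you yourself flag.

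You write $\Pr[\text{no FAIL}] \geq \Pr[N_b \leq 4d]\cdot\Pr[\text{no }t\text{-col in ext}\mid N_b \leq 4d]$ and then face the problem of bounding the conditional term, but neither of your proposed repairs closes it. The first one asserts that $|A_j^{\mathrm{out}}|$ is ``at least $|A|/2 - o(|A|)$,'' which is false in general (for instance $A = \B^l$ gives $A_j^{\mathrm{out}} = \emptyset$ for all $j$); more fundamentally, conditioning on the global event $\{N_b \le 4d\}$ couples iterations to one another and does not merely rescale per-iteration per-bin probabilities by a constant, so ``rerunning the union bound under the conditional distribution'' has no clean justification. The second one claims the label sequence is independent of the ball identities given the labels, but the matching chosen in iteration $i$ (the coordinate $j$, determined by the PDT) depends on the balls thrown in iterations $<i$, so the two randomness sources are not separable in the way you suggest without substantial additional bookkeeping.

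The clean fix --- and what the paper actually does --- is to replace the multiplicative decomposition by a union bound. Let $E_1$ be the event that no bin reaches $t$ balls during the first $4d$ iterations of the (extended) process and $E_2 = \{iter \le 4d\}$. Then $E_1 \cap E_2$ implies no FAIL, so
\[
\Pr[\text{no FAIL}] \ \geq\ \Pr[E_1 \cap E_2]\ \geq\ \Pr[E_1] - \Pr[E_2^c],
\]
and you never need to touch a conditional probability: $\Pr[E_1]$ is exactly what Lemma~\ref{lem:multi_balls_prob} with $4d$ rounds supplies, and $\Pr[E_2^c]\le \exp(-d/2)$ is the Chernoff bound you already computed (your Azuma version actually gives only $\exp(-d/8)$, which is too weak at $d=4$; use the Chernoff--Hoeffding form $\exp(-2t^2/n)$ with $t=d$, $n=4d$ as in the paper). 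One then checks, as the paper does by choosing $\delta$ small so that $4c(18e^2 d/tn)^{t-1} \le 1/4$ and using $d \ge 4$, that $\exp(-d/2)\le (1/e)\cdot(1-1/e)\exp(-4cd(18e^2d/tn)^{t-1})$, so the subtraction still lands at $(1-2/e)\exp(-4cd(18e^2d/tn)^{t-1})$.
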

\begin{proof}
    We need to lower bound the probability that there is no $t$-collision among the sets in $C$ when the simulation terminates. We will give a lower bound on the probability that there is no collision in $C$ and $iter \leq 4d$. It suffices to give an upper bound on $\Pr[iter > 4d]$ and a lower bound on the probability that there is no $t$-collision in $C$ in the first $4d$ iterations.

    To bound the probability that $iter > 4d$, we will use that in each of the first $4d$ iterations (if we have not already reached a leaf), after conditioning on the history, the probability that Line \ref{algline:trivial} or \ref{algline:notinPj} is executed is at least $1/2$. Whenever either of these lines is executed, we move down the PDT. Since the depth of the PDT is $d$, the only way we could not have reached a leaf after $4d$ iterations is if these lines were executed fewer than $d$ times during the first $4d$ iterations.

    Define for each $i \in [4d]$, the following random variables $X_i$ and $Y_i$ where $X_i$ encodes what happens to the state of the simulation ($L, F, v$ ) during the $i^{th}$ iteration and $Y_i$ is a binary random variable which is $1$ if the simulation ended before the $i^{th}$ iteration, or Line \ref{algline:trivial} or \ref{algline:notinPj} is executed and satisfied in the $i^{th}$ iteration. Note that $Y_i$ is determined by $X_1, \dots, X_i$.

    Conditioned on $X_1, X_2, \dots, X_{i-1}$ ($i \leq 4d$), we know (deterministically) that one of the following cases happens. 
    \begin{enumerate}
        \item There is no $i^{th}$ iteration
        \item The if clause following Line \ref{algline:trivial} is executed in the $i^{th}$ iteration
        \item The else clause Line \ref{algline:nontrivial} is executed in the $i^{th}$ iteration
    \end{enumerate}
    In the first two cases, we have $Y_i = 1$ by definition.
    
    So let us verify in the third case that $\Pr[Y_i = 1 \mid X_1, X_2, \dots, X_{i-1}] \geq 1/2$. Note that together $X_1, X_2, \dots, X_{i-1}$ determine $F, L$ in the beginning of the $i^{th}$ iteration. The probability that $y^{\oplus j} \notin A$ is at most $(n-|A|)/|A|$ since each such $y$ with $y^{\oplus j} \notin A$ corresponds to a unique used hole in $[n] \setminus A$. Since we assumed $|A| \geq 2n/3$, we get that this probability is at most $1/2$.     
    Thus we have $y^{\oplus j} \in A$ with probability at least $1/2$ in which case Line \ref{algline:notinPj} is executed.
    
    This implies $\E[\sum_{i = 1}^{4d} Y_i] \geq 2d$. We now use the Chernoff bound which applies to such  $Y_i$'s (see, for instance, \cite[Lemma 17.3]{mitzenmacher2017probability}) to conclude 
    \[
    \Pr[\sum_{i = 1}^{4d} Y_i < d ] \leq \exp(-2d^2/4d) = \exp(-d/2).
    \]

    We next give a lower bound on the probability that no $t$-collision is created in $C$ in the first $4d$ iterations. To do this, we will invoke Lemma \ref{lem:multi_balls_prob} with $|A|$ bins and $4d$ iterations. The graph $G$ will be the subgraph of the hypercube $\B^l$ induced by $A$. The condition $d \leq t|A|/e^6$ will be ensured by picking $\delta$ to be sufficiently small since $|A| \geq 2n/3$ and $d \leq \delta tn$.

    We now explain how to obtain a strategy for picking matchings in the random process of Lemma \ref{lem:multi_balls_prob}. We first see that the simulation gives a randomized strategy for picking matchings in the following way. If in an iteration, Line \ref{algline:trivial} is executed, then we do not throw any balls in that iteration. If Line \ref{algline:nontrivial} is executed, then we pick the matching corresponding to the $j^{th}$ direction in the subgraph of $\B^l$ induced by $A$. The set appended to $C$ corresponds to the collection of bins that received balls in this iteration and it is clear that the distribution of which bins receive balls in this iteration is the same as the distribution of the set appended to $C$. If two bins receive balls (corresponding to Line \ref{algline:notinPj}), then we flip a uniform random coin corresponding to $b$ in the simulation and branch accordingly (this is the only place where randomness is used by the strategy). We follow the simulation for the first $4d$ iterations in this way. If the simulation ends before $4d$ iterations, we do not throw any more balls in the random process. The probability that some bin receives at least $t$ balls is the probability that there is a $t$-collision in $C$ within the first $4d$ iterations of the simulation.

    By averaging, there is some way to fix the random choices in the above strategy to get a deterministic strategy for which the probability of not having an $t$-collision is at most the corresponding probability for the randomized strategy. Now Lemma \ref{lem:multi_balls_prob} implies that the probability that no bin contains at least $t$ balls is at least 
    \[(1-1/e)\exp(-c(4d)(3e^2\cdot 4d/t|A|)^{t-1}) \geq (1-1/e)\exp(-4cd(18e^2d/tn)^{t-1})\]
    where the inequality uses the assumption $|A| \geq 2n/3$. (Strictly speaking, the deterministic strategy above may have fewer than $4d$ iterations, but it is clear that this can only increase the probability of not having a $t$-collision.) Hence $ \Pr[\text{no } t\text{-collision in }C \text{ in first } 4d \text{ iterations}] \geq (1-1/e)\exp(-4cd(18e^2d/tn)^{t-1})$.
    
By the union bound, 
\begin{align*}
    &\Pr[\text{no $t$-collision in }C \text{ and } iter \leq 4d] \\
    & \geq \Pr[\text{no $t$-collision in }C \text{ in first } 4d \text{ iterations}]
    - \Pr[iter \geq 4d] \\
    & \geq (1-1/e)\exp(-4cd(18e^2d/tn)^{t-1}) - \exp(-d/2) \\
    & \geq (1-2/e)\exp(-4cd(18e^2d/tn)^{t-1})
\end{align*}
where the last inequality holds if we ensure that $d/2 \geq 4cd(18e^2d/tn)^{t-1} + 1$. This holds if $d/4 \geq 1$ and $d/4 \geq 4cd(18e^2d/tn)^{t-1}$. The condition $d/4 \geq 1$ holds by assumption. For the other condition, it suffices to ensure $1/4 \geq 4c(18e^2d/tn)^{t-1}$. For $\delta$ sufficiently small, specifically $\delta \leq \min\{1/18e^2, 1/400e^2c\}$, this holds
\[
4c(18e^2d/tn)^{t-1} \leq 4c (18e^2\delta)^{t-1} \leq 4c (18e^2 \delta) \leq 1/4.
\]
The first inequality used the assumption $d \leq \delta tn$. For the second inequality, we used $18e^2 \delta \leq 1$ and $t \geq 2$.
\end{proof}

\begin{lemma} \label{lem:not_sink}
    Let $v, C, L, F$ be returned by a successful run of Algorithm \ref{alg:sim_multicoll}. Suppose $|A| \geq 3$. For every set $I \subseteq [m]$ of size $t$, there exists $x \in A^m$ such that $x_{i_1} \neq x_{i_2}$ for some $i_1, i_2 \in I$ and $x$ reaches the leaf $v$.
\end{lemma}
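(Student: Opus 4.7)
The plan is to use the invariants from Lemma~\ref{lem:inv} to parameterize all inputs reaching $v$ by free-block assignments $z \in A^F$, and then split into three cases depending on how many elements of $I$ are still free. The first observation is that Lemma~\ref{lem:inv} already gives everything structural that I need: for any $z \in A^F$, extending $z$ to the fixed blocks via $L$ produces some $x \in A^m$ that satisfies every parity equation on the root-to-$v$ path, and therefore reaches $v$ in $T$.

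Next I would establish the bridge between the ``no $t$-collision in $C$'' success condition and the actual values of $x$, namely that for each fixed block $i \in [m]\setminus F$ and every such extension, the resulting $x_i$ lies in $C_i$. This is a direct bookkeeping check on Algorithm~\ref{alg:sim_multicoll}: if $i$ was dropped from $F$ via Line~\ref{algline:notinA}, $L$ pins $x_i$ to the single hole $y$ while $C_i = \{y\}$; if it was dropped via Line~\ref{algline:notinPj}, $L$ pins $x_{i,h} = y_h$ for all $h \neq j$, leaving only $y$ and $y^{\oplus j}$ as possibilities for $x_i$, matching $C_i = \{y, y^{\oplus j}\}$.

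With this in hand, I split on $|I \cap F|$. If $|I \cap F| \ge 2$, take any two $i_1, i_2 \in I \cap F$ and any $z \in A^F$ with $z_{i_1} \neq z_{i_2}$ (possible since $|A| \ge 3$), then extend via $L$. If $|I \cap F| = 1$ with unique free element $i^*$, pick any other $i \in I$ (necessarily in $[m]\setminus F$), freeze $z$ on $F \setminus \{i^*\}$ arbitrarily, and let $z_{i^*}$ range over the $\ge 3$ elements of $A$: then $x_{i^*} = z_{i^*}$ takes $|A| \ge 3$ distinct values while $x_i \in C_i$ admits at most $2$, so some choice of $z_{i^*}$ separates them. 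Finally, if $|I \cap F| = 0$, then $I \subseteq [m]\setminus F$ and the success condition gives $\bigcap_{i \in I} C_i = \emptyset$; any $z \in A^F$ then yields an $x$ whose restriction $\{x_i\}_{i \in I}$ cannot be constant, since a common value would have to lie in $\bigcap_{i \in I} C_i$. The only non-routine step is verifying the ``$x_i \in C_i$'' claim; once that is in place, the three cases are essentially immediate from Lemma~\ref{lem:inv} and $|A| \ge 3$.
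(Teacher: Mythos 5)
Your proof is correct and follows essentially the same approach as the paper's: reduce via Lemma~\ref{lem:inv} to finding a permutation of the free blocks whose $L$-extension separates two elements of $I$, and split on $|I \cap F|$. The only cosmetic difference is in the middle case, where you argue by pigeonhole (three choices of $z_{i^*}$ versus at most two values for $x_i$) whereas the paper directly picks $x_{i_1}$ to avoid the at-most-two candidate strings for the fixed block; both rely on the same observation that each fixed block is pinned to the pair (or singleton) appended to $C$ when it was fixed.
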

\begin{proof}
    By Lemma \ref{lem:inv}, $L$ implies all the parity constraints on the path from the root to $v$. So it is sufficient to find $x \in A^m$ satisfying $L$ for which $x_{i_1} \neq x_{i_2}$ for some $i_1, i_2 \in I$. We consider cases depending on how many blocks in $I$ are free.

    \begin{enumerate}
        \item If $I$ contains at least two blocks in $F$, say $i_1$ and $i_2$, set $x_{i_1}$ and $x_{i_2}$ to distinct strings from $A$, set all other blocks in $F$ to arbitrary strings in $A$ and fix blocks outside $F$ according to $L$. By Lemma \ref{lem:inv}, such a string $x$ is unique and lies in $A^m$.

        \item If $I$ contains only one block in $F$, say $i_1$, we pick $i_2 \in I$ to be some fixed block. Consider the possible strings that $x_{i_2}$ can be according to $L$. Since $i_2 \notin F$, $L$ fixes at least $l-1$ bits of $x_i$. So there are at most two such strings and there is some other string in $A$ since $|A| \geq 3$. Set $x_{i_1}$ to such a string. Set all other blocks in $F$ to arbitrary strings in $A$ and fix blocks outside $F$ according to $L$.

        \item If no string in $I$ lies in $F$, we set blocks in $F$ to arbitrary strings in $A$ and fix blocks outside $F$ according to $L$. Since this was a successful run, we must have $x_{i_1} \neq x_{i_2}$ for some $i_1, i_2 \in I$. If this was not the case, then the corresponding sets in $C$ during the simulation would have non-empty intersection and the simulation would have failed. \qedhere
    \end{enumerate}
\end{proof}

\subsection{Affine DAGs for multicollision-finding}
\label{subsec:adags_tcoll}

The following simple lemma will let us upper bound the rank of a linear system in terms of the probability that it is satisfied under the distribution which is uniform on $A^m$. A slightly weaker statement can alternatively be obtained by using a related lemma for balanced distributions from \cite{bhat2024}.

\begin{lemma} \label{lem:rank_prob}
    Let $\Psi$ be a linear system on $(\B^l)^m$ whose rank is $r$. Let $A \subseteq \B^l$ be such that $|A| \geq 2n/3$. Let $\mu$ be the uniform distribution on $A^m$.
     Then 
    \[
    \Pr_{x \sim \mu}[x \text{ satisfies } \Psi] \leq \left(\frac{3}{4}\right)^{r}.
    \]
\end{lemma}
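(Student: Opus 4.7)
The plan is to row-reduce $\Psi$, condition on the non-pivot coordinates, and then exploit block independence together with a direct counting bound per block.

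First I put $\Psi$ in reduced row echelon form, obtaining $r$ linearly independent equations with distinct pivot positions $(i_k, j_k) \in [m] \times [l]$, $k \in [r]$. In RREF, the $k$-th equation has the form $x_{i_k, j_k} + w_k = b_k$, where $w_k$ only involves \emph{non-pivot} variables. Let $I = \{i_1, \dots, i_r\} \subseteq [m]$ be the set of touched blocks and, for each $i \in I$, let $J_i = \{j : (i, j) \text{ is a pivot}\}$; then $\sum_{i \in I} |J_i| = r$ and $|I| \le r$.

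Next I condition on the values of all non-pivot variables $x_{i, h}$ with $(i, h) \notin \{(i_k, j_k)\}_k$. Each $w_k$ is now a determined constant, so every equation collapses to $x_{i_k, j_k} = c_k$ for some fixed bit $c_k$. Since $\mu$ is a product distribution over blocks, and the conditioning only fixes within-block information, the distributions on the pivot variables across different blocks remain independent. Taking expectations and using block independence gives
\[
\Pr_{x \sim \mu}[x \text{ satisfies } \Psi] \;=\; \prod_{i \in I}\ \Pr_{y \sim \mathrm{Unif}(A)}\bigl[\, y_j = c_{i,j} \text{ for all } j \in J_i \bigr].
\]

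For each $i \in I$, the set of $y \in \{0,1\}^l$ matching the prescribed bits on $J_i$ has size at most $2^{l - |J_i|}$, so each per-block probability is at most $2^{l-|J_i|}/|A| \le 3/2^{|J_i|+1}$ using $|A| \ge 2n/3$. Multiplying over $i \in I$ yields the upper bound $3^{|I|}/2^{r+|I|}$, and since $|I| \le r$ this is at most $(3/4)^r$ (equivalently, $(2/3)^{r-|I|} \le 1$). The only real obstacle is to set up the conditioning cleanly so that block independence applies; once RREF is invoked, the per-block counting bound is immediate and the inequality $|I| \le r$ closes out the product.
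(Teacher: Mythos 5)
There is a genuine gap at the displayed equality and in the per-block bound you derive from it. You condition on \emph{all} non-pivot variables, which includes the non-pivot coordinates $x_{i,h}$, $h \notin J_i$, \emph{inside} each block $i \in I$. After that conditioning, the residual distribution of $x_{i, J_i}$ is the distribution of $y_{J_i}$ for $y$ uniform on $A$ \emph{conditioned on} $y_{[l]\setminus J_i}$ equalling the already-revealed non-pivot bits of block $i$. That conditional probability is not the marginal $\Pr_y[y_{J_i} = c_i]$ and can equal $1$: if the revealed non-pivot bits of block $i$ already pin down a unique member of $A$, the pivot bits are deterministic, and the per-block factor is $1$, not $\le 2^{l-|J_i|}/|A|$. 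A second problem is that $c_i$ is an affine function of the non-pivot variables, which can include block $i$'s own non-pivot bits as well as other blocks' non-pivot bits, so the $c_i$'s are coupled through the shared conditioning and the product cannot be pulled out of the expectation over those variables.

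A concrete failure: take $l = 2$, $n = 4$, $A = \{00, 01, 10\}$ (so $|A| = 3 \ge 2n/3$), $m = 1$, and $\Psi = \{x_{1,1} + x_{1,2} = 1\}$, of rank $r = 1$. Directly $\Pr_\mu[\Psi] = 2/3$. In RREF the pivot is $(1,1)$ and $c_1 = x_{1,2} + 1$. Conditioning on $x_{1,2} = 1$ forces $x_1 = 01 \in A$, so the conditional probability of the pivot constraint is $1$, exceeding your claimed per-block cap $2^{l - |J_1|}/|A| = 2/3$. The displayed ``equality'', read with the expectation over the non-pivot bit, gives $\tfrac{2}{3}\cdot\tfrac{1}{3} + \tfrac{1}{3}\cdot\tfrac{2}{3} = \tfrac{4}{9}$, which is neither equal to nor an upper bound for the true $\tfrac{2}{3}$.

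The fix is essentially the paper's argument: condition only on the blocks \emph{outside} $I$, leaving all of $x_I$ free. The conditional distribution of $x_I$ is then exactly uniform on $A^{|I|}$, and the restricted system is a rank-$r$ affine system over the $|I|\cdot l$ remaining variables. Its solution set inside $(\B^l)^{|I|}$ has size exactly $n^{|I|}/2^r$, so the conditional probability of landing in it when drawing from $A^{|I|}$ is at most $(n^{|I|}/2^r)/|A|^{|I|} \le (3/2)^{|I|}/2^r \le (3/4)^r$, using $|I| \le r$. This recovers the same numerical bound $\prod_{i \in I} 2^{l-|J_i|}/|A|$ that you wrote down, but it does so by counting solutions inside $A^{|I|}$ rather than via per-block marginals --- which is exactly the step that does not survive the finer within-block conditioning.
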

\begin{proof}
    Fix a collection of $r$ linearly independent columns of the coefficient matrix of $\Psi$. Let $B$ be the set of blocks containing the corresponding variables. We have $b\coloneqq|B| \leq r$. Condition on all blocks outside $B$. For any possible assignment to all blocks outside $B$, the number of solutions for the resulting system $\Psi'$ whose rank is $r$ and depends on $b$ blocks ($bl$ variables) is $\frac{n^{b}}{2^r}$. In particular, the number of such solutions from $A^b$ is at most $\frac{n^{b}}{2^r}$. By assumption, $|A|^b \geq (2n/3)^b$. So the conditional probability is at most $\frac{n^{b}}{2^r}/(2n/3)^b = 1.5^b /2^r \leq (3/4)^r$. Since this holds for all possible assignments to the blocks outside $B$, we have the bound $\Pr_{x \sim \mu}[x \text{ satisfies } \Psi] \leq \left(\frac{3}{4}\right)^{r}$ as desired. 
\end{proof}

The next lemma will be used between random walks to obtain an affine restriction which fixes few blocks to distinct strings and implies the linear system at the node reached at the end of the random walk.

\begin{lemma} \label{lem:aff_restrict}
    Let $v, C, L, F$ be returned by a successful run of Algorithm \ref{alg:sim_multicoll} when run on $T$ and $A \subseteq \B^l$. Let $\Psi$ be a linear system which is implied by $L$. Let $U = \B^l \setminus A$.
    Let $r$ be the rank of $\Psi$. If $|U| + 2r \leq n/2$, then there exists an affine restriction $\rho$ fixing blocks $[m]\setminus F'$ and satisfying the following conditions:
    \begin{enumerate}
        \item The number of blocks fixed by $\rho$, say $s$, is at most $r$.
        \item $\Psi$ is implied by $\rho$. 
        \item There exists a set $U' \subseteq A$ such that $|U'| \leq 2s$ 
        and for any assignment to $F'$, if we set the blocks $[m]\setminus F'$ according to $\rho$, all the strings assigned to $[m]\setminus F'$ lie in $U'$ and there is no $t$-collision among the blocks in $[m] \setminus F'$.
    \end{enumerate}
\end{lemma}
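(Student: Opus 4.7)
The plan is to invoke the closure machinery of \cite{egi24}. Let $K := \cl(\Psi)$. After suitable row operations, write $\Psi = \Psi_1 \cup \Psi_2$ where $L(\Psi_1) \subseteq \cL_K$ is supported on $K$ and $\Psi_2[\setminus K]$ is safe; by \cite{egi24}, $|K| \leq \rk(\Psi_1)$ and hence $|K| + \rk(\Psi_2[\setminus K]) \leq \rk(\Psi) = r$. The affine restriction $\rho$ will be assembled from two pieces: a bit-fixing $\sigma$ on $K$ that satisfies $\Psi_1$, and a block-respecting affine restriction $\rho'$ on a further set $K'$ arising from the safe part.

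For the safe side, apply Theorem \ref{lem:safe} to $\Psi_2[\setminus K]$ to pick pivot variables $\{(i_t, j_t)\}_{t \in [s']}$ in distinct blocks $K' = \{i_1, \dots, i_{s'}\}$, with $s' = \rk(\Psi_2[\setminus K])$. Row-reducing $\Psi_2$ expresses each $x_{i_t, j_t}$ as an affine function of the other variables. Fixing the $l-1$ non-pivot bits of each $i_t \in K'$ to constants then pins the block to one of two strings in a pair $\{y_{i_t}^0, y_{i_t}^1\}$ differing only in coordinate $j_t$, the choice being driven by the free blocks in $F' := [m] \setminus (K \cup K')$.

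For the closure side, I would first reserve $2|K'|$ fresh strings in $A$ to host these pairs and then look for a bit-fixing $\sigma$ on $K$ that satisfies $\Psi_1$, uses only strings from $A$ disjoint from the reserved $2|K'|$, and assigns distinct strings to the blocks of $K$. This is where the successful run of Algorithm \ref{alg:sim_multicoll} is used: by Lemma \ref{lem:inv}, $L$ fully determines $x_i$ for $i \in [m]\setminus F$ as functions of $F$ with all assigned strings in $A$; by Lemma \ref{lem:not_sink} and the fact that $C$ contains no $t$-collision, any $L$-consistent extension avoids $t$-collisions among the fixed blocks. Since $\Psi$ is implied by $L$, every such extension satisfies $\Psi_1$. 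A greedy/counting argument over the choice of free-block values then produces the desired $\sigma$: the numerical slack $|A| \geq n - |U| \geq n/2 + 2r$ (from $|U|+2r \le n/2$), together with $|K| + 2|K'| \leq 2r$, leaves enough room to obtain distinctness and disjointness from the reserved pair-strings.

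Finally, set $\rho := \sigma \cup \rho'$, where $\rho'$ is the non-pivot bit-fixing on $K'$ combined with the pivot-rewriting from $\Psi_2$. Then $\rho$ fixes exactly $s := |K| + |K'| \leq r$ blocks (giving condition~1); $\rho$ implies $\Psi$ since $\sigma \models \Psi_1$ and $\rho'$ encodes the pivot relations of $\Psi_2$ after substituting $\sigma$ (condition~2); and with $U' := \sigma(K) \cup \bigcup_{t} \{y_{i_t}^0, y_{i_t}^1\} \subseteq A$ of size at most $|K| + 2|K'| \leq 2s$, every realization of $F'$ sends each fixed block into $U'$ and, by pairwise disjointness of the assigned strings across blocks, no $t$ fixed blocks coincide (condition~3). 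The main obstacle is the joint construction in the closure step: arranging $\sigma$ to be simultaneously $\Psi_1$-satisfying, contained in $A$, and disjoint from the reserved pair-strings for $K'$; the hypothesis $|U|+2r \leq n/2$ is precisely the numerical slack that makes this selection possible, while the $t$-anti-collision property transfers from $C$ via the successful simulation.
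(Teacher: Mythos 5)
Your overall decomposition is correct and matches the paper's: split $\Psi$ into the closure part $\Psi_1$ on $K = \cl(\Psi)$ and the safe part $\Psi_2[\setminus K]$, handle the closure by a bit-fixing restriction and the safe part by pinning each pivot block to a pair of strings via stifling, then combine. However, the order in which you carry out the two halves introduces a genuine gap.

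You propose to \emph{first} reserve $2|K'|$ fresh strings in $A$ for the safe pairs, and \emph{then} find a bit-fixing $\sigma$ on $K$ that satisfies $\Psi_1$, lands in $A$, is injective, and avoids the reserved strings. But $\sigma$ on $K\setminus F$ is not free: it is obtained by choosing free-block values and pushing through $L$, and by Lemma~\ref{lem:inv} each block $i \in K\setminus F$ is forced to take a value inside the specific set $\hat U_i$ (a singleton or a pair of strings that was appended to $C$ when that block was fixed during the simulation). If the $2|K'|$ strings you reserved happen to contain both elements of some pair $\hat U_i$ with $i \in K\setminus F$, there is no choice of free-block values that makes $\sigma(i)$ miss the reserved set — the block is trapped. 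Your budget count, $|K| + 2|K'| \le 2r$ against room $n/2 + 2r$, does not see this obstruction because it only charges one string per closure block, whereas the constraint set $\hat U$ can have up to $2|K\setminus F|$ elements and must be avoided when reserving the safe pairs (or, symmetrically, the safe pairs must be avoided when the closure values are determined, which is not possible for the forced part).

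The paper's proof avoids this by doing the two halves in the opposite order: it first defines $\hat U$, picks the closure assignment $\sigma_1$ (choosing values for $K \cap F$ from $\B^l \setminus (U \cup \hat U)$ and pulling the $K\setminus F$ values through $L$), and only then greedily constructs the safe pairs $\sigma_2$ avoiding $U$, the at-most-$|K|$ strings actually assigned by $\sigma_1$, and the previously chosen pairs. Your ordering is fixable — it suffices to look at $\hat U$ before reserving the safe pairs, and the arithmetic ($|U| + 2|K| + 2|K'| \le |U| + 2r \le n/2$) still gives slack — but as written the proposal quantifies over the reservations before the forced closure values are known, which is where the argument breaks.

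Two smaller remarks. First, the $t$-anti-collision among the fixed closure blocks comes from the \emph{success} of the simulation (the sets in the list $C$ have no $t$-wise intersection), not from Lemma~\ref{lem:not_sink}, which is about output correctness at leaves; citing the latter is not quite right. Second, to get the claimed bound $|U'| \le 2s$ one should count $|K|$ strings from $\sigma(K)$ plus $2|K'|$ strings from the pairs, and bound $|K| + 2|K'| \le 2(|K| + |K'|) = 2s$ — you state essentially this, so the final counting is fine once the ordering is repaired.
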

\begin{proof}
Consider the blocks in $\cl(\Psi) \cap F$. Let $\hat{U}$ be the set of all possible holes used by the blocks in $\cl(\Psi) \setminus F$ during this run of the simulation. Since we assign at most two holes to each fixed block during the simulation, $|\hat{U}| \leq 2|\cl(\Psi) \setminus F|$. Assign distinct strings from $\B^l \setminus (U \cup \hat{U})$ to the blocks in $\cl(\Psi) \cap F$. This can be done if $|\cl(\Psi) \cap F| \leq n - (|U| + 2|\cl(\Psi) \setminus F|)$. This holds since $|U| + 2|\cl(\Psi) \setminus F| + |\cl(\Psi) \cap F| \leq |U| + 2(|\cl(\Psi) \setminus F| + |\cl(\Psi) \cap F|) \leq |U| + 2r \leq n/2$ where the second inequality used Lemma \ref{lem:size_bound}. 

Now assign all blocks in $F \setminus \cl(\Psi)$ arbitrarily and extend to a full assignment $x$ by fixing all blocks in $[m] \setminus F$ according to $L$. By Lemma \ref{lem:inv}, there is a unique such extension which also ensures that all fixed blocks lie in $A$. Since $L$ implies $\Psi$, $x$ also satisfies $\Psi$.

This means that the partial assignment to blocks in $\cl(\Psi)$ which agrees with $x$ satisfies all equations in $\langle \Psi \rangle$ whose support is contained in $\cl(\Psi)$. Let $\sigma_1$ denote this bit-fixing restriction which sets blocks in $\cl(\Psi)$ according to $x$. Since we are considering a successful run of the algorithm, there is no $t$-collision among the blocks in $\cl(\Psi) \setminus F$, which are assigned strings from $\hat{U}$. Also note that $\hat{U}$ does not contain any strings from $U$ by Lemma \ref{lem:inv}. Moreover, we ensured above that each block in $\cl(\Psi) \cap F$ is not assigned a string from $U \cup \hat{U}$. So $\sigma_1$ fixes blocks in $\cl(\Psi)$ to strings in $A$ such that there is no $t$-collision among them. Let $U_1$ denote the set of strings assigned to $\cl(\Psi)$ by $\sigma_1$.

Now consider $\Psi' \coloneqq \Psi|_{\sigma_1}$. Since $\sigma_1$ fixes blocks in $\cl(\Psi)$ to bits, $\Psi'$ is safe. By Theorem \ref{lem:safe}, there exist $\rk(\Psi')$ variables lying in distinct blocks such that the corresponding columns in $\Psi'$ are linearly independent. Let $X$ denote the set of these variables and $B$ the set of blocks containing $X$. We will now define a restriction $\sigma_2$ which fixes all variables in blocks $B$ except those in $X$ to bits. We wish to ensure that irrespective of how the bits in $X$ are set, $\sigma_2$ guarantees that all blocks in $B$ are assigned distinct strings from $A \setminus U_1$.

We do this inductively, considering all blocks in $B$ in any order and for each block, picking an assignment to its $l-1$ variables that do not belong to $X$ in such a way that the two possible strings do not agree with any of the previously assigned strings. This can be done as long as $n/2 > |U| + |\cl(\Psi)| + 2(\rk(\Psi') - 1)$. To see this, note that the left side denotes the number of assignments to $l-1$ bits. The right hand side denotes the maximum number of forbidden assignments: there are $|U|$ forbidden holes from before, at most $|\cl(\Psi)|$ holes assigned to blocks in the closure by $\sigma_1$ and $2$ forbidden assignments for each block in $B$ which we have already fixed. This condition is satisfied since we have $|U| + |\cl(\Psi)| + 2(\rk(\Psi') - 1) < |U| + 2(|\cl(\Psi)| + \rk(\Psi')) \leq  |U| + 2r \leq n/2$ where the second inequality used Lemma \ref{lem:size_bound}. So such a restriction $\sigma_2$ exists. Let $U_2 \subseteq A \setminus U_1$ denote the strings assigned to the blocks $B$ by $\sigma_2$. Since we assign two strings to each block in $B$ and these are different for different blocks in $B$, we have $|U_2| = 2|B|$.

Finally consider the system $\Psi'|_{\sigma_2}$. Since the restriction $\sigma_2$ only fixed variables in blocks $B$ outside $X$ to bits, $X$ is still a collection of $\rk(\Psi')$ variables whose corresponding columns are linearly independent. So we can solve for each variable in $X$ to express it as an affine function depending only on variables in blocks outside $\cl(\Psi) \cup B$. This gives us the desired affine restriction $\rho$ once we combine with the restrictions $\sigma_1$ and $\sigma_2$.

Let us verify that this affine restriction satisfies the desired properties. The number of blocks fixed is $s = |\cl(\Psi)| + \dim(L(\Psi)[\setminus \cl(\Psi)]) \leq r$ by Lemma \ref{lem:size_bound}. 

To see that $\rho$ 
implies $\Psi$, first note that all equations supported on the closure of $\Psi$ are satisfied by the restriction $\sigma_1$ above and $\rho$ is an extension of $\sigma_1$. Next we need to check that $\Psi |_{\sigma_1}$ is implied by $\rho$. This was ensured since $\Psi|_{\sigma_1}$ is implied by $\sigma_2$ and $\Psi'|_{\sigma_2}$ above which are contained in $\rho$.

For the third point, we set $U' = U_1 \cup U_2$. 
So  $|U| = |U_1| + |U_2| \leq |\cl(\Psi)| + 2\rk(\Psi') \leq 2|\cl(\Psi)| + 2\rk(\Psi') = 2s$. 
We also ensured above that no matter how the bits in $X$ are assigned, there is no $t$-collision among the blocks $\cl(\Psi) \cup B$.
\end{proof}

Next we prove our main lemma which performs a random walk to find a node far from the root where the DAG has not made much progress.

\begin{lemma} \label{lem:dag_restrict}
    There exist constants $c, \delta > 0$ such that the following holds. Let $A \subseteq \B^l$ with $|A| \geq 2n/3$. Suppose there is an affine DAG $C$ solving $t$-$\coll_A^m$ whose depth is at most $D$ and size is at most $S$. Suppose $\log S \leq \delta n$. Set $d \coloneqq \floor{c tn^{1-1/t}(\log S)^{1/t}}$. There exist $m' \geq m - O(\log S)$ and $A' \subseteq A$ with $|A'| \geq |A| - O(\log S)$ such that there exists an affine DAG $C'$ solving $t$-$\coll_{A'}^{m'}$ whose depth is at most $D - d$ and size is at most $S$.
\end{lemma}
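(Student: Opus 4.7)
The plan is to use a random walk to locate a node $v^*$ of $C$ at depth $d$ with $\rk(\Phi_{v^*}) = O(\log S)$, and then apply Lemma~\ref{lem:aff_restrict} to obtain an affine restriction $\rho$ that implies $\Phi_{v^*}$ while fixing only $O(\log S)$ blocks and removing only $O(\log S)$ strings from $A$. Restricting the sub-DAG of $C$ rooted at $v^*$ by $\rho$ will then give the desired $C'$.

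To locate $v^*$, unfold the top $d$ levels of $C$ into a depth-$d$ PDT $T$, labeling each leaf by the node of $C$ at which its root-to-leaf path ends (so the labels range over the $\leq S$ nodes of $C$ at depth $d$). Run Algorithm~\ref{alg:sim_multicoll} on $T$ with $A$. Using $d = \lfloor c\, t n^{1-1/t}(\log S)^{1/t}\rfloor$, a direct calculation gives
\[
d\bigl(\tfrac{18 e^2 d}{tn}\bigr)^{t-1} \;=\; c\, t\, (18 e^2 c)^{t-1}\,\log S,
\]
so choosing $c$ small enough that $18 e^2 c \leq 1/2$ (and $c \leq \delta$ from Lemma~\ref{lem:multicoll_fail_prob}) makes the right-hand side $O(\log S)$ uniformly in $t \geq 2$, and hence the simulation succeeds (does not return FAIL) with probability $p^* \geq S^{-\alpha}$ for some absolute constant $\alpha$. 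By Lemma~\ref{lem:rpdt_dist}, for every leaf of $T$ the simulation's reach probability equals the $\mu$-reach probability, where $\mu = \mathrm{Unif}(A^m)$; averaging over the $\leq S$ distinct labels yields a node $v^*$ of $C$ at depth $d$ reached by successful runs with probability $\geq p^*/S$. Since the path-system to any leaf labeled $v^*$ implies $\Phi_{v^*}$ (Lemma~\ref{lem:path_implies}), we get $\Pr_\mu[x \text{ satisfies } \Phi_{v^*}] \geq p^*/S$, and Lemma~\ref{lem:rank_prob} forces $\rk(\Phi_{v^*}) = O(\log S)$.

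Fix any successful simulation trace reaching a leaf labeled $v^*$; by Lemma~\ref{lem:inv} its accumulated system $L$ implies the path constraints and hence $\Phi_{v^*}$. Apply Lemma~\ref{lem:aff_restrict} with this trace and $\Psi = \Phi_{v^*}$: its hypothesis $|U| + 2r \leq n/2$ holds because $|U| \leq n/3$ (from $|A| \geq 2n/3$) and $r = O(\log S) \leq \delta n$ for $\delta$ small. This produces an affine restriction $\rho$ fixing $s \leq r$ blocks, implying $\Phi_{v^*}$, and a set $U' \subseteq A$ of size $\leq 2s$ such that the fixed blocks take values only in $U'$ with no $t$-collision among themselves. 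Let $\hat C$ be the sub-DAG of $C$ rooted at $v^*$ (depth $\leq D - d$, size $\leq S$), and set $C' = \hat C|_\rho$ with $m' = m - s$, $A' = A \setminus U'$; then $m' \geq m - O(\log S)$ and $|A'| \geq |A| - O(\log S)$. For correctness on any $y \in (A')^{F'}$ where $F' = [m]\setminus \fix(\rho)$, the extension $x = y \cup \rho(y)$ lies in $A^m$ (values drawn from $A' \cup U' \subseteq A$) and satisfies $\Phi_{v^*}$, so $x$ reaches $v^*$ in $C$ and $\hat C$ outputs a $t$-collision $I$ in $x$; by property~3 of Lemma~\ref{lem:aff_restrict} fixed blocks take values in $U' \subseteq A \setminus A'$ while free blocks take values in $A'$, so $I$ cannot mix fixed and free blocks and cannot lie entirely within $\fix(\rho)$, giving $I \subseteq F'$ as required (output labels at leaves of $C'$ unreachable under the promise can be relabeled arbitrarily). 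The main obstacle is the uniform-in-$t$ calibration of $c$: choosing a single absolute constant $c$ in the definition of $d$ that defeats the $(18 e^2)^{t-1}$ blow-up in the simulation's failure probability for every $t \geq 2$.
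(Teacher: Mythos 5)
Your proposal follows the paper's plan closely: unfold $C$ to a depth-$d$ PDT $T$, run Algorithm~\ref{alg:sim_multicoll}, use Lemmas~\ref{lem:rpdt_dist} and~\ref{lem:rank_prob} to locate a low-rank node via averaging, apply Lemma~\ref{lem:aff_restrict}, and restrict the sub-DAG. Your calibration $d\bigl(\tfrac{18e^2 d}{tn}\bigr)^{t-1} = ct(18e^2 c)^{t-1}\log S$ and the choice $18e^2 c \leq 1/2$ giving $t(18e^2 c)^{t-1} \leq 1$ uniformly for $t \geq 2$ correctly resolves the ``obstacle'' you flag at the end, and matches the role of the paper's constraint ``$t(18e^2 c)^{t-1}$ bounded above by a constant.''

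The genuine gap: you never rule out that the located node $v^*$ is a \emph{sink} of $C$ reached at depth $< d$. Your parenthetical ``the labels range over the $\leq S$ nodes of $C$ at depth $d$'' is unjustified, because root-to-leaf paths in $T$ that hit a sink of $C$ terminate early; the averaging could then hand you a sink. In that case there is no sub-DAG $\hat C$ to restrict, and the depth bound $\leq D - d$ for $\hat C$ has no basis since it relies on $v^*$ lying at depth exactly $d$. The paper closes this with Lemma~\ref{lem:not_sink}: if a successful simulation run ends at a depth-$<d$ leaf, $L$ implies the sink's system, and Lemma~\ref{lem:not_sink} produces an $x$ consistent with $L$ on which the sink's fixed output label is not a valid $t$-collision, contradicting correctness of $C$; hence successful runs only end at depth $d$. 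You need this invocation (or an equivalent argument). A secondary imprecision that does not affect correctness: ``$x$ reaches $v^*$ in $C$'' is not the right claim --- satisfying $\Phi_{v^*}$ only places $x$ in the affine subspace at $v^*$; what you actually use is that running $\hat C|_\rho$ from its root leads to a sink whose system $x$ satisfies and whose output is therefore sound.
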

\begin{proof}
    Let $T$ be the depth $d$ PDT obtained by starting at the root of the DAG $C$, repeating nodes at depth at most $d$ if required and removing any nodes beyond depth $d$. 
    Run Algorithm \ref{alg:sim_multicoll} on $T$ and $A$. By Lemma \ref{lem:multicoll_fail_prob}, it succeeds with probability $p$ at least $S^{-O(1)}$ if $c$ is sufficiently small so that $t(18e^2c)^{t-1}$ is bounded from above by a constant. The assumptions on $d$ for Lemma \ref{lem:multicoll_fail_prob} are satisfied as explained next. 
    For $\delta \leq 1$, the assumption $\log S \leq \delta n$ implies $(\log S)^{1/t} \leq n^{1/t}$ which combined with $c$ being sufficiently small implies the required upper bound on $d$. The assumption $\log S \leq \delta n$ also implies $n \geq 1/\delta$ which in turn implies $d \geq 4$ if $\delta$ is sufficiently small. Since the affine DAG has size at most $S$, there must be a node $w$ in the DAG such that the leaves in $T$ corresponding to $w$ are successfully reached with probability  at least $p/S \geq S^{-O(1)}$. By Lemma \ref{lem:rpdt_dist}, this is also a lower bound on the probability that when $x$ is picked uniformly at random from $A^m$ and we follow the path taken by $x$ in the DAG $C$ starting at the root, the path reaches the node $w$.

    Let $\Phi$ denote the linear system at $w$ in the DAG $C$. By Lemma \ref{lem:rank_prob} and the $S^{-O(1)}$ lower bound on the probability of $\Phi$ being satisfied by a random $x \in A^m$, we get that the rank $r$ of $\Phi$ is at most $O(\log S)$. Now fix a successful run of the simulation which ends at a leaf $v$ in $T$ corresponding to $w$. Let $L$ be the linear system returned by this run. By Lemma \ref{lem:inv}, we have that $L$ implies all parity constraints on some path $p$ from the root to $w$. By Lemma \ref{lem:path_implies}, the system of parity constraints on path $p$ implies $\Phi$. Combining these, we get that $L$ implies $\Phi$. 
    Apply Lemma \ref{lem:aff_restrict} to this successful run and the linear system $\Phi$ to obtain $\rho$, $s$ and $U' \subseteq A$. Here $\rho$ is an affine restriction fixing $s$ blocks as affine functions of the other $m - s$ blocks. To apply Lemma \ref{lem:aff_restrict}, note that the condition $n-|A| + r \leq n/2$ holds  since $|A| \geq 2n/3$ and we showed above $r \leq O(\log S)$ which can be made at most $n/6$ by picking $\delta$ to be small enough.

    Set $m' = m - s$ and $A' = A \setminus U'$. Consider $C|_\rho$ which solves $t$-$\coll_A^m|_\rho$. Note that since $\rho$ 
    implies $\Phi$ (Lemma \ref{lem:aff_restrict}), $\Phi|_\rho$ is the empty system. So the node $w$ in $C|_\rho$ is now labeled by the empty system and we can consider the affine DAG $C'$ consisting only of nodes reachable from $w$. $C'$ still solves $t$-$\coll_A^m|_\rho$. 
    
    We claim that $C'$ solves $t$-$\coll_{A'}^{m'}$ after some minor modifications. We first modify $C'$ so that it solves $t$-$\coll_{A'}^{m'}$ when the input blocks are indexed using $F'$ instead of $[m']$. Fix any subset $J \subseteq F'$ of size $t$. For any sink node in $C'$ whose output label is not contained in $F'$, we replace it by $J$. 

    We now verify that for every sink in $C'$, the output label is correct for every input satisfying the linear system at the sink. We only need to check this for inputs in $(A')^{F'}$ since for other inputs, all outputs are considered correct. We consider cases on the original output label $I$: 
    \begin{enumerate}
        \item If $I \subseteq F'$, then the label stays unchanged. Consider any $x \in (A')^{F'}$ satisfying the linear system labeling this node. Then since $A' \subseteq A$, $x$'s extension $y$ according to $\rho$ satisfies $y \in A^m$ and $I$ must be a $t$-collision for $y$ since $C'$ solves $t$-$\coll_A^m|_\rho$. This implies $I$ is a $t$-collision for $x$ since for each $i \in I \subseteq F'$, $x_{i} = y_i$. So the output label is correct for all $x$ at this node.
        \item If $1 \leq |I \cap F'| < t$, the label is changed to $J$. Let $i_1 \in I \cap F'$ and $i_2 \in I \setminus F'$. Consider any $x \in (A')^{F'}$ satisfying the linear system at this node. Its extension $y$ according to $\rho$ lies in $A^m$. So $I$ is a $t$-collision for $y$ and in particular, $y_{i_1} = y_{i_2}$. However, $y_{i_2} \in U'$ by Lemma \ref{lem:aff_restrict} and $y_{i_1} = x_{i_1} \in A'$ which is a contradiction since $A' = A \setminus U'$. So there is no $x \in (A')^{F'}$ which satisfies the linear system at this node.
        \item If $I \cap F' = \emptyset$, the label is changed to $J$. By Lemma \ref{lem:aff_restrict}, for every $x \in (\B^l)^{F'}$, the extension $y$ does not have a $t$-collision outside $F'$. This means that the linear system at this node must be inconsistent after applying the restriction $\rho$.
    \end{enumerate}
    So $C'$ correctly solves $t$-$\coll_{A'}^{m'}$ on the index set $F'$. Finally we relabel according to any bijection between $F'$ and $[m']$ to obtain a DAG solving $t$-$\coll_{A'}^{m'}$.
    
    It is clear that the DAG $C'$ has size at most $S$. If the  DAG $C'$ has depth more than $D - d$, then the original DAG $C$ must have depth more than $D$. To see this, take a successful run of the algorithm ending at a leaf $v$ in $T$ where $v$ corresponds to $w$ in $C$. Observe that $v$ must be at depth $d$ since otherwise $w$ would be a sink in $C$ which is not possible by Lemma \ref{lem:not_sink}. So there is a path $p$ in $C$ from the source to $w$ of length $d$ corresponding to the root to $v$ path in $T$. If there is a path in $C'$ of length more than $D - d$, there is a corresponding path in $C$ which we can combine with $p$ to obtain a path of length more than $D$, which would be a contradiction.
\end{proof}

We now make repeated use of the above lemma to prove our result.

\begin{theorem}[Theorem \ref{thm:tbphp} rephrased]
For all $t \coloneqq t(n) \geq 2$, all $m > (t-1)n$, if there exists an affine DAG $C$ which solves $t$-$\coll_n^m$ of depth $D$ and size $S$, then $D (\log S)^{1-1/t} = \Omega(tn^{2-1/t})$.
\end{theorem}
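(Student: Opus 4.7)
The plan is to iterate Lemma \ref{lem:dag_restrict} as many times as possible, accumulating a depth lower bound of $d$ per iteration while only paying $O(\log S)$ in shrinkage of the available-hole set $A$. First I would set $A_0 = \B^l$, $m_0 = m$, $C_0 = C$, so that $|A_0| = n$ and $C_0$ solves $t$-$\coll_{A_0}^{m_0}$ with size at most $S$ and depth at most $D$. Throughout I will assume $\log S \leq \delta n$, with $\delta$ the constant from Lemma \ref{lem:dag_restrict}; outside this regime the claim is either handled by the trivial depth lower bound $D = \Omega(n)$ from \cite{egi24} or becomes vacuous once $(\log S)^{1-1/t}$ dominates $t n^{2-1/t}$.

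Then I would apply Lemma \ref{lem:dag_restrict} inductively: provided $|A_i| \geq 2n/3$, the lemma hands me $A_{i+1} \subseteq A_i$ with $|A_{i+1}| \geq |A_i| - c' \log S$ for some fixed constant $c'$, some $m_{i+1}$, and an affine DAG $C_{i+1}$ of size at most $S$ and depth at most $D - (i+1)d$ solving $t$-$\coll_{A_{i+1}}^{m_{i+1}}$, where $d = \lfloor c\, t\, n^{1-1/t}(\log S)^{1/t} \rfloor$. Because $|A_i| \geq n - i c' \log S$, the condition $|A_i| \geq 2n/3$ is preserved for all $i \leq k$ where $k \coloneqq \lfloor n / (3 c' \log S) \rfloor = \Omega(n/\log S)$, so the iteration can be carried out at least $k$ times.

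Finally, after $k$ iterations the DAG $C_k$ has depth at most $D - k d$, and since any affine DAG has nonnegative depth, this forces
\[
D \;\geq\; k\, d \;=\; \Omega\!\left(\tfrac{n}{\log S}\right) \cdot \Theta\!\left(t\, n^{1-1/t}(\log S)^{1/t}\right) \;=\; \Omega\!\left(\tfrac{t\, n^{2-1/t}}{(\log S)^{1-1/t}}\right),
\]
which rearranges exactly to $D(\log S)^{1-1/t} = \Omega(t n^{2-1/t})$. The main obstacle is of course Lemma \ref{lem:dag_restrict} itself, which packages the real work (the PDT simulation of Algorithm \ref{alg:sim_multicoll}, the $t$-ball calculation of Lemma \ref{lem:multi_balls_prob}, the rank vs.\ probability bound of Lemma \ref{lem:rank_prob}, and the affine restriction construction of Lemma \ref{lem:aff_restrict}); by contrast, the iteration above is essentially routine bookkeeping.
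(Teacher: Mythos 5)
Your core iteration is exactly the paper's argument: starting from $A = \B^l$, repeatedly apply Lemma~\ref{lem:dag_restrict} while $|A| \geq 2n/3$; each application shrinks $|A|$ by $O(\log S)$ and shrinks depth by $d = \lfloor c\,t n^{1-1/t}(\log S)^{1/t}\rfloor$, so after $\Omega(n/\log S)$ rounds the remaining depth is nonpositive, forcing $D = \Omega(nd/\log S)$ and hence $D(\log S)^{1-1/t} = \Omega(t n^{2-1/t})$. That part matches the paper essentially verbatim.

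However, your dismissal of the complementary regime $\log S > \delta n$ has a genuine gap when $t = t(n)$ is allowed to grow. You argue it is ``handled by the trivial depth lower bound $D = \Omega(n)$ from \cite{egi24} or becomes vacuous once $(\log S)^{1-1/t}$ dominates $t n^{2-1/t}$.'' But in the regime $\delta n < \log S \ll (t n^{2-1/t})^{t/(t-1)}$, the combination $D = \Omega(n)$ with $\log S > \delta n$ only gives $D(\log S)^{1-1/t} = \Omega(n \cdot n^{1-1/t}) = \Omega(n^{2-1/t})$, which is short of the claimed $\Omega(t\,n^{2-1/t})$ by a factor of $t$. For constant $t$ this is harmless, but the theorem (and its highlighted application with $t = \log n$) requires the factor of $t$. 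The $\Omega(n)$ bound from \cite{egi24} is for $t=2$ and transfers to general $t$ only as $\Omega(n)$, not $\Omega(tn)$.

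What you actually need in that regime is the unconditional depth bound $D = \Omega(tn)$ for $t$-$\coll_n^m$, which the paper obtains by running Algorithm~\ref{alg:sim_multicoll} at the maximal $d = \Theta(tn)$ allowed by Lemma~\ref{lem:multicoll_fail_prob} (with $A = \B^l$) and combining Lemma~\ref{lem:multicoll_fail_prob} with Lemma~\ref{lem:not_sink}: a successful simulation reaches a non-sink, so no depth-$d$ PDT obtained from $C$ can correctly solve $t$-$\coll_n^m$. Once you have $D = \Omega(tn)$, the $\log S > \delta n$ case gives $D(\log S)^{1-1/t} \geq \Omega(tn)\cdot (\delta n)^{1-1/t} = \Omega(t n^{2-1/t})$ as required. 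So the missing ingredient is the stronger $\Omega(tn)$ depth bound; with that in hand, the rest of your argument is sound.
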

\begin{proof}
We first consider the case $\log S > \delta n$  where $\delta$ is the constant in Lemma \ref{lem:dag_restrict}. In this case, we can combine this bound with a general $\Omega(tn)$ lower bound on the depth of any affine DAG for $t$-$\coll_n^m$. Such a lower bound can be proved by extending any of the proofs for $t = 2$ \cite{egi24, bi25}. Alternatively, we can use the simulation from the previous subsection. Run Algorithm \ref{alg:sim_multicoll} on any depth $d$ PDT $T$ and $A = \B^l$, where $d = \Theta(tn)$ is the largest that is allowed by the statement of Lemma \ref{lem:multicoll_fail_prob}. Then combining Lemmas \ref{lem:multicoll_fail_prob} and \ref{lem:not_sink} implies that such a PDT must fail to solve $t$-$\coll_n^m$ on some input.

Now consider the other case $\log S \leq \delta n$.
Set $A = \B^l$. We repeatedly invoke Lemma \ref{lem:dag_restrict} with $A$, $m$ and $C$, updating  $A$, $m$ and $C$ according to $A'$, $m'$, $C'$ guaranteed by the statement. We can do this as long as $|A| \geq 2n/3$. The other conditions required by the lemma continue to hold by the conclusion of the lemma statement. In each iteration, $|A|$ decreases by at most $O(\log S)$. So we can use Lemma \ref{lem:aff_restrict} at least $(n/3)/O(\log S) = \Omega(n/\log S)$ many times. This finally gives a DAG of depth at most $D - \Omega(nd/\log S)$ where $d = \floor{c tn^{1-1/t}(\log S)^{1/t}}$. Since depth must be nonnegative, we have $D \geq \Omega(nd/\log S) = \Omega(tn^{2-1/t}/(\log S)^{1-1/t})$ as desired.
\end{proof}

\section{Size-depth lower bound for collision-finding when 
$m = n+o(n)$}
\label{sec:sbphp}

In this section, we prove the size-depth lower bound for affine DAGs solving $\coll_n^m$ when $m = n + o(n)$ (which would imply Theorem \ref{thm:s_bphp})
 and the associated parity decision tree lower bound (Theorem \ref{thm:rpdt_coll}). 

For a parity $P = \sum_{i \in [m], j \in [l]} c_{i, j} x_{i, j} + b$ and $h \in [l]$, we will use $\proj(P, h)$ to denote the linear form $\sum_{i \in [m]} c_{i, h} x_{i, h}$. We will use $|\proj(P, h)|$ to denote the number of variables appearing in $\proj(P, h)$, i.e.~$|\proj(P, h)|$ is the number of $i \in [m]$ such that $c_{i, h} = 1$.


\subsection{PDT lower bound for collision-finding}

Let $\cS \subseteq \B^l$ be a \emph{multiset} containing $m$ strings such that each string appears at most twice in $\cS$. Our goal in this subsection is to prove a distributional lower bound for PDTs that solve collision-finding when the input is promised to be a permutation of the multiset $\cS$. We use $\coll_\cS^m$ to denote this problem. 

The hard distribution $\mu$ will be the uniform distribution on permutations of $\cS$. For the analysis, it will be convenient to refine $\cS$ to obtain a set $S'$ in the following way. We assign a color (red $\tr$ or blue $\tb$) to each element of $\cS$ as follows. If a string $y$ appears only once in $\cS$, then we add $(y, \tb)$ to $S'$. If $y$ appears twice in $\cS$, then we add both $(y, \tb)$ and $(y, \tr)$ to $S'$. This ensures $|S'| = |\cS|$. Now if we define $\mu'$ to be the uniform distribution on the permutations of $S'$, then the marginal distribution on  the first components of the $m$ blocks is identical to $\mu$. So it suffices to analyze $\mu'$ instead of $\mu$. 

If $A \subseteq S' \subseteq \B^l \times \{\tr, \tb\}$, we may refer to $x \in (\B^l)^{m'}$ (where $m' = |A|$) as a permutation of $A$ when we really mean that $x$ is a permutation of the multiset obtained by ignoring the colors of the elements in $A$.

Let $T$ be a deterministic parity decision tree on $(\B^l)^m$.
We describe a simulation for $T$, Algorithm \ref{alg:2coll_sim},  which will let us give a lower bound on the probability that $T$ has not found a collision when run on the distribution $\mu$.

\begin{algorithm} \label{alg:2coll_sim}
\caption{$\coll_\cS^m$ simulation on $\mu$}

\KwIn{ $\cS \subseteq \B^l$, PDT $T$}
Set $S'$ to be the refinement of $\cS$ as described above\;
$A = S'$ \tcp*{Available holes}
$col$ is a length-$m$ array with each entry set to \texttt{none} \tcp*{Colors assigned to blocks}
$d \gets$ depth of $T$ \;
$F \gets [m]$ \tcp*{Free blocks} 
$L \gets \emptyset$ \tcp*{List of equations}
$L_F \gets \{\sum_{i \in [m]} x_{i, j} = \sum_{(y, c) \in A} y_j \mid j \in [l]\}$ \tcp*{Permutation equations}
$v \gets$ root of $T$\;
$iter \gets 0$ \;

\While{$v$ is not a leaf}{
$P' \gets$ query at $v$ \;
$P \gets P'$ after substituting according to $L$\; 
\tcp{$P$ now depends only on blocks in $F$ and is equivalent to $P'$ under $L$}
\eIf{$P$ is determined by $L_F$}{ \label{algline:coll_trivial}
Update $v$ using the constraints in $L_F$\;
}
{ \label{algline:coll_nontrivial}
$j \gets \min\{h \in [l] \mid \proj(P, h) \notin \{0, \sum_{i \in F} x_{i, h} \}\}$ \; \label{algline:pick_j}
\If{$|\proj(P, j)| > |F|/2$}{
$P \gets P + \sum_{i \in F} x_{i, j} + \sum_{(y, c) \in A} y_j$\; \label{algline:proj_P_smaller}
}
$i \gets \min\{i \in F \mid x_{i, j} \text{ appears in } P\}$ \;
Pick $(y, c)$ uniformly at random from $A$\;
\eIf{$c = \tr$ or $(y^{\oplus j}, \tb) \notin A$}{ \label{algline:coll_notinA}
$L \gets L \cup \{x_{i, h} = y_h \mid h \in [l]\}$\;
$col_i \gets c$\;
$A \gets A \setminus \{(y, c)\}$;
$F \gets F \setminus \{i\}$;
}
{ \label{algline:coll_blue_pair}
Pick $g \in F\setminus\{i\}$ uniformly at random\;
$L \gets L \cup \{x_{i, h} = y_h, x_{g, h} = y_h  \mid h \in [l]\setminus\{j\}\} \cup \{x_{i, j}+x_{g, j} = 1\}$\;
$A \gets A \setminus \{(y, \tb), (y^{\oplus j}, \tb)\}$;
$F \gets F \setminus \{i, g\}$\;
$col_i \gets \tb$; $col_g \gets \tb$\;
\eIf{$x_{g, j}$ does not appear in $\proj(P, j)$}{ \label{algline:coll_notinPj}
Pick $b \in \bF_2$ uniformly at random\;
$L \gets L \cup \{P = b\}$\;
Update $v$ to the appropriate child according to $b$\;
}{
Pick $b \in \bF_2$ uniformly at random\;
$L \gets L \cup \{x_{i, j} = b\}$\;
}
}
$L_F \gets \{\sum_{i \in F} x_{i, j} = \sum_{(y, c) \in A} y_j \mid j \in [l]\}$;
}
$iter \gets iter+1$\; \label{algline:coll_iter}
}
\eIf{$S' \setminus A$ contains both $(y, \tb)$ and $(y, \tr)$ for some $y \in \B^l$ or $iter > 8d$}{
\KwRet FAIL \tcp*{Found collision or too many iterations}
}
{
\KwRet $v, A, L, F, L_F$
}
\end{algorithm}


\begin{lemma} \label{lem:coll_inv}
    The simulation maintains the following invariants in the beginning of each iteration of the while loop and at termination:
    \begin{enumerate}
        \item The collection of equations $L$ uniquely determines $x_i, i \in [m] \setminus F$ as affine functions of $x_i, i \in F$. 
        \item For each assignment to $x_{i'} \; (i' \in F)$, if we assign values to $x_i\; (i \in [m] \setminus F)$ according to $L$, the blocks $x_i \; (i \in [m] \setminus F)$ form a permutation of $S' \setminus A$. In particular, if we assign a permutation of $A$ to $x_{i'} \; (i' \in F)$, then by setting all other blocks according to $L$, $x$ will be a permutation of $S'$. 
        \item $L$ and $L_F$ together imply all the parity constraints on the path from the root to the current node.        
        \item $|A| \geq |S'| -2\;iter$.
    \end{enumerate}
\end{lemma}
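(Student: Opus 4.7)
The plan is to proceed by induction on the value of $iter$ at the start of each iteration. The base case $iter = 0$ is immediate: $L$ is empty, $F = [m]$, $A = S'$, $L_F$ encodes the $l$ permutation sums, no path constraints have been issued, and $|A| = |S'|$, so all four invariants hold vacuously.

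For the inductive step I would analyze the three control branches separately, assuming the invariants hold at the start of the iteration. The trivial branch at Line \ref{algline:coll_trivial} only advances $v$; the equation labeling the new edge is implied by $L_F$ by hypothesis, so Invariant 3 is preserved and the others are unchanged. In the first non-trivial branch at Line \ref{algline:coll_notinA}, entered when $c = \tr$ or $(y^{\oplus j}, \tb) \notin A$, the block $x_i$ is bit-fixed to $y$, $(y, c)$ is deleted from $A$, $i$ leaves $F$, and $v$ does not move. Invariants 1 and 3 follow since the new equations bit-fix a fresh block and $v$ is not advanced. For Invariant 2, the element added to $S' \setminus A$ is the color-$c$ copy of $y$, which is exactly the assignment given to block $i$. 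Invariant 4 holds because $|A|$ decreases by one and $iter$ grows by one. The refresh of $L_F$ is consistent: the $h$-bit sum on both sides of each permutation equation drops by $y_h$, so the same equations are maintained on the updated free set.

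The delicate case is the blue-pair branch at Line \ref{algline:coll_blue_pair}, where the negated condition of Line \ref{algline:coll_notinA} forces $c = \tb$ and $(y^{\oplus j}, \tb) \in A$, so both $(y, \tb)$ and $(y^{\oplus j}, \tb)$ really are in $A$ before deletion. The equations $\{x_{i, h} = y_h,\ x_{g, h} = y_h \mid h \neq j\} \cup \{x_{i, j} + x_{g, j} = 1\}$ force $\{x_i, x_g\} = \{y, y^{\oplus j}\}$ as an unordered pair, giving Invariant 2. For Invariant 1 these $2l - 1$ equations leave one degree of freedom in $(x_{i, j}, x_{g, j})$, which is resolved differently in the two sub-cases at Line \ref{algline:coll_notinPj}. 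If $x_{g, j}$ does not appear in $\proj(P, j)$, then after substituting $x_{i, h}$ and $x_{g, h}$ for $h \neq j$, only $x_{i, j}$ remains in $P$ among the pair's variables, so the added equation $P = b$ expresses $x_{i, j}$ as an affine function of $F \setminus \{i, g\}$, and $x_{g, j} = x_{i, j} + 1$; if $x_{g, j}$ does appear, then both $x_{i, j}$ and $x_{g, j}$ occur with coefficient $1$, and substituting $x_{g, j} = x_{i, j} + 1$ cancels the $x_{i, j}$ terms, so $P$ no longer constrains either variable and the algorithm separately imposes $x_{i, j} = b$. Invariant 3 is only needed in the first sub-case, where the new edge constraint $P = b$ is implied by $L \cup L_F$ because Lines \ref{algline:cleanup} and \ref{algline:proj_P_smaller} only modify $P$ by elements implied by $L$ and $L_F$ respectively. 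Invariant 4 follows from $|A|$ decreasing by exactly two. The main obstacle in the proof is this cancellation bookkeeping at Line \ref{algline:coll_blue_pair}; once it is checked, the remaining invariants and the $L_F$ refresh reduce to the same pattern as the first branch.
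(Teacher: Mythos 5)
Your proof is correct and follows essentially the same route as the paper: induction on $iter$, with a case analysis on the trivial branch, the single-block fix at Line~\ref{algline:coll_notinA}, and the blue-pair branch, together with the key observation that the pair equations plus $x_{i,j}+x_{g,j}=1$ pin down $\{x_i,x_g\}=\{y,y^{\oplus j}\}$ as an unordered set. (One small slip: you reference Line~\ref{algline:cleanup}, which is a label from Algorithm~\ref{alg:sim_multicoll}; the analogous unlabeled substitution line in Algorithm~\ref{alg:2coll_sim} is what you mean.)
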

\begin{proof} The proof is similar to that of Lemma \ref{lem:inv}, by induction on the number of iterations, $iter$. 
At the beginning of the first iteration, $iter = 0$ and all the statements are true.

    For the induction step, assume the statement holds at the beginning of the iteration when $iter = i$, $i \geq 0$. We will show that all the conditions also hold at the end of the iteration. 

    When the condition in Line \ref{algline:coll_trivial} is satisfied, we do not change $L, L_F$, $A$. The parity constraint labeling the edge traversed in this iteration is implied by the condition in Line \ref{algline:coll_trivial}. So all the invariants are satisfied at the end of the iteration.

    Now suppose the else clause on Line \ref{algline:coll_nontrivial} is executed. By the condition that $P$ is not determined by $L_F$, the set considered in Line \ref{algline:pick_j} is nonempty. The updated parity in Line \ref{algline:proj_P_smaller} is equivalent to the original $P$ since $\sum_{i \in F} x_{i, j} = \sum_{(y, c) \in A} y_j$ is in $L_F$.

    The case $c = \tr$ or $(y^{\oplus j}, \tb) \notin A$ is clear as we explicitly set $x_i = y$ which decreases the size of $A$ by $1$. We do not move down the PDT in this case.

    In the other case, we have $c = \tb$ and $(y^{\oplus j}, \tb) \in A$. We now assign the pair $y$ and $y^{\oplus j}$ to $x_i$ and $x_g$ but we do not initially fix which is which. Note that this decreases the size of $A$ by $2$. The condition $x_{i} \neq x_g$ is enforced by $x_{i, j} + x_{g, j} = 1$. If $g$ lies inside $\proj(P, j)$, then we explicitly sample $x_{i, j}$, thereby completely determining $x_i$ and $x_g$ similar to the previous case ($c = \tr$ or $(y^{\oplus j}, \tb) \notin A$).

    If $x_{g, j}$ does not appear in $\proj(P, j)$, then $x_{i, j}$ is determined by the equation $P = b$ which is added to $L$. In this case, we update the node according to $b$. Since under $L$ and $L_F$, the constraint $P=b$ is equivalent to the original parity constraint at the edge just crossed, we have ensured that each parity constraint on the path from the root to the current node is implied by $L$ and $L_F$.

    For the first point, note that $L$ is of a block upper triangular form if we rearrange the columns so that all the fixed blocks appear first and in the order that they were fixed. Moreover, the fixed bits in these blocks together with constraints of the form $x_{i, j} + x_{g, j} = 1$ ensure that for any assignment to the free blocks (not necessarily a permutation), if we assign values to $x_i \; (i \in [m] \setminus F)$ according to $L$, the blocks $x_i \; (i \in [m] \setminus F)$ form a permutation of $S' \setminus A$.
\end{proof}

\begin{lemma} \label{lem:coll_rpdt_dist} Let $W(v)$ be the event that node $v$ of $T$ is visited by Algorithm \ref{alg:2coll_sim}. Let $V(v)$ be the event that for a random $x \sim \mu$, running $T$ on $x$ reaches $v$.
Then for every $v \in T$, we have $\Pr[W(v)] = \Pr[V(v)]$.
\end{lemma}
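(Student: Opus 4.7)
The plan is to establish an inductive invariant describing the conditional distribution of $x \sim \mu'$ (equivalently $\mu$, after lifting to the refinement $S'$) given the simulation's state at the start of each iteration. Specifically, conditional on the history $(L, L_F, A, F, col, v)$, the fixed blocks $x_{[m]\setminus F}$ are determined by $L$ to be a fixed permutation of $S' \setminus A$ with colors recorded by $col$, and the free blocks $x_F$ form a uniform random permutation of $A$. The base case is immediate, since initially $L$ is empty, $A = S'$, and $\mu'$ is uniform on permutations of $S'$. Given the invariant, the probability that the simulation visits $v$ equals the probability that $x \sim \mu'$ reaches $v$ when fed to $T$, and the conclusion $\Pr[W(v)] = \Pr[V(v)]$ follows by summing over the paths leading to $v$.

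I would then verify the invariant branch by branch. The trivial case is Line~\ref{algline:coll_trivial}: when $P$ is determined by $L_F$, its value is forced on every permutation of $A$ on $F$, so moving to the corresponding child with probability $1$ agrees with $\mu'$ and the invariant is preserved untouched. Otherwise, sampling $(y, c)$ uniformly from $A$ in the else clause captures the marginal of $x_i$ under a uniform permutation of $A$, and conditional on $x_i = (y, c)$ the remaining free blocks form a uniform permutation of $A \setminus \{(y, c)\}$. The sub-case of Line~\ref{algline:coll_notinA}, where $c = \tr$ or $(y^{\oplus j}, \tb) \notin A$, simply fully fixes $x_i = y$ and leaves $v$ unchanged (since $P$ may still depend on other free blocks), so the invariant passes directly to the next iteration.

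The delicate case is Line~\ref{algline:coll_blue_pair}, where $c = \tb$ and $(y^{\oplus j}, \tb) \in A$. Conditional on $x_i = (y, \tb)$, the unique block containing $(y^{\oplus j}, \tb)$ is uniform over $F \setminus \{i\}$, matching the uniform choice of $g$. At this point the simulation has only committed to the unordered pair $\{x_i, x_g\} = \{(y, \tb), (y^{\oplus j}, \tb)\}$; both orderings are still equiprobable, so $x_{i, j}$ is a uniform bit independent of the uniform permutation on $F \setminus \{i, g\}$. The choice of $j$ in Line~\ref{algline:pick_j} together with the reduction in Line~\ref{algline:proj_P_smaller} ensures that $\proj(P, j)$ is non-trivial on the free blocks. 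If $x_{g, j} \notin \proj(P, j)$, then after substituting for all known bits of $x_i$ and $x_g$, $P$ becomes an affine function of $x_{i, j}$ and variables in $F \setminus \{i, g\}$ with non-zero $x_{i, j}$-coefficient, hence uniform on $\{0, 1\}$ by the invariant; sampling $b$ uniformly and conditioning on $P = b$ matches the true conditional distribution and moves down the correct edge. If $x_{g, j} \in \proj(P, j)$, the contributions of $x_{i, j}$ and $x_{g, j}$ to $P$ cancel via $x_{i, j} + x_{g, j} = 1$, so $P$ reduces to an affine function of $F \setminus \{i, g\}$ whose value is not yet pinned down; the simulation sidesteps this by sampling $x_{i, j}$ uniformly, which correctly resolves the pair's ordering, and stays at $v$ to retry in the next iteration.

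The main obstacle is precisely this last case: unlike the product setting of Lemma~\ref{lem:rpdt_dist}, revealing one block of a permutation affects all others, and the two-block coupling needed when $y$ and $y^{\oplus j}$ are both still available forces the simulation to work with unordered pairs whose ordering may or may not be resolvable by $P$. The bifurcation based on whether $x_{g, j} \in \proj(P, j)$ is what guarantees that in each sub-case at least one of $P$ or $x_{i, j}$ is unconditionally uniform given the state, giving a clean inductive step. Combined with Lemma~\ref{lem:coll_inv} (which ensures that at every point $L$ is consistent with $\mu'$-inputs), the invariant yields the claimed equality $\Pr[W(v)] = \Pr[V(v)]$ for every $v \in T$.
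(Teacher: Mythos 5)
Your proof is correct and follows essentially the same approach as the paper: you establish the same inductive invariant (the free blocks are conditionally a uniform permutation of $A$ given the state), verify it branch by branch, and handle the delicate blue-pair case via the unordered-pair coupling, with the bifurcation on whether $x_{g,j}$ appears in $\proj(P,j)$ playing exactly the role it does in the paper. One phrasing quibble: you initially write ``conditional on $x_i = (y,\tb)$'' before backing off to the unordered pair; the paper's cleaner framing is to condition on $col_i$ and $x_{i,h}$ for $h\neq j$, leaving $x_{i,j}$ unrevealed from the start, which avoids the momentary over-conditioning. Your explicit observation that substituting $x_{i,j}+x_{g,j}=1$ cancels both variables when $x_{g,j}\in\proj(P,j)$ is a helpful explanatory addition the paper leaves implicit.
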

\begin{proof}
By Lemma \ref{lem:inv}, a node is visited only if the system $L$ combined with $L_F$ implies all the constraints on the path leading to that node. So it suffices to show that the linear systems $L$ and $L_F$ are modified with the correct probabilities. To do this, we will prove the invariant that at the beginning of each iteration, conditioned on $(x, col) \sim \mu'$ satisfying $L$, $L_F$ and the revealed entries of $col$ during the simulation, the blocks of $(x, col)$ in $F$ form a uniform random permutation of $A$. This clearly holds in the beginning since $L$ is empty, all entries of $col$ are $\texttt{none}$ and each permutation $(x, col)$ of $A$ satisfies $L_F$.

So suppose at the beginning of an iteration, the linear system $L$ fixes blocks $[m] \setminus F$ and the only entries of $col$ which are not $\texttt{none}$ are those in $[m]\setminus F$. Moreover conditioned on $(x, col) \sim \mu'$ satisfying $L$ and the revealed entries of $col$, $(x, col)$ projected to the free blocks $F$ is a uniform random permutation of $A$. We will show that the simulation updates $L, L_F, col$ with the correct probability in this iteration and the resulting systems $L, L_F$ and array $col$ also satisfy the above condition. 


If the parity we are trying to simulate is already determined by $L$ and $L_F$, then $L, L_F, col$ stay unchanged and so the distribution on the free blocks remains the uniform distribution on permutations of $A$.

Now consider the case where Line \ref{algline:coll_nontrivial} is executed.
    Fix $i \in F, j \in [l]$. We condition on the color $col_i$ and all bits of $x_i$ except $x_{i, j}$ according to $A$.     
    If the sampled color $col_i = \tr$, or, $col_i = \tb$ and the revealed bits of $x_i$ uniquely determine $x_{i, j}$ according to $A$, then we sample $x_{i, j}$ according to the resulting conditional distribution. It is not hard to see that this is equivalent to how in Algorithm \ref{alg:2coll_sim}, we sample $(y, c) \sim A$ completely but we set $x_{i,j} = y_j$ only when $c = \tr$ or $(y^{\oplus j}, \tb) \notin A$.
    
    The if clause at Line \ref{algline:coll_notinA} handles the case where $col_i = \tr$, or, $col_i = \tb$ and the revealed bits of $x_i$ uniquely determine $x_{i, j}$ according to $A$. In this case, we have fixed $(x_i, col_i)$ to $(y, c)$ by modifying $L$ and $col$ appropriately, and so the resulting conditional distribution on all the other blocks in $F$ is uniform on all $A \setminus \{(y, c)\}$. So $A$ and $F$ are updated appropriately during the simulation.

    If $col_i = \tb$ and conditioned on all bits of $x_i$ except $x_{i, j}$, the bit $x_{i, j}$ is not uniquely determined according to $A$, then $x_{i, j}$ is a uniform random bit.  Condition on the unique $g \in F \setminus \{i\}$ such that $x_{i, h} = x_{g, h}$ for all $h \neq j$ and $col_g = \tb$. Note that since $x$ must be a permutation, we necessarily have $x_{i, j} = x_{g, j} +1$. Now $x_{i, j}$ is a uniform random bit which is independent of all blocks $x_{i'}$ with $i' \in F \setminus \{i, g\}$. 
    
    So if a parity $P$ (which we can assume only depend on blocks in $F$) contains $x_{i, j}$ but not $x_{g, j}$, then this parity is uniformly distributed even if we were to condition on all $x_{i', h'}$ for $i' \in F \setminus \{i, g\} $. This means that this parity is independent of all $x_{i', h'}$ for $i' \in F \setminus \{i, g\} $. The independence implies that conditioned on $P = b$ for any $b$, the distribution on all blocks in $F \setminus \{i, g\}$ remains the uniform distribution on all permutations of $A\setminus \{(y, \tb), (y^{\oplus j}, \tb)\}$.

    The other case where we simply condition on the value of $x_{i, j}$ is similar to the case where $col_i = \tr$, or, $col_i = \tb$ and the revealed bits of $x_i$ uniquely determine $x_{i, j}$ according to $A$.

    In all cases, note that the equations in $L_F$ are satisfied by all permutations of $A$ at the end of an iteration.
\end{proof}

The following lemma will be used to estimate the probability that the simulation does not find a potential collision.

\begin{lemma} \label{lem:coll_balls_bins_prob}
    There exist constants $\delta, \delta' > 0$ such that the following holds. Let $k$ and $s$ be positive integers such that $k \leq \delta's$.
    Let $G = (V = X \cup Y, E)$ be a bipartite graph on $V = [s+k]$ such that the vertices in $\{s+1, s+2, \dots, s+k\}$ are isolated.   Consider the following random process for throwing balls into $s+k$ bins. Each bin corresponds to a vertex of $G$. For the $i^{th}$ iteration, let $G_i$ denote the induced subgraph of $G$ containing only non-empty bins at the beginning of the $i^{th}$ iteration. Choose a (not necessarily perfect) matching $M$ in $G_i$ (depending on past choices and outcomes), where the matching is interpreted as a spanning subgraph of $G_i$. Pick a bin $v \in G_i$ uniformly at random. Throw a ball into each bin in the connected component of $M$ containing $v$. 

    Consider $d$ iterations of the above process where $d \leq s/4$. For every strategy of picking matchings in this process, the probability that at the end there is no $i \in [k]$ such that bins $i$ and $s+i$ both contain a ball is at least $(1-1/e)\exp(-\delta d^2k/s^2)$.
\end{lemma}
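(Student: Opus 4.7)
The plan is to mimic the structure of the proof of Lemma~\ref{lem:multi_balls_prob}: introduce an auxiliary event $B$ whose complement bounds the number of ``dangerous'' bins (bins whose filling would complete a matching pair $\{j, s+j\}$) at each iteration, lower bound $\Pr[E \vee B]$ by a chain-rule argument, and upper bound $\Pr[B]$ by a counting bound that is exponentially smaller. Here $E$ is the good event that no $i \in [k]$ has both bins $i$ and $s+i$ filled at the end. I will set $R^* = K^* = Ckd/s$ for a constant $C$ to be fixed large enough, and define $B = B_1 \cup B_2$, where $B_1$ is the event that more than $R^*$ red bins (those with indices in $\{s+1, \dots, s+k\}$) are ever filled and $B_2$ the event that more than $K^*$ blue bins in $[k]$ are ever filled. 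The hypothesis $d \leq s/4$ ensures $|G_i| \geq s+k-2d \geq s/2$ throughout, so each uniform pick from $G_i$ hits any given bin with probability at most $2/s$.

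For the chain-rule step, I condition on the state at the end of iteration $i-1$ satisfying $E_{i-1} \vee B_{i-1}$. If $B_{i-1}$ holds then $B_i$ also holds; otherwise $E_{i-1} \wedge \neg B_{i-1}$ holds, and the number of dangerous blue bins is at most the number of filled reds ($\leq R^*$) while the number of dangerous red bins is at most the number of filled blues in $[k]$ ($\leq K^*$). Since reds are isolated and the matching $M_i$ has maximum degree $1$, at most $2R^* + K^*$ choices of $v$ complete a new matching pair, giving $\Pr[E_i \vee B_i \mid E_{i-1} \vee B_{i-1}] \geq 1 - 2(2R^* + K^*)/s$. Multiplying over $d$ iterations and using $1-x \geq e^{-2x}$ yields $\Pr[E \vee B] \geq \exp(-\Delta \cdot kd^2/s^2)$ for a constant $\Delta$ proportional to $C$.

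For the counting step, I union-bound $\Pr[B_1]$ over the $\binom{k}{R^*+1}$ candidate subsets of red bins. A specific set of $R^*+1$ reds is all filled only if each is picked as $v$ in some distinct iteration (reds are isolated, so they cannot be filled simultaneously), giving probability at most $(2d/s)^{R^*+1}$; thus $\Pr[B_1] \leq \binom{k}{R^*+1}(2d/s)^{R^*+1} \leq (2e/C)^{\Theta(Ckd/s)}$, which is $\exp(-\Omega(kd/s))$ with implied constant growing with $C$. An analogous argument, with an extra factor of $2$ accounting for the matching edges, bounds $\Pr[B_2]$. Combining, $\Pr[E] \geq \Pr[E \vee B] - \Pr[B]$; since $d \leq s/4$ forces $kd^2/s^2 \leq kd/(4s)$, taking $C$ large enough relative to $\Delta$ makes $\Pr[B] \leq e^{-1} \exp(-\Delta kd^2/s^2)$, so $\Pr[E] \geq (1 - 1/e) \exp(-\delta kd^2/s^2)$ with $\delta = \Delta$.

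The main obstacle is verifying that the bounds above remain meaningful across all parameter ranges: in particular, handling both the ``typical'' regime where $R^* \leq k$ and the regime $d \geq s/C$ where $R^*$ is effectively capped at $k$ and a slightly different accounting is needed, as well as the regime where $kd^2/s^2$ is so small that both $\Pr[E \vee B]$ and $\Pr[B]$ are close to $1$. In the last regime I will instead use a direct union bound: for each $j \in [k]$, summing over the pair of iterations in which bins $j$ and $s+j$ are filled yields $\Pr[\{j, s+j\} \text{ both filled}] \leq O(d^2/s^2)$, whence $\Pr[E] \geq 1 - O(kd^2/s^2) \geq (1-1/e)\exp(-\delta kd^2/s^2)$ for $\delta$ large enough. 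Choosing $C$, $\delta$, and the hypothesis constant $\delta'$ compatibly so that all regimes merge cleanly is the main bookkeeping task.
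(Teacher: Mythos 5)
Your proposal follows essentially the same architecture as the paper's proof: define the good event $E$ and a threshold bad event $B$ that controls the number of dangerous bins, lower bound $\Pr[E\vee B]$ by a chain-rule/telescoping argument, upper bound $\Pr[B]$ by a union bound, show $\Pr[B]$ is dominated by $\Pr[E\vee B]$ by choosing the threshold large enough, and cover the regime where $kd^2/s^2$ is too small for that comparison via a direct union bound on the $k$ pairs. Your threshold $R^* = K^* = Ckd/s$ is the same order as the paper's $D = \lceil e^{10}dk/s\rceil$, your per-step collision probability $2(2R^*+K^*)/s$ plays the role of the paper's $(2D-2)\cdot 2/(s+k-2(i-1))$, and your handling of the small-$d$ regime matches the paper's second case. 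This is the same proof, not a different route.

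However, there is one genuine gap: the union-bound estimate for $\Pr[B_2]$. Your argument for $\Pr[B_1]$ correctly exploits that red bins are isolated, so at most one red is filled per iteration, and hence a specific set of $R^*+1$ reds requires $R^*+1$ distinct iterations. The analogous claim for $B_2$ does not hold: two blue bins in $[k]$ can be filled \emph{in the same iteration} if they happen to be matched (one in $X$, one in $Y$). So the probability that a specific set of $K^*+1$ blues is all filled is not controlled by the distinct-iteration product bound, and ``an extra factor of $2$ accounting for the matching edges'' does not repair this -- it is not a constant-factor issue but a failure of the combinatorial counting. The paper's fix is to move each isolated $s+i$ to the same part as $i$ (preserving bipartiteness) and then split $B$ into $B^X$ and $B^Y$: within a single part no two bins are ever filled in the same iteration, so the per-part union bound over subsets and distinct iterations goes through, and one pays only a factor of $2$ in the number of events. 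You need this bipartition split (or something equivalent) to make your $\Pr[B_2]$ (or, in the paper's formulation, $\Pr[B]$) bound correct; as written, the step is not valid.
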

\begin{proof}
For brevity, let us say that there is a collision if there is some $i \in [k]$ such that bins $i$ and $s+i$ both contain a ball.

    First consider the case $d \geq s/e^{10}k$. Let $E$ denote the event that at the end there is no collision. Let $B$ be the bad event that there are at least $2D-1$ indices $i \in [k]$ such that bin $i$ or bin $s+i$ contains a ball, where $D = \ceil{e^{10}dk/s}$. By the union bound, $\Pr[E] \geq \Pr[E \vee B] - \Pr[B]$.
    
    We now lower bound $\Pr[E \vee B]$. For $i \in [d]$, let $E_i$ denote the event that there is no collision at the end of the $i^{th}$ iteration. Similarly define $B_i$ to be the event corresponding to $B$ at the end of the $i^{th}$ iteration. We claim that $\Pr[E_i \vee B_i] \geq (1-(2D-2)\cdot 2/(s+k - 2(i-1)))\Pr[E_{i-1} \vee B_{i-1}]$. Indeed if $B_{i-1}$ holds, then $B_i$ holds with probability $1$. Otherwise there are at most $2D-2$ indices $i \in [k]$ such that bin $i$ or bin $s+i$ contains a ball. So the probability that we throw a ball into one of these at most $2D-2$ indices (thereby creating a collision) is at most $(2D-2)\cdot 2/(s+k-2(i-1))$. Here $2/(s+k-2(i-1))$ is an upper bound on the probability that any particular bin receives a ball in the $i^{th}$ iteration conditioned on the previous iterations. The numerator $2$ comes from the largest size of any component in the matching. The denominator is a lower bound on the number of empty bins after the $(i-1)^{th}$ iteration, since in each iteration at most two bins get a ball.
    
    By induction,
    \begin{align*}
        \Pr[E \vee B] &\geq \prod_{i = 2}^d \left(1 - \frac{4(D-1)}{s+k-2(i-1)}\right) \\
        &\geq \left(1 - \frac{4(D-1)}{s - 2d}\right)^d \\
        &\geq \exp\left(-\frac{8(D-1)d}{s-2d}\right) \\
        &\geq \exp\left(-\frac{16Dd}{s}\right) \tag{$s-2d \geq s/2$} 
    \end{align*}
    The third inequality uses $1-x \geq \exp(-2x)$ if $x \leq 0.75$ which we verify now for $x = \frac{4(D-1)}{s - 2d}$.
    \begin{align*}
        \frac{4(D-1)}{s - 2d} \leq \frac{4}{s-2d} \cdot \frac{e^{10}dk}{s} \leq 2e^{10}\delta' 
    \end{align*}
    Here we used the assumptions $d \leq s/4$ and $k \leq \delta' s$. So the desired bound holds if $\delta'$ is sufficiently small, $\delta' \leq 3/8e^{10}$.

    We now upper bound $\Pr[B]$. Since the vertices $\{s+1, s+2, \dots, s+k\}$ are isolated, we can move each such $s+i$ to the part ($X$ or $Y$) containing $i$ while still ensuring that there are no edges within $X$ and within $Y$. Let $B^X$ denote the event that there at least $D$ indices $i \in [k] \cap X$ such that bin $i$ or bin $s+i$ has a ball. Define $B^Y$ analogously. By the union bound, $\Pr[B] \leq \Pr[B^X] + \Pr[B^Y]$.

    When considering only bins in $X$, note that in any iteration at most one bin receives a ball since there are no edges in $X$. We also know that the probability that a bin receives a ball in the $i^{th}$ iteration ($i \leq d \leq s/4$) is at most $2/(k+s - 2(i-1)) \leq 2/(s-2d) \leq 4/s$ even conditioned on the events of previous iterations. Hence by the union bound, 
    \begin{align*}
        \Pr[B^X] &\leq \binom{|[k] \cap X|}{D}d^D\left(\frac{8}{s}\right)^D
    \end{align*}
    where the binomial coefficient counts which $D$ indices witness the occurrence of $B^X$, and for each such index $i$, there is a factor of $d$ indicating in which iteration, a ball was sent into bin $i$ or $s+i$.
    Combining the above with the analogous bound for $B^Y$ by the union bound, we obtain
    \begin{align*}
        \Pr[B] &\leq \binom{|[k] \cap X|}{D}d^D\left(\frac{8}{s}\right)^D + \binom{|[k] \cap Y|}{D}d^D\left(\frac{8}{s}\right)^D \\
        &\leq \binom{k}{D}d^D\left(\frac{8}{s}\right)^D \\
        &\leq \left(\frac{ek}{D}\right)^D d^D \left(\frac{8}{s}\right)^D \\
        &\leq \left(\frac{8ekd}{sD}\right)^D \\
        &\leq \left(\frac{8ekd}{s\ceil{e^{10}dk/s}}\right)^D \\
        &\leq \left(\frac{8}{e^9}\right)^D \\
        &\leq \left(\frac{1}{\exp(1+16d/s)}\right)^D
    \end{align*}
    where the last inequality follows from $d \leq s/4$.
    Now combining the bounds on $\Pr[E \vee B]$ and $\Pr[B]$,
    \begin{align*}
        \Pr[E] &\geq \exp\left(-\frac{16Dd}{s}\right) - \exp\left(-\left(1+\frac{16d}{s}\right)D\right) \\
        &\geq \left(1- \exp(-D)\right)\exp\left(-\frac{16Dd}{s}\right) \\
        &\geq (1-1/e)\exp\left(-\frac{16d}{s}\cdot  \frac{2e^{10}dk}{s}\right) \\
        &\geq (1-1/e)\exp\left(-\frac{32e^{10}d^2k}{s^2}\right).
    \end{align*}
    In the third inequality, where we substituted for $D$, we used that $D = \ceil{e^{10}dk/s} \leq 2e^{10}dk/s$ which follows from $d \geq s/e^{10}k$.
    This finishes the proof of the case $d \geq  s/e^{10}k$.

    Next consider the case $d < s/e^{10}k$. In this case, we can just union bound over the $k$ possible indices where a collision can occur. For an index $i \in [k]$, for bins $i$ and $s+i$ to have a ball at the end, there must be two distinct iterations where they received balls since $s+i$ is isolated in $G$. So by the union bound the probability of a collision is at most $kd(d-1)\left(\frac{2}{s-2d}\right)^2 \leq 16kd^2/s^2$. So the probability that there is no collision is at least
    \[
    1- 16kd^2/s^2 \geq \exp(-32kd^2/s^2)
    \]
    where to use the inequality $1-x \geq \exp(-2x)$, we need $16kd^2/s^2 \leq 0.75$ which holds by the assumption $d < s/e^{10}k$.
\end{proof}

\begin{lemma} \label{lem:coll_fail_prob}
There exist constants $\delta, \delta'$ such that the following holds. Let $m = |S'|$ and $k$ denote the number of strings which appear twice in $\cS$ (which is the same as the number of red elements in $S'$). 
    If $m - 16d \geq 3n/4$, $d \geq 8$ and $k \leq \delta' n$, then the probability that Algorithm \ref{alg:2coll_sim} succeeds is at least $(1-2/e)\exp(-\delta d^2 k/n^2)$. 
\end{lemma}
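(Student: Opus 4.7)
The plan is to decompose the failure probability via
\[
\Pr[\text{success}] \;\geq\; \Pr[\text{no collision in first }8d\text{ iter}] \;-\; \Pr[iter > 8d],
\]
following the template of Lemma \ref{lem:multicoll_fail_prob}, and bound each piece separately.

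For $\Pr[iter > 8d]$, I would show that in every iteration (conditioned on past history) there is a positive constant probability $c_0$ of moving down the PDT, i.e.~of executing either Line \ref{algline:coll_trivial} or the successful branch at Line \ref{algline:coll_notinPj}. The ``not moving down'' event decomposes into three disjoint sub-events driven by the random choices of $(y,c) \in A$ and $g \in F \setminus \{i\}$: (i)  $c=\tr$; (ii) $c=\tb$ but $(y^{\oplus j}, \tb) \notin A$; (iii) $c=\tb$, $(y^{\oplus j}, \tb)\in A$, yet the chosen $g$ satisfies $x_{g,j} \in \proj(P, j)$. Using the invariant $|A| \geq m - 2\,iter$ from Lemma \ref{lem:coll_inv}, together with the hypothesis $m - 16d \geq 3n/4$ and $k \leq \delta' n$, I would bound (i) by $k/|A| = O(\delta')$, (ii) by at most $(n - b)/|A|$ where $b \geq m - k - 16d$ is the current number of blue elements in $A$ (and a pair-counting argument in the hypercube shows at most $n - b$ blue elements in $A$ have their $j$-flip missing), and (iii) by $|\proj(P, j)|/(|F|-1) \leq (|F|/2)/(|F|-1)$, thanks to the normalisation at Line \ref{algline:proj_P_smaller}. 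For $\delta'$ small and $d$ not too large (both controlled by the hypotheses), these contributions sum to strictly less than $1$, yielding $c_0 \geq 1/4$ (say). A Chernoff-type tail bound for the resulting adapted Bernoulli-like sequence of ``progress'' indicators then gives $\Pr[iter > 8d] \leq \exp(-\Omega(d))$.

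For the collision bound, I would apply Lemma \ref{lem:coll_balls_bins_prob} with $s = m - k$ and the given $k$. Identify bin $i \in [k]$ with the blue copy of the $i$-th doubled string and bin $s+i$ with its red copy, so that the lemma's collision event matches exactly the algorithm's FAIL condition of both copies of some doubled string having been removed from $A$. The bipartite graph $G$ is the hypercube on blue bins (bipartitioned by string parity) with all red bins isolated. In each iteration, the deterministically chosen $j$ at Line \ref{algline:pick_j} determines the spanning matching $\{\{(z,\tb),(z^{\oplus j}, \tb)\} : \text{both in } A\}$; the uniform pick $(y,c) \in A$ is the lemma's uniform non-empty-bin pick; and the set removed from $A$ (a singleton in case 2a, a pair in case 2b) is exactly the connected component of $(y,c)$ in this matching. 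The assumptions give $m - k \geq n/2$ and $8d \leq (m-k)/4$ for $\delta'$ small enough, so Lemma \ref{lem:coll_balls_bins_prob} with parameter $8d$ yields
\[
\Pr[\text{no collision in first }8d\text{ iter}] \;\geq\; (1 - 1/e) \exp\!\bigl(- \tfrac{\delta_0 (8d)^2 k}{(m-k)^2}\bigr) \;\geq\; (1-1/e)\exp(-\delta d^2 k / n^2)
\]
for a suitable constant $\delta$.

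Combining the two estimates and absorbing $\exp(-\Omega(d))$ into the $(1/e)$ slack using $d \geq 8$ and a large-enough choice of $\delta$ then yields $\Pr[\text{success}] \geq (1-2/e)\exp(-\delta d^2 k/n^2)$. I expect the main technical obstacle to be in step one: the three ``not moving down'' sub-events are coupled through the history-dependent choices of $j$, $F$, and $A$, so keeping the conditional progress probability uniformly bounded below requires careful bookkeeping with the invariants of Lemma \ref{lem:coll_inv} together with the $|\proj(P, j)| \leq |F|/2$ normalisation enforced at Line \ref{algline:proj_P_smaller}, which is essential for controlling sub-event (iii).
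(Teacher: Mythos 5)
Your proposal matches the paper's proof of Lemma~\ref{lem:coll_fail_prob} essentially step for step: the same union-bound decomposition into a Chernoff tail for $\Pr[iter > 8d]$ and an application of Lemma~\ref{lem:coll_balls_bins_prob} with the hypercube-on-blue-bins-plus-isolated-reds graph, the $j$-direction matchings, and the balls-component/$A$-removal correspondence all as in the paper. One small bookkeeping note: to cleanly get the $1/4$ per-iteration progress probability, the paper bounds $\Pr[\text{(i) or (ii)}] \leq 1/2$ and then bounds (iii) \emph{conditionally} on reaching Line~\ref{algline:coll_blue_pair}, multiplying rather than summing the three contributions, whereas naively adding your three bounds (with (iii) replaced by its conditional estimate) overcounts and gives a weaker constant; this does not change the structure of the argument.
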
 
\begin{proof}
We need to lower bound the probability that $iter \leq 8d$ and there is no $y \in \B^l$ such that $S'\setminus A$ contains both $(y, \tb)$ and $(y, \tr)$. Let $E$ denote this event. Set $D = 8d$.  Let $E_D$ denote the event that immediately after $D$ iterations (or at termination, if there were fewer than $D$ iterations), there is no $y \in \B^l$ such that $S'\setminus A$ contains both $(y, \tb)$ and $(y, \tr)$. By the union bound, $\Pr[E_{D}] \leq \Pr[E] + \Pr[iter > D]$. So it suffices to give a lower bound on $\Pr[E_D]$ and an upper bound for $\Pr[iter > D]$.

    To bound the probability that $iter > D$, we will first show that in each of the first $D$ iterations (if we have not already reached a leaf), after conditioning on the history, the probability that Line \ref{algline:coll_trivial} or \ref{algline:coll_notinPj} is executed is at least $1/4$. Whenever either of these lines is executed, we move down the PDT. Since the depth of the PDT is $d$, the only way we could not have reached a leaf after $D$ iterations is if these lines were executed fewer than $d$ times during the first $D$ iterations.

    Formally, define for each $i \in [D]$, the following random variables $X_i$ and $Y_i$ where $X_i$ encodes the state of the simulation ($L, A, F, v$ ) in the $i^{th}$ iteration and $Y_i$ is a binary random variable which is $1$ if the simulation ended before the $i^{th}$ iteration, or in the $i^{th}$ iteration, Line \ref{algline:coll_trivial} or \ref{algline:coll_notinPj} is executed and satisfied. $Y_i$ is determined by $X_1, \dots, X_i$.

    Conditioned on $X_1, X_2, \dots, X_{i-1}$ ($i \leq D$), one of the following cases happens. 
    \begin{enumerate}
        \item There is no $i^{th}$ iteration
        \item The if clause following Line \ref{algline:coll_trivial} is executed in the $i^{th}$ iteration
        \item The else clause Line \ref{algline:coll_nontrivial} is executed in the $i^{th}$ iteration
    \end{enumerate}
    In the first two cases, we have $Y_i = 1$ by definition.
    
    So let us verify in the third case that $\Pr[Y_i = 1 \mid X_1, X_2, \dots, X_{i-1}] \geq 1/4$. 
    Let $m' \coloneqq |A| \geq m - 2D$ by Lemma \ref{lem:coll_inv} and $i \leq D$. 
Let us give an upper bound on the probability that Line \ref{algline:coll_notinA} is executed. Let $k'$ denote the number of red elements in $A$ and $s' = m' - k'$ the number of blue elements. The number of elements $(y, \tb)$ in $A$ such that $(y^{\oplus j}, \tb) \notin A$ is at most the number of $z \in \B^l$ such that $(z, \tb) \notin A$. This is because each such $y$ with $(y^{\oplus j}, \tb) \notin A$ corresponds to a unique blue $z$ missing from $A$ by flipping the $j^{th}$ bit. So the probability that Line \ref{algline:coll_notinA} is executed is at most $(k' + (n - s'))/m'$, since $k'$ is the number of red elements in $A$ and $n-s'$ is the number of $z$ such that $(z, \tb) \notin A$. We can upper bound this quantity as follows.
\begin{align*}
    \frac{k'+(n-s')}{m'} \leq \frac{k' + n - (m' - k')}{m'} \leq \frac{2k'+n}{m'} - 1 \leq \frac{(2\delta'+1)n}{m - 16d} - 1 \leq \frac{9n/8}{m - 16d} - 1 \leq \frac{1}{2}
\end{align*}
The third inequality used $k' \leq k \leq \delta'n$ and $m' \geq m - 16d$. The fourth inequality holds if $\delta'$ is chosen to be small enough. The last inequality holds by the assumption $m - 16d \geq 3n/4$.
    
    So the else clause in Line \ref{algline:coll_blue_pair} is executed with probability at least $1/2$. Conditioned on this event, consider the unique $k \in F \setminus \{i\}$ such that $x_{i, h} = x_{k, h}$ for $h \neq j$ and is colored blue. Then $k$ lies outside $\proj(P, j)$ with probability at least $1/2$ since we ensured that $|\proj(P, j)| \leq m'/2$. Therefore overall with probability at least $1/4$, the Line \ref{algline:coll_notinPj} is executed and the condition is satisfied.

    So $\E[\sum_{i = 1}^{8d} Y_i] \geq 2d$. We now use the Chernoff bound which applies to such $Y_i$'s to conclude 
    \[
    \Pr[\sum_{i = 1}^{8d} Y_i < d ] \leq \exp(-d/4).
    \]

    We now lower bound $\Pr[E_D]$ by invoking Lemma \ref{lem:coll_balls_bins_prob} on the following graph. The graph $G$ consists of the subgraph of the hypercube $\B^l$ induced by the blue strings in $S'$ (these vertices are indexed by $[s]$) along with $k$ isolated vertices corresponding to the red strings in $S'$ (indexed by $\{s+1, s+2, \dots, s+k\}$). The indices for the last $k$ vertices are chosen so that the index corresponding to a red string $(y, \tr)$ is exactly $s$ more than the index corresponding to $(y, \tb)$. The condition on $k$ being sufficiently small with respect to $s$ holds if $\delta'$ is sufficiently small since $k \leq \delta' n$ and $s = m - k \geq 3n/4-k \geq (3/4-\delta')n$.

    We will use the simulation to give a randomized strategy for picking matchings in the random process of Lemma \ref{lem:coll_balls_bins_prob} such that the probability of not having a collision is $\Pr[E_D]$. By averaging, we would get a deterministic strategy for which the probability of not having a collision is at most $\Pr[E_D]$. On the other hand, Lemma \ref{lem:coll_balls_bins_prob} implies that for any strategy, the probability of not having a collision is at least $(1-1/e)\exp(-\delta d^2k/s^2)$.

    The randomized strategy works as follows. If in an iteration, Line \ref{algline:coll_trivial} occurs, we do not pick a matching to throw balls and proceed to the next iteration of the simulation. If in an iteration, Line \ref{algline:coll_nontrivial} occurs, we pick the matching corresponding to flipping the  $j^{th}$ bit in the hypercube. The bin chosen uniformly corresponds to picking $(y, c)$ in the simulation. It is easy to see that each bin has the same probability of receiving a ball as the probability with which the corresponding string is removed from $A$.  Based on the outcome of the throw, proceed with the appropriate part of the simulation. In the case where the component has size $2$, then we make random choices as described in the else clause in Line \ref{algline:coll_blue_pair} and branch accordingly. We continue in this way for the first $D$ iterations of the simulation, stopping early if the simulation terminates. The set $S' \setminus A$ corresponds to all the bins that have received a ball and so the probability of there being a collision in the random matching process is the same as the probability that $S' \setminus A$ contains both $(y, \tb)$ and $(y, \tr)$ for some $y$. As explained earlier, we can now apply Lemma \ref{lem:coll_balls_bins_prob} to a deterministic strategy obtained by averaging to get the desired lower bound on $\Pr[E_D]$.

    Now combining the bounds for $\Pr[E_D]$ and $\Pr[iter > D]$, we get
    \begin{align*}
        \Pr[E] &\geq \Pr[E_D] - \Pr[iter > D] \\
        &\geq (1-1/e)\exp(-\delta d^2k/s^2) - \exp(-d/4) \\
        &\geq (1-2/e)\exp(-\delta d^2 k/s^2) \\
        &\geq (1-2/e)\exp(-\delta''d^2 k/n^2) \tag{$s = m - k \geq (3/4 - \delta')n$}
    \end{align*}
    for some constant $\delta''$.

    For the third inequality above, it suffices to have $d/8 \geq 1$ and $d/8 \geq \delta d^2 k/s^2$. The former holds by assumption. For $d/8 \geq \delta d^2 k/s^2$, note that
    \[
    \frac{\delta d k}{s^2} \leq \frac{\delta \cdot (1/4 + \delta')n/16 \cdot \delta' n}{(3/4 - \delta')^2 n^2} = \frac{\delta (1/4 + \delta')  \delta' }{16(3/4 - \delta')^2 } \leq \frac{1}{8}
    \]
    where the last inequality holds by choosing $\delta'$ to be sufficiently small. The first inequality above uses the bounds $d \leq (m - 3n/4)/16 \leq (n+k - 3n/4)/16 \leq (1/4 + \delta')n/16$, $k \leq \delta' n$ and $s = m - k \geq (3/4 - \delta')n$.
\end{proof}

\begin{lemma} \label{lem:coll_not_sink}
    Let $v, A, L, F, L_F$ be returned by a successful run of Algorithm \ref{alg:2coll_sim} when run on $T$ and $\cS$. Suppose $|A| \geq 3$. For every pair of distinct elements $i_1, i_2 \subseteq [m]$, there exists $x$ which is a permutation of $\cS$ such that $x_{i_1} \neq x_{i_2}$ and $x$ reaches the leaf $v$.
\end{lemma}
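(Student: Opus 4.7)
The plan mirrors the proof of Lemma~\ref{lem:not_sink}, reducing to finding a suitable assignment to the free blocks. By Lemma~\ref{lem:coll_inv}, $L$ together with $L_F$ imply every parity constraint on the root-to-$v$ path, and assigning any permutation of $A$ to the blocks of $F$ extends via $L$ to a permutation of $\cS$. So it suffices to exhibit a permutation of $A$ on $F$ under which $x_{i_1} \neq x_{i_2}$. Observe first that, since $|A| \geq 3$ and each string has multiplicity at most $2$ in the refinement $S'$, $A$ contains at least two distinct strings.

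I would split into three cases based on $|\{i_1, i_2\} \cap F|$. If both $i_1, i_2 \notin F$, success of the simulation means no string appears twice in $S' \setminus A$, and by Lemma~\ref{lem:coll_inv} the fixed blocks form a permutation of $S' \setminus A$ for every choice of free-block assignment, so $x_{i_1} \neq x_{i_2}$ holds automatically for any permutation of $A$ on $F$. If both $i_1, i_2 \in F$, I would assign two distinct strings of $A$ to $i_1$ and $i_2$ and extend arbitrarily to a bijection between $F \setminus \{i_1, i_2\}$ and the remaining elements of $A$.

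The main case is when exactly one of $i_1, i_2$ lies in $F$, say $i_1 \in F$ and $i_2 \notin F$. The key structural claim I would establish is that as the free-block assignment varies over permutations of $A$, $x_{i_2}$ takes values in a set $V \subseteq \B^l$ of size at most $2$. This reduces to inspecting the iteration in which $i_2$ was fixed, treating separately the cases that $i_2$ plays the role of the algorithm's $i$ or $g$ in Line~\ref{algline:coll_blue_pair}: in every sub-branch, either all bits of $x_{i_2}$ are forced to constants in $L$ (giving $|V| = 1$), or all bits except the $j$-th are constants and $x_{i_2, j}$ is determined by an affine equation in the remaining free variables (giving $V \subseteq \{y, y^{\oplus j}\}$ for the pair $y, y^{\oplus j}$ used in that iteration, both of whose blue copies were removed from $A$ in that iteration).

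Given the claim, I would find a distinct string $z \in A$ with $z \notin V$: if $|V| = 1$ this is immediate from $A$ having at least two distinct strings, and if $V = \{y, y^{\oplus j}\}$ then the only elements of $A$ whose string lies in $V$ are (at most) the red copies $(y, \tr), (y^{\oplus j}, \tr)$, so $|A| \geq 3$ forces some element of $A$ with a string outside $V$. Assigning $x_{i_1} = z$ and an arbitrary bijection on the remaining free blocks gives $x_{i_1} = z \notin V \ni x_{i_2}$, completing the case. The main step requiring care is the bound $|V| \leq 2$, which amounts to a careful case check against the sub-branches of Line~\ref{algline:coll_blue_pair}; everything else reduces to pigeonhole counting using $|A| \geq 3$.
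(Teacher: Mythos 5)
Your proposal is correct and follows essentially the same three-case structure as the paper's proof, including the key observation that when $i_2 \notin F$ the constraints in $L$ leave at most two possible strings for $x_{i_2}$, and that the remaining blue copies of those strings were already removed from $A$, so $|A| \geq 3$ supplies a fresh string for $x_{i_1}$. Your treatment just spells out slightly more explicitly the case analysis over the sub-branches of the simulation; the argument is the same.
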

\begin{proof}
    By Lemma \ref{lem:inv}, $L$ combined with $L_F$ implies all the parity constraints on the path from the root to $v$. So it is sufficient to find $x$ which is a permutation of $\cS$ and satisfies $L, L_F$ for which $x_{i_1} \neq x_{i_2}$. Moreover, by Lemma \ref{lem:inv}, it suffices to set all the free blocks to a permutation of $A$ since extending this according to $L$ gives a permutation of $\cS$, and $L_F$ holds since it is implied by the free blocks being a permutation of $A$. We consider cases depending on how many blocks in $\{i_1, i_2\}$ are free.

    \begin{enumerate}
        \item If $i_1, i_2 \in F$, let $(y_1, c_1), (y_2, c_2) \in A$ be such that $y_1 \neq y_2$. Such elements of $A$ exist since $|A| \geq 3$ and for each $y \in \B^l$, there are at most two colored versions of $y$ in $A$.
        Set $x_{i_1}$ and $x_{i_2}$ to $y_1$ and $y_2$ respectively. Set all other blocks in $F$ according to an arbitrary permutation of $A \setminus \{(y_1, c_1), (y_2, c_2)\}$ and fix blocks outside $F$ according to $L$. By Lemma \ref{lem:inv}, such a string $x$ is unique and is a permutation of $\cS$.

        \item Suppose $i_1 \in F$ and $i_2 \notin F$ (the case $i_1 \notin F$ and $i_2 \in F$ is analogous). Consider the possible strings that $x_{i_2}$ can be according to $L$. Since $i_2 \notin F$, at least $l-1$ bits of $x_{i_2}$ are fixed by $L$ and so there are at most two such strings. Let $(y, c) \in A$ be such that $y$ is different from the strings that $x_{i_2}$ can be. Such a $(y, c)$ exists since $|A| \geq 3$ and any string that has already been assigned to $x_{i_2}$ can only appear once in $A$ (and must have the opposite color). Set $x_{i_1} = y$. Set all the remaining blocks in $F$ to an arbitrary permutation of $A \setminus \{(y, c)\}$. Fix blocks outside $F$ according to $L$ to obtain a permutation of $\cS$.

        \item If both $i_1, i_2$ do not lie in $F$, we set blocks in $F$ to an arbitrary permutation of $A$ and fix blocks outside $F$ according to $L$. Since this was a successful run, we must have $x_{i_1} \neq x_{i_2}$. If this was not the case, say $x_{i_1} = x_{i_2} = y$, then $S' \setminus A$ would contain both $(y, \tr)$ and $(y, \tb)$ at termination and the simulation would have failed. \qedhere
    \end{enumerate}

\end{proof}

We now have all the ingredients for the randomized PDT lower bound for collision-finding.
\begin{theorem}(Theorem \ref{thm:rpdt_coll} restated) \label{thm:rpdt_coll_again}
    There exists a constant $\delta > 0$ such that the following holds. Let $k \leq \delta n$ and $m \coloneqq n+k$. Let $\eps \in [2^{-k}, 1/2]$. Any zero-error randomized parity decision tree solving $\coll_n^m$ with abort probability $\eps$ must have depth $\Omega(n\sqrt{\log(1/\eps)/k})$.
\end{theorem}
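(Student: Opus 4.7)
The plan is to apply Yao's minimax principle to reduce to a distributional lower bound against deterministic PDTs, and then invoke the simulation together with the combinatorial lemmas of this subsection. Fix the hard instance: let $\cS \subseteq \B^l$ be a multiset of size $m=n+k$ in which exactly $k$ distinct strings appear twice and the remaining $n-k$ strings appear once, and let $\mu$ be the uniform distribution over permutations of $\cS$. Every $x \in \supp(\mu)$ has $k$ honest collisions, so $\coll_n^m(x)$ is nonempty on $\supp(\mu)$, and by Yao's principle it suffices to show that any deterministic PDT $T$ of depth $d$ that on $x \sim \mu$ either returns a valid collision of $x$ or returns $\bot$, subject to $\Pr_{x\sim\mu}[T(x)=\bot] \le \eps$, must satisfy $d = \Omega(n\sqrt{\log(1/\eps)/k})$.

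I may assume $d \le c_0 n$ for a sufficiently small constant $c_0$; otherwise $d = \Omega(n)$ already exceeds the target, since the hypothesis $\eps \ge 2^{-k}$ forces $\log(1/\eps)/k \le 1$. With this assumption, taking $c_0$ small enough ensures that the hypotheses $m-16d \ge 3n/4$, $d \ge 8$, $k \le \delta' n$ of Lemma \ref{lem:coll_fail_prob} are all met. Running Algorithm \ref{alg:2coll_sim} on $\cS$ and $T$ therefore produces a non-FAIL outcome with probability at least $(1-2/e)\exp(-c_1 d^2 k/n^2)$ for some absolute constant $c_1$. On any such successful run the algorithm arrives at some leaf $v$ of $T$, and Lemma \ref{lem:coll_not_sink} guarantees that for every pair $\{i_1,i_2\}\subseteq[m]$ there is a permutation of $\cS$ that reaches $v$ with $x_{i_1}\ne x_{i_2}$. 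Since $T$ is zero-error and deterministic, the only label $T$ may place at $v$ consistent with its correctness is $\bot$.

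Finally, Lemma \ref{lem:coll_rpdt_dist} identifies the simulation distribution on leaves of $T$ with the distribution of leaves reached by $T$ on $x\sim\mu$. Combining the last three observations,
\[
\eps \;\ge\; \Pr_{x\sim\mu}[T(x)=\bot] \;\ge\; (1-2/e)\exp(-c_1 d^2 k/n^2),
\]
and solving for $d$ yields $d = \Omega(n\sqrt{\log(1/\eps)/k})$. The heavy lifting is already packaged in Lemmas \ref{lem:coll_fail_prob}, \ref{lem:coll_rpdt_dist}, and \ref{lem:coll_not_sink}, so no step is a serious obstacle. The main points requiring care are choosing $\cS$ so that every input in $\supp(\mu)$ is a valid $\coll_n^m$ instance, verifying the hypotheses of Lemma \ref{lem:coll_fail_prob} for the relevant range of $d$, and absorbing factors like $(1-2/e)$ into the hidden $\Omega$ constant when $\log(1/\eps)$ is itself $O(1)$.
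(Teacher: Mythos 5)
Your overall plan is the same as the paper's: Yao's principle, the simulation of Algorithm~\ref{alg:2coll_sim} on the uniform distribution over permutations of a near-permutation multiset $\cS$, and Lemmas~\ref{lem:coll_rpdt_dist}, \ref{lem:coll_fail_prob}, \ref{lem:coll_not_sink}. The core chain $\eps \ge \Pr_\mu[T(x)=\bot] \ge \Pr[\text{sim succeeds}] \ge (1-2/e)\exp(-c_1 d^2 k/n^2)$ is correct, and your handling of the case $d > c_0 n$ via $\log(1/\eps)/k \le 1$ matches the paper.

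There is, however, a genuine gap in the last step where you dismiss the $(1-2/e)$ factor as "absorbable into the hidden $\Omega$ constant." Taking logarithms in $\eps \ge (1-2/e)\exp(-c_1 d^2 k/n^2)$ gives $c_1 d^2 k/n^2 \ge \ln\bigl((1-2/e)/\eps\bigr)$, which is negative whenever $\eps > 1-2/e \approx 0.264$. So for $\eps$ in the range $(1-2/e, 1/2]$ — which the theorem statement explicitly allows — your inequality is satisfied by $d=0$ and yields no lower bound at all. The target $\Omega(n\sqrt{\log(1/\eps)/k})$ is still $\Omega(n/\sqrt{k})$ there, so the statement is not vacuous in that range, and the theorem requires an argument. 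The paper's fix is error reduction: first prove the bound only for $\eps \le (1-2/e)^2$ (so that $\sqrt{\eps} \le 1-2/e$ lets you write $\eps \ge \sqrt{\eps}\exp(-c_1 d^2 k/n^2)$, hence $d^2 k/n^2 = \Omega(\log(1/\eps))$), and for larger constant $\eps$, run the RPDT a constant number $t$ of times to drive the abort probability below $(1-2/e)^2$ at the cost of a factor $t$ in depth, then invoke the already-proved case. Without something like this, your argument breaks exactly where the constant factor refuses to be "absorbed."

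A more minor point: Lemma~\ref{lem:coll_fail_prob} requires $d \ge 8$, which you flag but do not resolve. The paper handles it by first observing that any RPDT with abort probability $<1$ must have depth at least $\log n$ (a parity certificate for a collision $x_{i_1}=x_{i_2}$ needs $\log n$ linear equations), so $d \ge \log n \ge 8$ for $n$ large enough. You will need some such observation before you can invoke the lemma.
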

\begin{proof}
We pick $\delta$ to be the constant $\delta'$ from the statement of Lemma \ref{lem:coll_fail_prob}. 
We will prove the lower bound for $\eps \leq (1-2/e)^2$. This also suffices for larger constant abort probability since repeating a constant number of times is enough to bring down the abort probability to any desired constant.

First note that $\log n$ is a lower bound on the depth of any RPDT solving $\coll_n^m$ which aborts with probability $\eps < 1$, since 
any parity certificate indicating $x_{i_1} = x_{i_2}$ must have size at least $\log n$.

If for the given $\eps$, $k$ and $n$, every randomized PDT requires $n/100$ queries to solve $\coll_n^m$ with probability $1-\eps$, then the desired statement holds by the assumption $\eps \geq 2^{-k}$.

So consider the case where there is some randomized PDT of depth $d \leq n/100$ solving $\coll_n^m$ which aborts with probability at most $\eps$. By Yao's principle, this implies that for every distribution, there is a deterministic PDT of depth $d$ solving $\coll_n^m$ which aborts with probability at most $\eps$ on that distribution. Let $\cS$ denote a multiset containing $m = n+k$ strings in $\B^l$ such that each string in $\B^l$ appears at least once and at most twice. Let $\mu$ denote the uniform distribution on permutations of $\cS$.

Let $T$ denote a PDT which solves $\coll_n^m$ while aborting with probability at most $\eps$ on the distribution $\mu$. Run Algorithm \ref{alg:2coll_sim} on $T$ and $\cS$. Lemma \ref{lem:coll_not_sink} implies that whenever the simulation succeeds, the leaf reached must be labeled with $\bot$. Lemma \ref{lem:coll_rpdt_dist} implies that the probability that the simulation ends at a $\bot$ leaf is also the abort probability on an input $x \sim \mu$. So we can use Lemma \ref{lem:coll_fail_prob} to give a lower bound on the abort probability. The required conditions can be seen to be satisfied since $m \geq n$, $d \leq n/100$, $k \leq \delta n$ and $d \geq \log n \geq 8$ for $n$ large enough. Using Lemma \ref{lem:coll_fail_prob}, we obtain
\[
\eps \geq \left(1-2/e\right)\exp(-O( d^2 k/n^2)).
\]
Using $(1-2/e) \geq \sqrt{\eps}$ and then rearranging, we obtain $d \geq \Omega(n\sqrt{\log(1/\eps)/k})$ as desired.
\end{proof}

\subsection{Random permutations satisfying a safe system}

The next lemma gives an upper bound on the probability that a safe system is satisfied by a uniform random permutation of a multiset $\cS \subseteq \B^l$ which does not contain any string more than twice. It will again be convenient to work with the colored version $S'$ of $\cS$ used in the previous subsection.

\begin{lemma} \label{lem:perm_prob}
Let $\Psi$ be a safe linear system on $(\B^l)^m$. Let $\cS \subseteq \B^l$ be a multiset such that $|\cS| = m$ and let $S'$ denote the corresponding set obtained by coloring as above. Suppose $\langle \Psi \rangle$ contains $\langle \{\sum_{i \in [m]} x_{i, j} = \sum_{(y, c) \in S'} y_j \mid j \in [l]\} \rangle$. Let $k$ denote the number of red elements in $S'$. Suppose $k \leq n/20$.
Let the rank of $\Psi$ be $l+r$ ($r \geq 0$). Suppose $m - 2r \geq 3n/4$. Let $\mu$ be the uniform distribution on permutations of $\cS$. Then 
    \[
    \Pr_{x \sim \mu}[x \text{ satisfies } \Psi] \leq \left(\frac{7}{8}\right)^{\ceil{r/4}}.
    \]
\end{lemma}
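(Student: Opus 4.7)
My plan is to induct on $r$. The base case $r = 0$ is immediate since the right-hand side equals $1$. For the inductive step with $r \geq 1$, I will adapt the conditioning strategy of Algorithm~\ref{alg:2coll_sim} and the overview in Section~\ref{subsec:s_bphp_overview}. Since $\Psi$ is safe of rank $l + r$ and its span contains the $l$ permutation equations $\{\sum_i x_{i, j} = \sum_{(y, c) \in S'} y_j : j \in [l]\}$, I will pick some equation $P = b$ in $\Psi$ whose linear form $P$ lies outside the span of the permutation equations. Consequently there is some $j \in [l]$ with $\proj(P, j) \notin \{0, \sum_i x_{i, j}\}$, and after replacing $P$ by $P + \sum_i x_{i, j}$ if necessary, I may assume $|\proj(P, j)| \leq m/2$. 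I will then fix any block $i$ with $x_{i, j}$ appearing in $P$.

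Next I will sample $x \sim \mu$ and reveal $x_{i, h}$ for all $h \neq j$. Let $G_1$ denote the event that both length-$l$ strings agreeing with the revealed bits of $x_i$ in coordinates $\neq j$ appear in $\cS$; in this case there is a unique partner block $g \neq i$ with $x_{g, h} = x_{i, h}$ for $h \neq j$ and $x_{g, j} = 1 \oplus x_{i, j}$. Using $k \leq n/20$, $m - 2r \geq 3n/4$, and a count of how many blue strings in $\cS$ lack a flip-partner under coordinate $j$, I expect $\Pr[G_1]$ to be bounded below by an absolute constant. Conditional on $G_1$, the partner $g$ is uniformly distributed over $[m] \setminus \{i\}$, so the event $G_2 = \{g \notin \proj(P, j)\}$ has probability at least $1/2$ given $G_1$. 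On $G_1 \cap G_2$, the bit $x_{i, j}$ will be uniform and independent of every block outside $\{i, g\}$, and since $P$ contains $x_{i, j}$ but not $x_{g, j}$, the parity $P$ will be uniformly distributed conditional on the stage-$1$ sampling, so $\Pr[P = b \mid G_1 \cap G_2, \text{stage 1 sample}] = 1/2$.

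To close the induction, I will fix the revealed values (including $x_{i, j}$, which determines $x_{g, j}$) and consider the system $\Psi'$ obtained from $\Psi$ by substituting for $x_i$ and $x_g$. Since safety is preserved under restriction, Lemmas~\ref{lem:size_bound} and~\ref{lem:safe} will imply that $\Psi'$ is safe on the $m - 2$ remaining blocks with rank at most $l + (r - 2)$ and still contains the corresponding $l$ permutation equations on those blocks. The inductive hypothesis applied to $\Psi'$ will give $\Pr[\Psi' \text{ sat}] \leq (7/8)^{\ceil{(r-2)/4}}$. Combining this with the factor of $1/2$ in the good case (which has constant probability) and bounding the bad case crudely via the induction hypothesis on $\Psi \setminus \{P = b\}$, I will get a recursion of the form $\Pr[\Psi] \leq \alpha \Pr[\Psi' \text{ sat}]$ for some $\alpha < 1$; calibrating so that two applications of the step produce a factor of $7/8$ will yield the claimed $(7/8)^{\ceil{r/4}}$ bound. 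The hard part will be tracking the probabilities across the bad events $\overline{G_1 \cap G_2}$, where the conditional distribution on the remaining blocks is no longer obviously a uniform permutation of a smaller multiset, and verifying that fixing $(x_i, x_g)$ to a flip-pair in $\cS$ really does give a uniform random permutation of $\cS \setminus \{x_i, x_g\}$ so that the induction hypothesis applies cleanly on the smaller multiset. The factor of $4$ in the exponent $\ceil{r/4}$ absorbs both the constant probability of the good case and the constant rank reduction per step.
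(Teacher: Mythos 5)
Your high-level plan matches the paper's: induct on $r$, pick an equation $P=b$ outside the span of the permutation constraints, normalize $\proj(P,j)$ to have at most $m/2$ variables, condition on $x_{i,h}\;(h\neq j)$ to expose a uniform bit with constant probability, and gain a factor $1/2$ in the good case. However, there is a genuine gap in how you close the induction.

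The claim that ``Lemmas~\ref{lem:size_bound} and~\ref{lem:safe} will imply that $\Psi'$ is safe on the $m-2$ remaining blocks with rank at most $l+(r-2)$'' is false on both counts. First, safety is \emph{not} preserved under restriction to a subset of blocks: for example $V=\{x_{1,1}+x_{2,1},\; x_{1,2}+x_{2,2}\}$ is safe, but $V[\setminus\{2\}]=\{x_{1,1},\,x_{1,2}\}$ is a two-dimensional dangerous set supported on a single block. Safety is preserved under removing the \emph{closure}, but the blocks $\{i,g\}$ you fix are not chosen to be a closure. Second, even if $\Psi'$ were safe, its rank could drop by far more than $2$: fixing blocks $i$ and $g$ removes $2l$ variables, so the rank can fall by up to the rank of the projection of $L(\Psi)$ onto those $2l$ coordinates, which in general is not bounded by $2$. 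This is exactly the subtlety the paper spends most of the proof addressing: it first passes to $\Psi_0$ (the subspace of $\langle\Psi\rangle$ avoiding only the \emph{two specific variables} $x_{i,j},x_{p,j}$, losing rank at most $2$), notes that every form in $\Psi_0$ involving $x_{i,j}$ must also involve $x_{p,j}$ so that substituting $x_{i,j}+x_{p,j}=1$ does not change the restriction, and then uses the inequality $|\cl(L')|+\dim(L'[\setminus\cl(L')]) \geq \dim(L)-2$ together with Lemma~\ref{lem:sdim_bound} to extract a \emph{safe subsystem} of $L'$ of dimension at least $l+r-4$. The drop is $4$, not $2$, and it requires passing through a safe subsystem rather than applying the induction to the raw restricted system. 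You also elide the single-block case (when $c=\tr$ or no flip-partner is present in $\cS$), which the paper handles separately with a drop of $1$, using a closely analogous argument. Without these ingredients the recursion does not close, and the $(7/8)^{\ceil{r/4}}$ calibration (which is engineered so that $(1-\tfrac18)(7/8)^{\ceil{(r-4)/4}}=(7/8)^{\ceil{r/4}}$) does not follow.
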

\begin{proof}
    Instead of directly working with $\mu$, it will be convenient to think of sampling a uniform random permutation of $S'$. Let $\mu'$ be the uniform distribution on permutations of $S'$. Abusing notation, we write $(x, col) \sim \mu'$ to denote such a random permutation of $S'$ where $x \in (\B^l)^m$ is a permutation of $\cS$ and $col \in \{\tb, \tr\}^m$ stores the corresponding colors. 

    The proof is by induction on $r$. If $r = 0$, the statement is trivial. 

    For the induction step, suppose $r \geq 1$.
    Pick an equation $P = b$ in $ \Psi$ such that $P \notin \langle \{\sum_{i \in [m]} x_{i, j} \mid j \in [l]\} \rangle$. Such an equation exists since $\dim(\{\sum_{i \in [m]} x_{i, j} \mid j \in [l]\}) = l$ and $\rk(\Psi) > l$. 
    Let $j \in [l]$ be such that $\proj(P, j) \notin \{0, \sum_{i \in [m]} x_{i, j}\}$. 
    If $|\proj(P, j)| > m/2$, we add $\sum_{i \in [m]} x_{i, j} = \sum_{(y, c) \in S'} y_{j}$ to this equation. This does not change the linear system. 
    After this operation (if required), we now have $|\proj(P, j)| \leq m/2$. Pick some $i \in [m]$ such that $x_{i, j}$ appears in $P$.

We now condition according to $x_{i, h}, h \neq j$ and the color $col_i \in \{\tb, \tr\}$ similar to Algorithm \ref{alg:2coll_sim}. Let $(y, c) \in S'$.
\begin{enumerate}
    \item $c = \tr$ or $(y^{\oplus j}, \tb) \notin S'$.
    
In this case, after conditioning on $x_{i, h} = y_h\; (h \neq j)$ and $col_i = c$, we also condition on $x_{i, j}$.
    Conditioned on $x_i = y$ and $col_i = c$, we know $(x, col) \sim \mu'$ satisfies $\Psi$ if and only if it satisfies $\Psi \cup \{x_{i, j} = y_j \mid j \in [l]\}$. Let $\Psi' = \{Q = b' \in \langle \Psi \cup \{x_{i, h} = y_h \mid h \in [l]\}) \mid i \notin \supp(Q)\} \rangle$. We will show that $\Psi'$ contains a safe system of dimension at least $l+r - 1$.

    Let $L$ and $L'$ denote the sets of linear forms of $\Psi$ and $\Psi'$ respectively. Observe that $L' = L[\setminus \{i\}]$. Indeed since $L'$ is obtained by essentially substituting constants for $x_{i, j}$ in the linear forms of $L$, this is the same as removing the variables $x_{i, j}$ from $L$.
    Let $H = \{Q \in \langle L \rangle \mid \supp(Q) \subseteq \cl(L') \cup \{i\}\}$. 
    Since $L$ is safe, $H$ is safe and $\dim(H) \leq |\cl(L')|+1$. We also have $L'[\setminus \cl(L')] = L[\setminus \cl(L') \cup \{i\}]$. Combining these, we get 
    $|\cl(L')| + \dim(L'[\setminus \cl(L')]) \geq \dim(H) - 1 + \dim(L[\setminus \cl(L') \cup \{i\}]) = \dim(L) - 1$.

    Let $U = \{\sum_{i' \in [m] \setminus \{i\}} x_{i', j} \mid j \in [l]\}$. By Lemma \ref{lem:size_bound}, $\cl(L') \leq r+l \leq m - 3n/4 + l < m$. So $\dim(U[\setminus \cl(L')]) = l$ since all forms are on disjoint sets of variables and no form in $U[\setminus \cl(L')]$ simplifies to $0$ by the bound on $\cl(L')$. By Lemma \ref{lem:sdim_bound} and the lower bound above on $|\cl(L')| + \dim(L'[\setminus \cl(L')])$, there exists a safe subset of $\langle L'\rangle$ whose dimension is at least $r+l-1$ and such that $L'$ contains $U$. Let $\Phi$ be the corresponding subset of equations of $\Psi'$. Let $\tilde{S} = S' \setminus \{(y, c)\}$. 
    So $\langle \Phi \rangle $ contains $\langle \{\sum_{i' \in [m] \setminus \{i\}} x_{i', j} = \sum_{(y', c') \in \tilde{S}} y_i \mid j \in [l]\} \rangle $. Moreover, conditioned on $x_i = y$, $col_i = c$, the remaining blocks form a uniform random permutation $(x', col')$ of $\tilde{S}$. To use the induction hypothesis, we will also require that $m' - 2r' \geq 3n/4$ where $m' \coloneqq |\tilde{S}| = m-1$ and the new system has rank $l+r'$. To ensure this, if required, we remove an equation from the system $\Phi$ so that the rank of $\Phi$ is exactly $l+r -1$ while still ensuring that $\langle \Phi \rangle$ contains $\langle \{\sum_{i' \in [m] \setminus \{i\}} x_{i', j} = \sum_{(y', c') \in \tilde{S}} y_i \mid j \in [l]\} \rangle $. Having done this (if required), we can now use the induction hypothesis to conclude that 
    \[
    \Pr_{(x, col) \sim \mu'}[x \text{ satisfies } \Psi \mid x_i = y, col_i = c] \leq \Pr_{({x'}, col') \sim \tilde{\mu}}[\tilde{x} \text{ satisfies } \Phi] \leq \left(\frac{7}{8}\right)^{\ceil{(r-1)/4}}
    \]
    where $\tilde{\mu}$ is the uniform distribution on $\tilde{S}$.
    
\item $c = \tb$ and $(y^{\oplus j}, \tb) \in A$.

Condition on $col_i = \tb$ and $x_{i, h} = y_h$ for all $h \in [m] \setminus \{j\}$.
Since $(y^{\oplus j}, \tb) \in A$, in any permutation such that $col_i = \tb$ and $x_{i, h} = y_h$ for all $h \in [l] \setminus \{j\}$, there must be a unique $p \in [m]\setminus \{i\}$ such that $x_{p, h} = y_h$ for $h \in [l] \setminus \{j\}$ and $col_p = \tb$. Condition on the value of this $p$ which is uniformly distributed in $[m] \setminus \{i\}$.

Let $\Psi_0$ consist of all equations in the span of $\Psi$ which do not involve the variables $x_{i, j}, x_{p, j}$. Note that the rank of $\Psi_0$ is at least $l+r - 2$ since $\Psi$ has rank $l+r$ and the dimension of the linear forms of $\Psi$ projected down to $x_{i, j}, x_{p, j}$ is at most $2$. Add the equation $\sum_{i' \in [m] } x_{i', j} = \sum_{(y', c) \in S'} y'_j$ to $\Psi_0$. Note that $\Psi_0$ is safe since it is implied by $\Psi$ which is safe. Moreover, note that each equation in $\langle \Psi_0 \rangle $ which involves either $x_{i, j}$ or $x_{p, j}$ must involve both. Let $\Psi' = \{Q = b' \in \langle \Psi_0 \cup \{x_{i, h} = x_{p, h} = y_h \mid h \in [l]\setminus \{j\}\} \cup \{ x_{i, j} + x_{p, j} = 1\} \rangle \mid i, p \notin \supp(Q)\}$. We will show that $\Psi'$ contains a safe system of rank at least $t - 2$ where $t$ is the rank of $\Psi_0$ by arguing exactly as in the previous case.

Let $L$ and $L'$ denote the sets of linear forms of $\Psi_0$ and $\Psi'$ respectively. We have that $L' = L[\setminus\{i, p\}]$. For this, we note as before that linear forms obtained after substituting constants for $x_{i, h}, x_{p, h} \; (h \neq j)$ and removing any resulting constant terms are the same as those obtained by just removing $x_{i, h}, x_{p, h} \; (h \neq j)$. In addition, we are using that every linear form $v$ in $L$ which involves one of $x_{i, j}, x_{p, j}$ must actually use both and so $\Psi'$ must also contain some equation whose linear form is $v+x_{i, j} + x_{p, j}$ (which does not depend on either $x_{i, j}, x_{p, j}$) as these can be cleared from the corresponding equation in $\Psi$ by substituting $x_{i, j} + x_{p, j} = 1$.

Let $H = \{Q \in L \mid \supp(Q) \subseteq \cl(L') \cup \{i, p\}\}$. 
Since $L$ is safe, $H$ is safe which implies $\dim(H) \leq |\cl(L')|+2$. We also have $L'[\setminus \cl(L')] = L[\setminus \cl(L') \cup \{i, p\}]$. Combining these, we get 
    $|\cl(L')| + \dim(L'[\setminus \cl(L')]) \geq \dim(H) - 2 + \dim(L[\setminus \cl(L') \cup \{i, p\}]) = \dim(L) - 2$.
    By arguing as in the previous case, using Lemma \ref{lem:sdim_bound}, there exists a safe subset of $L'$ whose dimension is at least $t-2$ and which contains $\{\sum_{i' \in [m] \setminus \{i, k\}} x_{i', j'} \mid j' \in [l]\setminus \{j\}\}$. Let $\Phi$ be the corresponding subset of equations of $\Psi_0$. Since $t \geq l+r - 2$, we get that $\Phi$ has rank at least $l+r - 4$.
    
    Moreover, conditioned on $\{x_i, x_p\} = \{y, y^{\oplus j}\}$ and $col_i = col_p = \tb$, the remaining blocks form a uniform random permutation $(x', col')$ of $\tilde{S} \coloneqq S' \setminus \{(y, \tb), (y^{\oplus j}, \tb)\}$. To apply the induction hypothesis, we will also require that $m-2 - 2r' \geq 3n/4$ where the new system $\Phi$ has rank $l+r'$. 
    To do this, we remove some equation from $\Phi$ if required so that $\rk(\Phi) \leq \rk(\Psi) - 1$ while ensuring that the system still contains $\{\sum_{i' \in [m] \setminus \{i, k\}} x_{i', j} = \sum_{(y', c') \in \tilde{S}} y_i \mid j \in [l]\}$. 

We now consider cases depending on whether $x_{p, j}$ appears in $\proj(P, j)$. Let $\tilde{\mu}$ be the uniform distribution on $\tilde{S}$.
\begin{itemize}
\item If $x_{p, j}$ appears in $\proj(P, j)$, we just use that the permutation $x'$ on the remaining $m-2$ blocks must satisfy $\Phi$ which is implied by $\Psi$ and $x_{i, h} = x_{k, h} = y_h$ for $h \in [l] \setminus \{j\}$.
\begin{align*}
    &\Pr_{(x, col) \sim \mu'}[x \text{ satisfies } \Psi \mid x_{i, h} = x_{p, h} = y_h, h \in [l]\setminus \{j\}, col_i = col_p = \tb]  \\
    &\leq \Pr_{(x', col') \sim \tilde{\mu}}[x' \text{ satisfies } \Phi] \leq \left(\frac{7}{8}\right)^{\ceil{(r-4)/4}}.
\end{align*}
    
(When $r \leq 3$, we are just bounding the probability by $1$ and not using the induction hypothesis.)

\item If $x_{p, j}$ does not appear in $\proj(P, j)$, we also consider the probability that $P = b$ is satisfied.
\begin{align*}
    &\Pr_{(x, col) \sim \mu'}[x \text{ satisfies } \Psi \mid col_i = col_p = \tb, x_{i, h} = x_{p, h} = y_h, h \in [l]\setminus \{j\}] \\
    &\leq \Pr_{(x, col) \sim \mu'}[x \text{ satisfies } P = b \text{ and } \Phi \mid col_i = col_p = \tb, x_{i, h} = x_{p, h} = y_h, h \in [l]\setminus \{j\}] \\
    &\leq \Pr_{(x, col) \sim \mu'}[x \text{ satisfies } P = b \mid x \text{ satisfies } \Phi, col_i = col_p = \tb, x_{i, h} = x_{p, h} = y_h, h \in [l]\setminus \{j\}] \cdot \\
    &\;\;\;\;\;\;\Pr_{(x, col) \sim \mu'}[x \text{ satisfies } \Phi \mid col_i = col_p = \tb, x_{i, h} = x_{p, h} = y_h, h \in [l]\setminus \{j\}] \\
    &= \frac{1}{2} \Pr_{(x', col') \sim \tilde{\mu}}[x' \text{ satisfies } \Phi] \\
    &\leq \frac{1}{2}\left(\frac{7}{8}\right)^{\ceil{(r-4)/4}}.
\end{align*}
For the equality in the above chain, note that conditioned on $x_i$ and $x_p$ satisfying the above conditions, $x_{i, j}$ is a uniform random bit which is independent of all $x_{i', h'}, i' \notin \{i, p\}, h' \in [l]$. Therefore since the parity $P$ contains $x_{i, j}$ but not $x_{p, j}$, $P$ is uniformly distributed after conditioning on all $x_{i', h'}, i' \notin \{i, k\}, h' \in [l]$.
\end{itemize}
\end{enumerate} 

We now combine the above cases. The only case in which we do better than what follows from induction is when $(x_i, col_i) = (y, \tb)$ is such that  $(y^{\oplus j}, \tb) \in S'$ and for the unique $p$ satisfying $x_{i} = x_p^{\oplus j}$ and $col_p = \tb$, we have $x_{p, j}$ lying outside $P$. The probability that $col_i = \tr$ or $(x_i^{\oplus j}, \tb) \notin A$ is at most $\frac{k+n-(m-k)}{m}$ where $k$ is the number of red strings in $S'$ and $n-(m-k)$ is an upper bound on the the number of strings $z$ such that $(z, \tb) \notin S'$. This can be upper bounded by $(2k+n)/m - 1 \leq (n/8 + n)/(3n/4) - 1 = 1/2$ where we used the assumptions $m - 2r \geq 3n/4$ and $k \leq n/20$. 
Conditioned on $(x_i, col_i) = (y, \tb)$ satisfying $(y^{\oplus j}, \tb) \in S'$, we know that $x_{p, k}$ lies outside $P$ with probability at least $1/2$ since we ensured that $|\proj(P, j)| \leq m/2$. 

We combine the above to obtain the desired bound. All probabilities below are over $(x, col) \sim \mu'$.
\begin{align*}
    &\Pr[x \text{ satisfies } \Psi] \\
    &=\Pr[col_i = \tr \text{ or } (x_i^{\oplus j}, \tb) \notin S'] \cdot \Pr[x \text{ satisfies } \Psi \mid col_i = \tr \text{ or } (x_i^{\oplus j}, \tb) \notin S'] + \\ 
    &\;\;\;\;\; \Pr[col_i = \tb, (x_i^{\oplus j}, \tb) \in S', x_{k, j} \text{ inside } P] \cdot \Pr[x \text{ satisfies } \Psi \mid col_i = \tb, (x_i^{\oplus j}, \tb) \in S', x_{k, j} \text{ inside } P] + \\
    &\;\;\;\;\; \Pr[col_i = \tb, (x_i^{\oplus j}, \tb) \in S', x_{k, j} \text{ outside } P] \cdot \Pr[x \text{ satisfies } \Psi \mid col_i = \tb, (x_i^{\oplus j}, \tb) \in S', x_{k, j} \text{ outside } P] \\
    &\leq \Pr[col_i = \tr \text{ or } (x_i^{\oplus j}, \tb) \notin S'] \cdot \left(\frac{7}{8}\right)^{\ceil{(r-1)/4}} +  \Pr[col_i = \tb, (x_i^{\oplus j}, \tb) \in S', x_{k, j} \text{ inside } P] \cdot \left(\frac{7}{8}\right)^{\ceil{(r-4)/4}} + \\
    &\;\;\;\;\; \Pr[col_i = \tb, (x_i^{\oplus j}, \tb) \in S', x_{k, j} \text{ outside } P] \cdot \frac{1}{2}\left(\frac{7}{8}\right)^{\ceil{(r-4)/4}} \\
    &\leq \bigg(\Pr[col_i = \tr \text{ or } (x_i^{\oplus j}, \tb) \notin S'] + \Pr[col_i = \tb, (x_i^{\oplus j}, \tb) \in S', x_{k, j} \text{ inside } P] + \\
    &\;\;\;\;\; \Pr[col_i = \tb, (x_i^{\oplus j}, \tb) \in S', x_{k, j} \text{ outside } P]\cdot \frac{1}{2} \bigg) \left(\frac{7}{8}\right)^{\ceil{(r-4)/4}} \\
    &\leq \left(1-\Pr[col_i = \tb, (x_i^{\oplus j}, \tb) \in S', x_{k, j} \text{ outside } P]\cdot \frac{1}{2}\right) \left(\frac{7}{8}\right)^{\ceil{(r-4)/4}} \\
    &\leq \left(1 - \frac{1}{2}\cdot \frac{1}{2} \cdot \frac{1}{2}\right) \left(\frac{7}{8}\right)^{\ceil{(r-4)/4}} = \left(\frac{7}{8}\right)^{\ceil{r/4}} \qedhere
\end{align*}
\end{proof}

\subsection{Finding an affine restriction}

To find a suitable affine restriction for restarting the random walk, we will need the following lemma.

\begin{lemma}\label{lem:find_pivots}
Let $\Psi$ be a safe linear system on $(\B^l)^m$. Suppose $\Psi$ contains $\{\sum_{i \in [m]} x_{i, j} = b_j \mid j \in [l]\}$ for some $b_j \in \bF_2 \; (j \in [l])$. Let the rank of $\Psi$ be $l+r$ ($r \geq 0$). Suppose $m > 2r+l$. Then there exist affine restrictions $\rho_1$ and $\rho_2$, and $F, I \subseteq [m]$ such that the following hold.
\begin{enumerate}
\item $|F| = m - (2r+1)$.
\item There exists $G = \{(i_1, j_1), (i'_1, j_1), (i_2, j_2), (i'_2, j_2), \dots, (i_r, j_r), (i'_r, j_r)\} \subseteq ([m]\setminus F) \times [l]$ such that $I \coloneqq \{i_1, i'_1, i_2, i_2', \dots, i_r, i'_r\}$ has size $2r$ where $\rho_1$ fixes variables in $G$ as affine functions of variables in $((F \cup I) \times [l]) \setminus G$.
\item $\rho_1$ implies $x_{i_t, j_t}+ x_{i'_t, j_t} = 1$ for all $t \in [r]$.
\item $\rho_2$ fixes all the variables in the unique block $i_0 \in [m] \setminus (F \cup I)$ as affine functions of variables in $((F \cup I) \times [l]) \setminus G$. More specifically, for each $j \in [l]$, 
\[
\rho_2(x_{i_0, j}) = b_j + \sum_{t \in [r]} (\llbracket j_t \neq j \rrbracket (x_{i_t, j_t} + x_{i'_t, j_t})+\llbracket j_t = j \rrbracket \cdot 1) + \sum_{i \in F} x_{i, j}
\](where $\llbracket j_t = j \rrbracket$ is the $0/1$ indicator for $j_t = j$ interpreted over $\bF_2$).
\item 
$\rho_1 \cup \rho_2$ implies $\Psi$.
\end{enumerate}
\end{lemma}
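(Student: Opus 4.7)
The plan is to read off both restrictions from a carefully chosen basis of pivots for $\Psi$ that separates the permutation part from the non-permutation part.

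First, I would select the special block $i_0$ together with pivots for the non-permutation equations of $\Psi$. The key observation is that, for \emph{any} choice of $i_0 \in [m]$, the $l$ columns of $\Psi$'s coefficient matrix at positions $(i_0, 1), \dots, (i_0, l)$ are automatically linearly independent, because the $l$ permutation equations form an identity block on them. I would then add suitable multiples of the permutation equations to each of the remaining $r$ equations to zero out the $(i_0, j)$-entries, producing a residual system $\Psi'$ of rank $r$ supported entirely outside block $i_0$. A short argument shows that $\Psi'$ is itself safe: any dangerous set in $\Psi'$ pulls back through the row operations to a dangerous set in $\Psi$, contradicting the safety hypothesis. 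Applying \autoref{lem:safe} to $\Psi'$ then yields pivots $(i_1, j_1), \dots, (i_r, j_r)$ in $r$ distinct blocks, all different from $i_0$, and an internal row reduction puts each residual equation in the form $x_{i_t, j_t} = d_t + A_t$, with $A_t$ avoiding the pivot positions $\{(i_0, j) : j \in [l]\} \cup \{(i_s, j_s) : s \in [r]\}$.

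Second, I would pick the twin blocks $i'_1, \dots, i'_r$ from the remaining pool, which has size at least $m - (r + 1) \geq r + l$ since $m > 2r + l$, and set $F := [m] \setminus (\{i_0\} \cup I)$, giving $|F| = m - (2r + 1)$. I would then define $\rho_1$ by $\rho_1(x_{i_t, j_t}) := d_t + A_t$ and $\rho_1(x_{i'_t, j_t}) := d_t + A_t + 1$, so the pair constraint $x_{i_t, j_t} + x_{i'_t, j_t} = 1$ holds by construction. If some $A_t$ still mentions a forbidden variable $x_{i'_s, j_s} \in G$, I would eliminate it via the pair substitution $x_{i'_s, j_s} = x_{i_s, j_s} + 1$ and iterate; the twins can be chosen so that the resulting dependency digraph on $[r]$ is acyclic, which terminates the unfolding and produces affine functions supported on $((F \cup I) \times [l]) \setminus G$.

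Third, I would define $\rho_2(x_{i_0, j})$ by solving the permutation equation in direction $j$ for $x_{i_0, j}$, splitting the contributions from $F$ and $I$, and then replacing each pair sum $x_{i_s, j} + x_{i'_s, j}$ with $j_s = j$ by the constant $1$ supplied by $\rho_1$'s pair constraint. This matches the displayed formula. Verification of the five listed properties is then routine: the block count $|F| = m - (2r + 1)$ and the structure of $G$ are by construction, the pair constraints hold by definition of $\rho_1$, $\rho_2$ has the prescribed form, and $\rho_1 \cup \rho_2$ implies every equation of $\Psi$ since the $r$ pair-constrained assignments recover the non-permutation equations $E_t$ while the $l$ assignments from $\rho_2$ recover the permutation equations. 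The main obstacle is the second step: guaranteeing that the residual $(i'_s, j_s)$-terms inside each $A_t$ can be unfolded into a genuinely affine function on the allowed variables. This requires choosing the twins so that the induced dependency structure among the pairs is acyclic, and is where the safety of $\Psi$, pushed through to $\Psi'$, does the real work.
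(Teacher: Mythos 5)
Your high-level plan — clear out a designated block $i_0$ using the permutation equations, find safe pivots for the residual rank-$r$ system, pair each pivot block with a twin, and read off $\rho_1,\rho_2$ — is a reasonable one-shot alternative to the paper's induction. But the crux of the lemma is exactly the step you defer to at the end: showing that $i_0$, the pivots, and the twins can be chosen so that the $2r$ equations (the $r$ pivot equations $x_{i_t,j_t}=d_t+A_t$ plus the $r$ pair equations $x_{i_t,j_t}+x_{i'_t,j_t}=1$) genuinely solve for the $2r$ variables in $G$. Your claim that ``the twins can be chosen so that the resulting dependency digraph on $[r]$ is acyclic'' is asserted without proof, and it is in fact \emph{false} for an arbitrary choice of $i_0$ and pivots. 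Concretely, take $l=2$ ($n=4$), $m=5$, $r=1$, and let $\Psi$ consist of the two permutation equations together with $x_{1,1}+x_{2,1}+x_{3,1}+x_{4,1}=c$. This system is safe (columns $(1,1),(2,2),(5,1)$ are independent and lie in distinct blocks). If you pick $i_0=5$, the residual equation already avoids block $5$, the pivot must be some $(t,1)$ with $t\le 4$, say $(1,1)$ with $A_1=x_{2,1}+x_{3,1}+x_{4,1}$, and then \emph{every} candidate twin $b\in\{2,3,4\}$ has $x_{b,1}\in A_1$. Unfolding via $x_{b,1}=x_{1,1}+1$ cancels $x_{1,1}$ and leaves a constraint purely on free variables, which an affine restriction cannot express; so property 5 fails for every twin. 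The construction can be salvaged here only by choosing $i_0\in\{1,2,3,4\}$ (so the residual becomes $x_{5,1}=\text{const}$), i.e., the choice of $i_0$ is not arbitrary and must be coordinated with the structure of $\langle\Psi\rangle$.

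This is precisely where the paper's proof does its real work: it proceeds by induction on $r$, peeling off one pair at a time, choosing the pivot block $i_1$ via a \emph{minimal} important set and the twin $i'_1$ outside a \emph{maximal} important (or near-important) set, after normalizing so that $|\proj(v,j_1)|\le m/2$ so that a twin with $x_{i'_1,j_1}\notin v$ exists (using $m>2r+l$); it then proves the residual system remains consistent and safe so the induction can continue. None of that machinery appears in your proposal, and without it the key invertibility/acyclicity claim is unsupported — so there is a genuine gap at the heart of the argument.
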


\begin{proof}
    The proof is by induction on $r$. For $r = 0$, take $\rho_1$ to be the empty restriction and $G = \emptyset$. Let $F = [m] \setminus \{1\}$ and $\rho_2(x_{1, j}) = b + \sum_{i = 2}^m x_{i, j}$ for all $j \in [l]$. It is easy to see that these have the desired properties.

    Now suppose $r \geq 1$. We consider two cases depending on whether $L(\Psi)$ has any important sets, which were defined \cite{ei25}. A non-empty set $S \subseteq [m]$ is said to be important for a collection of linear forms $V$ if $\dim(\{v  \in  \langle V  \rangle \mid \supp(v) \subseteq S\}) \geq |S|$. We will use the notation $V_S$ to denote $\{v \in  \langle V \rangle \mid \supp(v) \subseteq S\}$. Note that if $V$ is safe, then a set $S$ is important for $V$ if $\dim(V_S) = |S|$ since $\langle V \rangle$ does not contain any dangerous sets.
    
    \begin{itemize}
        \item Suppose there is some important set for $L(\Psi)$. 
        
        Let $S \subseteq [m]$ be a minimal important set for $V \coloneqq L(\Psi)$. Let $i_1 \in S, j_1 \in [l]$ be such that $x_{i_1, j_1}$ appears in $V_S$. Let $T \supseteq S$ be a \emph{maximal} important set for $V$. 
        By the definition of $T$ being important, $|T| = \dim(V_{T}) \leq \dim(V) \leq r+l < m$. So $T \neq [m]$. Moreover, since $L(\Psi)$ includes the forms $\{\sum_{i \in [m]} x_{i, j} \mid j \in [l]\}$, we have $\dim(V[\setminus T]) \geq l$ as these $l$ forms are on disjoint sets of variables and none of these $l$ forms simplifies to $0$ after removing blocks in $T$. So the bound on $|T|$ can be improved to $|T| = \dim(V) - \dim(V[\setminus T]) \leq r$.

        Now pick any $i'_1 \in [m] \setminus T$ which exists since $m > r$. Define $V'$ to be $V$ after substituting $x_{i_1', j_1} = x_{i_1, j_1}$. For a linear form $v$, this substitution is equivalent to replacing it by $v + (x_{i_1, j_1} + x_{i_1', j_1})$ if $v$ contains $x_{i_1', j_1}$ and otherwise leaving it unchanged. We claim that $\dim(V'[\setminus \{i_1'\}]) = r+l$ and, moreover, $V'[\setminus \{i_1'\}]$ is safe. We first check that $\dim(V'[\setminus \{i_1'\}]) = r+l$. If this was not the case, then there is some non-zero linear form $v \in \langle V  \rangle$ which lies in the span of $\{x_{i_1, j_1} + x_{i_1', j_1}\} \cup \{x_{i_1', j' }\mid j' \neq j_1\}$. But then $T \cup \{i'\}$ is important for $V$ since $\dim(V_{T \cup \{i'\}}) \geq \dim(V_T \cup \{v\}) = |T| + 1$, where the equality follows from noting that $v$ involves $i'_1 \notin T$ and $T$ being important. So $\dim(V'[\setminus \{i_1'\}]) = r+l$.

        \begin{claim*}
            $V'[\setminus \{i_1'\}]$ is safe.
        \end{claim*}
        \begin{proof}
            Suppose $V'[\setminus \{i_1'\}]$ is not safe. Let $D \subseteq  \langle V[\setminus \{i_1'\}]  \rangle$ be a minimal dangerous set. By minimality of $D$, $|\supp(D)| = \dim(D) - 1$. 

        Let $D^0$ be a collection of $|D|$ linear forms in $V$ corresponding to $D$, so  that $D$ can be obtained from $D^0$ by substituting $x_{i_1', j_1} = x_{i_1, j_1}$ and then setting all $x_{i_1', j'} = 0$ for $j' \neq j_1$. 
        Since $\Psi$ is safe, we have $\dim(D) = \dim(D^0) \leq |\supp(D^0)|$. 
        Moreover, $\supp(D^0) \subseteq \supp(D) \cup \{i_1'\}$ which implies $|\supp(D^0)| \leq |\supp(D)| + 1$. By using $|\supp(D)| = \dim(D) - 1$, these bounds together imply that $\supp(D^0) = \supp(D) \cup \{i_1'\}$ and so $T'\coloneqq \supp(D^0)$ is important for $V$. 
        
        Since $i_1' \in \supp(D^0)$ , we have $T' \nsubseteq T$. We claim that $T \cup T'$ is important for $V$. This would contradict the maximality of $T$. Indeed
        \begin{align*}
            &\dim(V_{T \cup T'}) \geq \dim(V_T \cup V_{T'}) \geq \dim(V_T) + \dim(V_{T'}) - \dim(V_T \cap V_{T'}) \\
            &\geq \dim(V_T) + \dim(V_{T'}) - \dim(V_{T \cap {T'}}) \geq |T| + |T'| - |T \cap T'| = |T \cup T'|
        \end{align*}
        where in the last inequality, we used that $T$ and $T'$ are important for $V$ and $\dim(V_{T \cap T'}) \leq |T \cap T'|$ since $V$ is safe.
        Hence, $V'[\setminus \{i_1'\}]$ is safe.
        \end{proof}
        
         Let $\Psi'$ denote the linear system obtained from $\Psi$ by substituting $x_{i'_1, j_1} = x_{i_1, j_1}+1$ and then moving $x_{i'_1, j'} \; (j' \neq j)$ to the right side, treating them as fixed from now on. In this way, the variables in $\Psi'$ we consider now are only from the blocks $[m] \setminus \{i_1'\}$. Note that $L(\Psi') = V'[\setminus \{i'\}]$ which we showed above has dimension $r+l$ and is safe. In particular, $\Psi'$ as a system on variables from $[m] \setminus \{i_1'\}$ is consistent. Moreover, since $i_1' \notin T$ and $S \subseteq T$, we have  $V_S \subseteq L(\Psi')_S$. Consider a linear equation $v = b$ from $\Psi'_S$ involving $x_{i_1, j_1}$. Suppose $v = x_{i_1, j_1} + w$ for some linear form $w$ (possibly $0$). 
        
        Let $\Psi_1$ be a linear system whose span contains all those equations in the span of $\Psi'$ which do not involve the variable $x_{i_1, j_1}$. This is equivalent to substituting $x_{i_1, j_1} = w + b$ in $\Psi'$ and removing $0=0$ equations. $L(\Psi_1)$ has dimension $r+l-1$ since $x_{i_1, j_1} + w = b$ was an equation in $\langle \Psi' \rangle$. Let $\Psi_2$ be the system obtained from $\Psi_1$ by moving all variables $x_{i_1, j'} (j' \neq j)$ to the right side of the $=$ signs, so that we now think of only the blocks $[m] \setminus \{i_1, i_1'\}$ as variables. We will show that $\Psi_2$ must be consistent, in the sense that there is no equation $v' = b'$ (where $b'$ can involve $\{x_{i_1, j'}, x_{i_1', j'} \mid j' \in [l] \setminus \{j_1\}\}$) in the span of $\Psi_2$ for which the linear form $v'$ is $0$ but $b' \neq 0$. 
        
        Indeed if there was some such equation $0 = b'$, then the corresponding equation in $\Psi_1$ would have $\{i_1\}$ as its support. But we will show that this cannot be the case since $S$ (which includes $i_1$) was chosen to be a minimal important set for $V$ (recall that the span of $V$ also includes the span of $L(\Psi_1)$). If $S \neq \{i_1\}$, this is an immediate contradiction to the minimality of $S$. On the other hand, if $S = \{i_1\}$, then we have a contradiction to $V$ being safe since we have two distinct forms in $\langle V \rangle$ whose support is  $\{i_1\}$.
        So $\Psi_2$ is a consistent system of rank $(r-1) + l$. 

        \begin{claim*}
            $\Psi_2$ is safe.
        \end{claim*}
        \begin{proof}
            Note that $L(\Psi_2) = L(\Psi_1) [\setminus \{i_1\}]$. Consider any set $D = \{v_1, v_2, \dots, v_p, w_{p+1}, \dots, w_{q}\}$ of linearly independent forms in $\langle L(\Psi_2) \rangle$. By applying suitable invertible operations, we may assume that $v_1, v_2, \dots, v_p$ are supported in $S \setminus \{i_1\}$ and that $w_{p+1}[\setminus S], \dots, w_{q}[\setminus S]$ are linearly independent. We have $p \leq |S| - 1$ since $\dim(L(\Psi_2)_S) \leq \dim(L(\Psi_1)_S) = |S| - 1$. We know that $|\supp(\{w_{p+1}[\setminus S], \dots, w_{q}[\setminus S]\})| \geq q-p$ since $S$ is important for $L(\Psi')$, $L(\Psi')$ is safe and \cite[Lemma 4.8]{egi24} gives that $L(\Psi')[\setminus S]$ must therefore be safe, where we are additionally using that $\langle L(\Psi') \rangle[\setminus S]$ must contain all $w_{p+1}[\setminus S], \dots, w_{q}[\setminus S]$. 
        
        So now we need to show that $|\supp(v_1, v_2, \dots, v_p)| \geq p$. Consider some corresponding linear forms $v_1', v_2', \dots, v_p'$ in $\langle L(\Psi_1) \rangle$ such that applying $[\setminus \{i_1\}]$ to them gives $v_1, v_2, \dots, v_p$ respectively. Since $L(\Psi_1)$ is safe, we have $|\supp(v_1', v_2', \dots, v_p')| \geq p$. Also $\supp(v_1', v_2', \dots, v_p') \subseteq \supp(v_1, v_2, \dots, v_p) \cup \{i_1\}$. If $D$ is dangerous, then we would have $|\supp(v_1, v_2, \dots, v_p)| < p$. This would imply $|\supp(v_1', v_2', \dots, v_p')| \leq |\supp(v_1, v_2, \dots, v_p) \cup \{i_1\}| \leq |\supp(v_1, v_2, \dots, v_p)| + 1 < p+1$. By combining this with $|\supp(v_1', v_2', \dots, v_p')| \geq p$, we get $|\supp(v_1', v_2', \dots, v_p')| = p < |S|$. But then $\supp(v_1', v_2', \dots, v_p')$ is important for $L(\Psi_1)$ contradicting the minimality of $S$ (recall that $\langle V\rangle$ contains $\langle L(\Psi_1)\rangle $). So $\Psi_2$ is safe.
        \end{proof}        

        We can now apply the induction hypothesis to $\Psi_2$ which has rank $r' +l$, where $r' \coloneqq r-1$, and over $m'$ blocks, where $m' \coloneqq m- 2$. It is easy to verify that the required conditions to apply the induction hypothesis are satisfied. Suppose we obtain affine restrictions $\rho_1', \rho_2'$ and sets $F', I' \subseteq [m] \setminus \{i_1, i_1'\}$ by applying the induction hypothesis. Define $\rho_2 \coloneqq \rho_2'$ and $F \coloneqq F'$. The affine restriction $\rho_1$ is obtained by extending $\rho_1'$ by setting $\rho_1(x_{i_1, j_1}) = w|_{\rho_1' \cup \rho_2'} + b$ and $\rho_1(x_{i_1', j_1}) = w|_{\rho_1' \cup \rho_2'} + b+1$. So $I = \{i_1, i_1'\}\cup I'$. 
        
        Most of the desired properties are immediate. Let us verify that $\rho_1 \cup \rho_2$ implies $\Psi$. By construction, each equation in $\Psi_1$ (which is rearranged to give $\Psi_2$) is obtained by taking an equation from $\Psi$ and adding to it some linear combination of $x_{i_1, j_1} + x_{i_1', j_1} = 1$ and $x_{i_1, j_1} + w = b$. Turning this around, every equation in $\Psi$ can be expressed as the sum of some equation from $\langle \Psi_1 \rangle$ (possibly $0 = 0$) and some linear combination of $x_{i_1, j_1} + x_{i_1', j_1} = 1$ and $x_{i_1, j_1} + w = b$. By the induction hypothesis, each equation of $\langle \Psi_1 \rangle $ is implied by $\rho_1' \cup \rho_2'$ which is contained in $\rho_1 \cup \rho_2$. Each of $x_{i_1, j_1} + x_{i_1', j_1} = 1$ and $x_{i_1, j_1} + w = b$ is also implied by $\rho_1 \cup \rho_2$ by the way we defined $\rho_1$. This finishes the proof in this case.
\vspace{10pt}
        \item Suppose there are no important sets for $L(\Psi)$.

        Let $v = b$ be an equation in $\langle \Psi \rangle  \setminus \langle \{\sum_{i \in [m]} x_{i, j} = b_j \mid j \in [l]\} \rangle$. Let $j_1 \in [l]$ be such that $\proj(v, j_1) \notin \{0, \sum_{i \in [m]} x_{i, j_1}\}$. By adding $\sum_{i \in [m]} x_{i, j_1} = b_{j_1}$ to $v=b$ if required, we may assume that $|\proj(v, j_1)| \leq m/2$. Pick any $i_1 \in [m]$ such that the variable $x_{i_1, j_1}$ appears in $v$.

        Since $L(\Psi)$ does not contain any important sets, we will consider \emph{near-important} sets in $L(\Psi)$. A non-empty set $S \subseteq [m]$ is near-important for  $V\coloneqq L(\Psi)$ if $\dim(V_S) \geq |S| - 1$. Since $V$ does not contain any important sets, a set $S$ being near-important for $V$ implies $\dim(V_S) = |S| - 1$. Let $T$ be a maximal near-important set for $V$ which contains $i_1$. Note that $\{i_1\}$ is near-important, so such a maximal near-important set exists. By definition, $|T| = \dim(V_T) + 1 \leq \dim(V) + 1 \leq r+l+1 < m$. So $T \neq [m]$. By reasoning as in the previous case, we obtain the better bound $|T| \leq r+l+1 - l = r+1$.

        We next note that there exists $i_1' \in [m] \setminus T$ such that $x_{i_1', j_1}$ does not appear in $v$. This follows from $m > 2r+l$, $|\proj(v, j_1)| \leq m/2$ and $|T \setminus \{i_1\}| \leq r$ since $T$ always contains $i_1$. 
        
        Let $v = x_{i_1, j_1} + w$ for some linear form $w$. Define $\Psi_1$ to be the linear system obtained from $\Psi$ by first substituting $x_{i_1', j_1} = x_{i_1, j_1} + 1$, then substituting $x_{i_1, j_1} = w + b$ and removing $0=0$ equations. So $\Psi_1$ does not depend on $x_{i_1, j_1}, x_{i_1', j_1}$ and has rank at most $r+l - 1$ since $x_{i_1, j_1} + w = b$ is an equation in $\Psi$. Now let $\Psi_2$ be the linear system obtained from $\Psi_1$ by moving all variables in blocks $\{i_1, i_1'\}$ to the right side of $=$, so that we now think of only the blocks $[m] \setminus \{i_1, i_1'\}$ as variables. 
        
        We claim that the rank of $\Psi_2$ is $r+l-1$ and, in particular, it is consistent as a system on $[m] \setminus \{i_1, i_1'\}$. Note that $L(\Psi_2) = L(\Psi_1)[\setminus \{i_1, i_1'\}]$. So if the rank of $\Psi_2$ is less than $r+l-1$, then there must exist at least two distinct non-zero vectors $v_1, v_2 \in \langle V \rangle$ such that for some $c_i, d_i \in \bF_2$, $v_i + c_{i}(x_{i_1, j_1} + w) + d_i(x_{i_1, j_1} + x_{i_1', j_1})$ only contains variables in $\{x_{i_1, j'}, x_{i_1', j'} \mid j' \neq j\}$. We may assume that $c_1 = 0$. (If $c_2 = 0$, then we can swap $v_1$ and $v_2$. Otherwise, if both $c_1 = c_2 = 1$, we can consider $v_1 + v_2 \in \langle V \rangle$ which is non-zero and 
        \begin{align*}
            & (v_1+ v_2) + (d_1+d_2)(x_{i_1, j_1} + x_{i_1', j_1}) \\
            &= (v_1 + (x_{i_1, j_1} + w) + d_1(x_{i_1, j_1} + x_{i_1', j_1})) + (v_2 + (x_{i_1, j_1} + w) + d_2(x_{i_1, j_1} + x_{i_1', j_1}))
        \end{align*}
        only contains variables in $\{x_{i_1, j'}, x_{i_1', j'} \mid j' \neq j\}$.) 

        Since $v_1 + d_1(x_{i_1, j_1} + x_{i_1', j_1})$ only contains variables in $\{x_{i_1, j'}, x_{i_1', j'} \mid j' \neq j\}$, $\supp(v_1) \subseteq \{i_1, i_1'\}$.
        It cannot be that $|\supp(v_1)| = 1$, since there are no important sets in $V$. So $\supp(v_1) = \{i_1, i_1'\}$. This means that $\{i_1, i_1'\}$ is near-important in $V$. 
        
        We claim that $T \cup \{i_1, i_1'\}$ is near-important which contradicts the maximality of $T$ since $i_1' \notin T$. 
        \begin{claim*}
            If $A, B \subseteq [m]$ are near-important in $V$ and $A \cap B \neq \emptyset$, then $A \cup B$ is near-important.
        \end{claim*}
        \begin{proof}
        To upper bound $\dim(V_{A \cap B})$ below, we will use the assumption $A \cap B \neq \emptyset$ and that there are no important sets for $V$.
             \begin{align*} \vspace{-25pt}
            &\dim(V_{A \cup B}) \geq \dim(V_A \cup V_B) = \dim(V_A) + \dim(V_B) - \dim(V_A \cap V_B) \\
            &\geq \dim(V_A) + \dim(V_B) - \dim(V_{A \cap B}) \geq (|A| - 1) + (|B|-1) - (|A \cap B| - 1) \geq |A \cup B| - 1 
        \end{align*}
        Since  $A \cup B \neq \emptyset$, this implies that  $A \cup B$ is near-important in $V$.
        \end{proof}

        The above claim implies that $T \cup \{i_1, i_1'\}$ is near-important (where $T \cap \{i_1, i_1'\} \neq \emptyset$ since $i_1$ belongs to the intersection), which contradicts the maximality of $T$. So it must be that the rank of $\Psi_2$ is $r+l-1$.

\begin{claim*}
    $\Psi_2$ is safe. 
\end{claim*}
\begin{proof}
    Suppose $\Psi_2$ is not safe. Then there exists a minimal dangerous set $D$ in $\langle L(\Psi_2) \rangle$. Let $S \coloneqq \supp(D)$. By the minimality of $D$, $|S| = \dim(D) + 1$. Let $D^0$ denote a set of $|D|$ corresponding linear forms in $\langle L(\Psi) \rangle$ so that substituting $x_{i_1', j_1} = x_{i_1, j_1}$, then $x_{i_1, j_1} = w$ and applying $[\setminus \{i_1, i_1'\}]$ gives $D$. Let $D^0 = \{v_1, v_2, \dots, v_{|D|}\}$. For each $p \in [|D|]$, we may assume that either both $x_{i_1, j_1}, x_{i_1', j_1}$ appear in $v_p$ or none of $x_{i_1, j_1}, x_{i_1', j_1}$ appear in $v_p$. If this was not the case, then we can consider $v_p + x_{i_1, j_1} + w$ which lies in $\langle L(\Psi) \rangle$ and satisfies the desired condition since $w$ does not involve $x_{i_1', j_1}$ by the choice of $i_1'$. 

        Under this condition that each $v_p$ contains both $x_{i_1, j_1}, x_{i_1', j_1}$ or neither, when applying the substitution $x_{i_1', j_1} = x_{i_1, j_1}$ followed by $x_{i_1, j_1} = w$, we never need to use $x_{i_1, j_1} = w$. So the overall effect of substituting $x_{i_1', j_1} = x_{i_1, j_1}$, then $x_{i_1, j_1} = w$, and applying $[\setminus \{i_1, i_1'\}]$ to $D^0$ is equivalent to just applying $[\setminus \{i_1, i_1'\}]$. So $D^0[\setminus \{i_1, i_1'\}] = D$. This implies $\supp(D^0) \subseteq \supp(D) \cup \{i_1, i_1'\}$. On the other hand, since $D^0 \subseteq \langle L(\Psi) \rangle$ and there are no important sets for $\langle L(\Psi) \rangle$, $\dim(D) \leq \dim(D^0) \leq |\supp(D^0)| - 1$. Combining these with $\dim(D) = |\supp(D)| + 1$, we obtain $\supp(D^0) = \supp(D) \cup \{i_1, i_1'\}$ and $\dim(D^0) \geq \dim(D) = |\supp(D)|+1 = |\supp(D^0)| -1$. So $\supp(D^0)$ is near-important for $L(\Psi)$. By the above claim, we obtain that $T \cup \supp(D^0)$ is near-important for $L(\Psi)$ which contradicts the maximality of $T$ since $i_1' \in \supp(D^0) \setminus T$. So $\Psi_2$ is safe.
\end{proof}

        We can now apply the induction hypothesis to $\Psi_2$ to obtain affine restrictions $\rho_1', \rho_2'$ and sets $F', I' \subseteq [m] \setminus \{i_1, i_1'\}$. The affine restrictions $\rho_1, \rho_2$ and the sets $F, I$ are defined just like they were in the other case and they can be verified to have the desired properties in exactly the same way. \qedhere
    \end{itemize}
\end{proof}

\begin{lemma} \label{lem:coll_aff_restrict}
    Let $A_0 \subseteq \B^l$. Let $v, A, L, F, L_F$ be returned by a successful run of Algorithm \ref{alg:sim_multicoll} when run on $T$ and $\cS$ where $\cS$ is a multiset containing only elements from $A_0$, such that $k$ elements appear twice in $\cS$ and no element appears more than twice. Let $\Psi$ be a linear system which is implied by the constraints of $L$ together with $L_F$. Extend $\Psi$ so that its span includes $\{\sum_{i \in [m]} x_{i, j} = \sum_{y \in \cS} y_j \mid j \in [l]\}$. 
    Let $l+r$ be the safe dimension of $\Psi$. If $n/2 + 2r \leq |A_0|$, $|F| > r$ and $m > 2r+l$, then there exists an affine restriction $\rho$ fixing blocks $[m]\setminus \hat{F}$ and a collection of equations $L_{\hat{F}} = \{\sum_{i \in \hat{F}} x_{i, j} = b_j \mid j \in [l]\}$ for some $b_j \in \bF_2 (j \in [l])$ satisfying the following conditions:
    \begin{enumerate}
        \item The number of blocks fixed by $\rho$, say $s$, is at most $2r$.
        \item $\Psi$ is implied by $\rho$ together with $L_{\hat{F}}$.
        \item There exists a set $U \subseteq A_0$ such that $|U| = s$ 
        and for any assignment to $\hat{F}$ satisfying $L_{\hat{F}}$, if we set the blocks $[m]\setminus \hat{F}$ according to $\rho$, all the strings assigned to $[m]\setminus \hat{F}$ lie in $U$ and are distinct.
        \item For all $j \in [l]$, $b_j = (\sum_{y \in \cS} y_j) + (\sum_{y \in U} y_j)$.
    \end{enumerate}
\end{lemma}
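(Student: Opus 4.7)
The plan follows Lemma~\ref{lem:aff_restrict}'s structure, adapted to the permutation setting by invoking Lemma~\ref{lem:find_pivots} in place of a direct safety argument. Let $C := \cl(L(\Psi))$. Because $\Psi$ contains the $l$ permutation equations, which remain independent after applying $[\setminus C]$, Lemma~\ref{lem:sdim_bound} (with the equality noted after it) gives $|C| + \dim(L(\Psi)[\setminus C]) = l + r$, hence $|C| \leq r$. Set $r_1 := r - |C|$. The restriction $\rho$ will be built in two stages: a bit-fixing $\sigma_C$ of $C$ to handle the closure, followed by an extension of the affine restriction produced by Lemma~\ref{lem:find_pivots} applied to $\Psi|_{\sigma_C}$.

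First, I will construct $\sigma_C$: a bit-fixing assigning distinct strings in $A_0$ to the blocks of $C$ and satisfying every equation in $\langle L(\Psi)\rangle$ supported on $C$. Since $\Psi$ is implied by $L$ together with $L_F$, it admits a permutation of $\cS$ as a solution by Lemma~\ref{lem:coll_inv}. The successful-run condition guarantees that $S' \setminus A$ contains no duplicate strings (so the simulation-fixed blocks already carry distinct values), and the algorithm's only equality-style constraints are of the form $x_{i,j} + x_{g,j} = 1$, which separate rather than equate pairs. A greedy inductive extension of the simulation's assignment to $C \setminus F$ to the blocks of $C \cap F$, choosing at each step a string in $A_0$ avoiding the at most $|C|$ previously fixed strings and satisfying the remaining closure equations, produces $\sigma_C$; the counting bound $|A_0| \geq n/2 + 2r$ ensures sufficient room.

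Next, apply Lemma~\ref{lem:find_pivots} to $\Psi|_{\sigma_C}$, which is safe of rank $l + r_1$ on $[m] \setminus C$ and contains the modified permutation equations $\sum_{i \in [m]\setminus C} x_{i,j} = b_j^{(1)}$ with $b_j^{(1)} := \sum_{y \in \cS} y_j + \sum_{i \in C}\sigma_C(x_{i,j})$. The hypothesis $m > 2r+l$ yields $m - |C| > 2r_1 + l$, so the lemma produces affine restrictions $\rho_1, \rho_2$, sets $F' \subseteq [m] \setminus C$, $I \subseteq [m] \setminus C$ with $|I| = 2r_1$, pair data $\{(i_t, i'_t, j_t)\}_{t \in [r_1]}$, and a sink block $i_0$. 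For each $t$ I choose $y_t \in A_0$ with both $y_t, y_t^{\oplus j_t} \in A_0$ and neither used by $\sigma_C$ or an earlier pair; such a $y_t$ exists because the total number of forbidden strings is at most $(n - |A_0|) + |C| + 2(t-1) < |A_0|$ by the hypothesis $|A_0| \geq n/2 + 2r$. Extend $\rho_1$ to $\rho_1^{\mathrm{ext}}$ by fixing $x_{i_t, h} = (y_t)_h$ and $x_{i'_t, h} = (y_t)_h$ for every $h \neq j_t$, so that the pivot constraint $x_{i_t, j_t} + x_{i'_t, j_t} = 1$ forces $\{x_{i_t}, x_{i'_t}\} = \{y_t, y_t^{\oplus j_t}\}$ regardless of the free variables.

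Set $\rho := \sigma_C \cup \rho_1^{\mathrm{ext}}$, $\hat{F} := F' \cup \{i_0\}$, $U := \sigma_C(C) \cup \bigcup_{t \in [r_1]} \{y_t, y_t^{\oplus j_t}\}$, and define $L_{\hat{F}}$ as in the conclusion. Properties~1 and~3 are immediate from the counts $s = |C| + 2r_1 = 2r - |C| \leq 2r$ and the disjointness of all string choices, which also yields $|U| = s$. Property~4 holds by the definition of $b_j$. For property~2, once $\rho_1^{\mathrm{ext}}$ fully fixes $I$, the explicit form of $\rho_2$ given by Lemma~\ref{lem:find_pivots} becomes derivable from $\rho_1^{\mathrm{ext}}$ together with the residual permutation equation $\sum_{i \in \hat{F}} x_{i,j} = b_j$ (using that $x_{i_t, j} + x_{i'_t, j} = 0$ for $j \neq j_t$ under $\rho_1^{\mathrm{ext}}$ and $= 1$ for $j = j_t$), so $\rho \cup L_{\hat{F}}$ implies $\sigma_C \cup \rho_1 \cup \rho_2$ and hence $\Psi$. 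The main obstacle I anticipate is the existence of $\sigma_C$ with distinct strings satisfying all closure equations; this requires carefully combining the consistency of $\Psi$ as derived from $L \cup L_F$ with a combinatorial counting argument that makes a valid greedy choice at each step. The remaining work is a direct generalization of the safe-part construction in Lemma~\ref{lem:aff_restrict}.
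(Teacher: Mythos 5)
Your overall architecture matches the paper's proof: bound $|\cl(\Psi)|\le r$ via the permutation equations and Lemma~\ref{lem:sdim_bound}, fix the closure by a bit-fixing restriction, apply Lemma~\ref{lem:find_pivots} to the residual safe system, stifle the paired pivot blocks onto disjoint pairs $\{y_t, y_t^{\oplus j_t}\}$ chosen greedily with the same counting, and fold the $\rho_2$ output into $L_{\hat F}$. The only real issue is the step you yourself flag as the main obstacle, and your proposed resolution for it points in a slightly wrong direction.

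You describe building $\sigma_C$ by a greedy process that at each step picks a string ``satisfying the remaining closure equations.'' A block-by-block greedy of that kind is problematic: an equation in $\langle\Psi\rangle$ supported on $\cl(\Psi)$ could, after earlier choices, force a specific value on the next block that collides with an already-used string, and no local counting argument rules this out. The paper's construction avoids the issue entirely by never trying to satisfy the closure equations incrementally. Instead it builds one \emph{global} assignment: assign distinct unused strings of $A_0$ to the blocks of $\cl(\Psi)\cap F$ arbitrarily (these blocks are unconstrained by $L$), use some free block \emph{outside} the closure to absorb the $L_F$ constraint --- this is exactly where the hypothesis $|F|>r\ge|\cl(\Psi)|$ is needed --- and extend to all of $[m]$ via $L$. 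The resulting total assignment satisfies $L\cup L_F$ and hence $\Psi$, so its restriction $\sigma_1$ to $\cl(\Psi)$ automatically satisfies every equation of $\langle\Psi\rangle$ supported on the closure; no equation ever has to be checked during the choice of strings. Two smaller points: (i) the blocks of $\cl(\Psi)\setminus F$ are not assigned fixed strings by $L$ alone (each simulation-fixed pair carries two candidate holes until the free blocks are set), so the strings to avoid when assigning $\cl(\Psi)\cap F$ number up to $2|\cl(\Psi)\setminus F|+|\cl(\Psi)\cap F|\le 2|\cl(\Psi)|$, not the ``at most $|C|$'' you state --- the hypothesis $|A_0|\ge n/2+2r$ still covers this; (ii) distinctness of the strings ultimately landed on by $\cl(\Psi)\setminus F$ comes from the success of the run, as you note. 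With the closure step repaired this way, the rest of your argument goes through as written.
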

\begin{proof}
We will use below that $|\cl(\Psi)| \leq r$. To see this, first note that $|\cl(\Psi)| \leq \rk(\Psi) \leq r+l < m$ by Lemma \ref{lem:size_bound} and the assumption $m > 2r+l$. So $\dim(L(\Psi)[\setminus \cl(\Psi)]) \geq l$ since $\langle L(\Psi) \rangle$ contains $\{\sum_{i \in [m]} x_{i, j}  \mid j \in [l]\}$, these $l$ linear forms are on disjoint sets of variables and none of these linear forms simplifies to $0$ after applying $[\setminus \cl(\Psi)]$. So we have $|\cl(\Psi)| \leq \rk(\Psi) - \dim(L(\Psi)[\setminus \cl(\Psi)]) \leq r$.

Now consider the blocks in $\cl(\Psi) \cap F$. Let $\hat{U}$ be the set of all possible holes used by the blocks in $\cl(\Psi) \setminus F$ during this run of the simulation. Since we assign at most two holes to each fixed block during the simulation, $|\hat{U}| \leq 2|\cl(\Psi) \setminus F|$. Assign distinct strings from $A_0 \setminus \hat{U}$ to the blocks in $\cl(\Psi) \cap F$. This can be done if $|\cl(\Psi) \cap F| \leq |A_0| -  2|\cl(\Psi) \setminus F|$. This holds since $2|\cl(\Psi) \setminus F| + |\cl(\Psi) \cap F| \leq 2(|\cl(\Psi) \setminus F| + |\cl(\Psi) \cap F|) = 2|\cl(\Psi)| \leq 2r \leq |A_0|$ where the second inequality was shown above and the third inequality uses the assumption $|A_0| \geq 2r+n/2$. 

Now assign all blocks in $F \setminus \cl(\Psi)$ arbitrarily in such a way that $L_F$ is satisfied. This can be done since $|F| > r \geq |\cl(\Psi)|$, so there is some block which has not already been fixed and this can be chosen appropriately to ensure that $L_F$ is satisfied. Extend to a full assignment $x$ by fixing all blocks in $[m] \setminus F$ according to $L$. By Lemma \ref{lem:inv}, there is a unique such extension which also ensures that all fixed blocks lie in $A_0$ since $\cS$ only contains elements from $A_0$. By assumption, $L$ and $L_F$  together imply $\Psi$ and in particular, the full assignment $x$ defined above satisfies $\Psi$. 

This means that the partial assignment to blocks in $\cl(\Psi)$ which agrees with $x$ satisfies all equations in $\langle \Psi \rangle$ whose support is contained in $\cl(\Psi)$. Let $\sigma_1$ denote this (bit-fixing) restriction which sets blocks in $\cl(\Psi)$ according to $x$. Since we are considering a successful run of the algorithm, the blocks in $\cl(\Psi) \setminus F$ are assigned distinct strings from $\hat{U}$ which only contains strings  from $A_0$. Moreover, we ensured above that each block in $\cl(\Psi) \cap F$ is not assigned a string from $\hat{U}$. So $\sigma_1$ fixes blocks in $\cl(\Psi)$ to distinct strings in $A_0$. Let $U_1$ denote the set of strings assigned by $\sigma_1$ to blocks in $\cl(\Psi)$.

Now consider $\Psi' \coloneqq \Psi|_{\sigma_1}$. Since $\sigma_1$ fixes blocks in $\cl(\Psi)$ to bits, $\Psi'$ is safe. 
$\langle \Psi' \rangle$ contains the equations $\sum_{i \in [m] \setminus \cl(\Psi)} x_{i, j} = (\sum_{y \in \cS} y_j) + (\sum_{y \in U_1} y_j)$ for all $j \in [l]$. Let $m' = m - |\cl(\Psi)|$ be the number of blocks involved in $\Psi'$ and let $r' \coloneqq \rk(\Psi') - l$. Since the dimension of the linear forms in $\langle L(\Psi) \rangle$ supported on $\cl(\Psi)$ is at least $|\cl(\Psi)|$ (Lemma \ref{lem:size_bound}), we obtain $m' > 2r' + l$. By applying Lemma \ref{lem:find_pivots} to $\Psi'$, we get affine restrictions $\rho_1'$, $\rho_2'$ and sets $F', I' \subseteq [m] \setminus \cl(\Psi)$ satisfying the following properties. The restriction $\rho_1'$ fixes indices $G' = \{(i_1, j_1), (i'_1, j_1), (i_2, j_2), (i'_2, j_2), \dots, (i_{r'}, j_{r'}), (i'_{r'}, j_{r'})\} \subseteq ([m] \setminus (\cl(\Psi) \cup F')) \times [l]$. The restriction $\rho_2'$ fixes for all $j \in [l]$,
\[
\rho_2'(x_{i_0, j}) = c_j + \sum_{t \in [r']} (\llbracket j_t \neq j \rrbracket (x_{i_t, j_t} + x_{i'_t, j_t})+\llbracket j_t = j \rrbracket \cdot 1) + \sum_{i \in F'} x_{i, j}
\]
where $i_0$ is the unique block in $[m] \setminus (\cl(\Psi) \cup F' \cup I')$ and $c_j =  (\sum_{y \in \cS} y_j) + (\sum_{y \in U_1} y_j)$.

We will now define a bit-fixing restriction $\sigma_2$ which fixes all the bits $x_{i, j}$ for $(i, j) \in (I \times [l]) \setminus G$. The restriction $\sigma_2$ will have the property that $x_{i_t, j} = x_{i'_t, j}$ for all $t \in [r'], j \neq j_t$. Moreover, $\sigma_2$ will ensure that all $x_{i_t}, x_{i'_t}$ lie in $A_0 \setminus U_1$ and for distinct $s, t \in [r']$, $x_{i_{s}} \neq x_{i_t}$ irrespective of what $x_{i_s, j_s}$ and $x_{i_t, j_t}$ are. We construct $\sigma_2$ inductively in a greedy fashion. Suppose we have defined $\sigma_2$ on the blocks $\{i_1, i_1', \dots, i_{t-1}, i'_{t-1}\}$ for $t \leq r'$. Consider pairs of holes (strings in $\B^l$) which differ on the $j_t^{th}$ bit. Out of these $n/2$ pairs, a pair cannot be used if some hole in the pair lies outside $A_0$ or if some hole in the pair has already been assigned to a block in $\cl(\Psi)$ by $\sigma_1$ or to a block in $\{i_1, i_1', \dots, i_{t-1}, i'_{t-1}\}$ by $\sigma_2$. So the number of such forbidden pairs is at most $(n - |A_0|) + |\cl(\Psi)| + 2(t-1) \leq n - |A_0| + 2(|\cl(\Psi)| + t-1) \leq n - |A_0| + 2(r-1) < n/2$ where the second inequality uses Lemma \ref{lem:sdim_bound} and the last inequality uses the assumption $|A_0| \geq n/2 + 2r$. Thus, we can find some pair of holes differing on the $j_t^{th}$ bit which we can assign to the blocks $\{i_t, i'_t\}$. So we can find $\sigma_2$ with the desired properties. 
Let $U_2$ denote the collection of holes assigned by $\sigma_2$ to $I$. By construction, $|U_2| = |I| = 2r'$.

Finally, let $\rho$ be the affine restriction obtained by combining $\sigma_1, \sigma_2$ and $\rho_1'$ where we simplify $\rho_1'$ according to $\sigma_2$ so that it fixes $I'$ as affine functions of only the blocks in $F'$. Define $\hat{F} = F' \cup \{i_0\}$ which is the same as $[m] \setminus (\cl(\Psi) \cup I')$. $L_{\hat{F}}$ is obtained by rewriting $\rho_2$ as a system of linear equations after simplifying according to the restriction $\sigma_2$. Define $U = U_1 \cup U_2$.

We now verify that these have the desired properties. The number of blocks fixed by $\rho$ is $|\cl(\Psi)| + 2r' = |\cl(\Psi)| + 2(\rk(\Psi') - l) \leq 2(\rk(\Psi) - l) = 2r$ where the inequality uses Lemma \ref{lem:sdim_bound}. 

$L_{\hat{F}}$ together with $\rho$ implies $\Psi$ since $\sigma_1$ implies all the equations in $\langle \Psi \rangle$ which are supported on $\cl(\Psi)$, $\rho_1' \cup \rho_2'$ implies $\Psi'$ and $\rho$ essentially contains $\sigma_1, \sigma_2, \rho_1'$. 

We have $|U| = |U_1| + |U_2| = |\cl(\Psi)| + |I|$ which is exactly the number of blocks fixed by $\rho$. It is also not hard to see that the construction of $\rho$ ensures that all the blocks in $[m] \setminus \hat{F}$ are always assigned distinct strings. Here we are using that the affine restriction $\rho_1'$ given by Lemma \ref{lem:find_pivots} ensures that for each $t \in [r']$, $\rho_1'$ implies $x_{i_t, j_t} + x_{i_t', j_t} = 1$ which means $x_{i_t, j_t} \neq x_{i_t', j_t}$. It is also easy to verify that $b_j = (\sum_{y \in \cS} y_j) + (\sum_{y \in U} y_j)$ for all $j$ as claimed.
\end{proof}

\subsection{Putting everything together}

For a set $A \subseteq \B^l$ of allowed holes and $z \in \B^l$, we define $\coll_{A, z}^m \subseteq (\B^l)^m \times \binom{[m]}{2}$ to be the collision-finding problem with the additional promise that the input $x \in (\B^l)^m$ satisfies $x_i \in A$ for all $i \in [m]$ and $\sum_{i \in [m]} x_i = z$ where the sum is vector addition in $\bF_2^l$. The latter condition equivalently states that for all $j \in [l]$, $\sum_{i \in [m]} x_{i, j} = z_j$.

The next lemma shows that if $A$ is sufficiently large, then we can always find a multiset $\cS \subseteq A$ of size $m$ which additionally satisfies the condition that the bitwise sum of all strings in $\cS$ is $z$ and which has close to the smallest number of collisions possible.

\begin{lemma} \label{lem:find_multiset}
    Let $A \subseteq \B^l, y \in \B^l$ and $m \coloneqq |A| + k$. Suppose $|A| > n/2 + k$. Then there exists a multiset $\cS$ of size $m$ satisfying the following properties. $\cS$ contains only strings from $A$ and no string appears more than twice in $\cS$. Moreover, $\sum_{y \in \cS} y = z$ and at most $k+1$ strings appear twice in $\cS$.
\end{lemma}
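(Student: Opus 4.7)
The plan is to parametrize the valid multisets $\cS$ and reduce the claim to an elementary subset-sum question in $\bF_2^l$. Let $s_0 \coloneqq \sum_{y \in A} y$ (bitwise sum) and $t \coloneqq z + s_0$. Suppose $\cS$ satisfies all the required properties, and let $d$ be the number of strings in $A$ appearing twice in $\cS$ and $u$ the number appearing zero times. Then $|\cS| = |A| + k$ forces $d = k + u$, and the constraint $d \leq k+1$ forces $u \in \{0, 1\}$. Writing $E \subseteq A$ for the twice-appearing strings and $P \subseteq A \setminus E$ for the omitted ones, each doubled contribution vanishes in $\bF_2$, so $\sum_{y \in \cS} y = s_0 + \sum_{y \in E \cup P} y$. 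Setting $T \coloneqq E \cup P$, so that $|T| = d + u = k + 2u \in \{k, k+2\}$, the existence problem reduces to finding $T \subseteq A$ with $|T| \in \{k, k+2\}$ and $\sum_{y \in T} y = t$.

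First I would try $|T| = k$. Pick any $k$-subset $T_0 \subseteq A$ (possible since $|A| > n/2 + k \geq k$) and let $\Delta \coloneqq t + \sum_{y \in T_0} y$. If $\Delta = 0$, take $T \coloneqq T_0$ and define $\cS$ by doubling every element of $T_0$ and keeping each string of $A \setminus T_0$ exactly once. If $\Delta \neq 0$, I would bump $|T|$ up to $k+2$ by adjoining two distinct elements $y_1, y_2 \in A \setminus T_0$ with $y_1 + y_2 = \Delta$; then $T \coloneqq T_0 \cup \{y_1, y_2\}$ has sum $t$, and the corresponding multiset $\cS$ doubles $T_0 \cup \{y_2\}$, omits $y_1$, and keeps every remaining string of $A$ once. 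A straightforward check confirms that both constructions produce a multiset of size $m$ with bitwise sum $z$ and at most $k+1$ twice-appearing strings.

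The main point is to establish existence of such a pair $\{y_1, y_2\}$ when $\Delta \neq 0$: a simple pigeonhole on cosets. Since $\Delta \neq 0$, the pairs $\{y, y+\Delta\}$ partition $\B^l$ into $n/2$ two-element sets. A pair is \emph{spoiled} (not entirely contained in $A \setminus T_0$) only if it contains at least one point of $\B^l \setminus (A \setminus T_0)$, a set of size $n - (|A| - k)$. Each such point lies in exactly one pair, so at most $n - |A| + k$ pairs are spoiled; this leaves at least $n/2 - (n - |A| + k) = |A| - n/2 - k$ pairs fully inside $A \setminus T_0$, which is strictly positive by the hypothesis $|A| > n/2 + k$. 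Picking any such pair supplies the required $y_1, y_2$. There is no real obstacle here — the hypothesis $|A| > n/2 + k$ is exactly the threshold this counting argument needs, and the rest is routine bookkeeping in $\bF_2^l$.
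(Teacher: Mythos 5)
Your proof is correct and takes essentially the same approach as the paper's: both start from a tentative multiset with exactly $k$ doubled strings, compare its bitwise sum to $z$, and if the sum is off by $\Delta\neq 0$, use a pigeonhole count over the $n/2$ pairs $\{y,y+\Delta\}$ to find a pair avoiding the forbidden strings, then swap one element of the pair out and double the other (your ``omit $y_1$, double $y_2$'' is exactly the paper's ``replace an occurrence of $y_1$ by $y_2$''). The $T=E\cup P$ parametrization is a tidier way to justify why $|T|\in\{k,k+2\}$ is the only thing to check, but the core counting and the resulting threshold $|A|>n/2+k$ are identical.
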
 
\begin{proof}
    Fix an arbitrary multiset $\cS$ of size $m$ containing strings from $A$ such that exactly $k$ strings appear twice in $\cS$. So each string from $A$ appears at least once in $\cS$. Define $z' \coloneqq \sum_{y \in \cS} y$. If $z' = z$, we are done. So suppose $z' \neq z$.

    We wish to exchange an element from $\cS$ with another element so that the resulting multiset has the desired sum. For this, we will show that there exists a pair of distinct elements $\{y_1, y_2\}$ such that $y_1 + y_2 = z + z'$ for which both $y_1, y_2$ are in $A$ and neither $y_1$ nor $y_2$ appears twice in $\cS$. There are $n/2$ pairs $\{y_1, y_2\}$ such that $y_1 + y_2 = z + z'$. Out of these, the number of bad pairs such that at least one of the strings in the pair is outside $A$ is at most $n - |A|$. The number of pairs such that some element from it appears twice in $\cS$ is at most $k$. So the total number of bad pairs is at most $(n-|A|) + k < n/2$ and there exists some pair $\{y_1, y_2\}$ satisfying the desired conditions.

    Now consider the multiset $\cS'$ obtained from $\cS$ by removing an occurrence of $y_1$ and replacing it by $y_2$. We have $\sum_{y \in \cS'} y = (\sum_{y \in \cS} y) + y_1 + y_2 = z' + (z' + z) = z$. Moreover, by the choice of $y_1, y_2$, the number of strings that appear twice in $\cS'$ is $k+1$.
\end{proof}

\begin{lemma} \label{lem:coll_dag_restrict}
    There exists a constant $\delta > 0$ such that the following holds. Let $k \leq \delta n$. Let $A \subseteq \B^l$ with $|A| \geq 7n/8$. Let $m \coloneqq |A| + k$. Let $z \in \B^l$. Suppose there is an affine DAG $C$ solving $\coll_{A, z}^m$ whose depth is at most $D$ and size is at most $S$. 
    Let $d$ be an integer satisfying $8 \leq d \leq n/200$. There exist $s \leq O(\log S + d^2k/n^2)$ and $b \in \B^l$ such that the following holds. Let $m' \coloneqq m - s$.
    There is some $A' \subseteq A$ with $|A'| = |A| - s$ such that there exists an affine DAG $C''$ solving $\coll_{A', b}^{m'}$ whose depth is at most $D - d$ and size is at most $S$.
\end{lemma}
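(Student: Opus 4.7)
The plan is to mimic the proof of Lemma \ref{lem:dag_restrict}, replacing each ingredient from Section \ref{sec:tbphp} by its permutation analog developed earlier in this section. First, I would apply Lemma \ref{lem:find_multiset} to $A$ with target sum $z$, obtaining a multiset $\cS \subseteq A$ of size $m$ with $\sum_{y \in \cS} y = z$, no string appearing more than twice, and at most $k+1$ doubled strings; the hypothesis $|A| > n/2 + k$ holds for $\delta$ small. Let $T$ be the depth-$d$ truncation of $C$ from the root, and run Algorithm \ref{alg:2coll_sim} on $T$ and $\cS$. Checking the hypotheses of Lemma \ref{lem:coll_fail_prob} (namely $m - 16d \geq 3n/4$ from $|A|\geq 7n/8$ and $d \leq n/200$; $d \geq 8$; and number of doubled strings $\leq k+1 \leq \delta' n$), the simulation succeeds with probability $p \geq (1 - 2/e)\exp(-c\, d^2 k/n^2)$ for some constant $c$.

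Since $|C| \leq S$, some node $w$ of $C$ is reached by successful runs with probability at least $p/S$; by Lemma \ref{lem:coll_rpdt_dist}, $x \sim \mu$ (uniform over permutations of $\cS$) reaches $w$ with the same probability. Let $\Phi_w$ be the system at $w$ and let $\Psi$ denote $\Phi_w$ together with the $l$ permutation equations $\{\sum_{i \in [m]} x_{i,j} = z_j \mid j \in [l]\}$, which hold identically over $\mu$; so $\Pr_{\mu}[x \text{ satisfies } \Psi] \geq p/S$. Writing $\sdim(\Psi) = l+r$ and using Lemma \ref{lem:sdim_bound} to replace $\Psi$ by a safe subsystem of this dimension that still contains the $l$ permutation equations, Lemma \ref{lem:perm_prob} yields $\Pr_\mu[\Psi] \leq (7/8)^{\ceil{r/4}}$, hence $r = O(\log S + d^2 k / n^2)$. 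Its hypotheses $k \leq n/20$ and $m - 2r \geq 3n/4$ hold provided $\delta$ is small and $r$ is moderate; if $r$ would exceed this regime, the target conclusion $s \leq O(\log S + d^2 k/n^2)$ is already $\Omega(n)$ and can be met trivially.

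Next, fix any successful run of the simulation ending at a leaf of $T$ for $w$, and invoke Lemma \ref{lem:coll_aff_restrict} with $A_0 = A$ on this run and on $\Psi$. This produces an affine restriction $\rho$ fixing $s \leq 2r = O(\log S + d^2 k/n^2)$ blocks $[m] \setminus \hat{F}$ to distinct strings in some $U \subseteq A$ with $|U| = s$, together with residual permutation constraints $L_{\hat{F}} = \{\sum_{i \in \hat{F}} x_{i,j} = b_j\}$ satisfying $b_j = z_j + \sum_{y \in U} y_j$. Set $A' = A \setminus U$, let $b \in \B^l$ have coordinates $b_j$, and $m' = m - s = |A'| + k$. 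Form $C|_\rho$ and take $C''$ to be the sub-DAG rooted at the image of $w$. Since $\rho \cup L_{\hat{F}}$ implies $\Psi$, the system $\Phi_w|_\rho$ is implied by $L_{\hat{F}}$, which holds for every input to $\coll_{A',b}^{m'}$, so every such input enters $C''$ at its root and threads down the DAG consistently.

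The main obstacle is verifying that the sinks of $C''$ still have correct output labels after relabeling. This is handled by a three-case analysis (as in Lemma \ref{lem:dag_restrict}) on how many of the two indices in the original output label lie in $\hat{F}$: when both are in $\hat{F}$, the extension by $\rho$ yields an element of $A^m$ whose coordinates sum to $b + \sum_{y \in U} y = z$, hence a valid $\coll_{A,z}^m$ input for which the original label is a true collision that survives projection to $\hat{F}$; when exactly one is in $\hat{F}$, the fixed block's value lies in $U$ which is disjoint from $A'$, so no $\coll_{A',b}^{m'}$ input reaches that sink; and when neither lies in $\hat{F}$, the two fixed blocks are assigned distinct strings in $U$ by $\rho$, so again no input reaches the sink. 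Depth at most $D - d$ follows because, by Lemma \ref{lem:coll_not_sink}, $w$ cannot be a sink of $C$, so the successful run witnesses a root-to-$w$ path in $C$ of length $d$, and prepending this to any path in $C''$ gives a path in $C$.
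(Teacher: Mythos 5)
Your plan follows the paper's own proof closely, and the overall structure is right. Two points need to be repaired.

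First, you take $C''$ to be the sub-DAG of $C|_\rho$ rooted at the image of $w$; but the system labeling that node in $C|_\rho$ is $\Phi_w|_\rho$, which is nonempty — it is only \emph{implied} by $L_{\hat F}$. An affine DAG is required to have the empty system at its root, so $C''$ as you have defined it is not an affine DAG, and the observation that promise inputs ``enter at its root'' does not by itself yield an object meeting the definition. The missing step is to further apply the affine restriction $\rho'$ encoding $L_{\hat F}$: pick any $i_0\in\hat F$ and set $\rho'(x_{i_0,j})=b_j+\sum_{i\in\hat F\setminus\{i_0\}}x_{i,j}$ for each $j\in[l]$. After applying $\rho'$ to $C|_\rho$ the system at $w$ becomes empty (since $\Phi_w|_\rho$ is implied by $L_{\hat F}$, i.e.\ by $\rho'$), and only then may you legitimately take the sub-DAG rooted at $w$. (This also uses $\hat F\neq\emptyset$, which follows from $m>2r+l$.)

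Second, your fallback for the regime where $r$ ``would be large'' does not work: the conclusion of the lemma is not just the bound $s\le O(\log S + d^2k/n^2)$ but also the existence of $A'$, $m'$, and an affine DAG $C''$ of depth at most $D-d$, none of which can be produced trivially when the bound on $s$ is $\Omega(n)$. Fortunately, this case never arises: since $w$ lies at depth at most $d$ from the root, Lemma~\ref{lem:path_implies} gives $\rk(\Phi_w)\le d$, hence $r=\sdim(\Phi_w\cup\{\text{perm eqs}\})-l\le\rk(\Phi_w)\le d\le n/200$. This a priori bound is exactly what you need — it is what justifies $m-2r\ge 3n/4$ before you can invoke Lemma~\ref{lem:perm_prob}, and, together with $|A|\ge 7n/8$ and $|F|\ge m-16d$ (from the success condition of the simulation), it verifies the hypotheses $|A|\ge n/2+2r$, $|F|>r$, and $m>2r+l$ of Lemma~\ref{lem:coll_aff_restrict}, which your write-up invokes without checking. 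Replace the ``trivially met'' fallback by this bound.
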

\begin{proof}
Use Lemma \ref{lem:find_multiset} to obtain a multiset $\cS$ containing $m$ strings from $A$ such that no element appears more than twice, $\sum_{y \in \cS} y = z$ and there are $k'$ strings which appear twice in $\cS$ where $k' \in \{k, k+1\}$. 

Let $T$ be the depth $d$ PDT obtained by starting from the root of the DAG $C$, repeating nodes at depth at most $d$ and removing nodes at depth greater than depth $d$. Any leaves in $T$ at depth less than $d$ correspond to leaves of $C$. So if an input $x$ which satisfies the promise of $\coll_{A, y}^m$ reaches a leaf in $T$ at depth less than $d$, then $T$ must find a collision on $x$. In particular, this holds for any input $x$ which is a permutation of the multiset $\cS$. Run Algorithm \ref{alg:2coll_sim} on $T$ and the multiset $\cS$. By Lemma \ref{lem:coll_fail_prob}, it succeeds with probability $p$ at least $(1-2/e) \exp(-O(d^2k'/n^2)) \geq (1-2/e)\exp(-O(d^2k/n^2))$. The conditions for applying Lemma \ref{lem:coll_fail_prob} hold if we ensure $\delta \leq \delta'/2$, so that $k' \leq k+1 \leq 2\delta n \leq \delta'n$.

Since the affine DAG only has at most $S$ nodes, there is some node $w$ in $C$ such that the corresponding leaves in $T$ are reached with probability at least $p/S$ by successful runs of the simulation. By Lemma \ref{lem:coll_rpdt_dist}, this is also a lower bound on the probability that when $x$ is picked uniformly at random from among all permutations of $\cS$, the path taken by $x$ in $C$, according to the queries made starting at the root, reaches the node $w$.

Let $\Phi$ denote the linear system labeling the node $w$. Let $\Phi'$ be the linear system obtained from $\Phi$ by also including the equations in $\{\sum_{i \in [m]} x_{i, j} = z_j \mid j \in [l]\}$. $\Phi'$ is still satisfiable since every permutation of $\cS$ satisfies $\{\sum_{i \in [m]} x_{i, j} = z_j \mid j \in [l]\}$ and some such permutation reaches $w$, which implies that it satisfies $\Phi$. Let $\Psi$ be the safe system obtained by using 
Lemma \ref{lem:sdim_bound} on the forms $L(\Phi')$ with $U \coloneqq \{\sum_{i \in [m]} x_{i, j} \mid j \in [l]\}$ and then considering the corresponding equations in $\langle \Phi' \rangle$. Let $r+l$ be the rank of $\Psi$, which is the safe dimension of $\Phi'$. The probability that a random permutation of $\cS$ satisfies $\Phi$ is at most the probability that it satisfies $\Psi$, since each permutation of $\cS$ satisfies $\Phi$ if and only if it satisfies $\Phi'$, and $\Phi'$ implies $\Psi$. So we can use Lemma \ref{lem:perm_prob} and the above lower bound of $p/S$ on the probability of $\Phi$ to be satisfied to get that $r \leq O(\log S + d^2k/n^2)$. The condition on $k$ required for applying Lemma \ref{lem:perm_prob} holds as long as we choose $\delta \leq 1/20$. To see $m - 2r \geq 3n/4$, note that $r \leq \rk(\Phi) \leq d \leq n/200$ by Lemma \ref{lem:path_implies}. Now combining this with the assumption $m \geq |A| \geq 7n/8$ gives $m - 2r \geq 3n/4$.

Now fix a successful run of the simulation which ends at a leaf $v$ in $T$ corresponding to $w$. To apply Lemma \ref{lem:coll_aff_restrict}, let us verify that the required conditions are satisfied. We have $|A| \geq 7n/8$ by assumption and $2r \leq 2d \leq n/100$ by Lemma \ref{lem:path_implies}. This gives $|A| \geq n/2 + 2r$. To see $|F| > r$, where $F$ is the set of free blocks at the end of this successful run of the simulation, note that $|F| \geq m - 16d \geq 7n/8 - 16n/200 = 159n/200$ and $r \leq d \leq n/200$. Similarly $m \geq 7n/8$ and $2r+l \leq n/100 + \log_2 n$ which imply $m > 2r+l$.
    Having verified all the required conditions, apply Lemma \ref{lem:coll_aff_restrict} to this successful run and the linear system $\Phi$ to obtain $\rho$, $s$, $\hat{F}$, $L_{\hat{F}}$ and $U \subseteq A$. Here $\rho$ is an affine restriction fixing $s$ blocks as affine functions of the other $m - s$ blocks. Moreover, $s \leq 2r \leq O(\log S + d^2k/n^2)$. We have $L_{\hat{F}} = \{\sum_{i \in \hat{F}} x_{i, j} = b_j \mid j \in [l]\}$ where $b = z + \sum_{y \in U} y$ by Lemma \ref{lem:coll_aff_restrict} and $\hat{F} \neq \emptyset$ since $m > 2r+l$.

    Set $m' = m - s$,  and $A' = A \setminus U$. Consider $C|_\rho$ which solves $\coll_{A, z}^m|_\rho$. We claim that $C|_\rho$ solves $\coll_{A', b}^{m'}$ on the index set $\hat{F}$ after some slight changes. For any sink node whose output label lies outside $\binom{\hat{F}}{2}$, we change it to any pair from $\binom{\hat{F}}{2}$. It is not hard to see that this does not affect any strings $x$ satisfying the promise of $\coll_{A', b}^{m'}$. Indeed for any $x \in (A')^{\hat{F}}$ such that $\sum_{i \in [m]} x_i = b$, its unique extension $y$ according to $\rho$ is a valid input for $\coll_{A, z}^m$. Moreover, Lemma \ref{lem:coll_aff_restrict} ensures that all blocks in $[m] \setminus \hat{F}$ are assigned distinct strings from $U$ which is disjoint from $A'$. So the only allowed solutions for $y$ must be pairs from $\binom{\hat{F}}{2}$ which we have kept unchanged in the DAG. So $C|_\rho$ solves $\coll_{A', b}^{m'}$ on the index set $\hat{F}$.

    Pick any $i_0 \in \hat{F}$. Consider an affine restriction $\rho'$ which sets $\rho'(x_{i_0, j}) = b_j + \sum_{i \in \hat{F} \setminus \{i_0\}} x_{i, j}$ for all $j \in [l]$. Note that $\rho'$ is equivalent to $L_{\hat{F}}$. We now apply the affine restriction $\rho'$ to $C|_\rho$ to obtain another DAG $C'$ which we will continue to interpret as being over the blocks $\hat{F}$ instead of $\hat{F} \setminus \{i_0\}$. It is easy to see that $C'$ still solves $\coll_{A', b}^{m'}$ on the index set $\hat{F}$ since all the inputs satisfying the promise of $\coll_{A', b}^{m'}$ are consistent with $\rho'$.

    Recalling that $\rho$ together with $L_{\hat{F}}$ implies $\Phi$, we see that the node $w$ in $C'$ is labeled by the empty system. So we may consider the DAG $C''$ which only contains the nodes in $C'$ which are reachable from $w$. The DAG $C''$ solves $\coll_{A', b}^{m'}$ on the index set $\hat{F}$. Finally we relabel according to any bijection between $F'$ and $[m']$ to obtain a DAG solving $\coll_{A', b}^{m'}$.

    We only need to verify that the size of $C''$ is at most $S$ (which is immediate) and that the depth of $C''$ is at most $D-d$.  To see the depth bound, note that the node $w$ in $C$ cannot be at depth less than $d$ by Lemma \ref{lem:not_sink}. So any path starting from $w$ can have length at most $D-d$ since the depth of $C$ is only at most $D$.
\end{proof}

\begin{theorem}[Theorem \ref{thm:s_bphp} rephrased]
There exists a constant $\delta > 0$ such that the following holds. Let $k \leq \delta n$ and $m \coloneqq n+k$. If there exists an affine DAG solving $\coll_n^m$ whose size is $S$ and depth is $D$, then 
\[
\log S = \Omega \left(\min\left( \frac{n^2}{D}, \frac{n^4}{kD^2} \right)\right).
\]    
\end{theorem}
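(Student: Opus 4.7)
The plan is to iteratively apply Lemma~\ref{lem:coll_dag_restrict} to the given affine DAG. Any DAG solving $\coll_n^m$ also solves $\coll_{\B^l, z}^m$ for every $z \in \B^l$, since $\coll_{\B^l, z}^m$ is a promise restriction of $\coll_n^m$. Initializing $A_0 = \B^l$ (so $|A_0| = n$ and $m = n+k$), a single invocation of Lemma~\ref{lem:coll_dag_restrict} with parameter $d$ yields a new size-$S$ DAG for $\coll_{A_{i+1}, z_{i+1}}^{m_{i+1}}$ with $|A_{i+1}| = |A_i| - s_i$, $m_{i+1} = m_i - s_i$, and $s_i = O(\log S + d^2 k/n^2)$, and the depth drops by $d$. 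Since $m_i - |A_i| = k$ is preserved, the hypotheses of the lemma (in particular $k \leq \delta n$ and the relationship $m = |A| + k$) continue to hold so long as $|A_i| \geq 7n/8$.

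Fix $d \in [8, n/200]$. We can iterate at least $I = \Omega(n/(\log S + d^2 k/n^2))$ times before $|A|$ drops below $7n/8$, and the total depth consumed is $Id$. Since the final depth must be nonnegative, we obtain the master inequality
\[
D \;\geq\; \Omega\!\left(\frac{nd}{\log S + d^2 k/n^2}\right).
\]
As a function of real $d$, the right-hand side is maximized at $d^{\star} \coloneqq n\sqrt{\log S/k}$ (where the two terms in the denominator balance), giving $D = \Omega(n^2/\sqrt{k\log S})$, equivalently $\log S = \Omega(n^4/(k D^2))$. If $d^{\star} \leq n/200$ (equivalently $\log S \leq k/40000$), this choice of $d$ is valid and yields the $n^4/(kD^2)$ branch of the minimum. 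Otherwise $\log S > k/40000$, and taking the boundary choice $d = n/200$ gives $d^2 k/n^2 = k/40000 = O(\log S)$, so the denominator is $O(\log S)$ and the bound becomes $D = \Omega(n^2/\log S)$, i.e., $\log S = \Omega(n^2/D)$---the other branch of the minimum. Since the regimes overlap and in each regime the derived bound is at least as strong as $\min(n^2/D, n^4/(kD^2))$, the claimed bound follows.

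The only delicate corner is $d^{\star} < 8$, which forces $\log S < 64k/n^2 \leq 64\delta/n$; for $n$ sufficiently large this means $\log S < 1$, so $S$ is trivial (and in any case the claim is subsumed by the $D = \Omega(n)$ lower bound implicit in Theorem~\ref{thm:rpdt_coll}). The main technical work has already been carried out inside Lemma~\ref{lem:coll_dag_restrict}, which combines the PDT simulation (Algorithm~\ref{alg:2coll_sim}), the safe-system probability bound for uniform random permutations of a multiset (Lemma~\ref{lem:perm_prob}), and the affine restriction extraction (Lemma~\ref{lem:coll_aff_restrict}); given that machinery, the theorem follows by the short iterative argument above and the optimization over $d$.
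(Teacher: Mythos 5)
Your proposal is correct and follows essentially the same route as the paper: iterate Lemma~\ref{lem:coll_dag_restrict} starting from $A = \B^l$, derive $D = \Omega(nd/(\log S + d^2k/n^2))$, and split into the two regimes according to whether the balancing choice $d \approx n\sqrt{\log S/k}$ is admissible or one must cap $d$ at $\Theta(n)$ (the paper phrases this as a case split on $S$ versus $2^{k/10^6}$, which is the same dichotomy). Your explicit treatment of the $d^{\star} < 8$ corner is a minor addition the paper handles by assuming $n$ large.
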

\begin{proof}
    We assume that $n$ is large enough. Set $A = \B^l$ and $z = 0^l$. Let $C$ be an affine DAG solving $\coll_n^m$ of size $S$ and depth $D$. Since $\coll_{A, z}^m$ is a promise version of $\coll_n^m$, $C$ also solves $\coll_{A, z}^m$.
    
    We will consider cases depending on $S$.
    \begin{itemize}
        \item Suppose $S \geq 2^{k/10^6}$. Set $d = n/200$. Apply Lemma \ref{lem:coll_dag_restrict} repeatedly to $A, z, C$ and $d$, updating $A, z, C$ according to $A', b, C''$ given by the lemma statement. This can be done as long as $|A| \geq 7n/8$. In each iteration, $|A|$ decreases by at most $O(\log S + d^2k/n^2) = O(\log S)$ by the assumption on $S$. So there are at least $(n/8)/O(\log S)$ iterations. In each iteration, the depth of the DAG $C$ reduces by $d$. This implies that the depth $D$ of the original DAG must be at least $\Omega(nd/\log S) = \Omega(n^2/\log S)$.
        \item Suppose $S \leq 2^{k/10^6}$. Set $d = \floor{\frac{n\sqrt{\log S}}{\sqrt{k}}}$. Then by arguing exactly as in the other case and using that $d^2k/n^2 \leq \log S$, we obtain that $D \geq \Omega(nd/\log S) = \Omega(n^2/\sqrt{k\log S})$.
    \end{itemize}
    So we have $D \log S \geq \Omega(n^2)$ or $D \sqrt{\log S} = \Omega(n^2/\sqrt{k})$. By rearranging, we obtain
    \[
\log S = \Omega \left(\min\left( \frac{n^2}{D}, \frac{n^4}{kD^2} \right)\right). \qedhere
\]  
\end{proof}

\section{Lifting DT-hardness to size-depth lower bounds for $\res(\oplus)$}
\label{sec:lifting}

In this section, we prove the lifting theorem for bounded-depth $\res(\oplus)$. 


\subsection{Block-lifted formulas, balanced gadgets and DT-hardness}
\label{subsec:lift_prel}
In this subsection, we collect  preliminaries related to lifting.

We first recall how the formula $\lift(\varphi)$ is constructed in \cite{de2021automating}. 
Let $\varphi$ be a CNF formula on $n$ blocks of $l$ variables. Suppose the variables of $\varphi$ are $z_{i, j}, \; i \in [n], j \in [l]$. The formula $\lift(\varphi)$ is defined in the following way. For each block $i \in [n]$ of $\varphi$, we have a block of $2l+1$ variables, $s_i, x_i^0, x_i^1$ in $\lift(\varphi)$ where $x_i^0$ and $x_i^1$ each contain $l$ variables. 

$\lift(\varphi)$ essentially encodes $\varphi \circ \bind$ as a CNF formula in the natural way and then removes all the trivial clauses containing some $s_i$ and $\neg s_i$.
Formally, we do the following. For each clause $C$ in $\varphi$, we define $\lift(C)$ as the following CNF formula. For each $i \in [n]$ such that some variable of block $i$ appears in $C$, let $C_i$ denote the clause containing only those literals from $C$ which belong to block $i$. For each such $i$, if $C_i$ is given by $z_{i, j_1} \vee z_{i, j_2} \vee \dots \vee \neg z_{i, j_t}$, we define 
\[
\lift(C_i) = (s_i \vee x_{i, j_1}^0 \vee x_{i, j_2}^0 \vee \dots \vee \neg x_{i, j_t}^0) \wedge (\neg s_i \vee x_{i, j_1}^1 \vee x_{i, j_2}^1 \vee \dots \vee \neg x_{i, j_t}^1).\]
Now $\lift(C)$ is the natural CNF formula obtained by expanding $\vee_i \lift(C_i)$ where the disjunction is only over $i$ such that some variable from block $i$ appears in $C$. So if $C$ has block-width $b$, $\lift(C)$ contains $2^b$ clauses. Finally, $\lift(\varphi) = \wedge_{C \in \varphi} \lift(C)$. So $\lift(\varphi)$ contains $2^b m$ clauses of width $b + w$ where $m$ is the number of clauses of $\varphi$, $w$ is its width and $b$ its block-width.

We used the block-indexing function above to ensure that the number of clauses in the lift of a clause only depends on the block-width of the clause. In the rest of this section, we will allow for any gadget $g: \B^{m} \rightarrow \B^l$ (over input bits $u_1, \dots, u_m$) which is $\delta$-balanced \cite{bhat2024, bi25} defined next\footnote{In \cite{bi25}, the term `affine balanced' was used, which generalized the definition of `balanced, stifled' in \cite{bhat2024}. Both these works only dealt with Boolean-valued functions, but what we define here is the natural extension to multi-output functions.}. For $\delta \in (0, 1)$, a function $g: \B^m \rightarrow \B^l$ is $\delta$-balanced if there exist distributions $\mu_y$ for all $y \in \B^l$ such that the following holds. The distribution $\mu_y$ is supported only on $g^{-1}(y)$ and for each $j \in [m]$, the distribution $\mu_y$ is equal to a mixture $(1-\delta)\;A^j_y + \delta \;B^j_y \times U_1$ for some distributions $A^j_y$ on $\B^m$ and $B^j_y$ on $\B^{m-1}$. Here $U_1$ is a uniform random bit independent of $B^j_y$ and we think of $U_1$ as the bit $u_j$ and $B^j_y$ as a distribution for $u_{[m] \setminus \{i\}}$. For probability distributions $X, Y$ on the same sample space, the notation $(1-\delta)X + \delta Y$ means that with $\delta$ probability, we sample from $Y$ and with the remaining $1-\delta$ probability from $X$.

Note that a $\delta$-balanced function $g$ is necessarily stifling \cite{cmss23}, which means that for each $y \in \B^l, j \in [m]$, there is some way to fix all bits except $i$ in the input for $g$ to force the output of $g$ to be $y$. It is also easy to see that every stifling function $g$ is $1/m$-balanced as observed in \cite{bi25}. Specifically, we may take $\mu_y$ to be the following distribution. Pick $j \in [m]$ uniformly at random. Fix $u_{[m]\setminus \{j\}}$ to an assignment forcing $g$ to output $y$. Set $u_j$ to a uniform random bit. This shows that $g$ is $1/m$-balanced since for any $j$, when it is picked in the first step, the bit $u_j$ is uniform and independent of the other bits.

However, we would like $\delta$ to be a constant even if $m$ is growing.
We observe that the block-indexing function $\bind_l$ is $1/3$-balanced (independent of $l$). The distribution is similar to the distribution above for $\ind_{1+2}$. We pick uniformly at random one of the three parts $s, x^0, x^1$. The chosen part is now set uniformly at random while the other two parts are fixed to force $\bind_l$ to output $y$. Formally, for each $y \in \B^l$, $\mu_y$ is the following distribution: 
\[\mu_y = \frac{1}{3} U_1 \times \{y\} \times \{y\} + \frac{1}{3} \{0\} \times \{y\} \times U_l + \frac{1}{3} \{1\} \times U_l \times \{y\}\]
where $U_l$ is the uniform distribution on $\B^l$ and $\{y\}$ is the distribution which gives $y$ with probability $1$.
So $\bind_l$ is $1/3$-balanced.

We next consider DT-hardness with another parameter for the success probability. 
A block-respecting partial assignment $\rho \in (\B^l \cup \{*\})^n$ is a certificate for a search problem $\cR \subseteq (\B^l)^n \times \cO$ if there exists $o \in \cO$ such that for all $z \in (\B^l)^n$ consistent with $\rho$, we have $(z, o) \in \cR$. We use the notation $\free(\rho)$ to denote the set of free blocks of $\rho$ and $\fix(\rho)$ to denote the set of fixed blocks of $\rho$.

$\cP \subseteq (\B^l \cup \{*\})^n$ is $(p, q, \eps)$-block-DT-hard for a search problem $\cR \subseteq (\B^l)^n \times \cO$ if the following hold:
\begin{itemize}
    \item No partial assignment in $\cP$ is a certificate for $\cR$.
    \item $\cP$ contains the empty partial assignment $*^n$.
    \item $\cP$ is downward closed: if $\rho'$ can be obtained from $\rho$ by forgetting the value of some fixed block and $\rho \in \cP$, then $\rho' \in \cP$.
    \item For every assignment $\rho \in \cP$ which fixes at most $p$ blocks, there is some distribution $\nu_\rho$ over assignments to the blocks not fixed by $\rho$ such that the following holds. If $T$ is any block decision tree of depth $q$ which only queries blocks not fixed by $\rho$, then $\Pr[\rho \; \cup \; \sigma_{T, \nu_{\rho}} \in \cP ] \geq \eps$.
\end{itemize}
Recall that $\sigma_{T, \nu}$ is the partial assignment corresponding to the random leaf reached when $T$ is run on the distribution $\nu$.
We will always assume that $p, q, \eps > 0$.

In what follows, we will deal with lifts of arbitrary search problems. For a search problem $\cR \subseteq (\B^l)^n \times \cO$ and a gadget $g: \B^m \rightarrow \B^l$, we define $\cR \circ g \subseteq (\B^m)^n \times \cO$ by $(x, o) \in \cR \circ g$ iff $(g^n(x), o) \in \cR$ where $g^n(x)$ is $g(x_1) g(x_2) \dots g(x_n)$.


\subsection{PDT lower bound for $\cR|_{\rho} \circ g$}
In this subsection, we give a simulation to analyze parity decision trees for $\cR|_{\rho} \circ g$ on a lifted distribution where $\rho$ is some partial assignment and we have some distribution $\nu_\rho$ which is hard for depth $q$ block decision trees with respect to $\cP$.
The simulation will use the distributions $\{\mu_y\}_{y \in \B^l}$ given by $g$ being $\delta$-balanced and how these distributions can be expressed as suitable mixtures. 

The simulation in Algorithm \ref{alg:sim_lift} is essentially from \cite{bi25} though it does not appear there in exactly the form we need. The main notable change is that when Algorithm \ref{alg:sim_lift} succeeds, it also returns some of the internal variables used during the simulation. These are used in later parts of the proof. 

We provide proofs of all properties we need from the simulation, primarily for completeness, but also because we need the simulation to terminate quickly with  probability very close to $1$ (instead of just constant probability), which was not observed in \cite{bi25}. Such high probability statements for simulation theorems have appeared earlier in the context of communication complexity \cite{goos2020query, riazanov2025searching}, though the proof here is much simpler.

\begin{algorithm} \label{alg:sim_lift}
\caption{}
\KwIn{ $\rho \in (\B^l \cup \{*\})^n$, PDT $T$ on $(\B^m)^{\free(\rho)}$}
$F \gets \free(\rho)$ \; 
$L \gets \emptyset$ \tcp*{Collection of equations}
$v \gets$ root of $T$\;
$\sigma \gets *^{\free(\rho)}$ \tcp*{Partial assignment storing queries to $z$}
$iter \gets 0$ \;

\While{$v$ is not a leaf}{
$P' \gets$ query at $v$ \;
$P \gets P'$ after substituting according to $L$\; 
\tcp{$P$ now depends only on blocks in $F$ and is equivalent to $P'$ under $L$}
\eIf{$P$ is a constant, $b \in \bF_2$}{ \label{algline:sim_trivial}
Update $v$ according to $b$\;
}
{ \label{algline:sim_nontrivial}
$(i, j) \gets \min\{(i, j) \in F \times [m] \mid x_{i, j }\text{ appears in } P\}\}$ \; 
Query $z_i$ and update $\sigma$ accordingly\;
Sample $c \sim \Bern(\delta)$\;
\eIf{$c$ is $1$}{ \label{algline:sim_good}
Sample $u' \sim B_{z_i}^j$\;
$L \gets L \cup \{x_{i, h} = u'_h \mid h \in [m]\setminus \{j\}\}$\;
Pick $b \in \bF_2$ uniformly at random\;
$L \gets L \cup \{P = b\}$\;
Update $v$ according to $b$\;
}{
Sample $u \sim A_{z_i}^j$\;
$L \gets L \cup \{x_{i, h} = u_h \mid h \in [m]\}$\;
}
$F \gets F \setminus \{i\}$\;
}

$iter \gets iter+1$\; \label{algline:sim_iter}
}
\eIf{$\rho \;\cup \sigma \notin \cP$}{
\KwRet FAIL \;
}
{
\KwRet $v, \sigma, L, F$
}
\end{algorithm}

\begin{lemma} \label{lem:lift_inv}
    The simulation maintains the following invariants in the beginning of each iteration of the while loop and at termination:
    \begin{enumerate}
        \item The collection of equations $L$ uniquely determines $x_i\; (i \in [m] \setminus F)$ as affine functions of $x_{i'} \;(i' \in F)$. 
        \item $L$ implies all the parity constraints on the path from the root to the current node $v$.    
        \item For each assignment to all blocks in $F$, if we assign values to $x_i \; (i \in \free(\rho) \setminus F)$ according to $L$, we have $g(x_i) = z_i$ for each  $i \in \free(\rho) \setminus F$.
        \item The partial assignment $\sigma$ fixes exactly those blocks in $z$ that have been queried so far, and $\sigma$ is consistent with $z$.
    \end{enumerate}
\end{lemma}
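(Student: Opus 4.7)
The plan is to proceed by induction on the iteration counter $iter$, exactly in the style of Lemma \ref{lem:inv} and Lemma \ref{lem:coll_inv} earlier in the paper. At $iter=0$ we have $L = \emptyset$, $F = \free(\rho)$, $v$ is the root, and $\sigma = *^{\free(\rho)}$, so all four invariants are trivially true: there are no fixed blocks to determine, there are no edges traversed, and no block has been queried.

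For the inductive step, assume the four invariants hold at the start of the iteration with counter value $i$, and consider the three possible branches. If Line \ref{algline:sim_trivial} fires (the simplified parity $P$ is a constant), then $L$ and $F$ are unchanged and $\sigma$ is unchanged, so invariants 1, 3 and 4 persist automatically; for invariant 2, the new parity constraint on the traversed edge equals $P = b$ which is a tautology by the case assumption, so it is (vacuously) implied by $L$. If Line \ref{algline:sim_nontrivial} is executed with $c=1$, then for the chosen $(i, j)$ we add to $L$ the $m-1$ equations $x_{i, h} = u'_h$ for $h \neq j$ together with the parity equation $P = b$; since $x_{i, j}$ appears in $P$, this collection fixes $x_{i, j}$ uniquely as an affine combination of the free variables plus the constant $b \oplus u'$-stuff, so the block $i$ is now determined as an affine function of the blocks that remain in $F$, and block-upper-triangularity with respect to the order in which blocks leave $F$ extends the previous fixing of blocks in $[m] \setminus F$, giving invariant 1. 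Invariant 2 holds because the new edge is labeled $P' = b$, and under $L$ the parity $P'$ is equivalent to $P$ by construction of $P$ at the start of the iteration. Invariant 3 follows from the $\delta$-balanced property of $g$: sampling $u' \sim B^j_{z_i}$ and taking any value for $u_j$ produces an $m$-bit string in $g^{-1}(z_i)$, because the distribution $B^j_{z_i} \times U_1$ is one of the two components in the mixture decomposition of $\mu_{z_i}$, and $\mu_{z_i}$ is supported on $g^{-1}(z_i)$. Invariant 4 follows from the explicit update of $\sigma$ with the queried value $z_i$. The case $c = 0$ is analogous but simpler: we fix all $m$ bits of $x_i$ to $u \sim A^j_{z_i}$, which again lies in $g^{-1}(z_i)$ since $A^j_{z_i}$ is the other component of the mixture decomposition of $\mu_{z_i}$; here we do not traverse an edge, so invariant 2 is unchanged.

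The main thing to verify carefully is invariant 3 in the $c = 1$ branch, since that is the only place where we fix a block of $x$ without completely determining all its bits as constants: the subtlety is that $x_{i, j}$ is only fixed as an affine function of free blocks through the added equation $P = b$, yet we still need $g(x_i) = z_i$ for every possible assignment to the free blocks. This is exactly what the definition of $\delta$-balanced guarantees: for every value of $u_j \in \B$, the string $(u_1', \dots, u_{j-1}', u_j, u_{j+1}', \dots, u_m')$ lies in $\mathrm{supp}(\mu_{z_i}) \subseteq g^{-1}(z_i)$. The rest of the verification is bookkeeping: tracking which blocks leave $F$, checking that each new equation added to $L$ respects the triangular structure (only the newly removed block has its variables on the left), and noting that $\sigma$ is updated in lockstep with the queries to $z$. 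I do not expect any obstacles beyond this careful case analysis.
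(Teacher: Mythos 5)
Your proof is correct and takes essentially the same route as the paper's: induction on the iteration counter, case analysis on which branch of the while loop fires, and the observation that both mixture components $A^j_{z_i}$ and $B^j_{z_i} \times U_1$ of the $\delta$-balanced distribution $\mu_{z_i}$ are supported on $g^{-1}(z_i)$. The paper simply defers invariants 1, 2, and 4 to the analogous Lemma \ref{lem:inv} and only spells out invariant 3, whereas you flesh out all four; the content is the same.
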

\begin{proof}
    The first two invariants are similar to those in Lemma \ref{lem:inv} so we omit them. The invariant about $\sigma$ is also easy to see. 
    
    So we only need to verify that $g(x_i) = z_i$ for each $i \in \free(\rho) \setminus F$ irrespective of how blocks in $F$ are assigned. This follows from how the distributions $A_{z_i}^j$ and $B^j_{z_i}$ are defined.     
    For all $j$, the distribution $A_{z_i}^j$ is supported only on $g^{-1}(z_i)$.
    Since $B^j_{z_i} \times U_1$ is supported only on $g^{-1}(z_i)$, any string in the support of $B^j_{z_i}$ always forces the output of $g$ to $z_i$.
\end{proof}

For $z \in (\B^l)^{\free(\rho)}$, we use $\mu_z$ to denote the distribution $ \prod_{i \in \free(\rho)} \mu_{z_{i}}$. For a subset $F \subseteq \free(\rho)$, we write $\mu_z[F]$ to denote the distribution $\prod_{i \in F} \mu_{z_i}$.

Let $\nu$ be a distribution on $(\B^l)^I$ where $I$ is some index set. 
Then $\nu \circ \mu$ denotes the following distribution on $(\B^m)^I$. Sample $z \sim \nu$. Then return a sample from the distribution $\prod_{i \in I} \mu_{z_i}$. 

\begin{lemma} \label{lem:lift_rpdt_dist} 
Let $\nu$ be a distribution on $(\B^l)^{\free(\rho)}$. 
Let $W(v)$ be the event that node $v$ of $T$ is visited by Algorithm \ref{alg:sim_lift} when run on $\rho,T$ and $z$ drawn from the distribution $\nu$. Let $V(v)$ be the event that for $x \sim \nu \circ \mu$, running $T$ on $x$ reaches $v$.
Then for every $v \in T$, we have $\Pr[W(v)] = \Pr[V(v)]$.
\end{lemma}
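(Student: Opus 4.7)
The plan is to prove the lemma by constructing an explicit coupling between the simulation's random bits and a draw $(z,x) \sim \nu \circ \mu$, such that under the coupling the node $v$ visited by the simulation always coincides with the node reached by $T$ on $x$. If such a coupling exists, then $W(v)$ and $V(v)$ are the same event, so the marginal probabilities agree. The coupling is built iteration by iteration, maintaining the invariant that $x$ satisfies the current system $L$ and that the queries to $z_i$ issued so far have been answered consistently with $z$.

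Concretely, before the simulation starts we draw $z \sim \nu$ and then $x_i \sim \mu_{z_i}$ independently across $i \in \free(\rho)$. At the beginning of an iteration we verify that substituting $x$ into the cleaned-up query $P$ agrees with substituting $x$ into the original $P'$; this is ensured by the invariant $L$ holds at $x$ together with Lemma \ref{lem:lift_inv}(2). In the trivial branch (Line \ref{algline:sim_trivial}) no random bits are needed and the simulation's move down matches $T$'s move on $x$ automatically. In the nontrivial branch (Line \ref{algline:sim_nontrivial}) the simulation processes some block $i \in F$ and direction $j$. Using the mixture decomposition
\[
\mu_{z_i} \;=\; (1-\delta)\,A_{z_i}^j \;+\; \delta\,(B_{z_i}^j \times U_1),
\]
we couple a fresh $c \sim \Bern(\delta)$ to $x_i$ in the standard way: conditional on $c=0$, $x_i$ is distributed as $A_{z_i}^j$, and conditional on $c=1$, $(x_{i,[m]\setminus\{j\}}, x_{i,j})$ is distributed as $B_{z_i}^j \times U_1$. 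We then set $u := x_i$ in the $c=0$ case and $u' := x_{i,[m]\setminus\{j\}}$, $b := P(x)$ in the $c=1$ case. All new equations added to $L$ in this step ($x_{i,h} = u_h$, or $x_{i,h} = u'_h$ together with $P = b$) hold at $x$ by construction, so the invariant propagates.

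The only nontrivial verification is that in the $c=1$ branch, $b := P(x)$ really is a fresh uniform bit independent of everything the simulation has seen so far, matching the simulation's step of drawing $b \in \bF_2$ uniformly at random. This follows from three observations: (a) blocks $x_{i'}$ for $i' \neq i$ are independent of $x_i$ given $z$, and $z$'s only relevant marginal on block $i$ is $z_i$, which is already revealed; (b) since $i \in F$ at the start of this iteration, no earlier simulation choice refers to variables in block $i$ (so the simulation's entire prior history depends only on $z$ and $(x_{i'})_{i' \notin F}$); and (c) conditional on $c=1$, the bit $x_{i,j}$ is $U_1$, independent of $x_{i,[m]\setminus\{j\}}$, and therefore independent of the simulation's history. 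Because $P$ contains $x_{i,j}$ in its support by the choice of $(i,j)$ on Line 13, $P(x) = x_{i,j} \oplus W$ for some $W$ that does not depend on $x_{i,j}$, so $b = P(x)$ is a fair coin flip independent of the past. The distribution of $u'$ given $c=1$ is $B_{z_i}^j$, again matching the simulation.

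Once the coupling is in place, a simple induction shows that after every iteration the simulation's current node $v$ equals the node in $T$ reached by $x$ along the parity queries already answered, because at each move-down the simulation uses precisely $P(x) = P'(x)$. Hence the simulation terminates at the leaf of $T$ reached by $T$ on $x$, and for every $v \in T$ the events $W(v)$ and $V(v)$ coincide under the coupling, giving $\Pr[W(v)] = \Pr[V(v)]$. The main obstacle in executing the plan is the careful bookkeeping for point (c) above—ensuring that the bit $x_{i,j}$ really is independent of the entire simulation history at the moment it is used to define $b$—which reduces to tracking which blocks have appeared in the simulation state up to this iteration and invoking the mixture structure of $\mu_{z_i}$. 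The rest of the argument is routine verification of the invariants listed in Lemma \ref{lem:lift_inv}.
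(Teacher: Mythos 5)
Your coupling-based reformulation is a legitimate way to package the paper's argument, and most of your steps are sound. The paper itself proves the claim \emph{pointwise} for each fixed $z$ (maintaining the invariant that conditioned on the history, the free blocks are distributed as $\mu_z[F]$) and then averages over $z\sim\nu$; you instead draw $(z,x)$ up front and define a coupling. These are equivalent framings. However, your observation (b) is incorrect as stated, and it is precisely the place you flag as ``the main obstacle.''

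The problem is this. Under your coupling, in the $c=1$ branch you set $b := P(x)$, and the cleaned-up parity $P$ at iteration $t'$ is supported on $F_{t'}$, which at that time still contains the block $i$ you are about to process at the current iteration $t > t'$. So $b_{t'} = P_{t'}(x)$ genuinely can depend on $x_i$ even though $i \in F$ throughout iterations $1,\dots,t$. (Concretely: if the root query is $x_{1,1}+x_{2,1}$ and iteration $1$ processes block $1$ with $c=1$, then $b_1 = x_{1,1}+x_{2,1}$ depends on $x_2$, which is still free at iteration $2$.) Hence the sequence of nodes visited — hence the ``history'' — is \emph{not} a function only of $z$ and $(x_{i'})_{i'\notin F}$. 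What is true (and what the paper tracks directly) is that the dependence is harmless: given $z$, the earlier history, and $c=1$ at iteration $t'$, the coupled bit $b_{t'}$ equals $x_{i_{t'},j_{t'}}$ plus terms independent of $x_{i_{t'},j_{t'}}$, and $x_{i_{t'},j_{t'}}$ is uniform independent of $(x_{i'})_{i'\in F_{t}}$ — so conditioning on $b_{t'}$ does not disturb the conditional distribution of the currently-free blocks. The correct invariant to maintain is therefore \emph{not} ``history depends only on $z$ and $(x_{i'})_{i'\notin F}$'' but rather ``given $z$ and the history, $(x_{i'})_{i'\in F}$ is distributed as $\prod_{i'\in F}\mu_{z_{i'}}$,'' which is exactly what the paper's Lemma~\ref{lem:lift_rpdt_dist} proof establishes inductively. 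Once you replace (b) with that invariant, your (c) and the remainder of the coupling argument go through as intended.
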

\begin{proof}
We will show that the statement holds pointwise:

Let $W_z(v)$ be the event that node $v$ of $T$ is visited by Algorithm \ref{alg:sim_lift} on the string $z \in (\B^l)^{\free(\rho)}$. Let $V_z(v)$ be the event that for a random $x \sim \mu_z$, running $T$ on $x$ reaches $v$.
Then for every $v \in T$, we have $\Pr[W_z(v)] = \Pr[V_z(v)]$.
The desired statement then follows by averaging.

Fix $z \in (\B^l)^{\free(\rho)}$.
Like in Lemma \ref{lem:rpdt_dist}, because of the invariants in Lemma \ref{lem:lift_inv}, it is sufficient to show that the simulation adds equations to $L$ with the correct probabilities. We will maintain the following invariant.
    At the beginning of an iteration, we have a linear system $L$ which fixes some subset $\free(\rho) \setminus F$ of blocks as affine functions of the blocks $F$ such that conditioned on $x \sim \mu_z$ satisfying $L$, the distribution on $x$ projected to the free blocks $F$ is $\mu_{z}[F]$. 

    The case that the parity we are trying to simulate is already determined by $L$ is clear.

Suppose the parity query is not determined by $L$. Let $P$ denote the equivalent parity query under $L$ which only depends on the blocks in $F$.
    Fix $i \in F, j \in [m]$ occurring in $P$. Having queried $z_i$, we now sample $u \sim (1-\delta)A_{z_i}^j + \delta B_{z_i}^j \times U_1$ according to how the mixture is defined. In the case that we are sampling from $A_{z_i}^j$, it is clear that the resulting conditional distribution on the free blocks stays $\mu_z[F]$ after having removed $i$ from $F$ since we have independence across blocks.
    
    When sampling from $B_{z_i}^j \times U_1$, we do not directly assign a uniform random bit to $x_{i, j}$ but instead assign the random bit to $P$. This parity is indeed uniformly distributed in this case since the bit $x_{i, j}$ is uniform random and independent of all other variables appearing in the parity. Moreover, the independence of the blocks also implies that conditioned on the value of this parity and the other bits in $x_i$, the distribution on the free blocks remains $\mu_z[F]$.
\end{proof}

\begin{lemma} \label{lem:lift_fail_prob}
Suppose $\rho$ fixes at most $p$ blocks. Let $T$ be a PDT on  $(\B^m)^{\free(\rho)}$ whose depth $d$ is at most $\delta q/2$. Then the probability that Algorithm \ref{alg:sim_multicoll} when run on $T$ and $z \sim \nu_\rho$ does not return FAIL is at least $\eps - \exp(-\delta^2q/2)$. 
\end{lemma}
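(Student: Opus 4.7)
The plan is to couple Algorithm \ref{alg:sim_lift} with a truncated variant that caps the number of queries to $z$ at $q$, so that the block-DT-hardness hypothesis on $\nu_\rho$ can be applied directly. Concretely, I will consider the randomized block decision tree $T'$ obtained by running the simulation but aborting just before the $(q+1)$-th query to $z$ would be issued. For any fixing of the simulation's internal randomness (the $\Bern(\delta)$ samples and the draws from $A_{z_i}^j$ and $B_{z_i}^j$), $T'$ becomes a deterministic block decision tree of depth at most $q$ on $z$ that queries only blocks in $\free(\rho)$. Since $|\fix(\rho)| \leq p$, the $(p,q,\eps)$-block-DT-hardness property applied to each such deterministic tree, and then averaged over the internal randomness, yields
\[
\Pr_{z \sim \nu_\rho}[\rho \cup \sigma_{\mathrm{trunc}} \in \cP] \geq \eps,
\]
where $\sigma_{\mathrm{trunc}}$ is the partial assignment at $T'$'s leaf.

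Next I will bound the probability that truncation is actually triggered. Only nontrivial iterations (those executing the else-branch at Line \ref{algline:sim_nontrivial}) query a block of $z$, and such an iteration advances $v$ down the PDT precisely when its freshly sampled $c \sim \Bern(\delta)$ equals $1$. Since the PDT has depth $d$, the total number of advancing iterations across a run is at most $d$. Hence if the simulation reaches a $(q+1)$-th query to $z$, then among its first $q$ nontrivial iterations the number of successes is at most $d \leq \delta q/2$. The $c$-samples are i.i.d.\ $\Bern(\delta)$, so by Hoeffding's inequality,
\[
\Pr[\text{truncation}] \leq \Pr[\mathrm{Bin}(q,\delta) \leq \delta q/2] \leq \exp\!\left(-\frac{2(\delta q/2)^2}{q}\right) = \exp\!\left(-\frac{\delta^2 q}{2}\right).
\]

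Finally, the truncated and untruncated simulations agree whenever truncation does not occur, and a non-FAIL return requires $\rho \cup \sigma \in \cP$, so
\[
\Pr[\text{not FAIL}] \geq \Pr[\rho \cup \sigma_{\mathrm{trunc}} \in \cP] - \Pr[\text{truncation}] \geq \eps - \exp(-\delta^2 q/2),
\]
which is the desired bound. The only step requiring some care is verifying that, under a fixing of the internal randomness, the truncated simulation is indeed a block decision tree on $z$ of depth at most $q$: iterations in which $P$ becomes constant after substitution by $L$ (Line \ref{algline:sim_trivial}) do not query $z$, so they do not contribute to the block-query depth, and the upper bound of $d$ advancing iterations is what keeps the additive Chernoff tail in the sharp form needed to match the stated $\exp(-\delta^2 q/2)$ factor.
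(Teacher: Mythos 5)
Your proof is correct and follows essentially the same approach as the paper's: truncate the induced randomized block decision tree at depth $q$, apply the $(p,q,\eps)$-block-DT-hardness assumption together with averaging over the internal randomness, and bound the truncation probability by a Chernoff/Hoeffding tail. The only cosmetic difference is that the paper tallies downward moves over all of the first $q$ iterations (with conditional per-iteration probability $\geq \delta$), whereas you restrict to the first $q$ nontrivial iterations where the $\Bern(\delta)$ samples are exactly i.i.d.; both yield the same $\exp(-\delta^2 q/2)$ tail.
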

\begin{proof}
Algorithm \ref{alg:sim_lift} when run on $T$ and $\rho$ naturally gives a randomized decision tree on $(\B^l)^{\free(\rho)}$. This tree may have depth more than $q$, so we truncate it to obtain a randomized decision tree $T'$ of depth $q$.

Consider the natural coupling between the random leaf reached when $T'$ is run on $\mu$ and the full execution of the simulation on $\mu$. Observe that if the simulation was not cut short because of $q$ queries being made, the partial assignment at the leaf of $T'$ reached is the same as $\sigma$ in Algorithm \ref{alg:sim_lift} when it terminates. So we can use the union bound to get that $\Pr[\rho \cup \sigma_{T', \nu_\rho} \in \cP] \leq \Pr[\rho \; \cup \sigma \in \cP] + \Pr[iter > q]$, where we used that the number of iterations of the simulation is at least the number of queries made.

By the definition of $(p, q)$-DT-hardness and averaging, $\Pr[\rho \cup \sigma_{T', \nu_\rho} \in \cP] \geq \eps$ where the probability is over the distribution $\nu_{\rho}$ and the randomness of $T'$. On the other hand, to estimate $\Pr[iter > q]$, note that in each iteration, conditioned on the history, with probability at least $\delta$, we take a step down the PDT. So when the number of iterations exceeds $q$, the number of such good iterations where we move down the PDT out of the first $q$ iterations is less than $d$. Let $X$ denote the number of such good iterations. The Chernoff bound now implies that
\[
\Pr[X < d] = \Pr[X - \delta q < d - \delta q] \leq \exp\left(-2\left(\delta - \frac{d}{q}\right)^2 q\right) \leq \exp(-\delta^2q/2).
\]
Putting the two bounds together gives the desired lower bound $\eps - \exp(-\delta^2 q/2)$.
\end{proof}

\begin{lemma} \label{lem:lift_not_sink}
    Let $v, \sigma, L, F$ be returned by a successful run of Algorithm \ref{alg:sim_lift} when run on $\rho$ and $T$. For every $o \in \cO$, there exists $x \in (\B^m)^{\free(\rho)}$ such that $x$ reaches the leaf $v$ and $(x, o) \notin \cR|_\rho \circ g$.
\end{lemma}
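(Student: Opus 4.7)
The plan is to use the definition of DT-hardness together with the invariants established in Lemma \ref{lem:lift_inv}. The key observation is that a successful run of Algorithm \ref{alg:sim_lift} by definition satisfies $\rho \cup \sigma \in \cP$, and since $\cP$ contains no certificates for $\cR$, for every fixed $o \in \cO$ there must exist some total assignment $z \in (\B^l)^n$ which extends $\rho \cup \sigma$ such that $(z, o) \notin \cR$.

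Given such a $z$, I would construct the desired $x \in (\B^m)^{\free(\rho)}$ as follows. For each $i \in F$, pick an arbitrary $x_i \in g^{-1}(z_i)$; this is nonempty because $g$ is $\delta$-balanced and in particular surjective (the distribution $\mu_{z_i}$ is supported on $g^{-1}(z_i)$, which is therefore nonempty). For each $i \in \free(\rho) \setminus F$, determine $x_i$ from the values of the free blocks via the linear system $L$, which is possible by the first invariant of Lemma \ref{lem:lift_inv}.

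It remains to verify the two required properties. First, $x$ satisfies $L$ by construction, so by the second invariant of Lemma \ref{lem:lift_inv}, $x$ satisfies every parity equation on the root-to-$v$ path in $T$, which means $x$ reaches the leaf $v$. Second, for $i \in \free(\rho) \setminus F$, the third invariant of Lemma \ref{lem:lift_inv} guarantees $g(x_i) = \sigma_i = z_i$ (using that $z$ extends $\sigma$ on $\fix(\sigma)$ by the fourth invariant), while for $i \in F$ we explicitly chose $x_i$ with $g(x_i) = z_i$. Therefore $g^{\free(\rho)}(x) = z|_{\free(\rho)}$, so that interpreting $x$ as input to $\cR|_\rho \circ g$ together with the fixed part $\rho|_{\fix(\rho)}$ recovers precisely $z$; since $(z, o) \notin \cR$, we conclude $(x, o) \notin \cR|_\rho \circ g$.

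I do not anticipate a serious obstacle here: the argument is essentially a bookkeeping check combining the DT-hardness of $\cP$ with the invariants of the simulation. The one spot that requires a small amount of care is ensuring that the choices for blocks in $F$ are compatible with the downstream fixing of blocks in $\free(\rho) \setminus F$ via $L$, but this is handled automatically because $L$ expresses the fixed blocks as affine functions of the free blocks, and the third invariant already guarantees the correct $g$-output for every such resolution.
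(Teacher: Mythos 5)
Your proof is correct and takes essentially the same approach as the paper's: you invoke that $\rho \cup \sigma \in \cP$ is not a certificate to extract a bad assignment $z$ consistent with $\rho \cup \sigma$, set the free blocks via any preimage under $g$, propagate through $L$ to the fixed blocks, and use the invariants of Lemma \ref{lem:lift_inv} to verify both that $x$ reaches $v$ and that $g^{\free(\rho)}(x)$ agrees with $z$ on $\free(\rho)$. The only cosmetic difference is that you spell out the chain $g(x_i) = \sigma_i = z_i$ via invariants 3 and 4 a bit more explicitly than the paper does.
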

\begin{proof}
    By Lemma \ref{lem:lift_inv}, we know that $L$ implies all the parity constraints on the path from the root to $v$. So it suffices to find $x$ satisfying $L$ for which $(x, o) \notin \cR$.

    Since this was a successful run, $\rho \;\cup \sigma \in \cP$. By definition of $\cP$, $\rho \; \cup \sigma$ is not a certificate for $\cR$. So there exists some $y \in (\B^l)^n$ which is consistent with $\rho \; \cup \sigma$ such that $(y, o) \notin \cR$. For each $i \in F$, we now assign $x_i$ so that $g(x_i) = y_i$. Then extend to all blocks in $\free(\rho) \setminus F$ according to the constraints in $L$. By Lemma \ref{lem:lift_inv}, for each $i \in \free(\rho) \setminus F$, we have $g(x_i) = y_i$ since $y$ agrees with $\sigma$ wherever it is defined. So $g^{\free(\rho)}(x)$ agrees with $y$ on all the blocks in $\free(\rho)$. Hence $(x, o) \notin \cR|_\rho \circ g$.
\end{proof}

\subsection{Affine DAGs for $\cR \circ g$}

We will need a simple variant of the `lifted distributions fool rank' lemma of Bhattacharya, Chattopadhyay and Dvo\v{r}\'{a}k \cite{bhat2024} for safe systems that avoids the loss depending on the gadget size.

\begin{lemma} \label{lem:lift_safe_prob}
    Let $\nu$ be a distribution on $(\B^l)^U$. Let $\{\mu_y\}_{y \in \B^l}$ be a collection of distributions witnessing that $g$ is $\delta$-balanced. Let $\Phi$ be a linear system on the lifted variables $x_{i, j}\; (i \in U, j \in [m])$. Let $r = \sdim(\Phi)$. Then
    \[
    \Pr_{x \sim \nu \circ \mu}[x \text{ satisfies } \Phi] \leq (1-\delta/2)^r.
    \]
\end{lemma}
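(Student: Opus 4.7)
My plan is to reduce to a safe subsystem, row-reduce using pivots in distinct blocks (guaranteed by Theorem \ref{lem:safe}), and then exploit the $\delta$-balanced mixture through a direct binomial calculation. First, by the definition of safe dimension, there is a safe collection $W \subseteq \langle L(\Phi) \rangle$ with $\dim(W) = r$; selecting one equation of $\langle \Phi \rangle$ per basis element of $W$ yields a safe subsystem $\Psi \subseteq \langle \Phi \rangle$ with $\rk(\Psi) = r$. Since $\Phi$ implies $\Psi$, it suffices to upper bound the probability that a sample from $\nu \circ \mu$ satisfies $\Psi$.

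Next, I use Theorem \ref{lem:safe} to choose $r$ variables $x_{i_1, j_1}, \ldots, x_{i_r, j_r}$ lying in $r$ \emph{distinct} blocks such that the corresponding columns of the coefficient matrix of $\Psi$ are linearly independent, and then row-reduce so that the $t$-th equation of $\Psi$ takes the form $x_{i_t, j_t} = w_t$, where $w_t$ is an affine function of only the non-pivot variables (i.e., $w_t$ involves no $x_{i_s, j_s}$ for any $s \in [r]$). Having pivots in distinct blocks is the crucial payoff of working with a safe subsystem rather than an arbitrary system of the same rank: it is what will allow the independence argument below to avoid any loss depending on the gadget size $m$ (whereas the analogous lemma in \cite{bhat2024} for arbitrary rank incurs such a loss).

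Finally, for each $t \in [r]$ I apply the $\delta$-balanced decomposition $\mu_{z_{i_t}} = (1-\delta)\, A^{j_t}_{z_{i_t}} + \delta\, B^{j_t}_{z_{i_t}} \times U_1$ when sampling $x_{i_t}$, and let $E_t$ denote the event that the $\delta$-branch is used. Since $i_1, \ldots, i_r$ are distinct, the events $E_1, \ldots, E_r$ are mutually independent, each with probability $\delta$, and conditioned on $E_t$ the bit $x_{i_t, j_t}$ is uniform on $\{0,1\}$ and independent of every other variable in $x$ (of the remaining bits of $x_{i_t}$ by the mixture structure, and of the other blocks by block independence). Hence, conditional on $z$, on the random set $S \coloneqq \{t : E_t \text{ holds}\}$, and on the values of all non-pivot variables (which determine each $w_t$), the events $\{x_{i_t, j_t} = w_t\}_{t \in S}$ are mutually independent, each with probability exactly $1/2$, while for $t \notin S$ the corresponding event is bounded trivially by $1$. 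Averaging over $S$ and using the binomial expansion,
\[
\Pr_{x \sim \nu \circ \mu}[x \text{ satisfies } \Psi] \;\leq\; \E_S\bigl[(1/2)^{|S|}\bigr] \;=\; \sum_{s=0}^{r} \binom{r}{s} (\delta/2)^s (1-\delta)^{r-s} \;=\; \left(1 - \frac{\delta}{2}\right)^{r}.
\]
The only delicate point will be carefully justifying the conditional independence step: that $w_t$ depends only on non-pivot variables (ensured by the row reduction) and that the pivots lie in distinct blocks (ensured by safety via Theorem \ref{lem:safe}), so that the uniform bits $\{x_{i_t, j_t} : t \in S\}$ are jointly independent of the non-pivot variables determining the $w_t$'s. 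The rest is bookkeeping.
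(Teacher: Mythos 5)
Your proposal is correct and follows essentially the same route as the paper's proof: extract a safe subsystem of rank $r$ from the definition of safe dimension, use Theorem~\ref{lem:safe} to find $r$ pivots in distinct blocks, condition on $z$ and the non-pivot variables, and then exploit the $\delta$-balanced mixture so that with independent probability $\delta$ each pivot bit is a fresh uniform bit, giving the factor $\delta\cdot\tfrac12 + (1-\delta)\cdot 1 = 1-\delta/2$ per block via the binomial expansion. The only cosmetic difference is that you phrase the argument in terms of the events $E_t$ and the random set $S$, whereas the paper fixes all non-pivot blocks first and then counts how many Bernoulli draws leave $x_{i_t,j_t}$ unfixed; the computations are identical.
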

\begin{proof}
    By definition, there exists a safe linear system $\hat{\Phi}$ implied by $\Phi$ and the rank of $\hat{\Phi}$ is $r$. We will show that $\Pr_{x \sim \nu \circ \mu}[x \text{ satisfies } \hat{\Phi}] \leq (1-\delta/2)^r$ which will imply the claim.

    Condition on $z = g^U(x)$. After this conditioning, $\nu \circ \mu$ is the product distribution $\prod_{i \in U} \mu_{z_i}$. Since $\hat{\Phi}$ is safe, there exists a way to rewrite $\hat{\Phi}$ as fixing $r$ variables $x_{i_t, j_t}\; (t \in [r])$ as affine functions of all other variables, where $i_t \; (t \in [r])$ are all distinct (Theorem \ref{lem:safe}). Condition on all blocks not in $T \coloneqq \{i_t \mid t \in [r]\}$.

    For each $i\coloneqq i_t$ and $j\coloneqq j_t$, we will now consider sampling $x_i$ according to $\mu_{z_i}$ in the following way (each of these is done independently since our distribution is a product distribution). Consider a Bernoulli random variable with probability $\delta$ to pick a component of the mixture $\mu_{z_i} = (1-p) A_{z_{i}}^{j} + p B_{z_{i}}^{j} \times U_1$. If it is $1$, then we assign $x_{i, [m] \setminus \{j\}} \sim B_{z_i}^{j}$. Otherwise $x_i \sim A_{z_i}^{j}$. 

    Having done this, the only unfixed variables are some variables from $x_{i_t, j_t},\; t \in [r]$. If $s$ of the Bernoulli variables above are $1$, then $\hat{\Phi}$ simplifies to $s$ equations each involving a different variable among  $x_{i_t, j_t}\; t \in [r]$ along with $r-s$ other equations involving only constants which we ignore. Now each of these $s$ surviving variables is a uniform random bit independent of the others. So conditioned on the events so far, the probability that $\hat{\Phi}$ is satisfied is at most $1/2^s$.

    Hence the overall probability that $\hat{\Phi}$ is satisfied is at most
    \[
    \sum_{s = 0}^r \binom{r}{s}(1-\delta)^{r-s} \delta^s \cdot \frac{1}{2^s} = \left(1- \delta + \frac{\delta}{2}\right)^r = \left(1-\frac{\delta}{2}\right)^r. \qedhere
    \]
\end{proof}

The next lemma gives a suitable affine restriction to restart the random walk. 

\begin{lemma} \label{lem:lift_aff_restrict}
    Let $\rho \in \cP$. Let $v, \sigma, L, F$ be returned by a successful run of Algorithm \ref{alg:sim_lift} on $\rho$ and $T$. Let $\Phi$ be a linear system implied by $L$. Let $r = \sdim(\Phi)$. If $|\fix(\rho)| + r \leq p$ and $r \leq q$, then there exists an affine restriction $\tau$ on the lifted variables fixing blocks $\free(\rho)\setminus F'$ and satisfying the following conditions:
    \begin{enumerate}
        \item The number of blocks fixed by $\tau$ is at most $r$.
        \item $\Phi$ is implied by $\tau$. 
        
        \item There exists a partial assignment $\rho'$ on $(\B^l)^{\free(\rho)}$ fixing blocks in $\free(\rho)\setminus F'$ such that $\rho \cup \rho' \in \cP$ and the following holds. For every assignment to the lifted variables in the blocks $F'$, if we set the blocks $[m] \setminus F'$ according to $\tau$, 
        then applying the gadget $g$ to all the blocks in $\fix(\rho') = \free(\rho) \setminus F'$ gives the partial assignment $\rho'$.
    \end{enumerate}
\end{lemma}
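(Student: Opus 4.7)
The plan is to construct $\tau$ as a union of two pieces $\tau^C \cup \tau^I$, paralleling the closure/safe-part decomposition in Lemma \ref{lem:aff_restrict} but with two extra ingredients: ensuring that the induced unlifted assignment lies in $\cP$, and exploiting the stifling property implicit in $g$ being $\delta$-balanced. Set $C \coloneqq \cl(\Phi)$, and let $I \subseteq \free(\rho) \setminus C$ be a set of size $r - |C|$ obtained by applying Theorem \ref{lem:safe} to a safe subsystem of $\langle \Phi \rangle$ of rank $r - |C|$ supported outside $C$, so that its pivots are indexable as $x_{i_t, j_t}$ for distinct $i_t \in I$. By Lemma \ref{lem:sdim_bound}, $|C| + |I| \leq r$.

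Before building $\tau$, I would fix the target unlifted assignment $\rho'$ on $C \cup I$. Since the simulation succeeded, $\rho \cup \sigma \in \cP$, and by downward closure $\rho \cup \sigma|_{C \cap \fix(\sigma)} \in \cP$; this baseline fixes at most $|\fix(\rho)| + |C| \leq p$ blocks. Running the $(p, q, \eps)$-block-DT-hardness distribution for $\rho \cup \sigma|_{C \cap \fix(\sigma)}$ against a depth $|C \cap F| + |I| \leq r \leq q$ block decision tree querying $(C \cap F) \cup I$ produces (using $\eps > 0$) an extension $\rho'$ of $\sigma|_{C \cap \fix(\sigma)}$ with $\rho \cup \rho' \in \cP$. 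For $\tau^C$, I assign lifted values to every $F$-block --- for $i \in C \cap F$ choosing some $x_i$ with $g(x_i) = \rho'(z_i)$ (possible since $g$ is surjective), and arbitrary values elsewhere --- then extend via $L$ to $\fix(\sigma)$. Lemma \ref{lem:lift_inv} guarantees $g(x_i) = \sigma(z_i) = \rho'(z_i)$ for $i \in C \cap \fix(\sigma)$, and because $L$ implies $\Phi$, the resulting total lifted assignment satisfies $\Phi$; taking $\tau^C$ to be the restriction of this assignment to the $C$-blocks therefore implies every $C$-supported equation of $\Phi$ while inducing $\rho'|_C$ through $g$.

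For $\tau^I$, consider $\Phi' \coloneqq \Phi|_{\tau^C}$, which is safe of rank $|I|$, in its rewritten form expressing each pivot $x_{i_t, j_t}$ as an affine function of the other lifted variables of $\Phi'$. Using that $g$ is $\delta$-balanced and hence stifling, for each $i \in I$ I bit-fix the non-pivot bits $x_{i, j}$, $j \neq j_t$, so as to force $g(x_i) = \rho'(z_i)$ independent of the pivot bit, and then fix $x_{i_t, j_t}$ as the affine function of the $F' \coloneqq \free(\rho) \setminus (C \cup I)$ variables obtained by substituting the non-pivot bit-fixings into the rewritten $\Phi'$. Setting $\tau \coloneqq \tau^C \cup \tau^I$ fixes exactly $|C| + |I| \leq r$ blocks and implies $\Phi$. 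The main obstacle I expect is the handling of blocks in $C \cap F$: these are closure blocks that the simulation never touched, so neither $\sigma$ nor $L$ directly supplies a value for them, yet $\tau^C$ must simultaneously satisfy all $C$-supported equations of $\Phi$ and induce $\rho'$ via $g$. The resolution is to let $L$ mediate --- choose lifted $F$-values freely (subject only to $g$-constraints on $C \cap F$) and let $L$ propagate them to $\fix(\sigma)$, which, because $L$ implies $\Phi$, automatically satisfies the closure equations regardless of how the $F$-choices were made.
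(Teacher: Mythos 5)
Your construction is correct and is essentially the paper's own proof: the same closure/safe-part decomposition ($\tau^C$ built from a total lifted assignment on the closure blocks propagated through $L$, then stifling bit-fixings on the pivot blocks plus affine pivot assignments from the rewritten safe system), and the same device of running a non-adaptive block decision tree of depth $|C\cap F|+|I|\le r\le q$ on $\nu_{\rho\cup\sigma|_{C\setminus F}}$ with $\eps>0$ to extract the extension $\hat\sigma$ keeping $\rho\cup\rho'$ in $\cP$. The "main obstacle" you identify and its resolution (letting $L$ mediate the values on $C\cap F$ so the closure equations hold automatically) is exactly how the paper handles it.
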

\begin{proof}
Let $\sigma'$ denote the partial assignment obtained by from $\sigma$ by only keeping blocks in $\cl(\Phi) \setminus F$ and forgetting all other blocks. So $|\fix(\rho \cup \sigma')| = |\fix(\rho)| + |\cl(\Phi) \setminus F| \leq |\fix(\rho)| + r \leq p$ where we used Lemma \ref{lem:sdim_bound}. 
Since $\cP$ is downward closed and $\rho \cup \sigma \in \cP$ as this was a successful run, we have $\rho \cup \sigma' \in \cP$.

Let $I_1 = \cl(\Phi) \cap F$. Let $d = \dim(L(\Phi)[\setminus \cl(\Phi)])$. Since $L(\Phi)[\setminus \cl(\Phi)]$ is safe, we can pick linearly independent columns $(i_1, j_1), (i_2, j_2), \dots, (i_d, j_d)$ for the corresponding matrix, where $i_1, i_2, \dots, i_d$ are all distinct (Theorem \ref{lem:safe}). Let $I_2 = \{i_1, i_2, \dots, i_d\}$. 
Consider a non-adaptive decision tree $T'$ on the unlifted variables $(\B^l)^{\free(\rho \cup \sigma')}$ which queries blocks in $I_1 \cup I_2$. This decision tree has depth $|I_1| + |I_2| \leq |\cl(\Phi)| + d \leq r \leq q$ where the second inequality uses Lemma \ref{lem:sdim_bound}. So consider running $T'$ on the distribution $\nu_{\rho \cup \sigma'}$ given by $\rho \cup \sigma' \in \cP$ with $|\fix(\rho \cup \sigma')| \leq p$. Since $\Pr[\rho \cup \sigma' \cup \sigma_{T', \nu_{\rho \cup \sigma'}} \in \cP ] \geq \eps > 0$, there exists some partial assignment $\hat{\sigma}$ fixing blocks in $I_1 \cup I_2$ such that $\rho \cup \sigma' \cup \hat{\sigma} \in \cP$.

Now consider a total assignment $x$ for the lifted variables in blocks $\free(\rho)$ obtained in the following way. For each block $i$ in $I_1$, set $x_i$ such that $g(x_i)= \hat{\sigma}(z_i)$. All the other blocks in $F \setminus I_1$ are set arbitrarily. We extend to the remaining blocks $\free(\rho) \setminus F$ by assigning according to $L$. Now by Lemma \ref{lem:lift_inv}, $g(x_i) = \sigma(z_i)$ for all $i \in \free(\rho) \setminus F$. In particular, this holds for all $i \in \cl(\Phi) \setminus F$ which implies  $g(x_i)= \sigma'(z_i)$. 
Let $\tau_1$ denote the bit-fixing restriction fixing the variables in blocks $\cl(\Phi)$ according to $x$.

Define another bit-fixing restriction $\tau_2$ fixing all variables in $\{x_{i, j} \mid i \in I_2, j \in [m]\}$ except $(i_1, j_1), (i_2, j_2), \dots, (i_d, j_d)$ in the following way. For each such $i \in I_2$ such that $(i, j)$ is among the $d$ linearly independent columns, $\tau_2$ fixes all $x_{i, j'}, \; j' \neq j$ to force the gadget $g$ to output $\hat{\sigma}(z_i)$ when applied to block $x_i$. Such an assignment exists since $g$ is stifling.

Now consider $\Phi|_{\tau_1 \cup \tau_2}$. By the choice of $(i_1, j_1), (i_2, j_2), \dots, (i_d, j_d)$ above, $\Phi|_{\tau_1 \cup \tau_2}$ still has rank $d$ and can be viewed as fixing these $d$ variables in terms of all the variables in blocks $\free(\rho) \setminus (I_1 \cup I_2)$. Let $\tau_3$ denote this affine restriction. Finally $\tau \coloneqq \tau_1 \cup \tau_2 \cup \tau_3$ and $F' = F \setminus (\cl(\Phi) \cup I_2)$.

We now verify the desired properties. $\tau_1$ fixes $\cl(\Phi)$ and $\tau_2, \tau_3$ fix $I_2$. So the number of blocks fixed by $\tau$ is $|\cl(\Phi)| + \dim(L(\Phi)[\setminus \cl(\Phi)]) \leq r$ by Lemma \ref{lem:sdim_bound}. By construction, $\tau_1$ implies the equations in $\langle \Phi\rangle$ supported on the closure of $\Phi$. The other equations are implied by how we defined $\tau_3$ from $\Phi|_{\tau_1 \cup \tau_2}$.

Define $\rho' = \sigma' \cup \hat{\sigma}$. Now consider any assignment to the lifted variables in $F'$. Extend to a full assignment $x$ for $(\B^m)^{\free(\rho)}$ according to $\tau$. Then we saw above that $\tau_1$ ensures $g(x_i) = \sigma'(z_i)$ for all $i \in \cl(\Phi) \setminus F$. For $i \in \cl(\Phi) \cap F$, $\tau_1$ ensures $g(x_i) = \hat{\sigma}(z_i)$. For $i \in I_2$, $\tau_2$ ensured that $g(x_i) = \hat{\sigma}(z_i)$ irrespective of the bit $x_{i, j}$ not fixed by $\tau_2$.
\end{proof}

We now prove the main lemma which does the random walk.

\begin{lemma} \label{lem:lift_dag_restrict}
    Let $\rho \in (\B^l \cup \{*\})^n$. Suppose there is an affine DAG $C$ solving $\cR|_\rho \circ g$ whose depth is at most $D$ and size is at most $S$.    
    Let $d = \floor{\delta q/2}$. Set $\hat{\eps} = \eps - \exp(-\delta^2 q/2)$ and suppose $\hat{\eps} > 0$. If $|\fix(\rho)| \leq p$, then $D \geq d$.
    
    Moreover if $|\fix(\rho)| + \frac{2}{\delta}\ln\left(\frac{S}{\hat{\eps}}\right) \leq p$, then there exists $\rho'$ on $(\B^l)^{\free(\rho)}$ such that $|\fix(\rho')| \leq O(\log(S/\hat{\eps})/\delta)$ and there exists an an affine DAG $C'$ solving $\cR|_{\rho \cup \rho'} \circ g$ whose depth is at most $D - d$ and size is at most $S$.
\end{lemma}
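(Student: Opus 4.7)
The plan is to mirror the strategy of Lemma~\ref{lem:dag_restrict} and Lemma~\ref{lem:coll_dag_restrict}: truncate $C$ to its first $d$ levels, run Algorithm~\ref{alg:sim_lift} on the resulting PDT with inputs drawn from the hard distribution $\nu_\rho$, locate a node at depth $d$ where the linear system has small safe dimension and a good successful run, and then use Lemma~\ref{lem:lift_aff_restrict} to find an affine restriction that lets us restart.

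First I would handle the depth claim $D \geq d$. Assume for contradiction that $D < d$. Then the PDT $T$ obtained from the first $d$ levels of $C$ has depth equal to $D$ and every leaf of $T$ is a sink of $C$. Running Algorithm~\ref{alg:sim_lift} on $T$ with $z \sim \nu_\rho$, Lemma~\ref{lem:lift_fail_prob} guarantees success probability at least $\hat{\eps} > 0$, so some successful run ends at a leaf $v$ that is a sink of $C$. But Lemma~\ref{lem:lift_not_sink} produces, for every possible output label at $v$, an input reaching $v$ that is not a valid solution of $\cR|_\rho \circ g$, contradicting correctness of $C$.

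For the main statement, let $T$ be the depth-$d$ PDT obtained from the first $d$ levels of $C$ (repeating nodes where needed). Run Algorithm~\ref{alg:sim_lift} on $T$ with $z \sim \nu_\rho$. By Lemma~\ref{lem:lift_fail_prob}, the success probability is at least $\hat{\eps}$. Since $C$ has at most $S$ nodes, an averaging argument yields a node $w$ in $C$ such that successful runs reach leaves of $T$ corresponding to $w$ with probability at least $\hat{\eps}/S$. By Lemma~\ref{lem:lift_rpdt_dist}, this is also a lower bound on the probability that $x \sim \nu_\rho \circ \mu$ follows the root-to-$w$ path in $C$; in particular, $x$ satisfies the linear system $\Phi_w$ with probability at least $\hat{\eps}/S$. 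Applying Lemma~\ref{lem:lift_safe_prob} with $\nu = \nu_\rho$ gives $(1-\delta/2)^{\sdim(\Phi_w)} \geq \hat{\eps}/S$, so $\sdim(\Phi_w) \leq \frac{\ln(S/\hat{\eps})}{-\ln(1-\delta/2)} \leq \frac{2\ln(S/\hat{\eps})}{\delta}$.

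Now fix a successful run that reaches a leaf of $T$ corresponding to $w$; by Lemma~\ref{lem:lift_inv} and Lemma~\ref{lem:path_implies}, the system $L$ returned by the simulation implies $\Phi_w$. The hypotheses of Lemma~\ref{lem:lift_aff_restrict} are satisfied: the bound $|\fix(\rho)| + \sdim(\Phi_w) \leq p$ holds by the assumption on $|\fix(\rho)|$ combined with the safe-dimension bound, and $\sdim(\Phi_w) \leq \rk(\Phi_w) \leq d \leq \delta q/2 \leq q$. Lemma~\ref{lem:lift_aff_restrict} then produces an affine restriction $\tau$ fixing at most $\sdim(\Phi_w) = O(\log(S/\hat{\eps})/\delta)$ lifted blocks, implying $\Phi_w$, together with a partial assignment $\rho'$ on the unlifted variables such that $\rho \cup \rho' \in \cP$ and applying $g$ to each fixed block after $\tau$ recovers $\rho'$. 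Applying $\tau$ to $C$ simplifies $\Phi_w$ to the empty system; taking $C'$ to be the subgraph of $C|_\tau$ reachable from $w$ gives an affine DAG of size at most $S$. To see that $C'$ solves $\cR|_{\rho \cup \rho'} \circ g$, note that for any lifted input consistent with $\tau$ and satisfying the output convention at the fixed blocks (guaranteed by $\tau$), the corresponding $z$ extends $\rho \cup \rho'$; correctness of $C$ then transfers to $C'$. Finally, the depth bound $D - d$ follows because, by Lemma~\ref{lem:lift_not_sink}, $w$ cannot be a sink of $C$, so any successful run ends at a leaf of $T$ at depth exactly $d$ (not earlier), forcing $w$ to sit at depth $d$ in $C$ and leaving at most $D - d$ levels below it. The main subtlety is carefully coordinating the three choices of blocks inside Lemma~\ref{lem:lift_aff_restrict} with the DT-hard set $\cP$, but that bookkeeping is absorbed into the statement of Lemma~\ref{lem:lift_aff_restrict} itself.
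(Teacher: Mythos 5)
Your proposal is correct and follows essentially the same route as the paper's proof: truncate to a depth-$d$ PDT, run Algorithm~\ref{alg:sim_lift} on $\nu_\rho$, combine Lemmas~\ref{lem:lift_fail_prob}, \ref{lem:lift_rpdt_dist} and \ref{lem:lift_safe_prob} to locate a node $w$ with $\sdim(\Phi_w) \le \frac{2}{\delta}\ln(S/\hat{\eps})$, then invoke Lemma~\ref{lem:lift_aff_restrict} to obtain $\tau$ and $\rho'$ and pass to $C' = (C|_\tau)$ rooted at $w$. The only cosmetic difference is that you prove the depth claim $D \ge d$ by contradiction, whereas the paper argues directly that any successfully reached node $w$ cannot be a sink (Lemma~\ref{lem:lift_not_sink}) and hence must lie at depth $d$ in $T$.
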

\begin{proof}
We start by assuming only $|\fix(\rho)| \leq p$.
Consider the depth $d$ PDT $T$ given by the DAG $C$ starting at its root and keeping only nodes that are at depth at most $d$, repeating as necessary.
Run Algorithm \ref{alg:sim_lift} on $\rho$ and $T$ where $z$ is drawn from $\nu_{\rho}$. By Lemma \ref{lem:lift_fail_prob}, it succeeds with probability $\hat{\eps} > 0$. Consider any node $w$ in the DAG $C$ which corresponds to a leaf $v$ reached when the simulation terminated successfully. By Lemma \ref{lem:lift_not_sink}, $w$ cannot be a sink. So $v$ must be at depth $d$ in $T$ implying $D \geq d$.

Now let us assume $|\fix(\rho)| + \frac{2}{\delta}\ln\left(\frac{S}{\hat{\eps}}\right) \leq p$. As noted above, the simulation succeeds with probability $\hat{\eps}$. Since there are only $S$ nodes in $C$, there exists some node $w$ in the DAG $C$ such that the corresponding leaves in $T$ are successfully reached with probability at least $\hat{\eps}/S$. By Lemma \ref{lem:lift_rpdt_dist}, the node $w$ is reached with probability at least $\hat{\eps}/S$ when we pick $x \sim \nu_{\rho} \circ \mu$ and follow the path in the DAG $C$ taken by $x$ starting at the root.

Let $\Phi$ be the linear system at node $w$. Let $r\coloneqq \sdim(\Phi)$. By Lemma \ref{lem:lift_safe_prob}, the probability that $x \sim \nu_{\rho} \circ \mu$ satisfies $\Phi$ is at most $(1-\delta/2)^r \leq \exp(-\delta r/2)$. Combining this with the lower bound above, we obtain $r \leq \frac{2}{\delta}\ln\left(\frac{S}{\hat{\eps}}\right)$.
Fix a successful run of the simulation which outputs $v, \sigma, L, F$. We now use Lemma \ref{lem:lift_aff_restrict} to obtain an affine restriction $\tau$ on the lifted variables fixing blocks in $\free(\rho)\setminus F'$ and a partial assignment $\rho'$ fixing the same set of blocks. The condition $|\fix(\rho)| + r \leq p$ holds by the bound above. For the other condition, by Lemma \ref{lem:path_implies}, $r \leq d = \delta q/2 \leq q$ since $\delta \leq 1$.

Now apply the affine restriction $\tau$ to $C$. This DAG solves $(\cR|_{\rho} \circ g)|_{\tau}$. The node $w$ in the resulting DAG is now labeled by the empty system so the DAG $C'$ obtained by considering only the nodes reachable from $C'$ also solves $(\cR|_{\rho} \circ g)|_{\tau}$.

We claim that $(\cR|_{\rho} \circ g)|_{\tau} \subseteq \cR|_{\rho \cup \rho'} \circ g$. Consider any $x' \in (\B^l)^{F'}$ and $o \in \cO$ such that $(x', o) \in (\cR|_{\rho} \circ g)|_{\tau}$. By definition, if $x$ denotes the unique extension of $x'$ according to $\tau$, $(x, o) \in \cR|_{\rho} \circ g$. This means $(g^{\free(\rho)}(x), o) \in \cR|_{\rho}$. By Lemma \ref{lem:lift_aff_restrict}, when $g$ is applied to each of the blocks of $x$ in $\fix(\rho')$, we get $\rho'$. On the blocks in $F'$, $x$ agrees with $x'$. So $(g^{F'}(x'), o) \in \cR|_{\rho \cup \rho'}$ which implies $(x', o) \in \cR|_{\rho \cup \rho'}\circ g$. Hence $(\cR|_{\rho} \circ g)|_{\tau} \subseteq \cR|_{\rho \cup \rho'} \circ g$. Since $C'$ solves $(\cR|_{\rho} \circ g)|_{\tau}$, it also solves $\cR|_{\rho \cup \rho'} \circ g$.

The bound on the size of $C'$ is obvious. The bound on depth follows from using that there is some length $d$ path in $C$ from the root to $w$ by Lemma \ref{lem:lift_not_sink}. The bound on $|\fix(\rho')|$ comes from the guarantee on $\tau$ given by Lemma \ref{lem:lift_aff_restrict} and $r \leq O(\log(S/\hat{\eps})/\delta)$.
\end{proof}

We now make repeated use of the above lemma. The condition connecting $\eps, \delta$ and $q$ below is fairly mild and is given just to keep the final bound in terms of $\eps$.
\begin{theorem} \label{thm:lift_gen}
    Let $\cR$ have a $(p, q, \eps)$-block-DT-hard set $\cP$. Suppose $\eps \geq 2\exp(-\delta^2 q/2)$. If $C$ is an affine DAG solving $\cR \circ g$ whose depth is $D$ and size is $S$, then $D (\log S + \log(1/\eps)) = \Omega(\delta p\floor{\delta q/2})$.
\end{theorem}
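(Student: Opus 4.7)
The plan is to iterate Lemma \ref{lem:lift_dag_restrict} starting from the empty partial assignment $\rho_0 = *^n$, which lies in $\cP$ by the definition of block-DT-hardness. First, set $d \coloneqq \floor{\delta q/2}$ and $\hat{\eps} \coloneqq \eps - \exp(-\delta^2 q/2)$. By the assumption $\eps \geq 2\exp(-\delta^2 q/2)$ we get $\hat{\eps} \geq \eps/2 > 0$, so $\log(S/\hat{\eps}) = O(\log S + \log(1/\eps))$. Set $B \coloneqq \tfrac{2}{\delta}\ln(S/\hat{\eps})$, so the main precondition of Lemma \ref{lem:lift_dag_restrict} for a partial assignment $\rho$ is $|\fix(\rho)| + B \leq p$, and each successful invocation adds at most $O(\log(S/\hat{\eps})/\delta) = O(B)$ newly fixed blocks while producing a new affine DAG solving $\cR|_{\rho \cup \rho'} \circ g$ of size at most $S$ and depth decreased by $d$.

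I would then chain these invocations: define $\rho^{(0)} = *^n$ and $C^{(0)} = C$, and inductively apply Lemma \ref{lem:lift_dag_restrict} to $(\rho^{(t)}, C^{(t)})$ as long as $|\fix(\rho^{(t)})| + B \leq p$, producing $\rho^{(t+1)} = \rho^{(t)} \cup \rho'_t$ with $|\fix(\rho^{(t+1)})| \leq |\fix(\rho^{(t)})| + cB$ for an absolute constant $c$, and an affine DAG $C^{(t+1)}$ of size at most $S$ and depth at most $D - (t+1)d$ solving $\cR|_{\rho^{(t+1)}} \circ g$. This gives a valid sequence for at least $k = \Omega(p/B) = \Omega(p\delta/\log(S/\hat{\eps}))$ iterations. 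After $k$ iterations, the resulting DAG $C^{(k)}$ has depth at most $D - kd$, still size at most $S$, and solves $\cR|_{\rho^{(k)}} \circ g$ where $\rho^{(k)} \in \cP$ has $|\fix(\rho^{(k)})| \leq p$.

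Finally, I apply the first (size-free) conclusion of Lemma \ref{lem:lift_dag_restrict} to $(\rho^{(k)}, C^{(k)})$: since $|\fix(\rho^{(k)})| \leq p$, the depth of $C^{(k)}$ must be at least $d$, i.e.~$D - kd \geq d$, giving $D \geq (k+1)d = \Omega(p\delta d / \log(S/\hat{\eps}))$. Substituting $d = \floor{\delta q/2}$ and using $\log(S/\hat{\eps}) = O(\log S + \log(1/\eps))$, rearranging yields the desired bound
\[
D(\log S + \log(1/\eps)) = \Omega(\delta p \floor{\delta q/2}).
\]

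There is no serious obstacle here once Lemma \ref{lem:lift_dag_restrict} is in hand: the whole argument is a clean random-walk-with-restarts accounting. The only thing to be careful about is the transition between the two regimes in the lemma statement (the size-free depth lower bound versus the restriction-producing conclusion), which is handled by using the former exactly once at the end, after having spent all the budget $p$ for the latter; and checking that $\hat{\eps} \geq \eps/2$ so that the $\log(1/\hat{\eps})$ factor can be absorbed into $\log(1/\eps)$.
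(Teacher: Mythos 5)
Your proposal is correct and follows essentially the same route as the paper: start from $\rho = *^n$, iterate Lemma~\ref{lem:lift_dag_restrict} to peel off depth $d = \floor{\delta q/2}$ per phase while fixing only $O(\log(S/\hat{\eps})/\delta)$ more blocks, and count phases until the budget $p$ is exhausted. Your explicit use of the size-free conclusion for the final (out-of-budget) DAG to get the extra $+d$ and to handle the degenerate case $p = O(B)$ is a slightly more careful bookkeeping of what the paper treats informally, but it is the same argument.
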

\begin{proof}
    Set $\rho = *^n, d = \floor{\delta q/2}, \hat{\eps} = \eps - \exp(-\delta^2 q/2)$. By assumption, $\hat{\eps} \geq \eps/2$.
    
    Now repeatedly apply Lemma \ref{lem:lift_aff_restrict} to $\rho$ and $C$, updating them to $\rho \cup \rho'$ and $C'$ respectively given by the statement, until $|\fix(\rho)| > p$. Since in each phase, we fix at most $O(\log(S/\hat{\eps})/\delta) \leq O(\log(S/\eps)/\delta)$ more blocks in $\rho$, there are at least $\ceil{\frac{p}{O(\log(S/\eps)/\delta)}}$ phases. This implies $D \geq \frac{\delta p}{O(\log(S/\eps))} \cdot d = \Omega\left(\frac{\delta p\floor{\delta q/2}}{\log(S/\eps)}\right)$ as desired.
\end{proof}


Theorem \ref{thm:lift} is a simple corollary of Theorem \ref{thm:lift_gen}.
\begin{theorem}[Theorem \ref{thm:lift} restated] \label{thm:lift_restate}
    Let $\varphi$ be a formula over $nl$ variables, partitioned into $n$ blocks of $l$ variables each. Let $m$ be the number of clauses of $\varphi$, $w$ its width and $b$ its block-width. Suppose there exists a $(p, q)$-block-DT-hard set for $\varphi$. 
    Then the formula $\lift(\varphi)$ over $n(2l+1)$ variables contains at most $2^bm$ clauses of width $w + b$ and every depth $D$ $\res(\oplus)$ proof of $\lift(\varphi)$ must have size $\exp(\Omega(pq/D))$.
\end{theorem}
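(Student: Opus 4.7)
The plan is to derive Theorem \ref{thm:lift_restate} as a direct corollary of the more general Theorem \ref{thm:lift_gen}, using the properties of the $\bind_l$ gadget established in Section \ref{subsec:lift_prel}. First, I would verify the purely syntactic claims about $\lift(\varphi)$: each block of $l$ variables in $\varphi$ is replaced by a selector bit $s_i$ together with two blocks $x_i^0, x_i^1$ of $l$ variables each, giving $n(2l+1)$ variables total. Each clause $C$ of $\varphi$ with block-width $b$ is replaced by the CNF $\lift(C) = \bigvee_i \lift(C_i)$, which, after expansion, contributes exactly $2^b$ clauses, each of width $w+b$. Summing over clauses gives the claimed bound of at most $2^b m$ clauses.

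Next, I would reduce the size-depth bound to Theorem \ref{thm:lift_gen}. Any $\res(\oplus)$ refutation of $\lift(\varphi)$ of size $S$ and depth $D$ yields an affine DAG of the same size and depth solving the associated false clause search problem, which is exactly $\cR^{\varphi} \circ \bind_l$ (where $\cR^{\varphi}$ is the false clause search problem for $\varphi$ viewed blockwise). The gadget $\bind_l$ was shown in Section \ref{subsec:lift_prel} to be $\tfrac{1}{3}$-balanced, independent of $l$. Moreover, the hypothesis that $\varphi$ admits a $(p,q)$-block-DT-hard set (per the definition in Section \ref{sec:overview}) gives, by construction, a $(p, q, \tfrac{1}{3})$-block-DT-hard set for $\cR^{\varphi}$ in the refined sense of Section \ref{subsec:lift_prel}, since the only difference between the two definitions is that we have named the constant success probability explicitly.

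I may first dispose of the degenerate regime where $q$ is at most an absolute constant, since in that case $pq/D = O(p/D) = O(n/D)$ and the claim $S \geq \exp(\Omega(pq/D))$ is trivial (indeed $S \geq 1$). For $q$ larger than this constant, the hypothesis $\eps \geq 2\exp(-\delta^2 q/2)$ of Theorem \ref{thm:lift_gen} is satisfied with $\eps = \delta = 1/3$. Applying Theorem \ref{thm:lift_gen} then gives
\[
D\bigl(\log S + \log 3\bigr) = \Omega\bigl(\tfrac{1}{3} p \lfloor q/6 \rfloor\bigr) = \Omega(pq),
\]
which rearranges to $\log S = \Omega(pq/D)$ and hence $S = \exp(\Omega(pq/D))$, as desired.

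There is essentially no technical obstacle here since all the real work — the parity decision tree simulation on lifted distributions (Lemma \ref{lem:lift_fail_prob}), the rank bound for safe systems under lifted distributions (Lemma \ref{lem:lift_safe_prob}), the restart lemma (Lemma \ref{lem:lift_aff_restrict}), and the iteration argument (Lemma \ref{lem:lift_dag_restrict}) — is already packaged in Theorem \ref{thm:lift_gen}. The only step that requires any care is reconciling the two notions of DT-hardness (the informal $(p,q)$-version from the introduction, which hides the constant $\tfrac{1}{3}$, and the parametric $(p, q, \eps)$-version used inside Section \ref{sec:lifting}); this amounts to a definitional unwinding rather than a combinatorial obstacle.
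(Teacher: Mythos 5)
Your reduction to Theorem \ref{thm:lift_gen} in the main regime matches the paper's proof exactly: $\bind_l$ is $\tfrac13$-balanced, the $(p,q)$-block-DT-hard set is a $(p,q,\tfrac13)$-block-DT-hard set for $\cR^\varphi$, the refutation converts to an affine DAG for $\cR^\varphi\circ\bind_l$ (after relabeling outputs, since $\cR^{\lift(\varphi)}$ is not literally the composed problem), and Theorem \ref{thm:lift_gen} gives $D\log S=\Omega(pq)$.

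However, your handling of the degenerate case $q=O(1)$ has a genuine gap. You claim that when $q$ is at most an absolute constant, $S\geq\exp(\Omega(pq/D))$ is ``trivial (indeed $S\geq 1$).'' It is not: with $q=\Theta(1)$ the required bound is $S\geq\exp(\Omega(p/D))$, and nothing so far rules out $D\ll p$ (say $D=1$ with $p=\Theta(n)$), in which case $\exp(\Omega(p/D))$ is exponentially large and $S\geq 1$ proves nothing. The bound only becomes vacuous once one knows $D=\Omega(p)$, and that is precisely what the paper supplies in this case: it shows any affine DAG for $\cR^\varphi\circ g$ has depth at least $p$, by first arguing via an adversary (using the distributions $\nu_\rho$ from the DT-hard set to keep the partial assignment inside $\cP$ for $p$ queries) that every ordinary decision tree for $\cR^\varphi$ has depth more than $p$, and then invoking the deterministic PDT lifting theorem for stifling gadgets from \cite{cmss23}. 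Note that this regime is actually reachable: the hypothesis $\eps\geq 2\exp(-\delta^2q/2)$ of Theorem \ref{thm:lift_gen} with $\eps=\delta=\tfrac13$ fails for all $q\lesssim 18\ln 6$, so the case cannot simply be discarded. Adding the depth lower bound $D\geq p$ (or some argument of comparable strength) is needed to close your proof.
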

\begin{proof}
The $(p, q)$-block-DT-hard set for $\varphi$ is a $(p, q, \eps = 1/3)$-block-DT-hard set for the associated false clause search problem $\cR^\varphi$.  The gadget $g$ is $\bind_l$ for which we can take $\delta = 1/3$ as explained in Subsection \ref{subsec:lift_prel}.

Any $\res(\oplus)$ proof of $\lift(\varphi)$ of depth $D$ and size $S$ can be converted to an affine DAG for the associated false clause search problem $\cR^{\lift(\varphi)}$ with the same size and depth. It is easy to see that by changing the output labels of the affine DAG for $\cR^{\lift(\varphi)}$ suitably we can obtain an affine DAG for $\cR^\varphi \circ g$.
    
    If $\eps < 2\exp(-q/18)$, then $q = O(1)$. So the desired bound is just $D \log S = \Omega(p)$. We will show that $D \geq p$. It is sufficient to show that any ordinary decision tree for $\cR^\varphi$ must have depth at least $p$ since then we can use the PDT lifting theorem using stifling gadgets from \cite{cmss23}. We use the hard distributions from $\cP$ to give an adversary strategy. At each query, the corresponding hard distribution for the current partial assignment gives a response which ensures that the resulting partial assignment remains in $\cP$. This can be done until more than $p$ queries are made.

    Now suppose $\eps \geq 2\exp(-q/18)$. This also implies $\delta q/2 \geq 1$, so $\floor{\delta q/2} \geq \delta q/4$. Then Theorem \ref{thm:lift_gen} implies $D \log S = \Omega(pq)$.
\end{proof}

\section{Concluding remarks}
\label{sec:conc}

While improving the $\res(\oplus)$ lower bounds for the bit pigeonhole principle remains a natural goal, it would also be interesting to know if there is a non-trivial upper bound on the size of $\res(\oplus)$ proofs of the weak bit pigeonhole principle for any number of pigeons. 
Let us recall two such surprising upper bounds for the unary pigeonhole principle $\php_n^m$. Buss and Pitassi \cite{buss1997resolution} showed that when the number of pigeons is $2^{\sqrt{n \log n}}$, there is a $2^{O(\sqrt{n \log n})}$ size proof in resolution. On the other hand, for $n+1$ pigeons, any resolution proof has size $2^{\Omega(n)}$ \cite{haken85}. 

Oparin \cite{oparin2016tight} showed that there is a tree-like $\res(\oplus)$ proof of $\php_n^{n+1}$ whose size is $2^{O(n)}$, matching the known lower bound \cite{itsykson2020resolution}. This is in contrast to tree-like resolution for which $\php_n^m$ requires proofs of size $2^{\Theta(n \log n)}$ for any number of pigeons $m$ \cite{iwama1999tree, dantchev2001tree, beyersdorff2010lower}.

It may also be useful to have a better understanding of how the localization-based approach to analyzing parity decision trees \cite{cmss23, bi25} that we use here compares with directly analyzing the closure. 
At first glance, one might hope that the revealed blocks during these simulations should always contain the closure. This is not true, however, as the following example shows. Consider a parity decision tree which makes the following non-adaptive queries in order: $x_{1,1} + x_{3, 1}, x_{2, 1} + x_{3,2}, x_{1, 2}, x_{2, 2}$. When simulating this PDT, the first query is localized to $x_{1, 1}$ and all other bits in $x_{1}$ are fixed. The second query is similarly localized to $x_{2, 1}$ and all other bits in $x_2$ are fixed. Since the next two queries are already determined by the fixed bits, no other blocks are fixed. On the other hand, the closure of these four linear forms is $\{1, 2, 3\}$.

Similar to how \cite{bi25} gave a localization-based proof of the random walk analysis from \cite{egi24}, it is possible to give such a proof of the random walk theorem \cite[Theorem 4.3]{alek2024}, if we make the stronger assumptions that the starting system is empty and the number of steps is $w/2$ instead of $w$. By arguments similar to those in Section \ref{sec:lifting}, this weaker statement still suffices to prove the size-depth lower bound in \cite[Theorem 5.4]{ei25} based on the advanced Prover-Delayer game up to a constant factor. As we show in Appendix \ref{app:pd_games}, the lifting theorem in \cite{alek2024, ei25} is also implied by Theorem \ref{thm:lift_gen}.

On the other hand, it seems that a localization-based approach does not suffice to prove the bounded-width lifting theorem for tree-like $\res(\oplus)$ using stifling gadgets \cite{ei25}. The proof in \cite{ei25} is based on tracking the amortized closure. Such a lifting theorem with the smaller class of strongly stifling gadgets had been shown earlier by Chattopadhyay and Dvo\v{r}\'{a}k \cite{chattopadhyay2025super} by a localization-based approach which extended the simulation of Chattopadhyay, Mande, Sanyal and Sherif \cite{cmss23} to make it width-preserving.

\section*{Acknowledgments}

We thank Lutz Warnke for discussions on balls in bins, and Anthony Ostuni for comments on an earlier draft of this paper that helped improve the presentation. FB thanks Jackson Morris for discussions on balls in bins and other related topics.
\pagebreak

\printbibliography[heading=bibintoc]
\appendix
\section{Block-DT-hardness of $\bphp$ and $\fphp$}
\label{app:dt_hard}

For completeness, in this appendix, we give a very simple proof of the block-DT-hardness of the bit pigeonhole principle and the functional pigeonhole principle. 

\subsection{Bit pigeonhole principle}
The hard assignments and distributions for $\bphp_n^{m}$ are essentially the same as those considered in Section \ref{sec:sbphp}, but we will provide a self-contained description and proof. For $m = n+1$, we implicitly already discussed a slightly different collection of hard assignments and distributions in Section \ref{subsec:s_bphp_overview} where we sketched an $\Omega(n^2/\log S)$ bound on the depth of any size $S$ subcube DAG for $\coll_n^{n+1}$. 

Here we will  consider $m = n + k$ for $k \leq n/48$.
It will be convenient to identify $\B^l$ with $[n]$ via any bijection. 

 The set $\cP$ contains all partial assignments for which all fixed pigeons are sent to distinct holes. More formally, a partial assignment $\rho \in ([n] \cup \{*\})^m$ belongs to $\cP$ if and only if for all distinct $i_1, i_2 \in [m]$, $\rho(z_{i_1}) = *$ or $\rho(z_{i_2}) = *$ or $\rho(z_{i_1}) \neq \rho(z_{i_2})$. It is clear that $*^m \in \cP$, no partial assignment in $\cP$ falsifies a clause of $\bphp_n^{n+k}$ and $\cP$ is downward closed.

We will show that $\cP$ is a $(p, q, \eps)$-block-DT-hard set for $p = 15n/32$, $q = d$ and $\eps = \exp(-8d^2k/n^2)$ for any $d \leq n/4$.
 For each $\rho \in \cP$ such that $|\fix(\rho)| \leq p$, we define the distribution $\nu_{\rho}$ as follows. Let $A \subseteq [n]$ contain all the holes which have not already been assigned by $\rho$ to some pigeon. We have $|A| \geq 17n/32 \geq k$. Let $\cS$ be any multiset of size $m' \coloneqq m - |\fix(\rho)| = |A| + k$ containing only holes from $A$ such that each element of $A$ appears at least once and at most twice in $\cS$. So exactly $k$ holes appear twice in $\cS$. Then $\nu_{\rho}$ is the uniform distribution on permutations of $\cS$. 

\begin{lemma}
    For any depth $d$ decision tree $T$, $\Pr[\rho \cup \sigma_{T, \nu_\rho} \in \cP] \geq \exp(-8d^2k/n^2)$.
\end{lemma}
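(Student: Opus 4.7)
The plan is a short reduction to a classical probabilistic estimate. To begin, I would observe that $\rho$ assigns its fixed pigeons to holes in $[n]\setminus A$ while $\sigma_{T,\nu_\rho}$ assigns pigeons only to holes from $\cS\subseteq A$; thus $\rho\cup\sigma_{T,\nu_\rho}\in\cP$ iff $T$ observes pairwise distinct values when run on the random permutation $x\sim\nu_\rho$ of $\cS$. So it suffices to lower bound this no-collision probability.

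Next I would establish the standard exchangeability fact: for any deterministic depth-$d$ decision tree on positions of a uniform random permutation of a multiset, the observed sequence $(Y_1,\ldots,Y_d)$ has the same joint distribution as the values $(X_1,\ldots,X_d)$ at the first $d$ positions. The proof is a short induction on $t$: given any history $(Y_1,\ldots,Y_{t-1})=(y_1,\ldots,y_{t-1})$, the deterministic choice of $i_t$ is fixed, and both $Y_t=X_{i_t}$ and $X_t$ are uniform over the same remaining multiset $\cS\setminus\{y_1,\ldots,y_{t-1}\}$. Consequently the desired probability equals the probability that a uniformly random $d$-subset $U$ of $[m']$ (with $m'=|A|+k$) contains at most one position from each of the $k$ duplicated-value pairs $P_1,\ldots,P_k\subseteq[m']$.

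Let $I_j=\mathbf{1}(P_j\subseteq U)$, so by symmetry $E[I_j]=d(d-1)/(m'(m'-1))$. Because the coordinate indicators of a uniform fixed-size random subset are negatively associated, and each $I_j$ is a monotone function of a disjoint pair of them, the $I_j$'s are themselves negatively associated. Hence
\[
\Pr\!\left[\bigcap_{j=1}^{k}\bar{I_j}\right]\;\geq\;\prod_{j=1}^{k}(1-E[I_j])\;=\;\left(1-\frac{d(d-1)}{m'(m'-1)}\right)^{\!k}.
\]
A short calculation finishes the proof: since $d\leq n/4$ and $m'\geq 17n/32$, the per-factor probability is at most $1/2$, so $(1-x)^{k}\geq e^{-2kx}$ applies; together with $m'(m'-1)\geq n^{2}/4$ (which uses $(17/32)^{2}>1/4$), this yields the lower bound $\exp(-2kd(d-1)/(m'(m'-1)))\geq\exp(-8d^{2}k/n^{2})$.

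The only technical subtlety is the invocation of negative association of fixed-size subset indicators, which is a classical result; every other ingredient is either a routine reduction or an elementary inequality.
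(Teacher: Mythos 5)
Your reduction to the no-collision probability for a random permutation of $\cS$, the exchangeability argument, the duality to a uniformly random $d$-subset $U$ containing no duplicated pair $P_j$, and the closing arithmetic are all fine. The gap is in the key probabilistic step: negative association gives the inequality in the \emph{opposite} direction from the one you invoke. For negatively associated indicators $I_1,\dots,I_k$, the functions $1-I_j$ are non-increasing in disjoint coordinates, so NA yields $\E\bigl[\prod_j(1-I_j)\bigr]\le \prod_j(1-\E[I_j])$, i.e.\ the product is an \emph{upper} bound on $\Pr[\bigcap_j \overline{I_j}]$, not a lower bound. (A lower bound of product form would follow from \emph{positive} association via FKG/Harris, but indicators of a fixed-size random subset are the textbook example of negative, not positive, association.) A concrete counterexample to your displayed inequality: $m'=4$, $d=2$, $P_1=\{1,2\}$, $P_2=\{3,4\}$ gives $\Pr[\overline{I_1}\cap\overline{I_2}]=2/3<(5/6)^2$.

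The statement itself is still true, and the repair is to replace the one-shot negative-association step by sequential conditioning, which is what the paper does: reveal the positions of the $k$ duplicated holes one at a time, each pair being uniform over the positions still available; conditioned on the history, the probability that the $j$-th pair lands entirely inside the first $d$ positions is at most $\binom{d}{2}/\binom{m'-2(j-1)}{2}\le 4d^2/n^2$, and the chain rule then gives the valid lower bound $(1-4d^2/n^2)^k\ge\exp(-8d^2k/n^2)$. Note that the per-step bound must use the shrunken denominator $m'-2k$ rather than $m'$, which is exactly the price of conditioning that your product formula does not pay.
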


\begin{proof}
Since $\cP$ is permutation-invariant and so is the distribution $\nu_{\rho}$, we may assume that $T$ queries the first $d$ blocks not fixed by $\rho$. Our goal now is simply to estimate the probability that in a random permutation of $\cS$, no hole appears twice among the first $d$ entries of the permutation. 

For this, it is convenient to think of sampling a uniform random permutation of $\cS$ in the following way. Let $h_1, h_2, \dots, h_k \in A$ be the $k$ holes which appear twice in $\cS$. We iteratively do the following. For each $i \in [k]$, we pick two positions uniformly at random out of all available positions to place $h_i$. Finally we assign a uniform random permutation of the remaining holes $A \setminus \{h_1, h_2, \dots, h_k\}$ to the available positions. 

Now the probability that there is a collision among the first $d$ entries corresponds to picking two available positions among the first $d$ entries for some $h_j, j \in [k]$. Conditioned on events in the previous iterations, the probability that for hole $h_j$, we pick two positions among the first $d$ entries is at most $\binom{d}{2}/\binom{m'-2(j-1)}{2} \leq \frac{d^2}{(m' - 2k)^2} \leq \frac{d^2}{(n-k-p)^2} \leq \frac{4d^2}{n^2}$. So the probability that we do not have a collision is at least $(1-4d^2/n^2)^k \geq \exp(-8d^2k/n^2)$ where the inequality uses $4d^2/n^2 \leq 1/4$ so that we can invoke $1-x \geq \exp(-2x)$ which holds for all $x \leq 0.75$.
\end{proof}

To use Theorem \ref{thm:lift_gen} with the gadget being block-indexing and $\cP$ above, we need $\exp(-8d^2k/n^2) \geq 2\exp(-d/18)$. This holds if $d \geq 72$ since $8d^2 k/n^2 \leq d/24$. Applying Theorem \ref{thm:lift_gen} for suitable choices of $d$ now gives the same bound as in Theorem \ref{thm:s_bphp} for $\lift(\bphp_n^{n+k})$.

\subsection{Functional pigeonhole principle}
$\fphp_n^m$ is defined on $mn$ variables, $x_{i, j}\; (i \in [m], j \in [n])$ where $x_{i, j}$ being $1$ indicates that pigeon $i$ goes to hole $j$.
$\fphp_n^m$ contains the following clauses:
\begin{itemize}
    \item Pigeon clauses: For each $i \in [m]$, $\vee_{j \in [n]} x_{i, j}$, which encodes that each pigeon goes to at least one hole.
    \item Functionality clauses: For each $i \in [m]$ and distinct $j_1, j_2 \in [n]$, $\neg x_{i, j_1} \vee \neg x_{i, j_2}$, which together encode that each pigeon goes to at most one hole.
    \item Hole clauses: For each $j \in [n]$ and distinct $i_1, i_2 \in [m]$, $\neg x_{i_1, j} \vee \neg x_{i_2, j}$, which encode that each hole receives at most one pigeon.
\end{itemize}

By considering the unary version of $\cP$ defined for $\bphp_n^{n+k} \;(k \leq n/48)$ in the previous subsection, we get a block-DT-hard set for $\fphp_n^{n+k}$ with the same parameters, and therefore the same bound as Theorem \ref{thm:s_bphp} holds for $\lift(\fphp_n^{n+k})$.

We now show that $\varphi \coloneqq \fphp_n^{n+1}$ has a $\res(\oplus)$ proof of depth $O(n \log n)$. We will do this by giving a parity decision tree for the corresponding false clause search problem $\cR^\varphi$. This suffices since any parity decision tree for $\cR^{\varphi}$ can be converted to a tree-like $\res(\oplus)$ proof of $\varphi$ while preserving the size and depth \cite{itsykson2020resolution}. We will then show how to use this PDT for $\cR^\varphi$ to give a PDT for $\cR^{\lift(\varphi)}$ whose depth is only larger by $n+1$, thereby showing that there is a $\res(\oplus)$ proof of $\lift(\fphp_n^{n+1})$ of depth $O(n \log n)$.
\begin{lemma}
    There is a PDT solving $\cR^{\varphi}$ whose depth is $O(n \log n)$.
\end{lemma}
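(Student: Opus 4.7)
The plan is to construct the PDT in three phases: a ``binarization'' phase that reduces $\fphp_n^{n+1}$ to a $\bphp_n^{n+1}$-style collision search via parity queries, a verification phase that uses two unary queries to confirm a hole-clause violation, and a fallback phase that exhaustively queries one pigeon's remaining variables when verification fails. I would identify $[n]$ with $\B^{\log n}$ via binary representation. In the first phase, for each pigeon $i \in [n+1]$ and each coordinate $k \in [\log n]$, the tree queries the parity $P_{i,k} \coloneqq \sum_{j \in [n] : j_k = 1} x_{i,j}$, collecting an ``alleged hole'' $h_i \in \B^{\log n}$ per pigeon. When $|S_i| = 1$ where $S_i \coloneqq \{j \in [n] : x_{i,j} = 1\}$, the string $h_i$ is exactly the binary encoding of the unique element of $S_i$. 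This phase uses $(n+1)\log n$ parity queries.

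In the second phase, I would use the pigeonhole principle applied to $n+1$ strings in $\B^{\log n}$ to locate distinct $i_1, i_2 \in [n+1]$ with $h_{i_1} = h_{i_2}$, and let $j^* \in [n]$ be the corresponding hole. The tree then queries the two unary variables $x_{i_1, j^*}$ and $x_{i_2, j^*}$; if both equal $1$, it outputs the hole clause $\neg x_{i_1, j^*} \vee \neg x_{i_2, j^*}$, which is falsified by every assignment consistent with the leaf. Otherwise, without loss of generality $x_{i_1, j^*} = 0$, and the tree proceeds to the fallback for pigeon $i_1$.

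In the fallback, the tree queries the remaining $n-1$ unary variables $x_{i_1, j}$ for $j \in [n] \setminus \{j^*\}$, completely determining $S_{i_1}$. The crucial observation is that $|S_{i_1}| = 1$ is ruled out at this point: a singleton $S_{i_1} = \{j_0\}$ would make the parities $P_{i_1, k}$ compute the bits of $j_0$, forcing $j_0 = j^*$ and contradicting $x_{i_1, j^*} = 0$. Hence either $S_{i_1} = \emptyset$, in which case the pigeon clause $\bigvee_j x_{i_1, j}$ is falsified, or $|S_{i_1}| \geq 2$, in which case any two distinct $j_1, j_2 \in S_{i_1}$ yield a falsified functionality clause $\neg x_{i_1, j_1} \vee \neg x_{i_1, j_2}$.

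The total depth is $(n+1)\log n + 2 + (n-1) = O(n \log n)$ as required. I do not expect significant obstacles in formalizing this argument: correctness of each leaf's output follows directly from the explicit case analysis, and the only delicate point is the non-singleton observation that allows the fallback to safely query all $n$ variables of a single pigeon without exceeding the $O(n \log n)$ budget.
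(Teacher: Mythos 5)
Your construction is correct. The leaf-by-leaf case analysis goes through: the hole clause is certified by the two unary verification queries, the fallback leaves certify either a pigeon clause ($S_{i_1}=\emptyset$, since all $n$ unary values of pigeon $i_1$ are then known along the path) or a functionality clause ($|S_{i_1}|\ge 2$), and your observation that $|S_{i_1}|=1$ is impossible after a failed verification is exactly what is needed to guarantee the fallback always lands in one of those two cases. The depth count $(n+1)\log n + O(n)$ is right.

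Your route differs from the paper's in how degenerate pigeons are detected and handled. The paper first spends one query per pigeon on the total parity $\sum_{j} x_{i,j}$: if it is $0$ the pigeon is immediately known to violate a pigeon or functionality clause and is queried exhaustively; if it is $1$, an \emph{adaptive} binary search (halving the support of the parity, always recursing into a half with odd sum) produces a hole $j_i$ with $x_{i,j_i}=1$ \emph{certified} by the last query, for every odd-parity pigeon regardless of whether $|S_i|=1$ or $|S_i|=3,5,\dots$. Pigeonhole on these certified holes then directly yields a falsified hole clause with no verification step. Your queries are non-adaptive bit-parities that only compute the true hole when $|S_i|=1$, so you must pay for a verification phase and a slightly subtler fallback argument (ruling out the singleton case via consistency of the bit-parities with the failed unary check). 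The two approaches cost the same asymptotically; the paper's buys certified holes up front at the price of adaptivity and an extra parity per pigeon, while yours buys a non-adaptive first phase at the price of the verification/fallback logic. Either is a valid proof of the lemma.
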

\begin{proof}
The PDT works as follows.
    For each $i \in [n+1]$, query $\sum_{j \in [n]} x_{i, j}$. If any of these is $0$, say for pigeon $i$, then a functionality clause or pigeon clause for pigeon $i$ must be violated. So we query all $x_{i, j}\; (j \in [n])$ to find such a clause.

    If none of the above $n+1$ parities is $0$, then for each pigeon $i \in [n+1]$, we can do a binary search to find some $j_i \in [n]$ such that $x_{i, j_i} = 1$. Now that we know for each pigeon some hole it goes to, we have found a falsified hole clause.

    Overall we make at most $n+1 + \max(n, (n+1)\ceil{\log n}) = O(n \log n)$ queries.
\end{proof}

\begin{lemma}
    For any CNF formula $\psi$ containing $m$ blocks, if there exists a depth $d$ PDT solving $\cR^{\psi}$, then there exists a depth $d+m$ PDT solving $\cR^{\lift(\psi)}$.
\end{lemma}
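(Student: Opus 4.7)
The plan is to construct the depth-$(d+m)$ PDT for $\cR^{\lift(\psi)}$ in two phases. In the first phase, simply query all $m$ selector bits $s_1, s_2, \dots, s_m$ of the lifted formula using ordinary bit queries (which are parity queries of weight one). This costs exactly $m$ queries and completely determines, for every block $i \in [m]$, which of $x_i^0$ or $x_i^1$ encodes the value of the original block. Concretely, for every original variable $z_{i,j}$ of $\psi$, we now have the identity $z_{i,j} = x_{i,j}^{s_i}$ witnessed by the gadget $\bind_l$, and the right-hand side is a genuine bit-variable of $\lift(\psi)$ since $s_i$ has been fixed.

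In the second phase, simulate $T$ query by query. Any parity query $P = \sum_{(i,j) \in S} z_{i,j} + c$ on the original variables translates, under the known selector values, to the parity query $\sum_{(i,j) \in S} x_{i,j}^{s_i} + c$ on lifted variables, which is a single parity query in the new PDT. We branch in the new PDT exactly as $T$ branches, producing a simulation path of length $d$. This adds another $d$ queries, for a total depth of $d + m$.

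It remains to specify the output at each leaf. Suppose the simulation reaches a leaf of $T$ labelled by a falsified clause $C$ of $\psi$ with block-support $\{i_1, \dots, i_b\}$. Recall that $\lift(C_{i_k})$ consists of two clauses: one containing the literal $s_{i_k}$ together with literals on $x_{i_k}^0$, and one containing $\neg s_{i_k}$ together with the corresponding literals on $x_{i_k}^1$. If $s_{i_k} = 0$ the second of these is automatically satisfied (and dually for $s_{i_k}=1$), so at most one of the two pieces can possibly be falsified, and it is uniquely determined by the observed value of $s_{i_k}$. Take the conjunction (one clause of $\lift(C)$) obtained by selecting this unique piece for each $k \in [b]$ and output it. Because $C$ is falsified by the $z$-assignment that the simulation is effectively reading off via $z_{i,j} = x_{i,j}^{s_i}$, each selected piece is falsified by the lifted assignment, so the chosen clause of $\lift(C) \subseteq \lift(\psi)$ is indeed falsified.

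There is no real obstacle here; the construction is a straightforward reduction, and the only things to verify are (a) that a parity query on $z$ becomes a parity query on the lifted bits once the selectors are known, and (b) that from the selector values and a falsified clause of $\psi$ one can read off a falsified clause of $\lift(\psi)$. Both are immediate from the definition of $\lift$ and the behavior of $\bind_l$.
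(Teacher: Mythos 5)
Your proof is correct and follows the same route as the paper's: query all $m$ selector bits, then run the depth-$d$ PDT on the blocks $x_i^{s_i}$ (each $z$-parity becomes a single lifted parity once the selectors are fixed), and use the known selector values to pick out the falsified clause of $\lift(C)$. Your write-up just spells out the clause-selection step in more detail than the paper does.
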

\begin{proof}
    Query all the selector bits $s_i\; (i \in [m])$. Now use the depth $d$ PDT for $\cR^{\psi}$, on the relevant input blocks $\hat{x} = (x_1^{s_1}, x_2^{s_2}, \dots, x_m^{s_m})$. This gives some clause $C$ of $\psi$ which is falsified by $\hat{x}$. Since all the selector bits are known, this also gives a clause of $\lift(C)$ which is falsified by the original input.
\end{proof}
Combining the two lemmas gives a $\res(\oplus)$ proof of $\lift(\fphp_n^{n+1})$ whose depth is $O(n \log n)$.

\begin{remark}
We briefly explain what fails if we try to give a lower bound for $\fphp_n^{n+1}$ in bounded-depth $\res(\oplus)$ by following the proof of the lower bound for $\bphp_n^{n+1}$ in Section \ref{sec:sbphp} but with the hard distributions now being the unary versions of those for $\bphp_n^{n+1}$. For such a distribution, each bit is extremely likely to be $0$ which makes it difficult to simulate a parity query efficiently. We also do not get a useful upper bound on the probability of a safe system being satisfied under this distribution.

This is to be expected, because if we only consider assignments corresponding to near-permutations where each pigeon goes to only one hole and all holes have at least one pigeon, then there is a very efficient deterministic PDT to solve the false clause search problem for $\fphp_n^{n+1}$. By doing a binary search over the holes, we can first find the unique hole $j$ with two pigeons. Then with $\ceil{\log (n+1)}$ additional parity queries, we can find a subset $I \subseteq [n+1]$ such that some pigeon in $I$ goes to hole $j$ and some pigeon in $[n+1]\setminus I$ goes to hole $j$. Finally we do a binary search in both $I$ and $[n+1]\setminus I$ separately to find the two pigeons. Totally we make only $O(\log n)$ queries.
\end{remark}
\section{Lifting from advanced Prover-Delayer games}
\label{app:pd_games}

In this appendix, we show how Theorem \ref{thm:lift_gen} lets us recover and slightly strengthen the lifting theorem from \cite{alek2024, ei25} which is based on advanced Prover-Delayer games introduced by Alekseev and Itsykson \cite{alek2024}.

Let $\varphi$ be a CNF formula. Let $\cP$ be a collection of partial assignments to the variables of $\varphi$ satisfying all the conditions of being DT-hard except for the condition about the hard distributions (note that the parameters $p, q$ for DT hardness do not appear in these conditions). The hard distribution condition is replaced by the following game condition (which is a variant of the Prover-Delayer game \cite{pudlak_impagliazzo} used for proving size lower bounds for decision trees).

In the $(\varphi, \cP)$-game starting from an assignment $\rho_0 \in \cP$, there are two players, Prover and Delayer. They maintain a partial assignment $\rho$ which is initially $\rho_0$. In each round, Prover picks a variable $x$ not already fixed by $\rho$ and then Delayer does one of the following:
\begin{itemize}
    \item Delayer responds with $*$, in which case Prover can choose $a \in \B$ to assign to $x$.
    \item Delayer chooses $a \in \B$ to assign to $x$. In this case, Delayer pays one black coin.
\end{itemize}
Delayer earns a white coin in each round. So the number of white coins earned at the end is just the total number of rounds. The game ends when $\rho \notin \cP$.

We are interested in Delayer strategies which make the game last for many rounds but without Delayer paying many black coins. Concretely, the final condition we need from $\cP$ is the following for some parameters $p$, $t$ and $c$. For every $\rho_0 \in \cP$ fixing at most $p$ variables, there is a Delayer strategy for the $(\varphi, \cP)$-game starting at $\rho_0$ which guarantees that Delayer earns more than $t$ white coins while paying at most $c$ black coins. Note that the condition is more than $t$ to ensure that the game has not ended even after $t$ rounds.

Let us briefly mention the differences between the definition above and that considered in \cite{alek2024} and following works. A minor point is that earlier works consider a single parameter $t$ controlling both the maximum number of variables fixed by the starting position and the number of rounds (white coins earned), while here we consider $p$ and $t$ separately. A more significant requirement in \cite{alek2024} is that the Delayer strategy must be linearly described. We omit the formal definition of linearly described since we do not need it, but just mention that the strategy is allowed to depend on the current partial assignment in only a limited way. In contrast, we will allow Delayer's strategy to depend in any way on the choices so far, just like strategies in the standard Prover-Delayer game.

We now verify that a set of partial assignments $\cP$ for which we have the above properties is also a $(p, t, 1/2^c)$-DT-hard set for $\cR^\varphi$. Then using Theorem \ref{thm:lift_gen} will immediately recover the lifting theorem of \cite{alek2024, ei25}\footnote{Strictly speaking, to directly apply Theorem \ref{thm:lift_gen}, we would need that $t \geq D_g c$ where $D_g$ is a quantity depending on the gadget $g$. This is not a real concern, since to get a good size-depth lower bound, one has $t = \omega(c)$. When this is not the case, the size-depth lower bound is essentially just a depth lower bound for the lifted problem for which one can argue in a simpler way like in the proof of Theorem \ref{thm:lift_restate}.}.

For each $\rho_0 \in \cP$ fixing at most $p$ variables, the hard distribution $\nu_{\rho_0}$ is simply the uniform distribution.
Consider any depth $t$ decision tree which only queries variables not fixed by $\rho_0 \in \cP$. The argument for showing that with probability at least $2^{-c}$ we reach a leaf where the partial assignment reached still lies in $\cP$, is essentially the same as that for showing that a Delayer strategy in the standard Prover-Delayer game gives a lower bound on decision tree size.

We assume that each leaf in the tree is at depth exactly $t$ so that each leaf is reached with the same probability $2^{-t}$ under the uniform distribution. Consider the paths in the tree corresponding to responses in the Prover-Delayer game which are consistent with the Delayer strategy. Since Delayer's strategy guarantees earning $t$ white coins and paying at most $c$ coins, along any such path of $t$ queries at least $t-c$ were free choices. So there are at least $2^{t - c}$ leaves which are consistent with Delayer's strategy. This gives the desired $2^{t-c}/2^t = 2^{-c}$ lower bound. We remark that the random walk theorem in \cite{alek2024} for the lift of $\varphi$ essentially proceeds by lifting this argument for the uniform distribution on the unlifted variables in a white-box way to the uniform distribution for the lifted variables by using a $2$-stifling gadget.

Now Theorem \ref{thm:lift_gen} implies the lifting theorem from \cite{alek2024, ei25}, where we are using that any stifling gadget on $\ell$ bits is $(1/\ell)$-balanced (Subsection \ref{subsec:lift_prel}).

\begin{theorem}\label{thm:pd_lift}
Let $\varphi$ be a CNF formula. Let $\cP$ be a collection of partial assignments which are hard for the above Prover-Delayer game with parameters $p$, $t$ and $c$. Let $g: \B^\ell \rightarrow \B$ be any stifling gadget.
Suppose there exists a $\res(\oplus)$ proof of $\varphi \circ g$ whose depth is $D$ and size is $S$. Then $D(\log S + c) = \Omega\left(\frac{pt}{\ell^2}\right)$.
\end{theorem}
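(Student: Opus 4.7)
The plan is to reduce Theorem~\ref{thm:pd_lift} to Theorem~\ref{thm:lift_gen} by converting the Prover-Delayer hardness of $\cP$ with parameters $p, t, c$ into $(p, t, 2^{-c})$-block-DT-hardness for $\cR^\varphi$ (treating each variable of $\varphi$ as a single-bit block), with the hard distribution $\nu_{\rho_0}$ for each admissible $\rho_0$ taken to be the uniform distribution on the variables not fixed by $\rho_0$.

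To establish this DT-hardness, fix any $\rho_0 \in \cP$ with $|\fix(\rho_0)| \leq p$ and any depth-$t$ decision tree $T$ querying only variables outside $\fix(\rho_0)$, padded so every root-to-leaf path has length exactly $t$. Simulate the $(\varphi, \cP)$-game in which Prover queries the variable labeling $T$'s current node and Delayer replies according to the posited strategy. Call a root-to-leaf path of $T$ \emph{consistent} with the strategy if at every node where Delayer returned a specific value $a \in \B$, the path takes the $a$-branch. By the game condition the game has not ended after $t$ rounds, so at every consistent leaf the accumulated partial assignment lies in $\cP$. Since Delayer returns $*$ in at least $t - c$ of the $t$ rounds, there are at least $2^{t-c}$ consistent leaves, each reached under the uniform distribution with probability exactly $2^{-t}$; hence $\Pr[\rho_0 \cup \sigma_{T,\nu_{\rho_0}} \in \cP] \geq 2^{-c}$, as required.

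With this block-DT-hardness in hand, convert the given $\res(\oplus)$ refutation into an affine DAG of the same size and depth for $\cR^{\varphi \circ g}$ (as in Subsection~\ref{sec:aff_dags}) and apply Theorem~\ref{thm:lift_gen} with $\delta = 1/\ell$ (valid since any stifling gadget on $\ell$ bits is $(1/\ell)$-balanced, see Subsection~\ref{subsec:lift_prel}), $q = t$, and $\eps = 2^{-c}$. This yields
\[
D\,(\log S + c) \;=\; \Omega\!\left(\tfrac{1}{\ell}\, p \,\lfloor t/(2\ell) \rfloor\right) \;=\; \Omega\!\left(\tfrac{pt}{\ell^2}\right),
\]
provided the side condition $2^{-c} \geq 2\exp(-t/(2\ell^2))$ of Theorem~\ref{thm:lift_gen}, i.e., $t = \Omega(\ell^2 c)$, holds.

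The only (mild) obstacle is the opposite regime $t = O(\ell^2 c)$, where $pt/\ell^2 = O(pc)$ and the claimed bound reduces to $D = \Omega(p)$. Iterating the Delayer strategy (restart a fresh game from the current partial assignment $\rho$ whenever $|\fix(\rho)| \leq p$) forces any ordinary decision tree for $\cR^\varphi$ to run for $\Omega(p)$ queries, since each phase contributes $t$ rounds and at least $\lfloor p/t\rfloor$ phases can be played before $\fix(\rho)$ grows beyond $p$. This transfers to an $\Omega(p)$ PDT depth lower bound for $\cR^{\varphi \circ g}$ via the PDT-to-DT simulation for stifling gadgets of \cite{cmss23, chattopadhyay2025super}, completing the proof.
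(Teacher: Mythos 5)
Your proposal is correct and follows essentially the same route as the paper: converting Prover--Delayer hardness into $(p,t,2^{-c})$-DT-hardness via the consistent-leaves counting argument under the uniform distribution, then invoking Theorem~\ref{thm:lift_gen} with a stifling gadget being $(1/\ell)$-balanced. Even your handling of the degenerate regime where the side condition $\eps \geq 2\exp(-\delta^2 q/2)$ fails — falling back to an $\Omega(p)$ depth lower bound via the adversary argument and the PDT lifting theorem of \cite{cmss23} — matches what the paper relegates to a footnote and the proof of Theorem~\ref{thm:lift_restate}.
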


The dependence on the gadget size in the above lower bound is slightly worse than the bound in \cite{ei25}. Specifically the bound in \cite{ei25} is essentially $D(\log S + c\ell) = \Omega (pt)$. Since the gadget is typically taken to have constant size, this is not a real issue. On the other hand, \cite{alek2024, ei25} use $2$-stifling gadgets which form a strict subset of stifling gadgets. As mentioned earlier, another difference is that \cite{alek2024, ei25} require the Delayer strategy to be linearly described while we do not have such a restriction in the above statement.

Since we allow for more general strategies, one may hope that Theorem \ref{thm:pd_lift} may be useful for getting more examples of hard formulas for bounded-depth $\res(\oplus)$ with better parameters than those achieved in \cite{alek2024, ei25, ik25}. However, this is not possible and these earlier works achieve essentially the best parameters one could hope for by a direct application of such a lifting theorem. \cite{bc25} pointed this out for the Tseitin formulas used in \cite{alek2024, ei25} but the idea there also applies to any formula of small width.

\begin{observation}
    Let $\varphi$ be a CNF formula of width $k$. For any Delayer strategy for the $(\varphi, \cP)$-game  which guarantees earning $t$ white coins while paying at most $c$ black coins, we must have $c \geq \floor{t/k}$.
\end{observation}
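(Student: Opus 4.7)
The plan is to construct a specific Prover strategy and show that against it, any Delayer strategy that guarantees more than $t$ white coins must pay at least $\lfloor t/k \rfloor$ black coins; since the hypothesized Delayer strategy works against every Prover strategy, the desired inequality $c \ge \lfloor t/k \rfloor$ follows.

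The Prover will play in phases. At the beginning of each phase, since $\rho \in \cP$ and $\varphi$ is unsatisfiable (as is implicit in the proof-complexity setting), there exists a clause $C$ of $\varphi$ that is not yet satisfied by $\rho$; such a $C$ must also have at least one unfixed variable, since otherwise $C$ would be falsified by $\rho$, contradicting $\rho \in \cP$. The Prover picks such a $C$ and queries its unfixed variables one by one; whenever the Delayer responds with $*$, the Prover assigns the variable the value that makes its literal in $C$ false. The phase ends as soon as $C$ becomes satisfied or falsified, which must occur within at most $k$ queries since $C$ has at most $k$ literals.

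Next I would analyze the two ways a phase can end. If $C$ becomes falsified, then $\rho \notin \cP$ and the game halts; otherwise $C$ becomes satisfied, which can happen only if at some query in this phase the Delayer paid a black coin and chose the value that makes the corresponding literal in $C$ true. Indeed, on every $*$ response the Prover sets the literal to false, and a black coin spent on the falsifying value has the same effect as $*$. Consequently, every phase that does not terminate the game costs the Delayer at least one black coin.

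To conclude, suppose the Delayer's strategy guarantees more than $t$ white coins against this Prover strategy, so the game lasts $T \ge t+1$ rounds. Since each phase contains at most $k$ queries, the number of phases is at least $\lceil T/k \rceil \ge \lceil (t+1)/k \rceil = \lfloor t/k \rfloor + 1$. At most one phase (the last) can terminate the game, leaving at least $\lfloor t/k \rfloor$ phases that end with $C$ satisfied, each of which costs the Delayer at least one black coin; this yields $c \ge \lfloor t/k \rfloor$. The only mildly delicate step is verifying that an unsatisfied clause with an unfixed variable is always available to start a phase while $\rho \in \cP$, which is where unsatisfiability of $\varphi$ is used; the rest is straightforward counting.
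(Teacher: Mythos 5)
Your proposal is correct and follows essentially the same route as the paper: the identical Prover strategy (pick an unsatisfied clause, query its free variables, set $*$-responses to falsify the corresponding literal), with the observation that each clause-phase either ends the game or forces a black coin. Your phase-by-phase counting is in fact spelled out more carefully than the paper's one-line version, but the underlying argument is the same.
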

\begin{proof}
Consider the following Prover strategy (which is essentially a well-known simple backtracking algorithm for $k$-SAT). Let $C$ be a clause appearing in $\varphi$ which has not already been satisfied by the current partial assignment $\rho$. Prover asks for the free variables appearing in $C$ one by one. When given the choice to fix a variable, Prover always sets it to ensure that the corresponding literal in $C$ is set to false. Since Delayer's strategy ensures that the game does not end after these moves (unless more than $t-k$ moves had already been made previously), to avoid $C$ being falsified, Delayer must pay a black coin at some point during these at most $k$ moves since $C$ has width at most $k$. This shows that Delayer must pay at least $\floor{t/k}$ black coins totally if the game lasts for $t$ rounds.
\end{proof}

Treating the gadget size as constant from now on, the above observation implies that for a $k$-CNF formula $\varphi$ on $n$ variables, Theorem \ref{thm:pd_lift} cannot give size lower bounds for proofs of $\varphi \circ g$ of depth beyond $O(nk)$ since $p \leq n$. In particular, this means that lower bounds for proofs of superlinear depth cannot be obtained from Theorem \ref{thm:pd_lift} for constant-width formulas and if we want the size of the lifted formula to be polynomial, then we can only go up to depth $O(n \log n)$.

\end{document}